\newif\iflong
\newif\ifshort
\newcommand{\m}[1]{V(#1)}
\newcommand{\md}[1]{mid(#1)}
\newcommand{\emptytype}{empty type}
\newcommand{\fulltype}{full type}
\newcommand{\enref}[1]{\emph{(#1)}}
\tikzset{cross/.style={cross out, draw=black, minimum size=2*(#1-\pgflinewidth), inner sep=0pt, outer sep=0pt},
 cross/.default={1pt}}
\newcommand{\SB}{\{\,}
\newcommand{\SM}{\;{|}\;}
\newcommand{\SE}{\,\}}
\newcommand{\RRR}{\mathcal{R}}
\newcommand{\ARC}{\mathcal{A}}
\newcommand{\XXX}{\mathcal{X}}
\newcommand{\BBB}{\mathcal{B}}
\newcommand{\PPP}{\mathcal{P}}
\newcommand{\xor}{\ensuremath{\oplus}\xspace}
\newcommand{\type}{\Gamma}
\newcommand{\bigoh}{\mathcal{O}}
\newcommand{\INVP}{\textup{IN}}
\newcommand{\cc}[1]{{\mbox{\textnormal{\textsf{#1}}}}\xspace}      
\renewcommand{\P}{\cc{P}}
\newcommand{\NP}{\cc{NP}}
\newcommand{\XP}{\cc{XP}}
\newcommand{\ie}{\textsl{i.e.}\xspace}
\newcommand{\eg}{\textsl{e.g.}\xspace}
\newcommand{\noprop}[1]{}
\newcommand{\problemdef}[3]{
	\begin{center}
		\begin{boxedminipage}{.99\textwidth}
			\textsc{{#1}}\\[2pt]
			\begin{tabular}{ r p{0.8\textwidth}}
				\textit{~~~~Instance:} & {#2}\\
				\textit{Question:} & {#3}
			\end{tabular}
		\end{boxedminipage}
	\end{center}
}
\newcommand{\CCC}{\mathcal{C}}
\let\phi=\varphi
\let\epsilon=\varepsilon
\newcommand{\specialfont}[1]{{\normalfont\slshape #1}}
\newcommand{\bw}{\text{\specialfont{bw}}}
\newcommand{\tw}{\text{\specialfont{tw}}}
\newcommand{\SH}{\textsc{\textup{SUBHAM}}\xspace}
 \newcommand{\pe}{\textsc{Pe}}
\newcommand{\expa}{\textsc{exp}}
\newcommand{\sk}{\textsc{Sk}}
\newcommand{\SPQR}{\textsc{SPQR}}
\newcommand{\concat}{\circ}
\theoremstyle{plain}
\newtheorem{theorem}{Theorem}
\newtheorem{corollary}[theorem]{Corollary}
\newtheorem{lemma}[theorem]{Lemma}
\newtheorem{observation}[theorem]{Observation}
\newtheorem{fact}[theorem]{Fact}
\title{A Tight Subexponential-time Algorithm for Two-Page Book Embedding}
\author{Robert Ganian\thanks{Algorithms and Complexity Group, TU Wien,
    Vienna, Austria, \texttt{rganian@gmail.com}} \and
  Haiko M\"uller\thanks{School of Computing, University of Leeds, UK,
    \texttt{h.muller@leeds.ac.uk}} \and
  Sebastian Ordyniak\thanks{School of Computing, University of Leeds, UK,
  \texttt{sordyniak@gmail.com}} \and
Giacomo Paesani\thanks{School of Computing, University of Leeds, UK, \texttt{g.paesani@leeds.ac.uk}} \and
Mateusz Rychlicki\thanks{School of Computing, University of Leeds, UK,
  \texttt{mkrychlicki@gmail.com}}}
\begin{document}
\maketitle

\begin{abstract}
A book embedding of a graph is a drawing that maps vertices onto a
line and edges to simple pairwise non-crossing curves drawn into
``pages'', which are half-planes bounded by that line. Two-page book
embeddings, i.e., book embeddings into 2 pages, are of special importance as they are both \NP-hard to compute and have specific applications. We obtain a $2^{\bigoh(\sqrt{n})}$ algorithm for computing a book embedding of an $n$-vertex graph on two pages---a result which is asymptotically tight under the Exponential Time Hypothesis. As a key tool in our approach, we obtain a single-exponential fixed-parameter algorithm for the same problem when parameterized by the treewidth of the input graph. We conclude by establishing the fixed-parameter tractability of computing minimum-page book embeddings when parameterized by the feedback edge number, settling an open question arising from previous work on the problem.
  \end{abstract}

\section{Introduction}

Book embeddings of graphs are drawings centered around a line, called the \emph{spine}, and half-planes bounded by the spine, called \emph{pages}. 
In particular, a $k$-page book embedding of a graph $G$ is a drawing
which maps vertices to distinct points on the spine and edges to
simple curves on one of the $k$ pages such that no two edges on the same page cross~\cite{DBLP:journals/jct/BernhartK79}. 
These embeddings have been the focus of extensive study to date~\cite{FraysseixMP95,DBLP:journals/dmtcs/DujmovicW04,DujmovicW07,ENDO199787,DBLP:journals/dam/GanleyH01,DBLP:journals/jal/Malitz94a,Yan89}, among others due to their classical applications in VLSI, bio-informatics, and parallel computing~\cite{doi:10.1137/0608002,DBLP:journals/dmtcs/DujmovicW04,Haslinger1999}. 

Every $n$-vertex graph is known to admit an $\lceil \frac{n}{2}\rceil$-page book embedding~\cite{DBLP:journals/jct/BernhartK79,doi:10.1137/0608002,GyarfasL85}, but in many cases it is possible to obtain book embeddings with much fewer pages.
Particular attention has been paid to two-page embeddings, which have specifically been used, e.g., to represent RNA pseudoknots~\cite{Haslinger1999,nowicka2023automated}. 
The class of graphs that can be embedded on two pages was studied by Di Giacomo and Liotta~\cite{GiacomoL10}, Heath~\cite{0608018} as well as by other authors~\cite{AbregoAFRS13}, and was shown to be a superclass of planar graphs with maximum degree at most $4$~\cite{DBLP:journals/algorithmica/BekosGR16}.

While two-page book embeddings are a special class of planar embeddings, they are not polynomial-time computable unless $\P=\NP$. Indeed, a graph admits a two-page book embedding if and only if it is \emph{subhamiltonian} (\ie, is a subgraph of a planar Hamiltonian graph)~\cite{DBLP:journals/jct/BernhartK79} and testing subhamiltonicity is an \NP-hard problem~\cite{doi:10.1137/0608002}. On the other hand, the aforementioned problem of constructing a two-page book embedding (or determining that none exists)---which we hereinafter call \textsc{Two-Page Book Embedding}---becomes linear-time solvable if one is provided with a specific ordering of the $n$ vertices of the input graph along the spine~\cite{Haslinger1999}. While \textsc{Two-Page Book Embedding} can be seen to admit a trivial brute-force $2^{\bigoh(n\cdot \log n)}$ algorithm, it has also been shown to be solvable in $2^{\bigoh(n)}$ time---in particular, one can branch to determine the allocation of edges into the two pages and then solve the problem via dynamic programming on \SPQR{} trees~\cite{AngeliniBB12,hong2009two,HongN18}. 

\subparagraph{Contribution.}
As our main contribution, we break the single-exponential barrier for \textsc{Two-Page Book Embedding} by providing an algorithm that solves the problem in $2^{\bigoh(\sqrt{n})}$ time. Our algorithm is exact and deterministic, and avoids the single-exponential overhead of branching over edge allocations to pages by instead attacking the equivalent subhamiltonicity testing formulation of the problem. It is also asymptotically optimal under the Exponential Time Hypothesis~\cite{ImpagliazzoPZ01}: there is a well-known quadratic reduction that excludes any $2^{o(\sqrt{n})}$ algorithm for \textsc{Hamiltonian Cycle} on cubic planar graphs~\cite{Hamcyclereduction}, and a linear reduction from that problem (under the same restrictions) to subhamiltonicity testing~\cite{Subhamcyclereduction} then excludes any $2^{o(\sqrt{n})}$ algorithm for our problem of interest.

The central component of our result is a non-trivial dynamic programming procedure that solves \textsc{Two-Page Book Embedding} in time $2^{\bigoh(\tw)}\cdot n$, where $\tw$ is the treewidth of the input graph. The desired subexponential algorithm then follows by the well-known fact that $n$-vertex planar graphs have treewidth at most $\bigoh(\sqrt{{n}})$~\cite{GuT12,Marx20,RobertsonST94}. But in addition to that, we believe our single-exponential treewidth-based algorithm to be of independent interest also in the context of parameterized algorithmics~\cite{DBLP:books/sp/CyganFKLMPPS15,DBLP:series/txcs/DowneyF13}.

Indeed, while \textsc{Two-Page Book Embedding} was already shown to be fixed-parameter tractable w.r.t.\ treewidth (i.e., to admit an algorithm running in time $f(\tw)\cdot n$) by Bannister and Eppstein~\cite{DBLP:journals/jgaa/BannisterE18}, that result crucially relied on Courcelle's Theorem~\cite{DBLP:journals/iandc/Courcelle90}. More specifically, they showed that the required property can be encoded via a constant-size sentence in Monadic Second Order logic, which suffices for fixed-parameter tractability---but unfortunately not for a single-exponential algorithm, and a direct dynamic programming algorithm based on the characterization employed there seems to necessitate a parameter dependency that is more than single-exponential. Moreover, it is not at all obvious how one could employ convolution-based tools---which have successfully led to $2^{\bigoh(tw)}\cdot n$ algorithms for, e.g., \textsc{Hamiltonian Cycle}~\cite{BodlaenderCKN15,CyganKN18,CyganNPPRW22}---for our problem of interest here.

Instead, we obtain our results by employing dynamic programming along a \emph{sphere-cut decomposition}---a type of branch decomposition specifically designed for planar graphs of small treewidth~\cite{DornPBF10}. However, unlike in previous applications of sphere-cut decompositions~\cite{JacobP22,MarxPP22}, our algorithm requires the nooses delimiting the bags in the sphere-cut decomposition to admit a fixed drawing since our arguments rely on constructing a hypothetical solution (a subhamiltonian curve) that is ``well-behaved'' w.r.t.\ a fixed set of curves. While this would typically lead to extensive case analysis to compute the records of a parent noose from the records of the children, we introduce a generic framework that allows us to transfer records from child to parent nooses via XOR operations. We believe that this may be of broader interest, especially when working with problems which require one to enhance the embedding or drawing of an input graph.

In the final part of the article, we turn our attention to the parameterized complexity of computing book embeddings. While \textsc{Two-Page Book Embedding} is fixed-parameter tractable when parameterized by the treewidth of the input graph, the only graph parameter which has been shown to yield fixed-parameter algorithms for computing $\ell$-page book embeddings for $\ell>2$ is the \emph{vertex cover number}\footnote{The vertex cover number is the minimum size of a vertex cover, and represents a much stronger restriction on the structure of the input graphs than, e.g., treewidth.}~\cite{BhoreGMN20b}. Whether this tractability result also holds for other structural graph parameters such as treewidth, \emph{treedepth}~\cite{DBLP:books/daglib/0030491} or the \emph{feedback edge number}~\cite{UhlmannW13} has been stated as an open question in the field\footnote{E.g., at \textbf{Advances in Parameterized Graph Algorithms} (Spain, May 2--7 2022) and also at Dagstuhl seminar 21293 \textbf{Parameterized Complexity in Graph Drawing}~\cite{GanianMNZ21}.}.
We conclude by providing a novel fixed-parameter algorithm for computing $\ell$-page book embeddings (or determining that one does not exist) under the third parameterization mentioned above---the feedback edge number, i.e., the edge deletion distance to acyclicity. This result is complementary to the known vertex-cover based fixed-parameter algorithm, and can be seen as a necessary stepping stone towards eventually settling the complexity of computing $\ell$-page book embeddings parameterized by treewidth. Moreover, since the obtained kernel is linear in the case of $\ell=2$, the obtained kernel allows us to generalize our main algorithmic result to a run-time of $2^{\bigoh(\sqrt{k})}\cdot n^{\bigoh(1)}$ where $k$ is the feedback edge number of the input graph.

\section{Preliminaries}\label{sec:pre}
\newcommand{\fen}{\cc{fen}}
\newcommand{\nin}{\cc{min}}
\newcommand{\short}{\psi}
\newcommand{\thi}{\cc{th}}
\newcommand{\seqb}{\sigma}
 \subparagraph{Basic Notions.}
We use basic terminology for graphs and multi-graphs~\cite{Graphs}, and
assume familiarity with the basic notions of parameterized complexity and fixed-parameter tractability~\cite{DBLP:books/sp/CyganFKLMPPS15,DBLP:series/txcs/DowneyF13}. 
The \emph{feedback edge number} of $G$, denoted by
$\fen(G)$, is the minimum size of any \emph{feedback edge set} of $G$,
\ie, a set $F \subseteq E(G)$ such that $G-F=(V(G),E(G)\setminus F)$ is acyclic. 
\iflong
\begin{fact}\label{fact:comp-fes}
  Let $G$ be a graph. Then, a minimum feedback edge set of $G$
  can be computed in time $\bigoh(|V(G)|+|E(G)|)$.
\end{fact}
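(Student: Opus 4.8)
The plan is to exploit the well-known correspondence between feedback edge sets and spanning forests. First I would recall that $F \subseteq E(G)$ is a feedback edge set precisely when $G-F$ is acyclic, i.e., a forest. So the task reduces to deleting as few edges as possible to destroy all cycles, which is the same as retaining as many edges as possible subject to acyclicity, i.e., retaining a maximum-size forest --- and a maximum-size subforest of $G$ is exactly a spanning forest of $G$ (one spanning tree per connected component).

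Next I would give the algorithm: run a depth-first or breadth-first search from each yet-unvisited vertex, collecting the tree edges into a spanning forest $T$; this takes $\bigoh(|V(G)|+|E(G)|)$ time with standard adjacency-list representation. Output $F := E(G)\setminus E(T)$. Since $G-F = (V(G),E(T))$ is a forest, $F$ is a feedback edge set, and it is computed within the claimed time bound (including the linear-time scan to form the complement).

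The only part that needs a genuine (if short) argument is minimality. Let $c$ denote the number of connected components of $G$; then the spanning forest $T$ has exactly $|V(G)|-c$ edges, so $|F| = |E(G)| - |V(G)| + c$. For any feedback edge set $F'$, the graph $G-F'$ is a forest and has at least $c$ connected components (deleting edges never decreases the number of components), hence at most $|V(G)|-c$ edges; therefore $|F'| \ge |E(G)| - (|V(G)|-c) = |F|$. This shows $F$ is a minimum feedback edge set and completes the proof. I do not anticipate any real obstacle here --- the statement is folklore --- the only point worth stating carefully is the component-counting inequality used for the lower bound on $|F'|$.
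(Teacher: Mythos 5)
Your proof is correct and takes the same approach as the paper: both compute a spanning forest and output the complement of its edge set as the minimum feedback edge set. You simply spell out the minimality argument (via the component-counting bound $|F'| \ge |E(G)| - |V(G)| + c$) that the paper leaves implicit.
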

\begin{proof}
  The theorem follows because any minimum feedback edge set is equal
  to $E(G)-E(F)$, where $F$ is a spanning forest of $G$, together with
  the fact that we can compute a spanning forest of $G$ in time $\bigoh(|V(G)|+|E(G)|)$.
\end{proof}
\fi
\ifshort
For a face $f$ of a plane graph, we use $\seqb(f)$ to denote the cyclic sequence of the vertices obtained by traversing the closed curve representing the border of $f$ in a clock-wise manner.
 \fi
\iflong
A {\it cut vertex} of a multi-graph is a vertex whose removal increases the number of connected components. A connected multi-graph that has no cut vertex is called {\it biconnected}. 

We say that a multi-graph $G$ is \emph{planar} if it
admits a \emph{planar drawing}, \ie, a drawing in the plane
in such a way that its edges are drawn as simple curves which pairwise intersect only at their endpoints. Let
$D$ be a planar drawing of $G$ and $f$ a face of $D$. We denote by
$V(f)$ ($E(f)$) all vertices (edges) of $G$ incident with $f$.

Every
face of a connected planar graph equipped with a drawing induces a cyclic sequence $\seqb(f)$ of the vertices in $V(f)$, \ie, the cyclic sequence is obtained by traversing the closed
curve representing the border of $f$ in a clock-wise manner. Note that
while $\seqb(f)$ can repeat vertices, this is no longer the case if
the graph is biconnected, in which case $\seqb(f)$ induces a cyclic
order of the vertices in $V(f)$. For convenience, we will represent
cyclic orders by sequences; note that each cyclic order on $n$
elements can be equivalently represented by one of $n$ sequences (one for each
starting element). For instance, the cyclic orders represented by the
sequences $(a_1,\dotsc,a_\ell)$ and
$(a_i,\dotsc,a_\ell,a_1,\dotsc,a_{i-1})$ are the same for every $i$
with $1 \leq i \leq \ell$.

The following basic observations about planar graphs will be useful later:

\begin{observation}\label{obs:face}
  Let $G$ be a graph with planar drawing $D$ and let $f$ be a face of $D$.
  Then, we can draw a simple curve inside $f$ between any two distinct
  points in $f$ or its border.
  Moreover, if $G$ is connected and $\seqb(f)=(v_1,\dotsc,v_\ell)$, then drawing a curve
  inside $f$ between
  $v_i$ and $v_j$ with $i<j$ splits $f$ into two faces $f_1$ and $f_2$
  such that $\seqb(f_1)=(v_i,\dotsc,v_j)$, $\seqb(f_2)=(v_j,\dotsc,v_i)$. 
\end{observation}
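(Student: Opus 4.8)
The plan is to derive both parts from standard facts of planar topology rather than reprove them from scratch. For the first part, recall that a face $f$ of the drawing $D$ is by definition a connected component of the plane minus the drawn vertices and edges, hence an open and connected --- and therefore arcwise connected --- subset of $\mathbb{R}^2$; in fact any two of its interior points can be joined by a simple polygonal arc lying inside $f$. If one of the two prescribed points, say $p$, lies on the border of $f$ rather than in its interior, I would first draw a short simple curve from $p$ whose interior stays in $f$ and whose other endpoint is an interior point of $f$, which is possible since $p$ is a limit point of the open set $f$. Concatenating the (at most two) such short curves with a simple interior arc between the resulting interior points, and then locally removing the finitely many self-intersections, yields the required simple curve inside $f$.

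For the second part I would use that, since $G$ is connected, the face $f$ is topologically a disc whose boundary is traced by its facial walk: concretely, there is a continuous surjection $\phi$ from a closed disc $B$ onto the closure $\overline{f}$ which maps the open disc homeomorphically onto $f$ and whose restriction to the boundary circle $\partial B$ is a parametrisation of the facial walk of $f$, traversing its incident vertices and edges in the cyclic order recorded by $\seqb(f)=(v_1,\dotsc,v_\ell)$. Fixing the occurrences of $v_i$ and $v_j$ at positions $i<j$ in this walk --- which is exactly the information needed to say into which ``wedges'' of $f$ at these vertices the curve $c$ emanates --- let $\tilde{c}$ be the arc in $B$ obtained as the closure of the $\phi$-preimage of the interior of $c$. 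Then $\tilde{c}$ is a simple arc meeting $\partial B$ in exactly the two distinct marked points corresponding to positions $i$ and $j$, so by the Jordan curve theorem it splits $B$ into two closed discs $B_1$ and $B_2$, where the boundary of $B_1$ is $\tilde{c}$ together with the boundary arc of $B$ from position $i$ to position $j$, and the boundary of $B_2$ is $\tilde{c}$ together with the complementary boundary arc. Pushing this forward by $\phi$, the curve $c$ splits $f$ into exactly two faces $f_1$ and $f_2$ corresponding to the discs $B_1$ and $B_2$, and reading off the $\phi$-image of $\partial B_1$ (resp.\ $\partial B_2$) gives $\seqb(f_1)=(v_i,\dotsc,v_j)$ (resp.\ $\seqb(f_2)=(v_j,\dotsc,v_i)$), as claimed.

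As a consistency check one can note that adding $c$ to $G$ as a new edge produces a connected plane graph with one more edge and the same number of vertices, so by Euler's formula it has exactly one more face than $G$; this confirms that $f$ is replaced by precisely two faces. The only real subtlety --- and the step I would take most care over --- is the bookkeeping when $G$ is not biconnected: in that case $\seqb(f)$ may repeat vertices (and even edges), the map $\phi$ is not injective on $\partial B$, and one must argue consistently in terms of the positions $i$ and $j$ in the fixed representation of $\seqb(f)$ (equivalently, the chosen wedges of $f$ at $v_i$ and $v_j$) rather than in terms of the vertices themselves, checking in particular that positions $i$ and $j$ give two genuinely distinct points of $\partial B$ so that the Jordan-curve step is valid; the degenerate case $v_i = v_j$, in which $c$ is a simple closed curve, is handled in exactly the same way. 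Everything else is routine planar topology.
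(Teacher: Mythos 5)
The paper gives no proof of this observation; it is listed among ``basic observations about planar graphs'' and is asserted without argument or reference. Your reconstruction is correct and follows exactly the route one would expect: arcwise connectivity of open connected subsets of $\mathbb{R}^2$ (plus a local-wedge argument at boundary points) for the first part, and the standard presentation of a face of a connected plane graph as a quotient of a closed disc, combined with the Jordan curve theorem, for the second. You also correctly identify the two genuine subtleties a careless reader would miss: that when $G$ is only connected the facial walk $\seqb(f)$ may repeat vertices, so one must reason in terms of positions $i$, $j$ in the walk (equivalently, the wedges of $f$ at $v_i$ and $v_j$ into which the curve emanates) rather than in terms of the vertices themselves; and that $v_i = v_j$ turns the added curve into a simple closed curve, for which the Jordan-curve step still applies. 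The Euler-formula consistency check is a welcome addition. The one small point you pass over is the outer face: in the plane it is unbounded and not literally a disc, so before invoking the disc model $\phi\colon B\to\overline{f}$ you should either pass to the drawing on the sphere or one-point compactify; this is routine but deserves a sentence. You also implicitly assume the existence of the map $\phi$ with the stated properties --- that is a standard but nontrivial fact about faces of connected plane graphs, and since the whole proof leans on it you might cite it explicitly rather than fold it into ``topologically a disc.''
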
  

\begin{observation}\label{obs:face-nopl}
  Let $f$ be a face of a planar drawing $D$ of a connected graph $G$ and let
  $(v_1,v_2,v_3,v_4)$ be a subsequence of $\seqb(f)$ such that
  $v_i\neq v_j$ for every distinct $i$ and $j$.
  Then every $v_1$-$v_3$ path must intersect every $v_2$-$v_4$ path in $G$ in at least one vertex.
      \end{observation}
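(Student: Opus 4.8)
The plan is to argue by contradiction using the Jordan curve theorem, closing up the $v_1$-$v_3$ path into a simple closed curve with the help of the face $f$. So fix an arbitrary $v_1$-$v_3$ path $P$ and an arbitrary $v_2$-$v_4$ path $Q$ in $G$, and suppose towards a contradiction that $V(P)\cap V(Q)=\emptyset$. In particular $v_1,v_3\notin V(Q)$ and $v_2,v_4\notin V(P)$; and since $D$ is a planar drawing, two vertex-disjoint subgraphs are drawn as disjoint point sets (an edge interior contains no vertex, and two edges meet only at common endpoints), so the simple arcs representing $P$ and $Q$ in $D$ are disjoint.

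First I would close $P$ into a simple closed curve. Write the cyclic sequence $\seqb(f)$ in a linear representation starting at $v_1$; since $(v_1,v_2,v_3,v_4)$ is a subsequence of $\seqb(f)$, the vertices $v_2,v_3,v_4$ then occur in this order after $v_1$. By \Cref{obs:face} we may draw a simple curve $c$ inside $f$ from $v_1$ to $v_3$, and doing so splits $f$ into two faces $f_1,f_2$ with $\seqb(f_1)$ equal to the portion of $\seqb(f)$ from $v_1$ to $v_3$ — so $v_2$ lies on the boundary of $f_1$ — and $\seqb(f_2)$ equal to the complementary portion — so $v_4$ lies on the boundary of $f_2$. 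Because the relative interior of $c$ lies in the open face $f$, it is disjoint from the drawing of $G$, so $c$ meets $P\subseteq G$ only at the common endpoints $v_1,v_3$; hence $\gamma := P\cup c$ is a simple closed curve, and by the Jordan curve theorem it partitions the plane into an interior region $\operatorname{int}(\gamma)$ and an exterior region $\operatorname{ext}(\gamma)$.

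Next I would place $v_2$ and $v_4$ on opposite sides of $\gamma$. The open regions $f_1$ and $f_2$ are connected and disjoint from $\gamma$ (they avoid $c$ by construction, and avoid $P\subseteq G$ because they are subsets of the face $f$), and immediately along the relative interior of $c$ they lie on opposite sides of $\gamma$; hence one of them is contained in $\operatorname{int}(\gamma)$ and the other in $\operatorname{ext}(\gamma)$, say $f_1\subseteq\operatorname{int}(\gamma)$ and $f_2\subseteq\operatorname{ext}(\gamma)$. Now $v_2$ lies on the boundary of $f_1$, it is not on $c$ (as $v_2\neq v_1,v_3$), and it is not on $P$ (as $v_2\notin V(P)$), so $v_2\notin\gamma$; being a limit of points of $f_1\subseteq\operatorname{int}(\gamma)$, it therefore lies in $\operatorname{int}(\gamma)$, and symmetrically $v_4\in\operatorname{ext}(\gamma)$. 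But then the arc representing $Q$ runs from a point of $\operatorname{int}(\gamma)$ to a point of $\operatorname{ext}(\gamma)$, so it must meet $\gamma=P\cup c$; since $Q$ is disjoint from $c$ (again because the relative interior of $c$ lies in the open face $f$ and $v_1,v_3\notin V(Q)$), $Q$ meets $P$, and as noted a point common to the drawings of two subpaths of $G$ must be a vertex of both — contradicting $V(P)\cap V(Q)=\emptyset$.

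The only delicate points are the topological ones just used — that $\gamma$ really is a simple closed curve and that $f_1$ and $f_2$ genuinely sit on opposite sides of it — and this is exactly where \Cref{obs:face} (the splitting of $f$ with the prescribed boundary sequences) and the distinctness of $v_1,v_2,v_3,v_4$ are needed. The fact that $\seqb(f)$ may repeat vertices when $G$ is merely connected rather than biconnected causes no difficulty, since throughout we only ever refer to the chosen occurrences of the four pairwise-distinct vertices $v_1,v_2,v_3,v_4$ on the boundary walk of $f$.
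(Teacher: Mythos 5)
The paper states Observation~\ref{obs:face-nopl} without proof, presenting it as a basic planarity fact, so there is no paper proof to compare against. Your argument is the standard Jordan-curve one and it is correct: closing the $v_1$-$v_3$ path $P$ with a simple curve $c$ drawn inside $f$ (via \Cref{obs:face}) yields a simple closed curve $\gamma = P \cup c$, because the relative interior of $c$ lies in the open face and so meets the drawing of $G$ --- and in particular $P$ --- only at the endpoints $v_1, v_3$. The split of $f$ into $f_1$ and $f_2$ with the prescribed boundary sequences places $v_2 \in \overline{f_1}$ and $v_4 \in \overline{f_2}$; since $f_1, f_2$ are connected, open, and disjoint from $\gamma$, and sit on opposite sides of $\gamma$ near an interior point of $c$, they lie in $\operatorname{int}(\gamma)$ and $\operatorname{ext}(\gamma)$ respectively. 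Because $v_2 \notin \gamma$ (it is neither on $c$, as $v_2 \neq v_1, v_3$ and the interior of $c$ avoids $G$, nor in $V(P)$ under the contradiction hypothesis) and is a limit of $f_1$, it lies in $\operatorname{int}(\gamma)$; symmetrically $v_4 \in \operatorname{ext}(\gamma)$. Hence the arc of $Q$ must cross $\gamma$, and since $Q$ is disjoint from $c$, it crosses $P$, forcing a shared vertex --- contradiction. Your closing remark about repeated vertices in $\seqb(f)$ is apt: the argument only ever uses the chosen occurrences of the four pairwise-distinct vertices, and the topological step ($v_2$ a limit of $\operatorname{int}(\gamma)$, $v_2 \notin \gamma$, hence $v_2 \in \operatorname{int}(\gamma)$) is unaffected by whether $v_2$ also appears elsewhere on the boundary walk of $f$.
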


\fi

\subparagraph{Book Embeddings and Subhamiltonicity.}
An $\ell$-page book embedding of a multi-graph $G=(V,E)$ will be denoted by
a pair $\langle \prec, \sigma \rangle$, where $\prec$ is a linear
order of $V$, and $\sigma \colon E \rightarrow [\ell]$ is a function
that maps each edge of $E$ to one of $\ell$ pages $[\ell] = \{1, 2,
\dots, \ell\}$. In an $\ell$-page book embedding $\langle \prec,
\sigma \rangle$ it is required that for no pair of edges $uv, wx \in
E$ with $\sigma(uv) = \sigma(wx)$ the vertices are ordered as $u \prec
w \prec v \prec x$, i.e., each page must be crossing-free. The
\emph{page number} of a graph $G$ is the minimum number $\ell$ such
that $G$ admits an $\ell$-page book embedding. The general problem of computing the page number of an input graph is thus:

\newcommand{\bt}{{\sc Book Thickness}}
\problemdef{\bt}
{A multi-graph $G$ with $n$ vertices and a positive integer $\ell$.}
{Does $G$ admit a $\ell$-page book embedding?}

\ifshort
It is known that a multi-graph admits a 2-page book embedding if and only if it is \emph{subhamiltonian}, i.e., if it has a planar Hamiltonian
supergraph on the same vertex
set~\cite{DBLP:journals/jct/BernhartK79}; an illustration is provided in \Cref{fig:2page}.
Hence, the problem of deciding whether a multi-graph has page number $2$ can be equivalently stated as:
\fi
\iflong
It is known that a multi-graph admits a 2-page book embedding if and only if it is \emph{subhamiltonian}, i.e., if it has a planar Hamiltonian
supergraph~\cite{DBLP:journals/jct/BernhartK79}; see \Cref{fig:2page}
for an illustration. It is also known (and also easy to observe) that if $G$ is subhamiltonian, then it
has a planar Hamiltonian supergraph $G'$ with $V(G')=V(G)$ and
$E(G')\setminus E(G)=E(H)$, where $H$ is a Hamiltonian cycle
in $G'$~(see, \eg,~\cite{HK19}). Hence, the problem of deciding whether a graph has page number $2$ can be equivalently stated as:
\fi

\problemdef{{\sc Subhamiltonicity (\SH)}}
{A multi-graph $G$ with $n$ vertices.}
{Is $G$ subhamiltonian?} 

Since the transformation between 2-page book embeddings and
Hamiltonian cycles of supergraphs is constructive in both directions, a constructive algorithm for \SH\ (such as the one presented here) allows us to also output a 2-page book embedding for the graph.

Let $G$ be subhamiltonian. For a Hamiltonian cycle $H$ on $V(G)$ (where $H$ is not necessarily a subgraph of $G$), we
denote by $G_H$ the graph obtained from $G$ after adding the edges of
$H$ and say that $H$ is a \emph{witness} for $G$ if $G_H$ is planar.
A drawing $D$ of $G$ \emph{respects} $H$ if $D$ can be
completed to a planar drawing of $G_H$ by only adding the edges of $H$. We extend the notion of ``witness'' to include all the information defining the solution as follows: a tuple $(D,D_H,G_H,H)$
is a \emph{witness} for $G$ if $G_H$ is a planar
supergraph of $G$ containing the Hamiltonian cycle $H$, $D_H$ is a
planar drawing of $G_H$, and $D$ is the restriction of $D_H$ to
$G$; note that $D_H$ witnesses that $D$ respects $H$. 

\iflong
The following basic observations will be useful when dealing with subhamiltonian graphs in Section~\ref{sec:oneconn}.
\begin{observation}\label{obs:subham}
  Let $G$ be a subhamiltonian graph with witness $(D,D_H,G_H,H)$. Then:
  \begin{enumerate}[(1)]
  \item Every subgraph of $G$ is also subhamiltonian.
  \item If $uv \in E(H)$, then the graph obtained from $G$ by
    adding a new vertex $x$ together with the edges $xu$ and $xv$ is
    subhamiltonian.
              \item If $uv \in E(H)$, then the graph obtained by contracting
    $uv$ is also subhamiltonian.
    \end{enumerate}
\end{observation}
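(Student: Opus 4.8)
The plan is to prove all three parts by making small, local modifications to the given witness $(D,D_H,G_H,H)$, exploiting that $D_H$ is a concrete planar drawing that we may alter in the neighbourhood of the edge $uv$.

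Part~(1) requires no modification at all: if $G'$ is any subgraph of $G$, then $G'$ is also a subgraph of $G_H$, and $G_H$ is planar (it is drawn by $D_H$) and Hamiltonian (it contains the cycle $H$), so $G'$ is subhamiltonian directly from the definition.

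For part~(2), let $G^+$ denote the graph obtained from $G$ by adding the new vertex $x$ together with the edges $xu$ and $xv$. I would build a planar Hamiltonian supergraph of $G^+$ by \emph{subdividing}, inside $G_H$, the copy of the edge $uv$ contributed by the Hamiltonian cycle $H$: place $x$ on the curve representing that edge in $D_H$, splitting it into a $u$-$x$ arc and an $x$-$v$ arc. This yields a planar drawing of the subdivided graph $G_H^+$. Since only an $H$-edge was touched, all of $G$ is still present, and so are $x, xu, xv$; hence $G^+\subseteq G_H^+$. Replacing the edge $uv$ by the path $u,x,v$ in $H$ turns $H$ into a Hamiltonian cycle of $G_H^+$, which spans $V(G_H^+)=V(G)\cup\{x\}=V(G^+)$. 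Thus $G_H^+$ is a planar Hamiltonian supergraph of $G^+$, as required. Part~(3) is analogous: contracting the $H$-copy of the edge $uv$ in $D_H$ (sliding $v$ onto $u$ along its curve) produces a planar drawing of $G_H/uv$; contracting the cycle edge $uv$ in $H$ turns $H$ into a Hamiltonian cycle of $G_H/uv$ on the vertex set $V(G_H/uv)=V(G/uv)$; and $G/uv\subseteq G_H/uv$ because identifying $u$ with $v$ sends the inclusion $E(G)\subseteq E(G_H)$ to the inclusion $E(G/uv)\subseteq E(G_H/uv)$. Hence $G_H/uv$ witnesses that $G/uv$ is subhamiltonian.

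All three arguments are essentially routine; the only point that calls for a little care is the bookkeeping in the multigraph setting. In particular, in parts~(2) and~(3) one should operate on the copy of $uv$ that comes from $H$, so that the edges of $G$ --- which may themselves include a parallel copy of $uv$ --- are left untouched; and one need not worry about loop-suppression conventions when contracting, since loops affect neither planarity, nor Hamiltonicity, nor the subgraph relation we use. The degenerate small cases (for instance $|V(G)|\le 3$, where contracting a cycle edge may leave a length-$2$ cycle in the multigraph) are equally harmless.
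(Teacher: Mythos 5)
The paper states this observation without proof, but your argument is the natural one and is correct: part (1) is immediate since any subgraph of $G$ is a subgraph of the planar Hamiltonian graph $G_H$, part (2) follows by subdividing the curve of the $H$-edge $uv$ in $D_H$ at the new vertex $x$ and rerouting $H$ through $x$, and part (3) follows by contracting that same curve, all of which preserve planarity, yield a Hamiltonian cycle on the new vertex set, and keep the modified $G$ as a subgraph. Your bookkeeping remarks about operating on the $H$-copy of $uv$ (so that $G$'s own edges, including a possible parallel $G$-copy of $uv$, remain intact) and about loops and small degenerate cases being harmless address exactly the subtleties one would need to check in the multigraph setting.
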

\fi

\begin{figure}
  \centering
\begin{subfigure}{.4\textwidth}
\begin{tikzpicture}[xscale=1,yscale=1]
\coordinate (A) at (0,4);
\coordinate (B) at (-2,2);
\coordinate (C) at (-3,1);
\coordinate (D) at (-1,1);
\coordinate (E) at (-2,0);
\coordinate (F) at (0,2);
\coordinate (G) at (0,1);
\coordinate (H) at (0,0);
\coordinate (I) at (2.5,1);
\coordinate (L) at (2,0);
\coordinate (M) at (3,0);
\coordinate (N) at (2,-1.5);
\coordinate (O) at (3,-1.5);
\coordinate (P) at (-1,-4);
\coordinate (Q) at (0,-5);
\coordinate (R) at (1,-4);
\coordinate (S) at (2,-4);
\coordinate (T) at (0,-3);
\coordinate (U) at (2.5,-0.75);
\draw[very thick,color=green!50!black] (E)--(T)  (O)--(M)--(U)--(L) (S)--(Q) (L)--(N);
\draw[very thick,color=red]
(H)--(G)  (F)-- (A) (L)--(I) (N)--(T)  (R)--(T)--(Q) 
(N)--(O) (O)--(T)--(S) (E)--(B)--(D);
\draw[very thick,color=blue]
(H)--(T)--(P) (Q)--(R)--(S) (N)--(U)--(O) (G)--(F) (A)--(I)--(M)--(L) (A)--(B)--(C)--(E)--(D);
\draw[very thick, dashed, color=blue] (H)--(D) (G)--(N) (F)--(L) (O)--(S) (P)--(Q);
\draw[fill=black] 
(A) circle [radius=2pt]
(B) circle [radius=2pt]
(C) circle [radius=2pt]
(D) circle [radius=2pt]
(E) circle [radius=2pt]
(F) circle [radius=2pt]
(G) circle [radius=2pt]
(H) circle [radius=2pt]
(I) circle [radius=2pt]
(L) circle [radius=2pt]
(M) circle [radius=2pt]
(N) circle [radius=2pt]
(O) circle [radius=2pt]
(P) circle [radius=2pt]
(Q) circle [radius=2pt]
(R) circle [radius=2pt]
(S) circle [radius=2pt]
(T) circle [radius=2pt]
(U) circle [radius=2pt];
\node[right] at (A) {$A$};
\node[right] at (B) {$S$};
\node[right] at (C) {$R$};
\node[right] at (D) {$P$};
\node[below left] at (E) {$Q$};
\node[right] at (F) {$E$};
\node[right] at (G) {$F$};
\node[right] at (H) {$O$};
\node[right] at (I) {$B$};
\node[above right] at (L) {$D$};
\node[above right] at (M) {$C$};
\node[left] at (N) {$G$};
\node[above right] at (O) {$I$};
\node[left] at (P) {$M$};
\node[below] at (Q) {$L$};
\node[left] at (R) {$K$};
\node[above right] at (S) {$J$};
\node[left] at (T) {$N$};
\node[right] at (U) {$H$};
\end{tikzpicture}
\end{subfigure} \begin{subfigure}{.4\textwidth}
  \centering
\begin{tikzpicture}[xscale=0.5,yscale=0.5]
\coordinate (A) at (0,4);
\coordinate (I) at (0,3);
\coordinate (M) at (0,2);
\coordinate (L) at (0,1);
\coordinate (F) at (0,0);
\coordinate (G) at (0,-1);
\coordinate (N) at (0,-2);
\coordinate (U) at (0,-3);
\coordinate (O) at (0,-4);
\coordinate (S) at (0,-5);
\coordinate (R) at (0,-6);
\coordinate (Q) at (0,-7);
\coordinate (P) at (0,-8);
\coordinate (T) at (0,-9);
\coordinate (H) at (0,-10);
\coordinate (D) at (0,-11);
\coordinate (E) at (0,-12);
\coordinate (C) at (0,-13);
\coordinate (B) at (0,-14);
\draw[very thick, color=red] 
(H) to[out=180,in=180] (G)
(F) to[out=180,in=180] (A)(E) to[out=180,in=180] (B) to[out=180,in=180] (D) to[out=180,in=180] (E) 
(L) to[out=180,in=180] (I) 
(N) to[out=180,in=180] (T) (N) to[out=180,in=180] (O) (T) to[out=180,in=180] (Q) 
(R) to[out=180,in=180] (T) to[out=180,in=180] (S);
\draw[very thick, color=green!50!black]
(O) to[out=0,in=0] (M) to[out=0,in=0] (U) to[out=0,in=0] (L)
(L) to[out=0,in=0] (N)
(E) to[out=0,in=0] (T)
(S) to[out=0,in=0] (Q);
\draw[very thick, dashed,color=blue] (H) to[out=180,in=180] (D) (G) to[out=180,in=180] (N) (F) to[out=180,in=180] (L) (O) to[out=180,in=180] (S) (P) to[out=180,in=180] (Q);
\draw[very thick,color=blue]
(H) to[out=180,in=180] (T) to[out=180,in=180] (P) (Q) to[out=180,in=180] (R) to[out=180,in=180] (S) (N) to[out=180,in=180] (U) to[out=180,in=180] (O) (G) to[out=180,in=180] (F) (A) to[out=180,in=180] (I) to[out=180,in=180] (M) to[out=180,in=180] (L) (A) to[out=180,in=180] (B) to[out=180,in=180] (C) to[out=180,in=180] (E) to[out=180,in=180] (D);
\draw[fill=black] 
(A) circle [radius=2pt]
(B) circle [radius=2pt]
(C) circle [radius=2pt]
(D) circle [radius=2pt]
(E) circle [radius=2pt]
(F) circle [radius=2pt]
(G) circle [radius=2pt]
(H) circle [radius=2pt]
(I) circle [radius=2pt]
(L) circle [radius=2pt]
(M) circle [radius=2pt]
(N) circle [radius=2pt]
(O) circle [radius=2pt]
(P) circle [radius=2pt]
(Q) circle [radius=2pt]
(R) circle [radius=2pt]
(S) circle [radius=2pt]
(T) circle [radius=2pt]
(U) circle [radius=2pt];
\node[right] at (A) {$A$};
\node[right] at (B) {$S$};
\node[right] at (C) {$R$};
\node[right] at (D) {$P$};
\node[right] at (E) {$Q$};
\node[right] at (F) {$E$};
\node[right] at (G) {$F$};
\node[right] at (H) {$O$};
\node[right] at (I) {$B$};
\node[right] at (L) {$D$};
\node[right] at (M) {$C$};
\node[right] at (N) {$G$};
\node[right] at (O) {$I$};
\node[right] at (P) {$M$};
\node[right] at (Q) {$L$};
\node[right] at (R) {$K$};
\node[right] at (S) {$J$};
\node[right] at (T) {$N$};
\node[right] at (U) {$H$};
\end{tikzpicture}
\end{subfigure}
\caption{A drawing of a subhamiltonian graph $G$, made of the
  full-edges, which is completed by the dashed edges to one of its
  Hamiltonian supergraphs $G_H$ (left) and the same graph drawn as a
  two-page book embedding (right). In both drawings the Hamiltonian
  cycle $H$ is colored in blue and the edges belonging to page 1 and 2
  are colored with green and red, respectively.\iflong Note that the
    partition of the edges into the pages can be obtained from a
    planar drawing of $G_H$ by partitioning the edges according to the
    two regions given by $H$.\fi}
\label{fig:2page}
\end{figure}
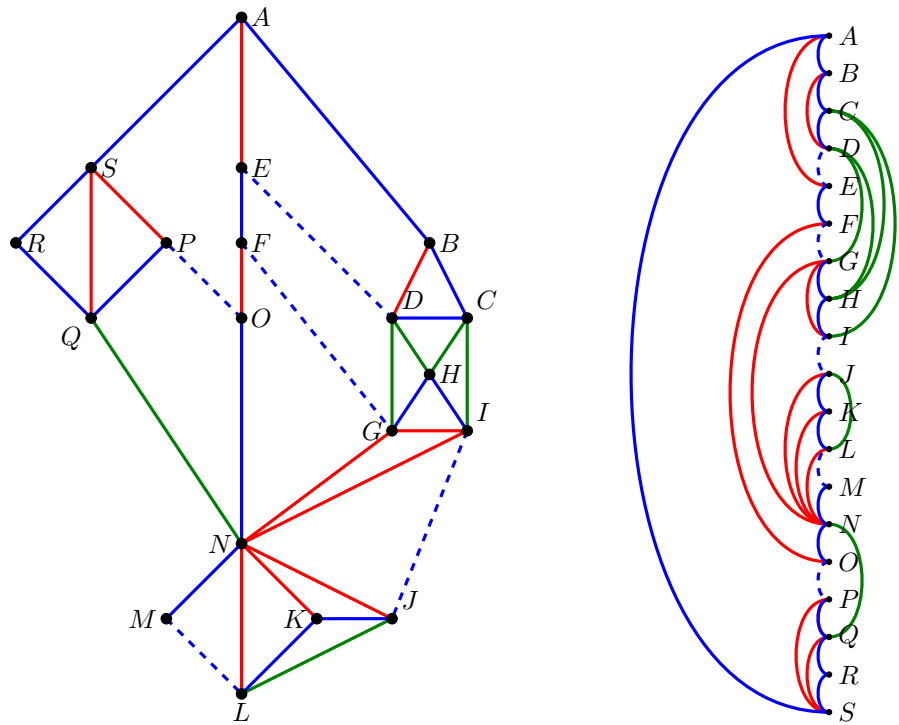

\subparagraph{\SPQR-Trees.}
\ifshort
We assume familiarity with the \SPQR-tree data structure for
biconnected multi-graphs which decomposes a graph into (S)eries,
(P)arallel, (R)igid and (Q) nodes (leaf nodes and root node), following the formalism used by Gutwenger et al.~\cite{GutwengerMW05}, see also~\cite{BattistaT89,BienstockM89,BienstockM90}. For a node $b$ in an \SPQR-tree, we use $\sk(b)$ and $\pe(B)$ to denote the \emph{skeleton} and \emph{pertinent graph} of $b$, respectively. \SPQR-trees can be computed in linear time, and an illustration of the data structure is provided in Figure~\ref{fig:spqr}. 
\fi
\iflong
We give a brief introduction to the \SPQR-tree data structure for biconnected multi-graphs, following the formalism used by Gutwenger et al.~\cite{GutwengerMW05}. 
 Let $G=(V,E)$ be a biconnected multi-graph and $a,b\in V$. We can partition $E$ into equivalence classes $E_1,\ldots,E_k$ in the following way: for any two edges $e,e'\in E$, $e$ and $e'$ belong to the same equivalence class if and only if there exists a path $P$ in $G$ which contains both $e$ and $e'$ as edges and no internal vertex of $P$ is in $\{a,b\}$. The classes $E_i$ are called the {\it separation classes} of $G$ with respect to $\{a,b\}$ and if $k\geq 2$ then $\{a,b\}$ is called a {\it separation pair} unless $(i)$ $k=2$ and one of the separation classes only contains a single edge, or $(ii)$ $k=3$ and each separation class is made of a single edge.
A biconnected multi-graph without a separation pair is called {\it triconnected}. A \emph{split pair} is a pair of vertices which are either adjacent to each other, or form a separation pair.

\SPQR-trees were introduced by Di Battista and Tamassia~\cite{BattistaT89}, based on the ideas of Bienstock and Monma~\cite{BienstockM89,BienstockM90}, and since then have been used in various graph drawing applications, for a survey we refer to the work of Mutzel~\cite{Mutzel03}.

\SPQR-trees represent the decomposition of a biconnected multi-graph $G$ based on split pairs and their ``split components''. 
A {\it split component} of a split pair $\{u,v\}$ is either the edge $(u,v)$ or a maximal subgraph $C$ of $G$ such that $\{u,v\}$ is not a split pair of $C$. Let $\{s,t\}$ be a split pair of $G$. A {\it maximal split pair} $\{u,v\}$ of $G$ with respect to $\{s,t\}$ is such that, for any other split pair $\{u',v'\}$, vertices $u$, $v$, $s$ and $t$ are in the same split component. 

Let $e=(s,t)$ be an edge of $G$, called the {\it reference edge}. The \SPQR-tree $\mathcal{B}$ of $G$ with respect to $e$ is a rooted ordered tree whose nodes are of four types: $S$, $P$, $Q$, and $R$. Each node $b$ of $\mathcal{B}$ has an associated biconnected multi-graph $\sk(b)$, called the {\it skeleton} of $b$. The tree $\mathcal{B}$ is recursively defined as follows:

\begin{itemize}
\item {\it Trivial Case}. If $G$ consists of exactly two parallel edges between $s$ and $t$, then $\mathcal{B}$ consists of a single $Q$-node whose skeleton is $G$ itself.
\item {\it Parallel Case}. If the split pair $\{s,t\}$ has $k$ split components $G_1,\ldots, G_k$ with $k\geq 3$, the root of $\mathcal{B}$ is a $P$-node $b$, whose skeleton consists of $k$ parallel edges $e=e_1,\ldots, e_k$ between $s$ and $t$.
\item {\it Series Case}: Otherwise, the split pair $\{s,t\}$ has exactly two split components, one of them is $e$, and the other one is denoted with $G'$. If $G'$ has cutvertices $c_1,\ldots,c_{k-1}$ ($k\geq 2$) that partition $G$ into its blocks $G_1,\ldots,G_k$, in this order from $s$ to $t$, the root of $\mathcal{B}$ is an S-node $b$, whose skeleton is the cycle $e_0, e_1,\ldots, e_k$, where $e_0=e$, $c_0=s$, $c_k=t$, and $e_i=(c_{i-1},c_1$) ($i=1,\ldots, k$).
\item {\it Rigid Case}: If none of the above cases applies, let $\{s_1,t_1\},\ldots, \{s_k,t_k\}$ be the maximal split pairs of $G$ with respect to $\{s,t\}$ ($k\geq 1$), and, for $i=1,\ldots ,k$, let $G_i$ be the union of all the split components of $\{s_i,t_i\}$ but the one containing $e$. The root of $\mathcal{B}$ is an R-node, whose skeleton is obtained from $G$ by replacing each subgraph $G_i$ with the edge $e_i=(s_i,t_i)$.
\end{itemize}

Except for the trivial case, $b$ has children $b_1,\ldots, b_k$, such that $b_i$ is the root of the \SPQR-tree of $G_i\cup e_i$ with respect to $e_i$ ($i=1,\ldots,k$). The endpoints of the edge $e_i$ are called the {\it poles} of node $b_i$. 
Edge $e_i$ is said to be the {\it virtual edge} of node $b_i$ in the skeleton of $b$ and of node $b$ in the skeleton of $b_i$. 
We call node $b$ the {\it pertinent node} of $e_i$ in the skeleton of $b_i$, and $b_i$ the {\it pertinent node} of $e_i$ in the skeleton of $b$. 
The virtual edge of $b$ in the skeleton of $b_i$ is called the reference edge of $b_i$.

Let $b_r$ be the root of $\mathcal{B}$ in the decomposition given above. 
We add a $Q$-node representing the reference edge $e$ and make it the parent of $b_r$ so that it becomes the new root.

Let $e$ be an edge in $\sk(b)$ and let $b'$ be the pertinent node of $e$. Deleting edge $\{b,b'\}$ in $\mathcal{B}$ splits $\mathcal{B}$ into two connected components. Let $\mathcal{B}_{b'}$ be the connected component containing $b'$. The {\it expansion graph} of $e$ (denoted with $\expa(e)$) is the graph induced by the edges of $G$ contained in the skeletons of the $Q$-nodes in $\mathcal{B}_{b'}$. We further introduce the notation $\expa^+(e)$ for the graph $\expa(e)\cup e$. 
The {\it pertinent graph} $\pe(b)$ of a tree node $b$ is obtained from $\sk(b)$ minus the reference edge by replacing each skeleton edge with its expansion graph. An illustration of an \SPQR-tree is provided in Figure~\ref{fig:spqr}.
\fi

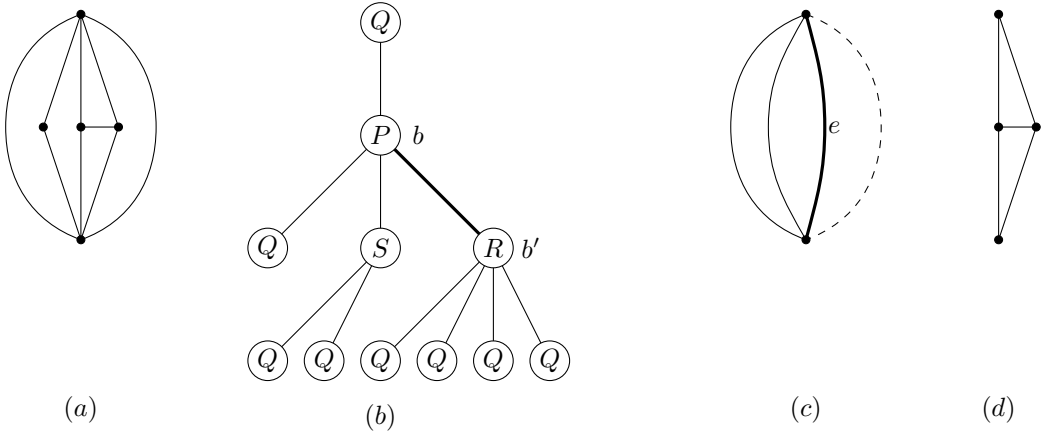
\begin{figure}
  \centering
\begin{minipage}[c]{0.15\textwidth}
\begin{tikzpicture}[xscale=0.5, yscale=0.5]
\draw 
(0,3) to[out=340,in=90] (2,0) to[out=270,in=20] (0,-3)
(0,3) to[out=200,in=90] (-2,0) to[out=270,in=160] (0,-3)

(0,3)--(-1,0)--(0,-3)--(0,0)--(0,3)--(1,0)--(0,-3) (0,0)--(1,0);
 (0,3) to[out=340,in=90] (2,0) to[out=270,in=20] (0,-3);
\draw[fill=black] (-1,0) circle [radius=3pt] (1,0) circle [radius=3pt] (0,0) circle [radius=3pt] 
(0,3) circle [radius=3pt] (0,-3) circle [radius=3pt];
\node at (0,-7.5) {$(a)$};
\end{tikzpicture}
\end{minipage}
\qquad
\begin{minipage}[c]{0.35\textwidth}
\begin{tikzpicture}[xscale=0.5, yscale=0.5]
\draw (0,4.5)--(0,1.5)--(-3,-1.5) (0,1.5)--(3,-1.5)
(0,-4.5)--(3,-1.5)--(1.5,-4.5)(3,-4.5)--(3,-1.5)--(4.5,-4.5)(-3,-4.5)--(0,-1.5)--(-1.5,-4.5)(0,1.5)--(0,-1.5);
\draw[very thick] (0,1.5)--(3,-1.5);
\draw[fill=white] (0,4.5) circle [radius=15pt](0,1.5) circle [radius=15pt](-3,-1.5) circle [radius=15pt](0,-1.5) circle [radius=15pt](3,-1.5) circle [radius=15pt](-3,-4.5) circle [radius=15pt](-1.5,-4.5) circle [radius=15pt](0,-4.5) circle [radius=15pt](1.5,-4.5) circle [radius=15pt](3,-4.5) circle [radius=15pt](4.5,-4.5) circle [radius=15pt];
\node at (0,4.5) {$Q$};\node at (-3,-4.5) {$Q$};\node at (-1.5,-4.5) {$Q$};\node at (0,-4.5) {$Q$};\node at (1.5,-4.5) {$Q$};\node at (3,-4.5) {$Q$};\node at (4.5,-4.5) {$Q$};\node at (0,1.5) {$P$};\node at (-3,-1.5) {$Q$};\node at (0,-1.5) {$S$};\node at (3,-1.5) {$R$};\node at (1,1.5) {$b$};\node at (4,-1.5) {$b'$};
\node at (0,-6) {$(b)$};
\end{tikzpicture}
\end{minipage}
\qquad
\begin{minipage}[c]{0.15\textwidth}
\begin{tikzpicture}[xscale=0.5, yscale=0.5]

\draw[dashed]
(0,3) to[out=340,in=90] (2,0) to[out=270,in=20] (0,-3)
;
\draw[] 
(0,3) to[out=200,in=90] (-2,0) to[out=270,in=160] (0,-3);
\draw[very thick] (0,3) to[out=285,in=90] (0.5,0) to[out=270,in=75] (0,-3);
\draw[fill=black]
(0,3) circle [radius=3pt] (0,-3) circle [radius=3pt]
;
\draw[]
(0,3) to[out=240,in=90] (-1,0) to[out=270,in=120] (0,-3)
;

\node at (0.75,0) {$e$};
\node at (0,-7.5) {$(c)$};
\end{tikzpicture}
\end{minipage}
\qquad
\begin{minipage}[c]{0.1\textwidth}
\begin{tikzpicture}[xscale=0.5, yscale=0.5]
\draw 
(0,-3)--(0,0)--(0,3)--(1,0)--(0,-3) (0,0)--(1,0);

(0,3) to[out=340,in=90] (2,0) to[out=270,in=20] (0,-3);
\draw[fill=black]  (1,0) circle [radius=3pt] (0,0) circle [radius=3pt] 
(0,3) circle [radius=3pt] (0,-3) circle [radius=3pt];
\node at (0,-7.5) {$(d)$};
\end{tikzpicture}
\end{minipage}
\caption{$(a)$ shows a biconnected multi-graph $G$. $(b)$ shows the
  \SPQR{}-tree $\mathcal{B}$ of $G$. $(c)$ shows the skeleton of $b$,
  $\sk(b)$, where the edge $e$ that corresponds to the child (with
  pertinent node) $b'$ is
  in bold and the dashed edge represents the reference edge. 
        Finally, $(d)$ shows $\pe(b')$.}
\label{fig:spqr}  
\end{figure}

\iflong
\SPQR-trees can be computed efficiently, and this also implicitly
bounds the their size.

 \begin{lemma}[\cite{GutwengerM00}] 
\label{lem:computeSPQR}
  Let $G$ be biconnected multi-graph with $n$ vertices and $m$ edges.
  An \SPQR{}-tree of $G$ with $\bigoh(m)$ nodes and edges inside skeletons
  can be constructed in $\bigoh(n+m)$ time.
\end{lemma}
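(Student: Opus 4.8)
The statement is a classical result which we invoke as a black box; nevertheless, we sketch the idea. The plan is to reduce the construction of the \SPQR-tree to computing the \emph{triconnected components} of $G$. Recall that these are obtained by repeatedly splitting $G$ at separation pairs into split components and then merging back the ``trivial'' pieces, and that the resulting components come in exactly three flavours: triconnected simple graphs, simple cycles, and bonds (collections of parallel edges). These correspond precisely to the skeletons of $R$-, $S$-, and $P$-nodes. Adding one $Q$-node per edge of $G$ attached to the component containing that edge, picking a reference edge, and rooting then yields an \SPQR-tree up to the normalisation steps of suppressing length-two cycles and merging adjacent $S$-nodes and adjacent $P$-nodes; each of these is a single pass over the component tree.

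First I would run a depth-first search to build a \emph{palm tree} of $G$ together with the lowpoint values of all vertices, the standard quantities used to detect separation pairs. A second DFS, processing tree edges in an acceptable adjacency order, then locates all separation pairs and splits off the corresponding split components in linear total time, taking care not to revisit the already-split part; an amortisation argument bounds the total work by $\bigoh(n+m)$. This separation-pair detection and splitting is the genuinely delicate part --- and the main obstacle: the original Hopcroft--Tarjan description contains well-known errors, and one must use the corrected variant to guarantee both correctness and the linear running time. This is exactly what~\cite{GutwengerM00} supplies.

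Finally I would establish the size bound, which is what makes ``$\bigoh(m)$ nodes and edges inside skeletons'' meaningful. Each split operation replaces one current piece by two pieces and introduces one new virtual edge into each, so the total number of edges summed over all pieces grows by a bounded amount per split; since $G$ starts with $m$ edges, a standard potential argument shows that the number of splits, and hence the total size of all split components (equivalently, the total number of edges occurring inside all skeletons), is $\bigoh(m)$. Since the component tree is a tree whose leaves are the $Q$-nodes and whose internal nodes have bounded degree after normalisation, it also has $\bigoh(m)$ nodes. Combining this with the linear-time triconnected-components algorithm gives the claimed $\bigoh(n+m)$ running time, completing the proof.
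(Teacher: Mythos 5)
The paper does not prove this lemma; it cites it directly to Gutwenger and Mutzel~\cite{GutwengerM00}, so there is no in-paper argument to compare against. Your sketch is a faithful high-level account of what that reference actually does: depth-first search with lowpoints, corrected Hopcroft--Tarjan detection of separation pairs, splitting into triconnected components, the normalisation into $S$-, $P$-, $R$-, and $Q$-nodes, and the potential-function argument for the $\bigoh(m)$ size bound. Since you correctly flag that you are invoking the result as a black box and your outline matches the cited construction, this is an acceptable treatment of the lemma.
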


Choosing a different reference edge $e'$ is equivalent to rooting the tree $\mathcal{B}$ at the $Q$-node whose skeleton contains $e'$. In particular, the unrooted version of the \SPQR-tree of a biconnected multi-graph (including the skeleton graphs) is unique.

We will later also need the following well-known fact about
\SPQR{}-trees, which will need to define the types of nodes in an \SPQR{}-tree.
\begin{fact}\label{fact:noose-for-SPQRtree-node}
  Let $G$ be a biconnected planar multi-graph with planar
  drawing $D$ and let $\mathcal{B}$ be the \SPQR-tree of $G$. Then,
  for every node $b$ of $\mathcal{B}$ with reference edge $(s_b,t_b)$,
  there is a noose $N_b$ such that:
  \begin{itemize}
  \item $N_b$ intersects with $D$ only at $s_b$ and $t_b$.
  \item $N_b$ separates $\pe(b)$ from $G\setminus \pe(b)$ in $D$.
  \end{itemize}
  Moreover,
        if $N_b$ and $N_{b'}$ for
  two nodes $b$ and $b'$ of $\BBB$ have the same reference edge $(s,t)$ and
  contain a subcurve between $s$ and $t$ in the same face of $D$, then we
  can and will assume that the two subcurves are identical.
\end{fact}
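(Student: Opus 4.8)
The plan is to produce, for each node $b$ of $\mathcal B$, an explicit simple closed curve through exactly the two poles $s_b,t_b$ by cutting out of the sphere a disk that carries all of $G\setminus\pe(b)$ and touches $\pe(b)$ only at $s_b$ and $t_b$; throughout I regard $D$ as drawn on the sphere, as is customary for sphere-cut decompositions. First I would extract the combinatorial input: writing $e_b=(s_b,t_b)$ for the reference edge of $b$, the recursive construction of $\mathcal B$ recalled above shows that $E(\pe(b))$ and $E(\expa(e_b))$ partition $E(G)$, that the subgraph $G\setminus\pe(b)$ is precisely $\expa(e_b)$, that $\{s_b,t_b\}$ is a split pair of $G$, and that the only vertices belonging to both $\pe(b)$ and $\expa(e_b)$ are $s_b$ and $t_b$ (which is exactly why the decomposition is well defined); moreover $\expa(e_b)$ is connected, since a disconnected remainder would make one of $s_b,t_b$ a cut vertex of the biconnected graph $G$. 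Consequently the subdrawings $D_1$ and $D_2$ obtained by restricting $D$ to $\pe(b)$ and to $\expa(e_b)$ meet only in the points $s_b$ and $t_b$.

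The crux is the claim that $D_2$, together with $s_b$ and $t_b$, lies in the closure of a single face $f$ of the plane graph $D_1$. Granting this, I take $\Delta_b$ to be a thin closed neighbourhood of $D_2$ that is a disk, is contained in the closure of $f$, and is pinched near the poles so that $D_2\subseteq\Delta_b$ while $\Delta_b\cap D_1=\{s_b,t_b\}\subseteq\partial\Delta_b$. Then $N_b:=\partial\Delta_b$ is a simple closed curve that meets the vertices of $G$ in exactly $s_b$ and $t_b$ and crosses no edge of $G$ — inside the closure of $f$ the only edges of $\pe(b)$ lie on $\partial f$, which $N_b$ avoids away from the poles, and every edge of $\expa(e_b)$ lies in the interior of $\Delta_b$ — and the two open disks it bounds contain $\pe(b)\setminus\{s_b,t_b\}$ and $(G\setminus\pe(b))\setminus\{s_b,t_b\}$ respectively, so $N_b$ separates $\pe(b)$ from $G\setminus\pe(b)$.

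Proving the single-face claim is where the real work lies, and I expect it to be the main obstacle: it is in essence the fact that \SPQR-trees faithfully encode planar embeddings. The argument I would give is that $D$ induces a planar embedding of the skeleton $\sk(b)$ in which $e_b$ is incident to exactly two faces, that deleting $e_b$ merges these into one face having $s_b$ and $t_b$ on its boundary, and that re-expanding the non-reference edges of $\sk(b)$ to their expansion graphs turns this into a face $f$ of $D_1$ inside which $D$ draws precisely $\expa(e_b)$. One can either cite this from the \SPQR literature (\cite{BattistaT89,GutwengerMW05}) or verify it by induction along $\mathcal B$; the only delicate point is ruling out that $\expa(e_b)$ meets two distinct faces of $D_1$ that both carry $s_b$ and $t_b$ on their boundary, which cannot happen precisely because $\{s_b,t_b\}$ is a split pair in the decomposition (for two edge-disjoint subgraphs that merely share two vertices this could genuinely fail).

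For the ``moreover'' part, recall that a noose through $s$ and $t$ is the union of exactly two arcs between $s$ and $t$. The hypothesis supplies an arc $\gamma\subseteq N_b$ and an arc $\gamma'\subseteq N_{b'}$, each with its interior in one common face $F$ of $D$ and its endpoints at $s$ and $t$. Since $G$ is biconnected, $\overline F$ is a closed disk with $s$ and $t$ on its boundary, so $\gamma$ and $\gamma'$ are isotopic in $\overline F$ rel $\{s,t\}$; and since the open face $F$ contains no vertex or edge of $G$, and (by definition of a noose) neither $N_b$ nor $N_{b'}$ traverses $F$ a second time along its other arc, I may replace $\gamma'$ by a copy of $\gamma$ inside $\overline F$. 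The result is again a simple closed curve that visits each face of $D$ at most once, and — because the modification happens entirely inside a single face of $D$ — it does not change on which side of $N_{b'}$ any vertex or edge of $G$ lies; hence the two required properties of $N_{b'}$ are preserved, and after the replacement the two subcurves coincide, as claimed.
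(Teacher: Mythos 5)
The paper states this as a ``well-known fact'' about \SPQR-trees and gives no proof of its own, so there is no authorial argument to compare against; I can only assess the proposal on its own terms. Your decomposition of the claim is the right one: the geometric core is that $\expa(e_b)=G\setminus\pe(b)$ is drawn, up to the poles, inside a single face $f$ of $\pe(b)$'s drawing, after which $N_b$ is obtained by bounding a pinched disk neighbourhood; and the ``moreover'' part is a routine normalization, replacing parallel arcs within a common face by one canonical arc via an isotopy rel $\{s,t\}$ inside the closed face, which touches no vertex or edge of $G$ and so preserves both noose properties. You also correctly identify the genuine subtlety, namely that two edge-disjoint subgraphs sharing only $\{s,t\}$ need \emph{not} enjoy the single-face property (nested $4$-cycles inside $K_{2,4}$ already break it), so the confinement must really come from the \SPQR-structure and not merely from $\{s,t\}$ being a $2$-cut.

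The one soft spot is your justification of the single-face confinement. You appeal to ``the planar embedding of $\sk(b)$ induced by $D$,'' but producing that embedding requires contracting each expansion graph in $\sk(b)$, including $\expa(e_b)$ itself, to a single edge inside a disk --- which is exactly the property you are trying to establish. As written, this is circular. It is repaired by the induction along $\mathcal B$ that you mention in passing (or by a direct case analysis: for a child of a P-node, the split components sit in consecutive ``slices'' around the $s$--$t$ axis, so the complement of one slice is connected; for an edge of an R- or S-node skeleton, one uses that $\pe^+(b)=\pe(b)\cup e_b$ is biconnected, so $e_b$ is on exactly two faces whose union after deleting $e_b$ is the required face $f$). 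Since the paper itself treats the statement as citable folklore, deferring the hard point to the \SPQR\ literature or to that induction is an acceptable resolution, but you should be aware that the ``induced embedding of $\sk(b)$'' phrasing does not by itself discharge the obligation.
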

\fi
 
\subparagraph{Sphere-Cut Decompositions.}
A branch decomposition $\langle T, \lambda \rangle $ of a graph $G$ consists of an
unrooted ternary tree $T$ (meaning that each node of $T$ has degree one or three) and of a bijection $\lambda:\mathcal{L}(T)\leftrightarrow E(G)$ from
the leaf set $\mathcal{L}(T)$ of $T$ to the edge set $E(G)$ of $G$; to
distinguish $E(T)$ from $E(G)$, we call the elements of the former
\emph{arcs} (as was also done in previous work~\cite{DornPBF10}). For each arc $a$ of $T$, let $T_1$ and $T_2$ be the two connected
components of $T - a$, and, for $i = 1, 2$, let $G_i$ be the subgraph of $G$ that consists of the edges corresponding
to the leaves of $T_i$, i.e., the edge set $\{\lambda(\mu) : \mu \in \mathcal{L}(T) \cap V (T_i)\}$. The middle set $\md{a} \subseteq V (G)$ is the
intersection of the vertex sets of $G_1$ and $G_2$, i.e., $\md{a} := V (G_1) \cap V (G_2)$. The width $\beta(\langle T, \lambda \rangle )$ of $\langle T, \lambda \rangle$
is the maximum size of the middle sets over all arcs of $T$, i.e.,
$\beta(\langle T, \lambda \rangle ):= max\{| \md{a}| : a \in E(T)
\}$. An
optimal branch decomposition of $G$ is a branch decomposition with minimum width; this width is called
the branchwidth $\beta(G)$ of $G$.
We will need the following well-known relation between treewidth and
branchwidth.
\begin{lemma}[{\cite[Theorem 5.1]{DBLP:journals/jct/RobertsonS91}}]
\label{lem:bw-tw}
  Let $G$ be a graph. Then, $\bw(G)-1 \leq \tw(G)\leq \frac{3}{2}\bw(G)-1$,
  where $\bw(G)$ is the branchwidth and $\tw(G)$ is the treewidth of $G$.
\end{lemma}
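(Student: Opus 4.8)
The plan is to prove the two inequalities separately, each by an explicit transformation between a tree decomposition and a branch decomposition, so that nothing beyond the definitions recalled above is needed. I will use freely the standard normalizations that a tree decomposition may be taken with every node of degree at most $3$ (repeatedly split a node of degree $d>3$ into two copies of its bag joined by a new arc, which preserves validity and width) and that a branch decomposition may be taken to be a proper ternary tree (internal nodes of degree exactly $3$, leaves in bijection with $E(G)$). I assume below that $G$ has no isolated vertex and that $\bw(G)\ge 2$; the cases $\bw(G)\le 1$ are degenerate, since then $G$ is a star forest with $\tw(G)\le 1$, which is what the lemma asserts in that regime. For $F\subseteq E(G)$ I write $V(F)$ for the set of endpoints of the edges of $F$.

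\emph{$\bw(G)\le\tw(G)+1$.} Fix a tree decomposition $(T,\{B_t\})$ of width $w:=\tw(G)$ with $T$ of maximum degree $3$, and for each edge $e=uv$ fix a node $t_e$ with $\{u,v\}\subseteq B_{t_e}$. Build $T^{*}$ from $T$ by attaching to each node $t$ one fresh leaf $\ell_e$ (with $\lambda(\ell_e):=e$) for every edge $e$ with $t_e=t$, and then clean up: prune leaves that are not of the form $\ell_e$, suppress degree-$2$ nodes, and expand every node of degree larger than $3$ into a caterpillar of copies of itself with the incident arcs distributed among the copies; the result is a proper ternary tree whose leaves biject with $E(G)$. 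For an arc $a$ incident to some $\ell_e$ with $e=uv$ we have $\md{a}\subseteq\{u,v\}$, so $|\md{a}|\le 2$. For any other arc $a$, a vertex $x\in\md{a}$ is an endpoint of an edge on each side of $a$; tracing those two edges back into $T$, the vertex $x$ occurs either in bags on both sides of some arc of $T$ (if $a$ descends from an arc of $T$) or in bags of two distinct branches at some node $t$ of $T$ (if $a$ lies inside the expansion of $t$), and in either case the connectivity axiom of the tree decomposition forces $x$ into the corresponding bag(s); hence $|\md{a}|\le w+1$. Therefore $\bw(G)\le\max\{2,w+1\}=\tw(G)+1$.

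\emph{$\tw(G)\le\frac{3}{2}\bw(G)-1$.} Fix an optimal branch decomposition $(T,\lambda)$ of width $k:=\bw(G)$ with $T$ a proper ternary tree, and define a tree decomposition on the same tree: to a leaf $\ell$ with $\lambda(\ell)=uv$ assign $B_\ell:=\{u,v\}$, and to an internal node $v$ with incident arcs $a_1,a_2,a_3$ assign $B_v:=\md{a_1}\cup\md{a_2}\cup\md{a_3}$. The crucial point is that every $x\in B_v$ lies in at least two of $\md{a_1},\md{a_2},\md{a_3}$: deleting $v$ partitions $E(G)$ into parts $E_1,E_2,E_3$ with $\md{a_i}=V(E_i)\cap V(E_j\cup E_k)$, so if $x\in\md{a_1}$ then $x$ is an endpoint of an edge of $E_1$ and also of an edge of $E_2$ or $E_3$, say $E_2$, whence $x\in\md{a_2}$. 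Consequently $|B_v|\le\tfrac12\bigl(|\md{a_1}|+|\md{a_2}|+|\md{a_3}|\bigr)\le\tfrac{3k}{2}$, and since each leaf bag has size $2\le\tfrac{3k}{2}$, the width is at most $\tfrac{3}{2}\bw(G)-1$. It remains to verify the axioms: each edge $uv$ is covered by the leaf $\lambda^{-1}(uv)$, and for a vertex $x$ the set $\{t : x\in B_t\}$ is exactly the minimal subtree of $T$ spanning the leaves $\lambda^{-1}(e)$ over all edges $e$ incident to $x$, because an internal node $v$ belongs to this subtree iff at least two of the three branches at $v$ contain an edge incident to $x$, which is precisely the condition $x\in B_v$; hence this set is connected.

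I expect the genuinely load-bearing content to be two short counting facts: that an arc of $T^{*}$ created inside a caterpillar still only sees the bag it was expanded from, and that a vertex of an internal branch-decomposition bag is shared by two of its three middle sets (this is exactly what produces the factor $\tfrac32$). The main obstacle is not conceptual but organizational, namely normalizing the two tree shapes without inflating the width and verifying the connectivity axiom of the constructed decomposition in each direction; this last check is the only step that is not purely mechanical, since it requires characterizing precisely which nodes receive a given vertex into their bag.
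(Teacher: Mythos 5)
The paper does not prove this lemma at all---it is cited verbatim as Theorem 5.1 of Robertson and Seymour's \emph{Graph Minors X}, so there is no internal proof to compare against. Your argument is a correct, self-contained reconstruction of the standard proof of that theorem, and it hits the two load-bearing points exactly: in the direction $\bw(G)\le\tw(G)+1$, the observation that any middle set of the constructed branch decomposition is trapped inside a single bag of the original tree decomposition (either via the arc of $T$ it descends from, or via the node of $T$ whose caterpillar expansion contains it, using the connectivity axiom in both cases); and in the direction $\tw(G)\le\tfrac32\bw(G)-1$, the double-counting fact that every vertex of $B_v=\md{a_1}\cup\md{a_2}\cup\md{a_3}$ belongs to at least two of the three middle sets, which is exactly where the factor $\tfrac32$ comes from. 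Your verification that $\{t: x\in B_t\}$ is the Steiner tree of the leaves carrying edges incident to $x$ is also the right way to see the connectivity axiom.

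One quibble, which is really a defect in the paper's own statement rather than in your argument: when $\bw(G)=1$ (e.g.\ a star with at least two edges), one has $\tw(G)=1>\tfrac32\cdot 1-1=\tfrac12$, so the upper bound as written is false. You set aside the case $\bw(G)\le 1$ with the remark that ``$G$ is a star forest with $\tw(G)\le 1$, which is what the lemma asserts in that regime,'' but the lemma as stated does \emph{not} assert that---it asserts $\tw(G)\le\tfrac12$. Robertson and Seymour's original Theorem 5.1 explicitly requires $\bw(G)\ge 2$ for the two-sided inequality, and your main argument (which assumes $\bw(G)\ge 2$ anyway) is correct precisely in that regime. So the fix is simply to note that the inequality, and hence the lemma as stated, holds only for $\bw(G)\ge 2$, rather than claiming the degenerate case is covered.
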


Let $D$ be a plane drawing of a connected planar graph $G$. A noose of $D$ is a closed simple curve that (i) intersects $D$ only at vertices
and (ii) traverses each face at most once, i.e., its intersection with
the region of each face forms a connected curve. The length of a noose
is the number of vertices it intersects, and every noose $O$ separates
the plane into two regions $\delta_1$ and $\delta_2$. A
\emph{sphere-cut decomposition} $\langle T, \lambda, \Pi = \SB\pi_a\SM
a\in E(T)\SE \rangle $ of $(G,D)$ is a branch decomposition$\langle T, \lambda \rangle$ of $G$
together with a set $\Pi$ of circular orders $\pi_a$ of
$\md{a}$---one for each arc $a$ of $T$---such that there exists a noose
$O_a$ whose closed discs $\delta_1$ and $\delta_2$ enclose the drawing of $G_1$ and of $G_2$, respectively. Observe that $O_a$ intersect $G$ exactly at $\md{a}$ and its length is $| \md{a}|$. 
Note that the fact that $G$ is connected together with Conditions~(i) and~(ii) of the
definition of a noose implies that the graphs $G_1$ and $G_2$ are both connected and that the set of nooses forms a
laminar set family, that is, any two nooses are either disjoint or nested. 
A clockwise traversal of $O_a$ in
the drawing of $G$ defines the cyclic ordering $\pi_a$ of $\md{a}$. We always assume that the vertices of every
middle set $\md{a}$ are enumerated according to $\pi_a$. 
A sphere-cut decomposition of a given planar graph with $n$ vertices can be constructed in $\bigoh(n^3)$ time \cite{DornPBF10}.

\iflong
\begin{lemma}[{\cite[Theorem 1]{DornPBF10}}] 
\label{lem:comp-spcut}
  Let $G$ be a biconnected planar multi-graph on $n$ vertices and
  branchwidth $\omega$. Then, a sphere-cut
  decomposition of $G$ of width $\omega$ can be computed in time $\bigoh(n^3)$.   
\end{lemma}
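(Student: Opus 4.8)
This is \cite[Theorem~1]{DornPBF10}; for completeness I would present the argument in two phases, the first ignoring and the second exploiting the fixed drawing $D$ of $G$. In the first phase, compute a branch decomposition $\langle T,\lambda\rangle$ of $G$ of minimum width $\omega=\bw(G)$ without reference to the embedding. For planar graphs this is classical: the branchwidth can be determined in $\bigoh(n^2)$ time by the ratcatcher algorithm of Seymour and Thomas, and an explicit optimal branch decomposition can then be extracted in $\bigoh(n^3)$ time (Gu and Tamaki); this step dominates the overall running time.

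The second phase reshapes the decomposition so that its cuts are realised by nooses of $D$, without increasing the width. The standard route passes through the medial graph $M$ of the plane multi-graph $(G,D)$, whose carving decompositions correspond to branch decompositions of $G$ with the width differing by a factor of two; it therefore suffices to convert a carving decomposition of the \emph{plane} graph $M$ into a \emph{bond} carving decomposition---one in which the edge cut induced by every arc is inclusion-minimal---without increasing its width. This is achieved by a routine uncrossing/re-routing argument: whenever an arc induces a non-minimal cut, one re-attaches a subtree so that the cut is replaced by a minimal cut contained in it, which cannot increase the width. Planarity enters through the fact that an inclusion-minimal edge cut of a plane graph is exactly the set of edges crossed by a single closed curve avoiding all vertices; translating such a bond of $M$ back to $G$ produces, for every arc $a$ of the resulting branch-decomposition tree, a closed simple curve $O_a$ that meets $D$ only at $\md{a}$, with $|\md{a}| \le \omega$, that traverses each face of $D$ at most once (by minimality of the cut), and whose two sides enclose exactly $G_1$ and $G_2$. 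A clockwise traversal of $O_a$ then supplies the circular order $\pi_a$ of $\md{a}$; the assumption that $G$ is biconnected guarantees that $G_1$ and $G_2$ are connected, so the $O_a$ are genuine nooses and each $\pi_a$ is a cyclic order. Since the starting decomposition was optimal, the width remains exactly $\omega$, and every manipulation in this phase is polynomial in $n$, so the total running time stays $\bigoh(n^3)$.

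The main obstacle is this second phase: one must verify carefully that replacing arbitrary cuts by bonds never increases the width, handle the medial-graph correspondence for a \emph{multi}-graph (parallel edges, and the degenerate cases on at most two vertices), and confirm that biconnectedness is precisely what makes the curves obtained into simple closed curves that separate connected subgraphs and traverse each face at most once. The first phase, by contrast, is essentially a black box once the planar branchwidth machinery is invoked.
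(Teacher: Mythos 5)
The paper does not prove this lemma at all---it imports it verbatim from \cite[Theorem~1]{DornPBF10} and adds only a one-line remark that the cited theorem requires $G$ to have no vertices of degree at most one, a condition automatically met by biconnected multi-graphs. Your two-phase reconstruction is a faithful account of what Dorn, Penninkx, Bodlaender, and Fomin actually do: phase one invokes the Seymour--Thomas ratcatcher (with the Gu--Tamaki extraction) to obtain an optimal branch decomposition in $\bigoh(n^3)$, and phase two passes to the medial graph, uses the factor-two correspondence between branchwidth of $G$ and carvingwidth of its medial graph, uncrosses to a bond carving decomposition without increasing width, and reads off each bond of the plane medial graph as a noose of $D$. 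Your closing observation that biconnectedness is what forces $G_1$ and $G_2$ to be connected and the curves to be genuine nooses traversing each face at most once is precisely the degree-one caveat the paper flags after the lemma. So the proposal is correct and matches the intended source; it simply spells out an argument the paper chose to cite rather than reproduce.
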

Note that~\cite[Theorem 1]{DornPBF10} requires that $G$ has not
vertices of degree at most one, which is the case for biconnected multi-graphs.

We will only consider sphere-cut decompositions of $\sk(b)$ for
some R-node or S-node $b$ in an \SPQR{}-tree, which implies that the
underlying graph will admit a unique planar embedding. Due to this
fact, we sometimes abuse the notation by
referring to sphere-cut decompositions as purely combinatorial objects
(i.e., without an explicit drawing of the individual nooses).
Suppose that
$b$ is an $R$-node in some \SPQR{}-tree and let $\langle
T_b, \lambda_b, \Pi_b \rangle$ be a sphere-cut decomposition for graph
$\sk(b)$ with the reference edge $\{(s_{b},t_{b})\}$.
Let ${\lambda_b}^{(-1)}((s_{b},t_{b}))$ be the root of $T_b$.
Each arc $a$ of $T_b$ is associated with the subgraph $\sk(b,a)$ of
$\sk(b)\cup \{(s_{b},t_{b})\}$ in the inside region, i.e., the region not containing the reference edge, of the noose $O_a$ of $a$.
The {\it pertinent graph} $\pe(b,a)$ of an R-node $b$ is obtained from
$\sk(b,a)$ by replacing each skeleton edge with its expansion graph.

Every noose $O_a$ can be divided into subcurves by splitting the noose
at the vertices in $\md{a}$. Each such subcurve can be characterized by
a pair $(\{u,v\},f)$, where $u,v \in \md{a}$ are two consecutive nodes
in $\pi_a$ and $f$ is a face of $\sk(b)\cup \{(s_{b},t_{b})\}$.   Due to the properties of sphere-cut decompositions, we can assume that
whenever two nooses contain two subcurves that are characterized by
the same pair, then the subcurves are identical.
For convenience, we can identify any noose of the sphere-cut
decomposition with the set of subcurves that it contains, e.g.,
we often view the noose $O_a$ as the set of pairs $(\{u,v\},f)$ that
correspond to the subcurves contained in $O_a$.

Below, we note that the notion of nooses defined above can also be
assumed to be well-behaved when dealing with sphere-cut decompositions
of an R-node or an S-node in an \SPQR-tree of $G$.

 \begin{observation}\label{observation:rnode_drawing}
  Let $G$ be a biconnected planar multi-graph with planar
  drawing $D$, let $\mathcal{B}$ be the \SPQR-tree of $G$ and let $b$
  be an R-node or an S-node of $\mathcal{B}$ with sphere-cut decomposition
  $\langle T_b,\lambda_b, \Pi_b \rangle$. Then, $D$ can be extended to a planar
  drawing of $G$ together with the nooses $\{O_a~|~a \in E(T_b)\}$ where each of the nooses lies inside $N_b$.
   \end{observation}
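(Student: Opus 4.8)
The plan is to realize the nooses of the sphere-cut decomposition of $\sk(b)$ inside $D$ by contracting the expansion graphs within the disc bounded by $N_b$. By \Cref{fact:noose-for-SPQRtree-node} we obtain the noose $N_b$; let $\Delta$ be the closed disc bounded by $N_b$ that contains the drawing of $\pe(b)$ in $D$, and recall that $N_b$ meets $D$ only in the poles $s_b$ and $t_b$. Applying \Cref{fact:noose-for-SPQRtree-node} once more, to the child of $b$ pertinent to each non-reference edge $e_i$ of $\sk(b)$, yields pairwise interior-disjoint sub-discs $\Delta_i\subseteq\Delta$ with $\partial\Delta_i$ meeting $D$ exactly in the two poles of $e_i$ and the interior of $\Delta_i$ containing the drawing of $\expa(e_i)$. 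Finally we add to $D$ a simple curve $\rho$ from $s_b$ to $t_b$ drawn inside $\Delta$, within the outer face of the drawing of $\pe(b)$ inherited from $D$; this curve will stand in for the reference edge $(s_b,t_b)$.

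Since $b$ is an R-node or an S-node, $\sk(b)$ is $3$-connected or a cycle, and hence has an essentially unique planar embedding, namely the one underlying $\langle T_b,\lambda_b,\Pi_b\rangle$. The choices above are arranged so that contracting each disc $\Delta_i$ onto a simple curve joining its two poles turns the drawing of $\pe(b)\cup\rho$ inside $\Delta$ into a planar drawing $D'$ of $\sk(b)$; by uniqueness of the embedding, $D'$ is, up to a reflection chosen consistently with $\langle T_b,\lambda_b,\Pi_b\rangle$, the drawing underlying the sphere-cut decomposition. This provides the face correspondence we need: each face $\hat f$ of the drawing of $\pe(b)\cup\rho$ that is not an internal face of some $\expa(e_i)$ arises by contracting a unique face $f$ of $D'$, and the vertices of $\sk(b)$ on the boundary of $\hat f$ occur there in the same cyclic order in which they appear on the boundary of $f$ (with the internal vertices of the $\expa(e_i)$ interspersed). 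Note also that $\pe(b)\cup\rho$ is biconnected, so these boundaries are cycles.

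It remains to place the nooses. As recalled in \Cref{sec:pre}, the nooses of the sphere-cut decomposition form a laminar family, and each $O_a$ decomposes into subcurves, each characterized by a pair $(\{u,v\},f)$ with $u,v\in\md{a}\subseteq V(\sk(b))$ and $f$ a face of $\sk(b)\cup\{(s_b,t_b)\}$, the subcurve running inside $f$ from $u$ to $v$; these subcurves are therefore pairwise non-crossing and touch $\sk(b)$ only at their endpoints. Fix a face $f$ of $D'$ and its counterpart $\hat f$. The subcurves lying in $f$, together with the boundary cycle of $f$, form a plane graph whose vertices on $\partial f$ all lie, by the face correspondence, on $\partial\hat f$ in the same cyclic order; repeated application of \Cref{obs:face} then lets us redraw this plane graph inside $\hat f$, mapping each vertex of $\sk(b)$ on $\partial f$ to its copy on $\partial\hat f$. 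Performing this over all faces of $D'$ produces curves realizing every noose $O_a$: by construction they are pairwise non-crossing, meet $G$ only in the vertices of the respective middle sets, and are drawn inside $\Delta$, hence inside $N_b$. Deleting the auxiliary curve $\rho$ yields the desired extension of $D$.

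The delicate point is the routing of $\rho$ and the verification that contracting the discs $\Delta_i$ indeed reproduces the unique planar embedding of $\sk(b)$ together with the stated cyclic-order correspondence between the faces of $\pe(b)\cup\rho$ and those of $\sk(b)\cup\{(s_b,t_b)\}$, in particular for the two faces incident to the reference edge; once this is settled, drawing the nooses is a routine use of \Cref{obs:face}. The S-node case, where $\sk(b)$ is merely a cycle, requires only a separate (and immediate) check of this correspondence, after which the argument is identical.
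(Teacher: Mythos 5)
The paper gives no proof of this observation --- it is stated as a bare fact, and the following \Cref{cor:add-nooses-to-drawing} is declared to ``follow immediately'' from it together with \Cref{fact:noose-for-SPQRtree-node} --- so there is no paper argument to compare yours against. Your proof is correct and makes explicit exactly what the paper relies on implicitly: the essentially unique planar embedding of $\sk(b)$ (3-connected for an R-node, a cycle for an S-node); the face correspondence obtained by contracting each expansion graph $\expa(e_i)$ within its disc $\Delta_i$; and the repeated use of \Cref{obs:face} to transplant the noose subcurves of $\langle T_b,\lambda_b,\Pi_b\rangle$ face by face into the matching faces of the drawing of $\pe(b)\cup\rho$ inherited from $D$. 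The step you flag as delicate --- verifying that contraction reproduces the embedding underlying the sphere-cut decomposition --- is indeed the entire content of the observation, and it does reduce to uniqueness of the embedding of $\sk(b)$ on the sphere, up to a reflection that is harmless because the nooses can be reflected together with the drawing, yielding an equally valid realization of the same combinatorial sphere-cut decomposition.
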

\fi

We say that a biconnected planar multi-graph $G$ equipped
with an \SPQR{}-tree $\mathcal{B}$ is \emph{associated} with a set
$\mathcal{T}$ of sphere-cut decompositions if $\mathcal{T}$ contains
a sphere-cut decomposition of $\sk(b)$ for every R-node and every
S-node $b$ of $\mathcal{B}$.
\iflong
The following lemma now follows immediately
from~\Cref{fact:noose-for-SPQRtree-node}
and~\Cref{observation:rnode_drawing}.
\fi

\begin{lemma}
\label{cor:add-nooses-to-drawing}
  Let $G$ be biconnected planar multi-graph with planar drawing $D$ and
  \SPQR{}-tree $\mathcal{B}$ of $G$ together with the associated set
  $\mathcal{T}$ of sphere-cut decompositions. Then, $D$ can be
  extended to a planar drawing $D'$ of $G$ together with all nooses in
  $\SB O_a \SM a \in E(T_b) \land \langle T_b,\lambda_b,\Pi_b\rangle
  \in \mathcal{T}\SE$ as well as a noose $N_b$ for every node $b$ of
  $\mathcal{B}$ satisfying:
  \begin{itemize}
  \item $N_b$ intersects with $D$ only at $s_b$ and $t_b$.
  \item $N_b$ separates $\pe(b)$ from $G\setminus \pe(b)$ in $D$.
  \end{itemize}
  Moreover, if any of the subcurves of the nooses $O_a$ and the nooses
  $N_b$ connect the same two vertices in the same face of $D$, then
  the two subcurves are identical in $D'$.
\end{lemma}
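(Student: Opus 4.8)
The plan is to build $D'$ by combining \Cref{fact:noose-for-SPQRtree-node}, which supplies the nooses $\{N_b\}$, with \Cref{observation:rnode_drawing}, which---applied inside each $N_b$---supplies the nooses of the sphere-cut decomposition of $\sk(b)$, and then to normalize the result so that parallel subcurves coincide.

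First I would invoke \Cref{fact:noose-for-SPQRtree-node} to obtain, for every node $b$ of $\mathcal{B}$, a noose $N_b$ satisfying the two stated properties; its ``moreover'' clause already yields the consistency condition for two such nooses that share a reference edge. The key point is that the nooses $\{N_b\}$ are pairwise nested or internally disjoint, mirroring the nesting of the pertinent graphs $\pe(b)$ along $\mathcal{B}$; in particular, the disc bounded by $N_b$ contains precisely the drawing of $\pe(b)$. I would then process the R-nodes and S-nodes of $\mathcal{B}$ one at a time: for such a node $b$ with $\langle T_b,\lambda_b,\Pi_b\rangle\in\mathcal{T}$, \Cref{observation:rnode_drawing} extends the drawing obtained so far, within the disc bounded by $N_b$, by the nooses $\{O_a\mid a\in E(T_b)\}$, each drawn inside $N_b$. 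Since the nooses coming from $\sk(b)$ lie inside $N_b$, those from a different node $b'$ lie inside $N_{b'}$, and the $N_b$ are laminar, these successive extensions do not interfere and may be performed in any order; within a single sphere-cut decomposition we additionally invoke the consistency convention recalled in \Cref{sec:pre}. Collecting all of these curves together with $D$ gives a plane drawing of $G$ together with every required noose.

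It remains to enforce the global subcurve-consistency condition. Every subcurve of every noose in this drawing joins two vertices of $G$ inside a single face $f$ of $D$---this is part of what \Cref{fact:noose-for-SPQRtree-node} and \Cref{observation:rnode_drawing} already guarantee. Fixing such a face $f$ (an open disc) and recalling that the subcurves drawn inside it are pairwise non-crossing with endpoints on $\partial f$, any two of them joining the same pair $\{u,v\}$ together with two boundary arcs cut off a subregion of $f$ that contains no vertex of $G$ and no other subcurve except possibly further $u$--$v$ subcurves; hence all subcurves inside $f$ joining $\{u,v\}$ can be redrawn so as to coincide with a single common curve, without creating crossings. Carrying this out for every face and every vertex pair produces the desired $D'$. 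I expect the only delicate point to be exactly this bookkeeping---checking that the curves supplied by the two cited results are jointly realizable and can be normalized---which rests on the laminarity of the family $\{N_b\}$; the rest of the argument is a direct combination of \Cref{fact:noose-for-SPQRtree-node} and \Cref{observation:rnode_drawing}.
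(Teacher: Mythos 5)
Your proposal is correct and follows essentially the same route as the paper, which states only that the lemma "follows immediately" from \Cref{fact:noose-for-SPQRtree-node} and \Cref{observation:rnode_drawing}. You have supplied the unstated details---the laminarity of $\{N_b\}$ enabling independent application of \Cref{observation:rnode_drawing} inside each disc, and the per-face normalization of parallel subcurves---which is exactly the expansion the paper implicitly relies on.
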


\subparagraph{Non-Crossing Matchings.}
\iflong
We will use non-crossing matchings and the closely related Dyck words
for the definition and analysis of our types.
Let $K_n$ be the complete graph on vertices $\{1,\dotsc,n\}$ and let
$<$ be a cyclic ordering of the elements in $\{1,\dotsc,n\}$.
A \emph{non-crossing matching} is a matching $M$ in the graph $K_n$
such that for every two edges 
$\{a,b\},\{c,d\} \in M$
it is
not the case that $a < c < b < d$. A non-crossing matching can be
visualized by placing $n$ vertices on a cycle, and connecting matching
vertices with pairwise non-crossing curves all on one fixed side of
the cycle. The number of non-crossing matchings over $n$ vertices is
given by~\cite{Noncrossingpartoriginal,Noncrossingpart}:

$$M(n) = CN(\frac{n}{2}) \approx \frac{2^n}{\sqrt{\pi} (\frac{n}{2})^{\frac{3}{2}}} \approx 2^n$$

Here, $CN(n)$ is the $n$-th Catalan number, i.e.:

$$CN(n) = \frac{1}{n+1} \binom{2n}{n} \approx \frac{4^n}{\sqrt{\pi}
  n^{\frac{3}{2}}} \approx 4^n$$

A \textit{Dyck word} is a sequence composed of $\{"[","]"\}$ symbols, such that each prefix has an equal or greater number of $"["$s than $"]"$s, and the total number of $"["$s and $"]"$s are equal.

\begin{observation}\label{obs:dyck}
  There is a one-to-one correspondence between non-crossing matchings
  with $2n$ vertices and Dyck words of length $2n$. Moreover, one can
  be translated into the other after fixing a starting vertex and an
  orientation of the cycle.
\end{observation}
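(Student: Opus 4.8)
The plan is to construct the bijection explicitly and then verify both directions are well-defined and mutually inverse. Fix the $2n$ vertices on the cycle, label them $1,\dotsc,2n$ according to the chosen orientation, and designate vertex $1$ as the starting vertex. Given a non-crossing matching $M$, I would walk around the cycle in the fixed orientation, and for each vertex $i$ emit the symbol ``$[$'' if $i$ is the smaller endpoint of its matching edge (in the linear order $1<2<\dots<2n$ induced by cutting the cycle at the starting vertex) and ``$]$'' if it is the larger endpoint. This yields a word $w(M)$ of length $2n$ with equally many ``$[$'' and ``$]$'' symbols, one per matched vertex. Conversely, given a Dyck word, I would read it left to right maintaining a stack: on ``$[$'' push the current position, on ``$]$'' pop the top position $j$ and create the matching edge between $j$ and the current position.

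The key steps are: (1) show $w(M)$ is a Dyck word, i.e.\ every prefix has at least as many ``$[$'' as ``$]$''. This is where the non-crossing property is used: if some prefix ending at position $p$ had a strict excess of ``$]$'' symbols, then some closing bracket at a position $q\le p$ would have to be matched to an opening bracket at a position $>p$, and a counting argument (there are more edges with larger endpoint in $\{1,\dots,p\}$ than edges fully contained in $\{1,\dots,p\}$ can absorb) forces two such edges to interleave as $a<c<b<d$, contradicting that $M$ is non-crossing. (2) Show the stack-based decoding produces a non-crossing matching: edges created by the stack are properly nested or disjoint by the LIFO discipline, so no two interleave. (3) Check the two maps are inverse to each other, which is immediate once one observes that in $w(M)$ the partner of a ``$[$'' at position $i$ is exactly the matching ``$]$'' in the usual bracket-matching sense, precisely because $M$ is non-crossing. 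Finally, the ``Moreover'' clause is just the remark that the entire construction only referenced the starting vertex and the orientation, so once those are fixed the correspondence is canonical.

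I expect the only real content to be step~(1)—translating ``non-crossing'' into the prefix-balance condition of a Dyck word—and this is a standard fact about non-crossing partitions/matchings, so the argument is short. Everything else is bookkeeping. An alternative, even lighter, route is to invoke the well-known equality between the number of non-crossing matchings on $2n$ points and the Catalan number $CN(n)$ (already cited in the excerpt), together with the classical fact that Dyck words of length $2n$ are also counted by $CN(n)$, and then exhibit the explicit map above and check injectivity only; since both sets are finite of equal size, surjectivity follows for free. I would present the explicit bijection, since later sections apparently need to translate one object into the other concretely, not merely know they are equinumerous.
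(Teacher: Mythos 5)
The paper does not prove Observation~\ref{obs:dyck}; it is stated as a well-known fact with no argument attached, as is typical for the observations in the preliminaries. Your proposal supplies the standard explicit bijection (smaller endpoint $\mapsto$ ``$[$'', larger endpoint $\mapsto$ ``$]$''; stack-based decoding in the other direction), which is a perfectly reasonable thing to do and is what a careful reader would reconstruct.

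One inaccuracy worth flagging in your step~(1): the non-crossing hypothesis is \emph{not} what makes $w(M)$ a Dyck word, and the way you phrase the contradiction (``some closing bracket at position $q\leq p$ would have to be matched to an opening bracket at a position $>p$'') is off --- by your own encoding rule, a vertex receives ``$]$'' exactly when its $M$-partner lies to its \emph{left}, so a ``$]$'' at $q\leq p$ always has its $M$-partner at a position $<q\leq p$. This immediately gives the prefix-balance condition for an \emph{arbitrary} perfect matching: each ``$]$'' in a prefix is witnessed by a distinct ``$[$'' from its $M$-partner, also in the prefix. No non-crossing needed, and no interleaving counting argument needed. The non-crossing property enters only where you invoke it in step~(3): it is what guarantees that, in $w(M)$, the $M$-partner of each ``$[$'' coincides with its bracket-matching partner, which is exactly what makes the stack decoding the inverse of the encoding (and hence what makes the encoding injective on non-crossing matchings, even though it is far from injective on matchings in general). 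With that correction the proof is complete and standard.
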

\fi \ifshort
Let $K_n$ be the complete graph on vertices $\{1,\dotsc,n\}$ and let
$<$ be a cyclic ordering of the elements in $\{1,\dotsc,n\}$.
A \emph{non-crossing matching} is a matching $M$ in the graph $K_n$
such that for every two edges 
$\{a,b\},\{c,d\} \in M$
it is
not the case that $a < c < b < d$.
\fi

\section{Solution Normal Form}

Our first order of business is to show that we can
assume that the solution (Hamiltonian cycle) to the \SH{} problem interacts with the drawing
in a restricted manner. In particular, we aim to show that every
subhamiltonian graph $G$ has a witness $(D,D_H,G_H,H)$ in \emph{normal
  form}, i.e., with the
following property: it is possible to draw a curve in $D_H$
between any two vertices occurring in a common face of $D$ such that
this curve only crosses the Hamiltonian cycle at most twice.
Note that this property will allow us to bound the number of possible
interactions of the Hamiltonian cycle with any subgraph corresponding
to either a node in the SQPR-tree or an arc in a sphere-cut
decomposition and is crucial to bound the number of types in our
dynamic programming algorithm.
\ifshort The following lemma is the main technical lemma behind our
  normal form. An illustration of the main ideas behind the proof is
  provided in  Figure~\ref{fig:hc-two}.
\fi  

\iflong We will need the following auxiliary lemmas.
  \begin{lemma}
\label{obs:cycle}
  Let $G$ be a subhamiltonian graph with witness $(D,D_H,G_H,H)$ and
  let $f$ be a face of $D_H$. Then the restriction of the cyclic order
  given by $H$ to the vertices in $V(f)$ is either equal to $\seqb(f)$
  or it is equal to the reverse of
  $\seqb(f)$.
\end{lemma}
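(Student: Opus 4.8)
The plan is to show that the two cyclic orders on $V(f)$ in question induce the same \emph{unoriented} cycle on $V(f)$, which is precisely the assertion that one is equal to $\seqb(f)$ or to its reverse. First I would record two easy structural facts about $G_H$: since $G_H$ contains the spanning cycle $H$ it is $2$-connected, so by the preliminaries $\seqb(f)$ is a genuine cyclic order on $V(f)$ with no repeated vertex and \Cref{obs:face-nopl} is applicable to the face $f$ of $D_H$; moreover, since $H$ is Hamiltonian, every vertex of $V(f)$ lies on the drawing of $H$. Write $\tau_f$ for the cyclic order $\seqb(f)$ on $V(f)$ and $\tau_H$ for the cyclic order induced on $V(f)$ by $H$. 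If $|V(f)|\le 3$ the statement is immediate because then every cyclic order on $V(f)$ already equals $\seqb(f)$ or its reverse, so I may assume $|V(f)|\ge 4$.

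Now suppose for contradiction that $\tau_H$ is neither $\tau_f$ nor its reverse. A cyclic order on a set of size at least $3$ is determined up to reversal by its set of consecutive (unordered) pairs, so $\tau_f$ and $\tau_H$ have different sets of consecutive pairs; since both sets have size $|V(f)|$, there is a pair $u,v$ that is consecutive in $\seqb(f)$ but \emph{not} consecutive along $H$ restricted to $V(f)$. The vertices $u$ and $v$ split the cycle $H$ into two internally vertex-disjoint arcs, and since $u,v$ are not consecutive in $\tau_H$ each of these two arcs contains at least one further vertex of $V(f)$; fix such vertices $w_1$ on one arc and $w_2$ on the other. The four vertices $u,v,w_1,w_2$ are pairwise distinct, and because $u,v$ are consecutive in $\seqb(f)$ the vertices $w_1,w_2$ both lie in the remaining part of that cyclic sequence, so after possibly exchanging the names $w_1$ and $w_2$ the tuple $(u,v,w_1,w_2)$ is a subsequence of $\seqb(f)$.

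The crux is then to confront \Cref{obs:face-nopl} with $H$. On the one hand, \Cref{obs:face-nopl} applied to the subsequence $(u,v,w_1,w_2)$ of $\seqb(f)$ asserts that every $u$--$w_1$ path and every $v$--$w_2$ path in $G_H$ intersect. On the other hand, following the arc of $H$ between $u$ and $v$ that contains $w_1$, from $u$ up to $w_1$, yields a $u$--$w_1$ path $P_1$ in $G_H$, and following the other arc from $v$ up to $w_2$ yields a $v$--$w_2$ path $P_2$; these are vertex-disjoint, since they are contained in the two internally disjoint $H$-arcs determined by $u,v$, with $P_1$ not reaching $v$ and $P_2$ not reaching $u$. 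This contradiction shows $\tau_H$ equals $\tau_f$ or its reverse.

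The step I expect to need the most care is the choice of the four vertices: one must take $u,v$ to be a \emph{consecutive} pair of $\seqb(f)$ that fails to be consecutive on $H$ (an arbitrary four-tuple of $V(f)$ is not enough, as \Cref{obs:face-nopl} then leaves too many possibilities), and then produce the witnesses $w_1,w_2$ on the two $H$-arcs so that $(u,v,w_1,w_2)$---and not some other interleaving---is a subsequence of $\seqb(f)$; only with this configuration do the two $H$-arcs supply the two vertex-disjoint paths that clash with \Cref{obs:face-nopl}. A more topological alternative, which I would fall back on if this bookkeeping becomes awkward, is to observe that the Jordan curve drawn by $H$ in $D_H$ bounds a closed disc $\Delta$ with $f\subseteq\Delta$, and then to argue directly---using that each boundary edge of $f$ is either an arc of $\partial\Delta=H$ or a chord of $\Delta$ whose removal leaves the rest of $\partial f$ on one side---that the endpoints of every boundary edge of $f$ are consecutive among $V(f)$ along $H$, which again forces $\tau_f$ and $\tau_H$ to have the same consecutive pairs.
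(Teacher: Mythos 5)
Your proposal is correct and follows essentially the same approach as the paper: argue by contradiction, extract four vertices of $V(f)$ whose cyclic order in $\seqb(f)$ interleaves two pairs, and exhibit two vertex-disjoint arcs of $H$ joining those pairs, contradicting \Cref{obs:face-nopl}. The only cosmetic difference is in how the four witnesses are chosen—you start from a pair consecutive in $\seqb(f)$ but not along $H$, while the paper starts from a vertex whose $\seqb(f)$-neighbours differ from its $H$-neighbours and then locates a second disjoint arc of $H$—but the resulting configuration and the application of \Cref{obs:face-nopl} are the same.
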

\iflong \begin{proof}
           Suppose for a contradiction that the cyclic order of
  $\seqb(f)$ differs from the (reverse) cyclic order given by $H$. Then, there
  are three vertices $a$, $b$, and $c$ such that $b$ is
  between $a$ and $c$ in the cyclic order given by $\seqb(f)$, but
  $b$ is between $d$ and $e$ in the cyclic order given by $H$,
  where $\{d,e\}\neq \{a,c\}$. W.l.o.g. assume that $d\neq a$. Then,
  $H$ contains a path $P_{db}$ between $d$ and $b$ that does not
  contain any vertex from $V(f)\setminus \{d,b\}$ and moreover $d$ is
  neither between $a$ and $b$ nor between $b$ and $c$ in the cyclic
  order given by $\seqb(f)$. Let $A$ be the set of all vertices
  between $d$ and $b$ in the cyclic order defined by $\seqb(f)$. Then,
  because $H$ is a Hamiltonian cycle and $V(f)\setminus A\neq
  \emptyset$, we obtain that there is a vertex $x \in A$ and a vertex
  $y \in V(f)\setminus A$ such that $H$ contains a path $P_{xy}$ that
  does not contain any vertex in $V(f)\setminus \{x,y\}$. Since
  $P_{db}$ and $P_{xy}$ are disjoint, the statement of the lemma now
  follows from Observation~\ref{obs:face-nopl}.
\end{proof}\fi 

\begin{lemma} 
\label{lem:edge-given}
  Let $G$ be a subhamiltonian graph with witness $(D,D_H,G_H,H)$ and let $f$
  be a face of $D$. For any two vertices $u,v \in V(f)$,
  a $uv$-curve $c$ can be added to $D_H$ inside $f$ such that every
  edge from $E(H)$ crosses at most once with $c$.
\end{lemma}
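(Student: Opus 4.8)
The plan is to first pin down exactly which edges of $H$ can meet a curve drawn inside $f$, observe that those edges form a pairwise non-crossing family of chords of $f$, and then route $c$ from $u$ to $v$ through the cells cut out by these chords along a shortest path in the cell-adjacency graph, so that no chord is crossed twice.

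\textbf{Which $H$-edges matter.} I would first argue that the only edges of $H$ that a curve whose relative interior lies in the open region of $f$ can possibly cross are the \emph{$H$-chords of $f$}: the edges $e\in E(H)\setminus E(G)$ whose drawing in $D_H$ has its relative interior inside $f$. Indeed, for any $e=xy\in E(H)\setminus E(G)$ the relative interior of $e$ in the planar drawing $D_H\supseteq D$ meets no vertex of $G$ and crosses no edge of $G$, hence it is disjoint from $D$ and, being connected, lies in a single face of $D$; if that face is not $f$ then $e$ is disjoint from the open region of $f$, and if $e\in E(G)$ then $e$ is drawn on face boundaries of $D$ — in either case $e$ cannot cross $c$. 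Let $\mathcal E$ be the set of $H$-chords of $f$; each joins two vertices of $V(f)$, is drawn inside $f$, and, by planarity of $D_H$, the chords in $\mathcal E$ are pairwise non-crossing.

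\textbf{Routing $c$.} The chords of $\mathcal E$ cut the open region of $f$ into the faces of $D_H$ contained in $f$; call this set $\mathcal P$. Let $Q$ be the (multi)graph on $\mathcal P$ with one edge per chord $e\in\mathcal E$, joining the two faces of $\mathcal P$ incident with $e$. Since the open region of $f$ is path-connected and the only edges of $G_H$ inside it are the chords of $\mathcal E$, the graph $Q$ is connected; moreover distinct edges of $Q$ correspond to distinct chords. (When $G$ is connected, $\overline{f}$ is a disk and $Q$ is in fact a tree, but I only need connectivity.) Now pick $R_u,R_v\in\mathcal P$ with $u$, respectively $v$, on their boundary — these exist since $V(f)\subseteq\overline f=\bigcup_{P\in\mathcal P}\overline P$ — and take a shortest path $R_u=P_0,P_1,\dots,P_m=R_v$ in $Q$. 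Being a simple path, it traverses pairwise distinct chords $e_1,\dots,e_m$, where $e_i$ separates $P_{i-1}$ from $P_i$. I would then build $c$ by concatenating: a simple curve inside $P_0$ from $u$ to an interior point $p_1$ of $e_1$; for $1\le i<m$ a simple curve inside $P_i$ from $p_i$ to an interior point $p_{i+1}$ of $e_{i+1}$; and a simple curve inside $P_m$ from $p_m$ to $v$. Each piece exists by Observation~\ref{obs:face} applied to the connected graph $G_H$ and one of its faces, and the pieces can be chosen to meet only at the points $p_1,\dots,p_m$ (the faces $P_i$ have pairwise disjoint interiors), so $c$ is a simple $uv$-curve whose relative interior lies inside $f$. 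By construction $c$ crosses each of $e_1,\dots,e_m$ exactly once and crosses no other chord of $\mathcal E$; together with the first step this shows that every edge of $E(H)$ crosses $c$ at most once. (If $u=v$, or if $R_u=R_v$, the statement is immediate from Observation~\ref{obs:face} with $m=0$.)

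\textbf{Main obstacle.} The delicate point is the topological bookkeeping: turning the concatenated pieces into a genuinely \emph{simple} curve and guaranteeing that no chord is touched a second time, in particular when $G$ is disconnected (so $\overline f$ need not be a disk) or not biconnected (so $\seqb(f)$ may repeat vertices). I expect this to be handled cleanly by phrasing everything in terms of the faces of the \emph{connected} graph $G_H$ — which are honest open disks and to which Observation~\ref{obs:face} directly applies — rather than in terms of $\overline f$, together with the remark that a simple path in $Q$ traverses distinct edges and hence distinct chords.
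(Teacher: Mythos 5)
Your proposal is correct and takes essentially the same route as the paper: the paper restricts $D_H$ to the interior of $f$, takes the dual graph of the resulting subdrawing, and concatenates a dual-graph path with short curves from $u$ and $v$ to the incident cells — exactly your cell-adjacency graph $Q$ and shortest-path routing. Your write-up is a bit more explicit about why only $H$-chords of $f$ can be crossed and about keeping the concatenated curve simple, but these are refinements of the same argument, not a different approach.
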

\iflong \begin{proof}
  Let $D_H'$ be obtained from the restriction of $D_H$ to vertices and edges inside
  $f$ and let $f_u$ and $f_v$ be the two faces of $D_H'$ inside $f$ having $u$
  or $v$ on their border, respectively. If $f_u=f_v$, then the claim
  follows immediately from Observation~\ref{obs:face}. Otherwise,
  consider the dual graph $H$ of $D_H'$ together with its drawing $D_H^D$ inside $D_H'$.
  Then, $H$ contains a path 
  from $f_u$ to $f_v$ that uses only faces inside $f$ and that corresponds to
  a curve $P$ between $f_u$ and $f_v$ in $D_H^D$  that intersects every
  edge of $H$ at most once. Because of Observation~\ref{obs:face},
  we can draw a curve $c_u$ from $u$
  to $f_u$ and a curve $c_v$ from $f_v$ to $v$ inside $f_u$ and $f_v$, respectively,
  without adding any crossings. Then, the curve obtained from the
  concatination of $c_u$, $P$, and $c_v$ is the required $uv$-curve.
\end{proof}\fi

\begin{lemma} 
\label{lem:edge}
  Let $G$ be a subhamiltonian graph with witness $(D,D_H,G_H,H)$, let $f$
  be a face of $D$, and let $c$ be a curve drawn inside $f$ between
  two vertices $u,v \in V(f)$. Then, we can redraw the curves
  corresponding to the edges of $H$ inside $f$ such that every such curve
  crosses $c$ at most once, i.e., we can adapt $D_H$ inside $f$ into a
  drawing $D_H'$ such that $(D,D_H',G_H,H)$ is a witness for $G$ and
  every curve corresponding to an edge of $H$ inside $f$ crosses $c$
  at most once in $D_H'$.
\end{lemma}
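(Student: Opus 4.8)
The plan is to discard the part of $D_H$ inside $f$ and redraw the affected $H$-edges by hand. Since $c$ is a $uv$-curve drawn inside $f$ and $G$ is connected, \Cref{obs:face} gives that $c$ splits $f$ into two faces $f_1,f_2$ whose boundary sequences are the two pieces into which the chosen occurrences of $u$ and $v$ cut $\seqb(f)$. Let $\mathcal{M}$ be the set of edges of $H$ whose curves in $D_H$ lie inside $\overline{f}$; as an $H$-edge meets the drawing of $G$ only in its endpoints, each such curve lies entirely in $\overline f$ and joins two vertices of $V(f)$, and because $D_H$ is planar these curves are pairwise non-crossing. Thus, inside the disc $\overline f$, the drawing $D_H$ induces a non-crossing chord diagram on the cyclically ordered vertex set $V(f)$, and $c$ is an additional chord from $u$ to $v$. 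I classify each $m\in\mathcal{M}$ by its endpoints: if both lie on $\partial f_1$ put $m$ in $\mathcal{M}_1$, if both lie on $\partial f_2$ put it in $\mathcal{M}_2$, and otherwise (one endpoint strictly inside the $u$-$v$ arc of $f_1$, the other strictly inside that of $f_2$) call $m$ \emph{crossing}; chords incident to $u$ or $v$ always land in $\mathcal{M}_1\cup\mathcal{M}_2$.

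The redrawing then does the obvious thing: every chord of $\mathcal{M}_1$ is redrawn inside $f_1$, every chord of $\mathcal{M}_2$ inside $f_2$, and every crossing chord as an arc that meets $c$ exactly once (an arc in $f_1$ plus an arc in $f_2$). The key step — and the main obstacle — is to show this can be done for all of $\mathcal{M}$ simultaneously while staying planar. The enabling observation is structural: every crossing chord separates $u$ from $v$ in the cyclic order on $V(f)$, so since the crossing chords are pairwise non-crossing they form a ``parallel family'' that can be linearly ordered $m_1,\dots,m_k$ by nestedness between $u$ and $v$; consequently the $f_1$-endpoints of $m_1,\dots,m_k$ occur in this order along the $u$-$v$ arc of $f_1$, and symmetrically for $f_2$. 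I place the point at which $m_\alpha$ crosses $c$ onto $c$ in exactly the order $m_1,\dots,m_k$ from $u$ to $v$. With this single choice, the chords of $\mathcal{M}_1$ together with the $f_1$-halves of the crossing chords form a non-crossing chord system of the disc $f_1$ (whose boundary is the $u$-$v$ arc of $\partial f$ followed by $c$): non-crossingness in the cyclic order on $\partial f$ restricts to any contiguous sub-arc, and the positions of the crossing points on $c$ were chosen to match this order. Symmetrically in $f_2$. Hence all the new $H$-arcs can be realised pairwise disjointly inside $f_1$ and $f_2$.

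Finally, let $D_H'$ agree with $D_H$ outside $\overline f$ and use the new arcs inside. No edge of $G$ is touched and $\partial f$ is unchanged, the new $H$-arcs are pairwise disjoint and avoid all edges of $G$ (none lie strictly inside $f$) and do not interact with anything drawn outside $\overline f$; so $D_H'$ is a planar drawing of $G_H$ whose restriction to $G$ is still $D$, i.e.\ $(D,D_H',G_H,H)$ is a witness for $G$, and by construction the chords of $\mathcal{M}_1\cup\mathcal{M}_2$ miss $c$ while each crossing chord meets $c$ exactly once. A minor technical nuisance, present throughout, is that when $G$ is not biconnected a vertex may occur several times on $\partial f$; this is handled by consistently working with the boundary sequence $\seqb(f)$ and fixed occurrences of $u$ and $v$, exactly as in \Cref{obs:face}.
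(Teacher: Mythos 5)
Your proof is correct and reaches the same conclusion by a cleaner, more self-contained route. The paper first invokes Lemma~\ref{lem:edge-given} to produce an auxiliary $uv$-curve $c'$ inside $f$ that meets each $H$-curve at most once; it then records the order in which $c'$ crosses the $H$-curves, argues via Observation~\ref{obs:face-nopl} that the corresponding endpoints appear in the same order along the two boundary arcs, and re-routes those $H$-curves so that they cross $c$ at the analogously ordered points. You instead skip $c'$ entirely: you observe that the $H$-curves drawn in $\overline f$ form a non-crossing chord diagram on $\seqb(f)$, partition the chords into $\mathcal M_1$, $\mathcal M_2$, and a ``parallel'' family of $u$--$v$-separating chords, linearly order the latter by nestedness, and then place the crossing points on $c$ in that order. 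This yields the same ordering the paper extracts from $c'$, but is purely combinatorial and avoids the detour through the auxiliary curve.

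A modest upside of your treatment is that it explicitly accounts for \emph{all} $H$-curves in $\overline f$ — those in $\mathcal M_1$ and $\mathcal M_2$ are redrawn entirely inside $f_1$ resp.\ $f_2$ — whereas the paper's proof focuses on the curves crossed by $c'$ and only implicitly re-inserts the others. One spot where your write-up is terse: the claim that the $\mathcal M_1$-chords do not cross the $f_1$-halves of the crossing chords in the disc $\overline{f_1}$ is true, but the justification ``non-crossingness restricts to a contiguous sub-arc'' only covers $\mathcal M_1$ vs.\ $\mathcal M_1$. A one-line fix: take a homeomorphism of $\overline f$ onto $\overline{f_1}$ that is the identity on the $f_1$-arc and sends the $f_2$-arc onto $c$ carrying each $f_2$-endpoint $b_j$ to the chosen crossing point $q_j$ (order-preserving by construction); the image of the non-crossing chord diagram in $\overline f$ is then exactly the chord system you want in $\overline{f_1}$, so it is non-crossing.
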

\iflong \begin{proof}
  Because of Lemma~\ref{lem:edge-given} there is a $uv$-curve $c'$
  that can be added to $D_H$ inside $f$ such that every curve corresponding to a
  Hamiltonian cycle of $H$ crosses $c'$ at most once. Let
  $(p_1,\dotsc,p_\ell)$ be the sequence of all crossing points between $c'$ and curves
  corresponding to edges of $H$ given in the order of appearance when going along $c'$
  from $u$ to $v$ and suppose that $p_i$ is the crossing point of the
  edge $e_i$ on $H$ with $c'$.

  Now consider the drawing $D_H^-$
  obtained from $D_H$ after adding $c$ and removing all curves
  corresponding to edges of $H$ inside $f$. Moreover, let
  $p_1^c,\dotsc,p_\ell^c$ be an arbitrary set of pairwise distinct
  points on $c$ that occur in the order $(p_1^c,\dotsc,p_\ell^c)$ when
  going along $c$ from $u$ to $v$ and let $c^1$ and $c^2$ be the two
  subcurves in $D_H^-$ of the border of $f$ between $u$ and $v$.
  Note that every edge $e_i$ has one endpoint $v_i^1$ on $c^1$ and one
  endpoint $v_i^2$ on $c^2$; otherwise both endpoints of $e_i$ are
  either on $c^1$ or on $c^2$ and $c'$ could have been drawn in $D_H$
  without crossing the curve corresponding to $e_i$. Note furthermore
  that because of Observation~\ref{obs:face-nopl}, the vertices $v_1^j,\dotsc,v_\ell^j$ must appear in the order
  $(v_1^j,\dotsc,v_\ell^j)$, when going along $c^j$ from $u$ to $v$
  for every $j \in \{1,2\}$. Since $v_i^1$ ($v_i^2$) and $p_i$ are initially in
  the same face of $D_H^-$, we can use Observation~\ref{obs:face} to draw a curve between $v_i^1$ ($v_i^2$)
  and $p_i$ in this face for every $i \in [1,\ell]$. Moreover, using the same observation, we
  obtain that after drawing this curve, it still holds that $v_i^1$
  ($v_i^2$) are in the same face as $p_i$ for every $i \in [1,\ell]$. Therefore,
  we can repeatedly apply
  Observation~\ref{obs:face} to draw curves in $D_H^-$ between
  $v_i^j$ and $p_i'$ for every $i \in [1,\ell]$ and $j \in \{1,2\}$ to
  obtain the required drawing $D_H'$.   
                                                                        \end{proof}\fi 
 \fi 

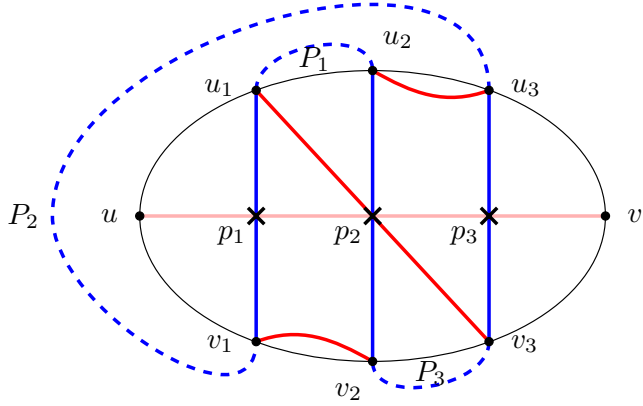
\begin{figure}
  \centering
  \resizebox{!}{7cm}{\begin{tikzpicture}[scale=0.7]
      \draw[fill=white] (0,0) ellipse (4cm and 2.5cm);
      \draw[very thick,color=red!30!white] (-4,0)--(4,0); 
      \draw[very thick,blue](-2,2.16)--(-2,-2.16)(0,2.5)--(0,-2.5)(2,2.16)--(2,-2.16);(-4,3) to[out=0,in=90] (-2,2.16) (-2,2.16)--(-2,0)(0,0)--(0,2.5)(4,-3) to[out=180,in=270] (2,-2.16)(2,0)--(2,-2.16);
      \draw[very thick,red](0,2.5) to[out=-30,in=200] (2,2.16) (0,-2.5) to[out=150,in=20] (-2,-2.16)(-2,2.16)--(2,-2.16);
      \draw[very thick, dashed,blue](-2,2.16)to[out=90,in=90] (0,2.5)(2,-2.16)to[out=270,in=270](0,-2.5)(2,2.16)to[out=90,in=90] (-5.5,0)to[out=270,in=270] (-2,-2.16);
      \draw[fill=black]
      (-4,0)circle[radius=2pt](4,0)circle[radius=2pt](-2,-2.16)circle
      [radius=2pt](2,-2.16)circle[radius=2pt](-2,2.16)circle
      [radius=2pt](2,2.16)circle [radius=2pt](0,2.5)circle
      [radius=2pt](0,-2.5)circle [radius=2pt](-2,0)node[very
      thick,cross=4pt]{} (0,0)node[very thick,cross=4pt] {} (2,0)
      node[very thick,cross=4pt] {};
      \node[left] at (-2.2,2.2) {$u_1$};\node[above right] at (0,2.7)
      {$u_2$};\node[right] at (2.2,2.2) {$u_3$};\node[below left] at (-2,0){$p_1$};
      \node[below left] at (0,0) {$p_2$};\node[below left] at (2,0) {$p_3$};\node[left] at (-2.2,-2.2) {$v_1$};\node[below left] at (0,-2.7) {$v_2$};\node[right] at (2.2,-2.2) {$v_3$};\node[left] at (-4.2,0) {$u$};\node[right] at (4.2,0) {$v$};\node at (-1,2.7) {$P_1$};\node at (-6,0) {$P_2$};\node at (1,-2.7) {$P_3$};\end{tikzpicture}}
  \caption{The cycle $H = (u_2,P_1,u_1,v_1,P_2,u_3,v_3,P_3,v_2,u_2)$
    represents a Hamiltonian cycle that crosses the $uv$-curve at least
    three times (in $p_1$, $p_2$ and $p_3$). Thanks to
    Lemma~\ref{lem:crossing}, we obtain a Hamiltonian cycle $H'=
    (u_2,P_1,u_1,v_3,P_3,v_2,v_1,P_2,u_3,u_2)$ that differs from $H$
    only inside the face $f=(u,u_1,u_2,u_3,v,v_3,v_2,v_1)$ and crosses the $uv$-curve two fewer times than $H$
    does. Finally, note that the vertices $u$ and $v$ are part of
    either $P_1$, $P_2$, or $P_3$.}
  \label{fig:hc-two}

\end{figure}
  
\iflong The following lemma is crucial to obtain our normal form.
An illustration of the main ideas behind the proof is provided in  Figure~\ref{fig:hc-two}.
\fi

 \begin{lemma}
\label{lem:crossing}
  Let $G$ be a subhamiltonian graph with witness $(D,D_H,G_H,H)$, let
  $f$ be a face of $D$ and  let $c$ be a curve drawn inside $f$ between
  two vertices $u,v \in V(f)$. Then, there is a witness
  $(D,D_{H'},G_{H'},H')$ for $G$ such that:
  \begin{enumerate}[(1)]
   \item $D_{H'}$ and $D_H$ differ only inside $f$.
  \item $c$ crosses at most two curves corresponding to the
    edges of $H'$.
  \item $c$ crosses each curve corresponding to an
    edge of $H'$ at most once.
  \end{enumerate}

          \end{lemma}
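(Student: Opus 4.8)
The plan is as follows. First apply Lemma~\ref{lem:edge} to the curve $c$: this lets us replace $D_H$ by a drawing (which I still call $D_H$) that differs from the original only inside $f$ and in which \emph{every} edge of $H$ crosses $c$ at most once — so property~(3) is already secured, and it remains only to reduce the number of edges of $H$ that cross $c$ to at most two. If at most two of them cross $c$, we are done with $H'=H$. Otherwise the idea is to erase everything of $H$ that is drawn inside $f$ and reconnect the resulting fragments through $f$ in a more economical way.

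Concretely, let $F$ be the set of edges of $H$ whose $D_H$-curve lies inside $\overline f$; each such edge has both endpoints in the closure of $f$. Deleting $F$ from $D_H$ yields a planar drawing of $G\cup(H\setminus F)$ in which $f$ is a face and inside which $c$ still lies, and the components $Q_1,\dots,Q_m$ of $H\setminus F$ are vertex-disjoint paths partitioning $V(G)$, drawn in the region complementary to $f$ and touching $\partial f$ only at their endpoints. The key structural point is that, in the cyclic order along $\partial f$, the endpoint pairs of the $Q_j$ do not interleave: the $Q_j$ are pairwise disjoint subcurves of the simple closed curve $H$ lying in the closed region complementary to $f$ (a disk with boundary $\partial f$, once $G$ is biconnected), and two disjoint arcs in a disk with endpoints on its bounding circle cannot interleave — this is exactly the mechanism behind Observation~\ref{obs:face-nopl}. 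Call this non-crossing pairing $M_Q$, and note it has $2m$ slots on $\partial f$.

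The heart of the argument is now a combinatorial claim about non-crossing matchings. The curve $c$ splits $\partial f$, hence the $2m$ slots, into two arcs $A$ (on the $c^1$-side) and $B$ (on the $c^2$-side). I claim there is a perfect matching $M'$ on these slots that (i) is non-crossing along $\partial f$, (ii) satisfies that $M_Q\cup M'$ is a single $2m$-cycle, and (iii) has at most two \emph{cross pairs}, i.e.\ pairs with one slot in $A$ and one in $B$. This is proved by induction on $m$: remove a cyclically consecutive pair of $M_Q$, apply the inductive hypothesis, and reinsert the pair, updating $M'$ locally so as to preserve (i) and (ii); the cross-pair bound is maintained by tracking the parity of $|A|$ — every perfect matching on the slots has a number of cross pairs congruent to $|A|$ modulo $2$, one can in fact reach $|A|\bmod 2\in\{0,1\}$ cross pairs whenever $M_Q$ itself has a cross pair, while the value $2$ is needed (and attainable) only when $M_Q$ has none and $|A|$ is even. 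Having fixed such an $M'$, realize it inside $f$ by pairwise non-crossing curves — the $A$–$A$ pairs routed on the $c^1$-side of $c$, the $B$–$B$ pairs on the $c^2$-side, and the at most two cross pairs drawn straight across — so that each such curve meets $c$ at most once; this uses Observation~\ref{obs:face}. Setting $H':=\bigcup_j Q_j\cup M'$ and taking $D_{H'}$ to be $D_H\setminus F$ together with the new $M'$-curves, property~(ii) makes $H'$ a Hamiltonian cycle on $V(G)$.

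It remains to verify the conclusion. The drawing $D_{H'}$ is planar: outside $f$ it agrees with the planar drawing $D_H\setminus F$, inside $f$ it consists only of the pairwise non-crossing $M'$-curves (no edge of $G$ lies inside the face $f$), and the two pieces meet only at vertices of $\partial f$; hence $(D,D_{H'},G_{H'},H')$ is a witness for $G$, and by construction it differs from $(D,D_H,G_H,H)$ only inside $f$, giving~(1). Since $c$ lies inside $f$ it crosses no edge of $G$ and no $Q_j$, so the only curves of $H'$ it meets are the $M'$-curves, of which at most two meet $c$ and each at most once — giving~(2) and~(3). The main obstacle is the combinatorial claim of the third paragraph: the inductive reinsertion must be carried out so that the cross-pair count never exceeds $2$ (the slack of $2$, rather than insisting on $1$, is precisely what makes the parity bookkeeping go through), and a few degenerate situations must be dispatched separately — a non-simple $\partial f$ when $G$ is disconnected or has cut vertices, vertices both of whose $H$-edges lie inside $f$ (these contribute two coinciding slots and must be re-linked by two $M'$-edges), and the case where $u$ or $v$ itself hosts a slot, whose side is then read off from the rotation at that vertex.
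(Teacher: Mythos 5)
Your approach departs substantially from the paper's. The paper performs an iterative \emph{local} uncrossing: it first invokes Lemma~\ref{lem:edge} (as you do) to guarantee each $H$-edge meets $c$ at most once, and then, whenever three consecutive crossing points $p_1,p_2,p_3$ of $c$ with edges $u_1v_1,u_2v_2,u_3v_3$ exist, it uses Observation~\ref{obs:cycle} to determine how the path $P_H$ (namely $H$ minus $u_2v_2$) visits the other endpoints and explicitly swaps the three edges for $\{u_1v_3,\,v_1v_2,\,u_2u_3\}$, reducing the number of crossings by exactly two while explicitly verifying that $H'$ stays Hamiltonian and $D_{H'}$ stays planar; this is iterated. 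You instead perform a \emph{global} reconnection: delete all of $H$'s edges inside $f$ simultaneously and try to rebuild a single non-crossing matching $M'$ on the $2m$ slot vertices satisfying three constraints at once --- non-crossing, $M_Q\cup M'$ a single $2m$-cycle, and at most two pairs straddling $c$.

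The difficulty is that you never actually establish the existence of such an $M'$; you only sketch an induction on $m$, and you yourself label it ``the main obstacle''. The sketch has concrete problems: the auxiliary claim that $|A|\bmod 2$ cross pairs are attainable whenever $M_Q$ itself has a cross pair is asserted without argument, and the hairpin-reinsertion step has to simultaneously (a) keep $M'$ disjoint from $M_Q$ so that $M_Q\cup M'$ has no digon, (b) keep $M'$ non-crossing, (c) keep $M_Q\cup M'$ a \emph{single} cycle, and (d) stay within the cross-pair budget --- and a naive local fix after reinsertion can break (c) or overshoot (d) depending on where the $c$-split falls relative to the reinserted pair. You also flag, but do not treat, two genuine degeneracies: if $G$ has cut vertices (the lemma does not assume biconnectivity), $\partial f$ is not a simple closed curve, the $Q_j$ are no longer disjoint arcs in a disk, and the non-crossing property of $M_Q$ --- on which everything rests --- is not immediate; and a vertex of $\partial f$ whose two $H$-edges both lie inside $f$ yields two coinciding slots that need separate handling. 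The target claim is in fact true (the paper's iterated uncrossing, applied to $F$, produces exactly such an $M'$ as its output $F'$), but the argument you give for it is incomplete, whereas the paper sidesteps the global matching question entirely by only ever touching three crossings at a time.
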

\iflong \begin{proof}
  By Lemma~\ref{lem:edge}, we can assume that the every curve
  corresponding to an edge
  of $H$ inside $f$ crosses $c$ at most once in $D_H$, which shows
  \enref{3}.
  If $c$ crosses at
  most two edges of $H$, then the statement of the lemma holds. So
  suppose that this is not the case and let $u_1v_1$, $u_2v_2$, and
  $u_3v_3$ be three distinct edges in $E(H) \setminus E(G)$ that cross
  $c$ at three successive points $p_1$, $p_2$, and $p_3$ such that no other
  edge of $E(H) \setminus E(G)$ crosses $c$ between $p_1$ and
  $p_3$. Assume furthermore that $u_1$, $u_2$, and $u_3$ are on the
  same face in $D+c$, where here and in the following $D+c$ denotes
  the drawing obtained from $D$ after adding $c$, and the same for $v_1$, $v_2$, and $v_3$.
  Then, there are faces $f^1_H$ and $f^3_H$ of
  $D_H$ such that $u_1, v_1, u_2, v_2 \in V(f^1_H)$ and $u_2, v_2,
  u_3, v_3 \in V(f^3_H)$.

  Now we will analyze the Hamiltonian cycle $H$. For every $i$ in
  $\{1,2,3\}$, the edge $u_iv_i$ is in $E(H)$.
  Let $P_H$ be the path
  from $u_2$ to $v_2$ obtained by deleting the edge $u_2v_2$ from
  $H$. From Observation \ref{obs:cycle} applied to $f^i_H$ we obtain
  that $u_i$ is between $u_2$ and $v_i$ in $P_H$, for $i$ in
  $\{1,3\}$. Let $P_1,P_2,P_3$ be paths created after deleting
  $u_1v_1$ , $u_2v_2$, $u_3v_3$ from $H$. Then, either:
  \begin{itemize}
  \item $P_H = (u_2,P_1,u_1,v_1,P_2,u_3,v_3,P_3,v_2)$ or
  \item $P_H = (u_2,P_1,u_3,v_3,P_2,u_1,v_1,P_3,v_2)$.
  \end{itemize}
  The proof for both cases is entirely analogous, so we will only make the first
  case explicit. An illustration of the current setting is provided in Figure~\ref{fig:hc-two}.

  \sloppypar
  $H' = (V(H), (E(H)\setminus
  \{u_1v_1, u_2v_2, u_3v_3\}) \cup \{u_1v_3, v_2v_1, u_3u_2\} $ is a
  Hamiltonian cycle, because it corresponds to the sequence
  $(u_2,P_1,u_1,v_3,P_3,v_2,v_1,P_2,u_3)$.    
  At this point, we have to prove that there exists a planar drawing $D_{H'}$ of
  $G_{H'}$ that satisfies \enref{1}--\enref{3}. To do so we will change $D_H$.

  Let $G_H^\star$ be the graph obtained from $G_H$ after subdividing the
  edges $u_iv_i$ with the new vertex $p_i$ for every $i$ in $\{1,2,3\}$
  and adding the edges $p_1p_2$ and $p_2p_3$. Note that
  $G_H^\star$ is planar, as witnessed by the drawing $D^\star=D_H+c'$, where $c'$ is the restriction of $c$ to the
  segment between $p_1$ and $p_3$.

  Because $u_2$ and $u_3$ lie on the same face as $p_2$ and $p_3$
  in $D^\star$, we obtain from Observation~\ref{obs:face}
  that we can add the curve between $u_2$ and $u_3$ inside this
  face without adding any crossings.
  Analogously, we will add the curves $v_1v_2$, $u_1p_2$ and
  $p_2v_3$. We can now obtain a new drawing $D'$ from $D^\star$ by
  removing the curves $u_iv_i$ and adding the curves $u_2u_3$,
  $v_1v_2$, and the curve $u_1v_3$ obtained as the concatenation of
  the curves $u_1p_2$ and $p_2v_3$.
  
  Observe that in this new drawing we reduced the number of crossings
  by $2$, \ie, instead of the crossings at $p_1$, $p_2$, and $p_3$,
  only the crossing at $p_2$ remains ($u_1v_3$-curve). Moreover, all
  changes happened inside $f$.
  
  Finally,
  let $D_{H'}$ be the drawing obtained from $D'$ after removing the
  curve between $u$ and $v$. Then, $D_{H'}$ shows \enref{1}.
  Moreover, by repeating the process as long as we have at least $3$
  crossings with the $uv$-curve, we obtain a drawing that also
  satisfies \enref{2}.
\end{proof}\fi
We are now ready to define our normal form for the Hamiltonian
cycle. Essentially, we show that if there is a Hamiltonian cycle, then
there is one which crosses each subcurve that is either part of the
border of a node in the \SPQR{}-tree or that is a subcurve of some noose in a
sphere-cut decomposition of an R-node or an S-node at most twice.

Let $G$ be a biconnected subhamiltonian multi-graph with \SPQR{}-tree
$\mathcal{B}$ and the associated set $\mathcal{T}$ of sphere-cut decompositions $\langle
T_b,\lambda_b, \Pi_b \rangle$ of $\sk(b)$ for every R-node and S-node $b$ of $\mathcal{B}$.
We say that a witness $W=(D,D_H,G_H,H)$ for $G$ \emph{respects} 
the sphere-cut decompositions in $\mathcal{T}$, if there is a planar drawing of all nooses in the
sphere-cut decompositions of $\mathcal{T}$ into $D$ such that every
subcurve $c$ in $\bigcup_{a\in E(T_b)} O_a$ crosses the curves
corresponding to the edges of $H$ at most twice in $D_H$.
We say that the witness $W$ for $G$ \emph{respects} $\mathcal{B}$ if
it respects the sphere-cut decompositions in $\mathcal{T}$ and for
every node $b$ of $\mathcal{B}$ with reference edge $(s_b,t_b)$, it holds
that there is a noose $N_b$ that can be drawn into $D_H$ such that:
\begin{itemize}
\item $N_b$ touches $D$ only at $s_b$ and $t_b$.
\item $N_b$ separates $\pe(b)$ from $G\setminus \pe(b)$ in $D$.
\item Each of the two subcurves $L_b$ and $R_b$ obtained from $N_b$ by splitting
  $N_b$ at $s_b$ and $t_b$ crosses the curves corresponding to the edges
  of $H$ at most twice.
\item Moreover, if any of the subcurves of the nooses $O_a$ and the nooses
  $N_b$ connect the same two vertices in the same face of $D$, then
  the two subcurves are identical.
\end{itemize}
The following lemma allows us to assume our normal form and follows
easily
from \iflong \Cref{cor:add-nooses-to-drawing} together with \fi a repeated application of \Cref{lem:crossing}.
\begin{lemma}
\label{lem:nicedrawing}
  Let $G$ be a biconnected subhamiltonian multi-graph with \SPQR-tree
  $\mathcal{B}$ and the associated set $\mathcal{T}$ of sphere-cut decompositions. Then,
  there is a witness $W=(D,D_H,G_H,H)$ for $G$ that respects $\mathcal{B}$.
\end{lemma}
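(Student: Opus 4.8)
The plan is to start from an arbitrary witness $(D,D_H,G_H,H)$ for the subhamiltonian graph $G$, whose existence follows by definition, and then massage it so that it respects $\mathcal{B}$. First I would invoke \Cref{cor:add-nooses-to-drawing}: applied to the planar drawing $D$ of $G$ (the restriction of $D_H$), it produces a planar extension $D'$ of $D$ that contains all the nooses $O_a$ coming from sphere-cut decompositions in $\mathcal{T}$, as well as a noose $N_b$ for every node $b$ of $\mathcal{B}$, with the two separation properties ($N_b$ touches $D$ only at $s_b,t_b$ and separates $\pe(b)$ from the rest) and the consistency property that subcurves connecting the same pair of vertices in the same face are identical. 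Crucially, all these nooses live entirely inside the faces of $D$, so each subcurve $c$ of each noose is drawn inside a single face $f$ of $D$ between two vertices $u,v\in V(f)$ — exactly the hypothesis needed to run \Cref{lem:crossing}.

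The second, main step is to process the subcurves one at a time with \Cref{lem:crossing}. For a single subcurve $c$ inside a face $f$, \Cref{lem:crossing} hands us a new witness $(D,D_{H'},G_{H'},H')$ that differs from the previous one only inside $f$, such that $c$ crosses at most two curves of $H'$ and each such curve at most once. The point I would emphasize is that because modifications are confined to the interior of $f$, they do not disturb the "at most twice" property already established for subcurves lying in \emph{other} faces; and since by the consistency property of \Cref{cor:add-nooses-to-drawing} every subcurve that lies in $f$ and connects the same endpoints is literally the same curve, fixing $c$ simultaneously fixes all copies of it. So I would iterate over the faces of $D$, and within each face over the (finitely many) distinct subcurves appearing there, applying \Cref{lem:crossing} once per subcurve; after finitely many steps every subcurve of every $O_a$ and every $L_b,R_b$ (the two halves of $N_b$ split at $s_b,t_b$) crosses the Hamiltonian cycle at most twice. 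The drawings of the nooses themselves are never moved — only the curves of the Hamiltonian cycle are redrawn inside faces — so the structural and consistency properties of the nooses carried over from \Cref{cor:add-nooses-to-drawing} are preserved throughout.

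Finally I would check that the resulting witness $W=(D,D_H,G_H,H)$ (renaming after the last iteration) satisfies every bullet in the definition of "respects $\mathcal{B}$": the sphere-cut part holds because every subcurve in $\bigcup_{a\in E(T_b)}O_a$ was processed; for each node $b$, the noose $N_b$ from \Cref{cor:add-nooses-to-drawing} still touches $D$ only at $s_b,t_b$, still separates $\pe(b)$ from $G\setminus\pe(b)$, and each of $L_b,R_b$ now crosses $H$ at most twice; and the consistency of identical subcurves in a common face is inherited. The one point requiring care — and the step I expect to be the only real obstacle — is the independence/commutativity of the successive applications of \Cref{lem:crossing}: I must argue that redrawing $H$ inside one face to fix one subcurve cannot increase the crossing count of an already-fixed subcurve. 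This follows because distinct faces of $D$ have disjoint interiors and the modifications of \Cref{lem:crossing} stay inside $f$, while two distinct subcurves lying in the \emph{same} face are handled together thanks to the "identical copies" consistency property, so there is no interference; a clean way to phrase this is to process faces one at a time and observe that within a face the curve-redrawing of \Cref{lem:crossing} can be done for all its subcurves at once, or sequentially since each application only reroutes $H$-edges in $f$ and the bound "at most twice" for the other subcurves of $f$ is re-established by the very same lemma application.
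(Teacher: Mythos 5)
Your overall strategy is the same as the paper's: start from an arbitrary witness, invoke \Cref{cor:add-nooses-to-drawing} to fix a drawing of all the nooses $O_a$ and $N_b$ inside $D$, and then iterate \Cref{lem:crossing} face by face until every subcurve crosses $H$ at most twice. However, there is a genuine gap exactly at the point you flag as ``the only real obstacle,'' namely the non-interference of successive applications of \Cref{lem:crossing}. The paper resolves this by observing that the consistency property of \Cref{cor:add-nooses-to-drawing} (subcurves with the same endpoints in the same face are literally identical) implies that \emph{every face of $D$ contains at most one subcurve} from all the nooses $O_a$ and $N_b$. Once there is at most one subcurve per face, each application of \Cref{lem:crossing} affects only the unique subcurve living in that face and can never spoil the crossing bound of a subcurve already processed, so the iteration trivially terminates.

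You do not establish (or invoke) this ``at most one subcurve per face'' fact, and the two workarounds you offer for the possibility of several distinct subcurves in a single face do not hold up. First, ``process all subcurves of $f$ at once'' is not something \Cref{lem:crossing} provides: the lemma is stated for a single curve $c$, and bounding crossings against several distinct curves simultaneously is a strictly stronger statement that would need its own proof (in fact the rerouting in the proof of \Cref{lem:crossing} is tailored to the given curve $c$). Second, the sequential claim that ``the bound for the other subcurves of $f$ is re-established by the very same lemma application'' is simply false: \Cref{lem:crossing} only asserts that the specific curve $c$ it is applied to ends up with $\le 2$ crossings, and says nothing about the crossing count of a different subcurve $c'$ in the same face, which may well have increased. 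You also partly conflate ``distinct subcurves that are identical as curves'' (which the consistency property does collapse into one object) with ``subcurves that are genuinely distinct curves with different endpoints,'' which it does not. The missing ingredient is precisely the one-subcurve-per-face observation, without which the iteration has no termination/monotonicity argument.
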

\iflong \begin{proof}
  Let $W=(D,D_H,G_H,H)$ be any witness for $G$, which exists because
  $G$ is subhamiltonian. Let $D'$ be the planar drawing obtained from
  $D$ using Corollary~\ref{cor:add-nooses-to-drawing}. That is, $D'$
  is a planar drawing of $G$ together with all nooses in
  $\SB O_a \SM a \in E(T_b) \land \langle T_b,\lambda_b,\Pi_b\rangle
  \in \mathcal{T}\SE$ as well as a noose $N_b$ for every node $b$ of
  $\mathcal{B}$ satisfying:
  \begin{itemize}
  \item $N_b$ intersects with $D$ only at $s_b$ and $t_b$.
  \item $N_b$ separates $\pe(b)$ from $G\setminus \pe(b)$ in $D$.
  \end{itemize}
  Moreover, if any of the subcurves of the nooses $O_a$ and the nooses
  $N_b$ connect the same two vertices in the same face of $D$, then
  the two subcurves are identical in $D'$. Note that this implies that
  every face of $D$ contains at most one subcurve from the
  nooses $O_a$ and $N_b$.

  Similarly, let $D_H'$ be obtained in the same manner from $D_H$. If
  $W$ already respects $\mathcal{B}$, then there is nothing to
  show. Otherwise, it holds that either:
  \begin{itemize}
  \item there is a noose $N_b$ such that $L_b$ or $R_b$ are crossed by
    the curves corresponding to the edges of $H$ more than twice or
  \item there is a subcurve $c \in O_a$ for some $a \in E(T_b)$ and
    $\langle T_b,\lambda_b,\Pi_b\rangle\in \mathcal{T}$ that is crossed by
    the curves corresponding to the edges of $H$ more than twice
  \end{itemize}
  Since every face of $D$ contains at most one subcurve from the
  nooses $O_a$ and $N_b$, it follows that in both cases, we can apply Lemma~\ref{lem:crossing} to obtain a
  witness $W'$ for $G$ that crosses $L_b$, $R_b$, or $c$, respectively, at most twice
  and does not introduce any additional crossings. This implies that a
  repeated application of Lemma~\ref{lem:crossing} allows us to obtain
  the desired witness that crosses each of the subcurves added to $D_H$
  in $D_H'$ at most twice. 
\end{proof}\fi

\section{Setting Up the Framework}

In this section we provide the foundations for our algorithm. That is,
in Subsection~\ref{sec:oneconn}, we show that it suffices to consider
biconnected graphs allowing us to employ \SPQR{}-trees. We then define
the types for nodes in the \SPQR{}-tree, which we compute in our
dynamic programming algorithm on \SPQR{}-trees, in
Subsection~\ref{ssec:typesSPQR}. Finally, in
Subsection~\ref{ssec:SPCD-framework} we introduce our general
framework for simplifying dynamic programming algorithms on sphere-cut
decompositions and introduce the types for nodes of a sphere-cut
decomposition\ifshort.\fi\iflong that we compute as part of our dynamic programming
algorithm on sphere-cut decompositions.\fi

\subsection{Reducing to the Biconnected Case}\label{sec:oneconn}

We begin by showing that any instance of \SH\ can be easily reduced to
solving the same problem on the biconnected components of the same
instance. It is well-known that \SH{} can be solved independently on
each connected component of the input graph, the following theorem now
also shows that the same holds for the biconnected components of the
graph and allows us to employ \SPQR{}-trees for our algorithm.

\begin{theorem}
\label{the:biconnected-comp}
  Let $G$ be a graph and let $C \subseteq V(G)$ such that $N(C) =
  \{n\}$, where $N(C)=\SB v \in V(G)\setminus C \SM \exists c \in C\ 
  \{v,c\} \in E(G)\SE$ is the set of neighbors of any vertex of $C$ in
  $V(G)\setminus C$. 
  Then $G$ is subhamiltonian if and only if both $G^-=G-C$ and
  $G^C=G[C \cup \{n\}]$ are subhamiltonian.
\end{theorem}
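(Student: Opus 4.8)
The plan is to prove the two directions separately; the forward direction is immediate while the backward direction carries all the content. For the forward direction: $G^-=G-C$ and $G^C=G[C\cup\{n\}]$ are both (induced) subgraphs of $G$, so if $G$ is subhamiltonian then so are they by Observation~\ref{obs:subham}(1).

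For the backward direction I would avoid reasoning with drawings directly and instead use the cyclic reformulation of subhamiltonicity: a multi-graph $J$ is subhamiltonian if and only if there are a cyclic order $\rho$ of $V(J)$ and a $2$-coloring of $E(J)$ such that no color class contains two edges $\{a,b\},\{c,d\}$ whose endpoints interleave around $\rho$ (i.e.\ appear in the cyclic pattern $a,c,b,d$). This is the classical two-page book embedding characterization~\cite{DBLP:journals/jct/BernhartK79} read cyclically along a Hamiltonian cycle $H$ of a planar supergraph $J\cup H$, with the two color classes being the two sides of the Jordan curve $H$; note that edges of $J$ joining cyclically consecutive vertices never interleave with anything, so the edges of $J$ that happen to lie on $H$ may be colored arbitrarily. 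Now take such witnesses $(\rho^-,\chi^-)$ for $G^-$ and $(\rho^C,\chi^C)$ for $G^C$, write $\rho^-=(n,w_1,\dots,w_m)$ and $\rho^C=(n,c_1,\dots,c_k)$ starting both at $n$, and merge them by blowing up the single element $n$ of $\rho^-$ into the block $n,c_1,\dots,c_k$, that is, set $\rho:=(n,c_1,\dots,c_k,w_1,\dots,w_m)$. Because $N(C)=\{n\}$ forces $E(G)$ to be the disjoint union of $E(G^-)$ and $E(G^C)$, the coloring $\chi$ that agrees with $\chi^-$ on $E(G^-)$ and with $\chi^C$ on $E(G^C)$ is well defined on $E(G)$.

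The remaining step, and the only one that needs care, is to verify that $(\rho,\chi)$ has no monochromatic interleaving pair. If both edges lie in $E(G^-)$ then, since the restriction of $\rho$ to $V(G^-)$ equals $\rho^-$, they would already interleave under $(\rho^-,\chi^-)$, a contradiction; two edges in $E(G^C)$ are symmetric, the restriction of $\rho$ to $C\cup\{n\}$ being $\rho^C$. If $e\in E(G^-)$ and $e'\in E(G^C)$, then either $e$ and $e'$ share the vertex $n$ and hence cannot interleave, or $e'$ has both endpoints in $\{n,c_1,\dots,c_k\}$ while $e$ has both endpoints in $\{n,w_1,\dots,w_m\}$; in the latter case, since $c_1,\dots,c_k$ form an uninterrupted run of $\rho$ immediately following $n$, one of the two arcs into which the endpoints of $e$ split $\rho$ contains all of $c_1,\dots,c_k$ (and when $e'=\{n,c_j\}$ one instead uses that the arc $c_1,\dots,c_{j-1}$ contains no vertex of $V(G^-)$), so $e$ and $e'$ do not interleave. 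Thus $(\rho,\chi)$ witnesses that $G$ is subhamiltonian; the degenerate cases where $G^-$ or $C$ is tiny go through verbatim. I expect this last case of the case analysis to be the main, though minor, obstacle. An essentially equivalent drawing-based argument is available as well: cut the planar drawings of $G^-\cup H^-$ and $G^C\cup H^C$ open along a Hamiltonian edge incident to $n$ on each side, glue the two resulting discs along their boundaries so that the two copies of $n$ are identified, and route one fresh Hamiltonian edge joining the two loose ends through the face created by the gluing---there the obstacle becomes checking that no crossing is introduced and that $n$ and the two loose ends end up on a common face.
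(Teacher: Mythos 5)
Your proposal is correct, and the backward direction takes a genuinely different route from the paper. The paper stays within its witness formalism $(D,D_H,G_H,H)$: it takes the disjoint union $G'$ of $G^-$ and $G^C$, keeps two separate copies $n^-$ and $n^C$ of $n$, forms a Hamiltonian cycle $H'$ for $G'$ by deleting one $H$-edge $n^-v^-$ and one $H$-edge $n^Cv^C$ (chosen on the outer faces of the two drawings) and inserting the bridging edges $n^-n^C$ and $v^-v^C$, and then applies Observation~\ref{obs:subham}(3) to contract $n^-n^C$ and recover $G$. Your primary argument instead works purely combinatorially through the cyclic-order-plus-2-coloring reformulation of subhamiltonicity, merging $\rho^-$ and $\rho^C$ by splicing the $C$-block into $\rho^-$ at $n$ and then doing a case analysis on where a hypothetical monochromatic interleaving pair would sit. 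Both are sound; what yours buys is that you never need to reason about outer faces or invoke the ``contraction preserves subhamiltonicity'' observation, while the paper's version slots directly into the drawing-based machinery it uses everywhere else. Your concluding sketch of a cut-and-glue drawing argument is essentially the paper's approach, and the potential gap you flag there (keeping $n$ and the two cut ends on a common face) is exactly what the paper handles by choosing $e^-$ and $e^C$ on the outer faces. One small nit in your cyclic-merge case analysis: the ``either $e,e'$ share $n$ or \ldots'' phrasing reads as if the two alternatives were exclusive of the described endpoint configuration, when in fact $e'$ always has both endpoints in $\{n,c_1,\dots,c_k\}$ and $e$ always has both in $\{n,w_1,\dots,w_m\}$; the substance is fine, but it would be cleaner to split on whether $e$ or $e'$ is incident to $n$ and in each subcase exhibit the arc of $\rho$ that swallows all endpoints of the other edge.
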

\iflong \begin{proof}
  If $G$ is subhamiltonian, then because $G^-$ and $G^C$ are both
  subgraphs of $G$ we obtain from Observation~\ref{obs:subham} \enref{1}
  that $G^-$ and $G^C$ are also subhamiltonian.

  Towards showing the reverse direction, suppose that $G^-$ and $G^C$
  are subhamiltonian. Therefore, $G^-$ and $G^C$ have witnesses $(D^-,D_{H^-},G_{H^-},H^-)$
  and $(D^C,D_{H^C},G_{H^C},H^C)$, respectively. Let $e^-=n^-v^-$ and
  $e^C=n^Cv^C$ be one of the two edges incident to
  $n$ in $H^-$ and $H^C$, respectively. Because any face can
  be drawn as the outer face of a planar graph, we can assume
  w.l.o.g. that the edges
  $e^-$ and $e^C$ are incident to the outer faces of the drawings
  $D_{H^-}$ and $D_{H^C}$, respectively.
  
  Let $G'$ be the graph obtained via the disjoint union of $G^-$ and
  $G^C$. Then, $G'$ is subhamiltonian because the cycle
  $H' = (V(G'), (E(H^-) \cup E(H^C) \cup \{n^-n^C, v^-v^C\}) \setminus
  \{e^-,e^C\})$ is a Hamiltonian cycle of $G'$ that has a planar drawing $D_{H'}$ which is obtained
  from the disjoint union of the drawing $D_{H^-}$ and $D_{H^C}$ after
  adding the edges
  $n^-n^C$ and $v^-v^C$ using Observation~\ref{obs:face}.
  Then from Observation~\ref{obs:subham} \enref{3} applied to $G'$ for the edge $n^-n^C
  \in E(H')$, we conclude that $G$ is also subhamiltonian, as desired.
\end{proof}\fi

\subsection{Defining the Types  for Nodes in the \SPQR{}-tree}\label{ssec:typesSPQR}

Here, we define the types for nodes in the
\SPQR{}-tree  that we
will later compute using dynamic programming. In the following, we
assume that $G$ is a biconnected multi-graph with \SPQR-tree
$\mathcal{B}$ and the associated set $\mathcal{T}$ of sphere-cut decompositions.
Let $b$ be a node of $\mathcal{B}$ with pertinent graph $\pe(b)$
and reference edge $e=(s,t)$. A \emph{type} of $b$ is a triple $(\psi,M,S)$
such that (please refer also to Figure~\ref{fig:types-SPQR} for an
illustration of some types):
\begin{itemize}
      \item $\psi$ is a function from $\{L,R\}$
  to subsets of $\{l,l',r,r'\}$ such that $\psi(L) \in \{\emptyset, \{l\},\{l,l'\}\}$
  and $\psi(R) \in \{\emptyset, \{r\},\{r,r'\}\}$. We denote by
  $V(\psi)$ the set $\psi(L)\cup \psi(R)$. Informally, $\psi$
  captures how many times the Hamiltonian cycle enters and exits the graph
  $\pe(b)$ from the left ($L$) and from the right ($R$).
\item $M \subseteq \SB \{u,v\} \SM u,v \in \{s,t\} \cup V(\psi) \land
  u\neq v\SE$ and $M$ is a
  non-crossing matching w.r.t. the circular ordering $(s,r,r',t,l',l)$
  that matches all vertices in $V(\psi)$ (i.e. $V(\psi)\subseteq V(M)$), where $V(M)=\bigcup_{e \in
    M}e$.
  Informally, $M$ captures the maximal path
  segments of the Hamiltonian cycle inside $\pe(b)\cup V(\psi)$ with endpoints
  in $\{s,t\} \cup V(\psi)$.
\item $S \subseteq \{s,t\}\setminus V(M)$. Informally, $S$ captures
  whether $s$ or $t$ are contained as inner vertices on path segments
  corresponding to $M$.
\end{itemize}

We now provide the formal semantics of types; see
\Cref{fig:types-SPQR} for an illustration.
Let $\XXX$ be the set of all types and $\pe^*(b)$ be the graph
obtained from $\pe(b)$ after adding the dummy
vertices $l$, $l'$, $r$, and $r'$ together with the edges
$sl$, $ll'$, $l't$, $sr$, $rr'$, and $r't$. 
We say that $b$ has type $X=(\psi,M,S)$ if there is a set $\PPP$ of
vertex-disjoint paths or a single cycle
in the complete graph with vertex set $V(\pe^*(b))$ such that:
\begin{itemize}
\item $\PPP$ consists of exactly one path $P_e$ between $u$ and $v$
  for every $e=\{u,v\} \in M$ or $\PPP$ is a cycle and $M=\emptyset$.
\item $\SB \INVP(P)\SM P\in \PPP\SE$ is a partition of
  $(V(\pe(b))\setminus \{s,t\})\cup S$, where
  $\INVP(P)$ denotes the set of inner vertices of $P$.
\item there is a planar drawing $D(b,X)$ of $\pe^*(b)\cup
  \bigcup_{P \in \PPP}P$ with outer-face $f$ such that
  $\seqb(f)=\{s,r,r',t,l',l\}$.
\end{itemize}

\begin{figure}
\begin{minipage}[c]{0.33\textwidth}
\begin{tikzpicture}[xscale=0.7, yscale=0.45]
\draw[color=red!30!white, very thick] (0,0) ellipse (2cm and 3.5cm);
\draw[very thick, blue] (-2,0) to[out=20,in=270] (0,3.5)(1.75,1.7) to[out=200,in=90] (0.5,0) to[out=270,in=160] (1.75,-1.7);
\draw[fill=black] (0,3.5)circle[radius=3pt];
   \draw (-2,0) node[very thick,cross=3.5pt] {};
\draw (1.75,-1.7) node[very thick,cross=3.5pt] {};
\draw (1.75,1.7) node[very thick,cross=3.5pt] {};

\draw[fill=blue](0,-3.5)circle[radius=3pt];
\node[above] at (0,3.6) {$s$};
\node[below] at (0,-3.6) {$t$};
\node[left] at (-2,0) {$l$};
\node[right] at (1.75,1.7) {$r$};
\node[right] at (1.75,-1.7) {$r'$};
\end{tikzpicture}
 \end{minipage}\hfill
\begin{minipage}[c]{0.33\textwidth}
\begin{tikzpicture}[xscale=0.7,yscale=0.45]
\draw[color=red!30!white, very thick] (0,0) ellipse (2cm and 3.5cm);
\draw[very thick, blue] (-1.75,1.7) to[out=20,in=270] (0,3.5)(-1.75,-1.7) to[out=30,in=260] (0,0) to[out=80,in=210] (1.75,1.7) (0,-3.5) to[out=90,in=200] (1.75,-1.7);
\draw[fill=black] (0,3.5)circle[radius=3pt](0,-3.5)circle[radius=3pt];
 \draw (-1.75,-1.7) node[very thick,cross=3.5pt] {};
\draw (-1.75,1.7) node[very thick,cross=3.5pt] {};
\draw (1.75,-1.7) node[very thick,cross=3.5pt] {};
\draw (1.75,1.7) node[very thick,cross=3.5pt] {};

\node[above] at (0,3.6) {$s$};
\node[below] at (0,-3.6) {$t$};
\node[left] at (-1.75,1.7) {$l$};
\node[left] at (-1.75,-1.7) {$l'$};
\node[right] at (1.75,1.7) {$r$};
\node[right] at (1.75,-1.7) {$r'$};
\end{tikzpicture}
 \end{minipage}\hfill
\begin{minipage}[c]{0.33\textwidth}
\begin{tikzpicture}[xscale=0.7,yscale=0.45]
\draw[color=red!30!white, very thick] (0,0) ellipse (2cm and 3.5cm);
\draw[very thick, blue] (-2,0)--(2,0);
\draw[fill=blue] 
(0,-3.5)circle[radius=3pt];
\draw[fill=white]
(0,3.5)circle[radius=3pt]
;

\draw (-2,0) node[very thick,cross=3.5pt] {};
\draw (2,0) node[very thick,cross=3.5pt] {};

\node[above] at (0,3.6) {$s$};
\node[below] at (0,-3.6) {$t$};
\node[left] at (-2,0) {$l$};
\node[right] at (2,0) {$r$};
\end{tikzpicture}
 \end{minipage}
\caption{The figure shows three different types of a node in an
  \SPQR-tree with reference edge $(s,t)$, i.e., the types shown are (from left to
  right): $(\{\{L\rightarrow \{l\}\},\{R\rightarrow
  \{r,r'\}\},\{\{l,s\},\{r,r'\}\},\{t\})$, $(\{\{L\rightarrow
  \{l,l'\}\},\{R\rightarrow
  \{r,r'\}\},\{\{l,s\},\{l',r\},\{t,r'\}\},\emptyset)$, and $(\{\{L\rightarrow \{l\}\},\{R\rightarrow \{r\}\}, \{\{l,r\}\},\{t\})$.
  The subset of $\{l,l'\}$ and $\{r,r'\}$ that appears corresponds to
  $\psi(L)$ and $\psi(R)$ respectively. The blue edges correspond to the
  matching $M$ and the blue vertices corresponds to $S$.}
\label{fig:types-SPQR}
\end{figure}
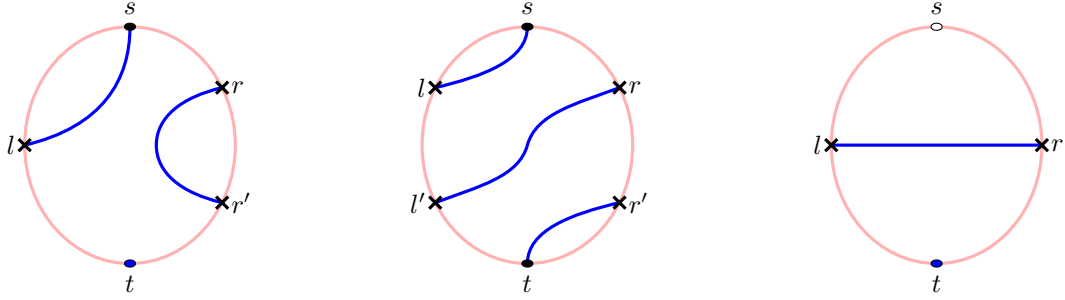

\ifshort 
The way we define the types $X=(\psi,M,S)$ of a node $b$ allows us to associate each witness $W=(D,D_H,G_H,H)$ with a type, denoted by $\type_W(b)$, based on the restriction of the witness to the respective pertinent graph. 
\fi
\iflong
The way we define the types $X=(\psi,M,S)$ of a node $b$ allows us to associate each witness with a type based on the restriction of the witness to the respective pertinent graph. 

Formally, let $W=(D,D_H,G_H,H)$ be a witness for $G$ that
respects $\mathcal{B}$. Then, the \emph{type} of a node $b$ in the \SPQR{}-tree of $G$ \emph{w.r.t.}\ 
$W$, denoted by $\type_W(b)$, is obtained as follows. Let $D_H^b$ be the drawing
$D_H$ restricted to the region with border $N_b$ containing
$\pe(b)$. Let $\CCC$ be the set of all curves in $D_H^b$
corresponding to path segments of $H$. Let $P_L$ ($P_R$) be the set of all
endpoints of curves in $\CCC$ in $L_b$ ($R_b$). Note that $|P_L|\leq
2$ and $|P_R|\leq 2$, the
endpoints of every curve in $\CCC$ are from the set $\{s,t\}\cup
P_L\cup P_R$, and the inner vertices of the curves represent a partition of
$V(\pe(b))\setminus \{s,t\}$. W.l.o.g., we assume that
$P_L\subseteq \{l,l'\}$ and $P_R \subseteq \{r,r'\}$. We are now ready
to define the type $X=(\psi,M,S)$ for $b$.
Let $M$ be the set containing the set of endpoints for
every curve in $\CCC$, let $\psi(L)=P_L$ and $\psi(R)=P_R$, and let $S \subseteq \{s,t\}$ be the set that contains $s$ (or $t$) if $s$ ($t$) occurs as an inner vertex of some curve in $\CCC$. Then, the type of $b$ w.r.t. $W$ is equal to $(\psi,M,S)$. Note also that the drawing $D_H^b$ witnesses that $b$ has type $(\psi,M,S)$.
\fi

\subsection{Framework for Sphere-cut Decomposition}\label{ssec:SPCD-framework}
Here, we introduce our framework to simplify the computation of
records via bottom-up dynamic programming along a sphere-cut
decomposition. Since the framework is independent of the type of
records one aims to compute, we believe that the framework is widely
applicable and therefore interesting in its own right.
In particular, we introduce a simplified framework for computing the types of arcs
(or, equivalently, nooses) in sphere-cut decompositions.
      
  Indeed, the central ingredient of any dynamic programming
  algorithm on sphere-cut decompositions is a procedure that given an
  inner node with parent arc $a_P$ and child arcs $a_L$ and $a_R$
  computes the set of types for the noose $O_{a_P}$ from the set of
  types for the nooses $O_{a_L}$ and $O_{a_R}$.
  Unfortunately, there is no simple way to obtain
  $O_{a_P}$ from $O_{a_L}$ and $O_{a_R}$ and this is why computing the
  set of types for $O_{a_P}$ from the set of types for $O_{a_L}$ and
  $O_{a_R}$ usually involves a technical and cumbersome case
  distinction~\cite{DornPBF10}. To circumvent this issue, we introduce a simple operation,
  i.e., the $\xor$ (\textbf{XOR}) operation defined below, and show that
  the noose $O_{a_p}$ can be obtained from the nooses $O_{a_L}$ and
  $O_{a_R}$ using merely a short sequence---one of length at most four---of $\xor$
  operations. 
        
Central to our framework is
the notion of \emph{weak nooses}, which are defined below and can be
seen as intermediate results in the above-mentioned sequence of simple
operations from the child nooses to the parent noose; in particular,
weak nooses are made up of subcurves of the nooses in the sphere-cut decomposition.
                  Let $G$ be a biconnected multi-graph and let $\mathcal{B}$ be an \SPQR-tree of $G$.
Let $b$ be an R-node or S-node of $\mathcal{B}$ with pertinent graph $\pe(b)$.
 Let $\langle T_b, \lambda_b, \Pi_b\rangle$ be a sphere-cut decomposition of $\sk(b)$
and $a$ be an arc of $T_b$ with pertinent graph $\pe(b,a)$.
Let $C(T_b)$ be the set of all subcurves of all nooses occurring in $T_b$, 
i.e., $C(T_b) = \bigcup_{a \in E(T_b)} O_a$ where $O_a$ is seen as a set of subcurves. We say $O$ is a \emph{weak
  noose} if $O$ is a noose consisting only of subcurves from
$C(T_b)$.
   \ifshort For each $O \subseteq C(T_b)$, let $\m{O}$ be equal to
the vertices of $G$ touched by the noose $O$.  
 \fi
   
\iflong
We now create the equivalent definition of $\md{a}$, which is defined
on subsets of subcurves in $C(T_b)$ (and therefore also weak nooses)
instead of on arcs in the sphere-cut decomposition, as follows.
For $O \subseteq C(T_b)$, the set $\m{O}$  is equal to
$\bigcup_{(\{u,v\},f) \in O}\{u,v\}$; note that this definition
is equivalent to the definition given for arcs, i.e., $\md{a}=\m{O_a}$
holds for every arc $a$ of $T_b$.
We also define $\pe(b,O)$ and $\sk(b,O)$ as the subgraph of $\pe(b)$
and $\sk(b)$, respectively, that is contained inside the weak noose
$O$; note that in particular $\pe(b,O_a)=\pe(b,a)$ and
$\sk(b,O_a)=\sk(b,a)$ for any arc $a$ of $T_b$.

While the above definitions are general, in our setting it will be
sufficient to restrict our attention to ``local'' weak nooses
consisting of $O_{a_P} \cup O_{a_L} \cup O_{a_R}$, where $a_P$ is a
parent arc for arcs $a_L$ and $a_R$.
Moreover, every weak noose $O$ in our setting will either separate an
edge-less graph with three nodes from $V(\sk(b))$, or separate the
graph $\sk(b,a)$ (where $a \in E(T_b)$) with at most one extra node from the rest of the graph. 
\fi

Having defined weak nooses, we will now define our simplified operation.
Let $A \xor B$ be an exclusive or for two sets $A$ and $B$, i.e. $A
\xor B = (A\cup B) \setminus (A\cap B)$. We will apply the
$\xor$-operation to weak nooses, whose $\xor$ is again a weak noose.
The following lemma, whose setting is illustrated in
\Cref{fig:parentchildnooses}, is central to our framework as it shows that
we can always obtain the noose for the parent arc $a_P$ from
the nooses of the child arcs $a_L$ and $a_R$ using a short sequence of
$\xor$-operations such that every intermediate result is a weak noose.
\iflong
Therefore, using our framework it is now sufficient to
show how to compute the set of types for a weak noose $O$ from the set
of types of two weak nooses $O_1$ and $O_2$ given that $O$ can be
obtained as $O_1 \xor O_2$. This greatly simplifies the computation of
types and can potentially be also applied to simplify dynamic
programming algorithms on sphere-cut decompositions for other problems
in the future.
\fi
\begin{lemma}
\label{lem:triangles}
  Let $a_P$ be a parent arc with two child arcs $a_L$ and $a_R$ in a
  sphere-cut decomposition $\langle T,\lambda,\Pi \rangle$ of a
  biconnected multi-graph $G$ with the drawing $D$.
  There exists a sequence $Q$ of at most $3$ $\xor$-operations such that:
  \begin{itemize}
       \item each step generates a weak noose $O$ with $|O|\leq 1+\max
         \{|\md{a_P}|,|\md{a_L}|,|\md{a_R}|\}$ as the \xor-operation of two
         weak nooses $O_1$ and $O_2$, whose inside region contains all
         subcurves in $(O_1\cap O_2)$,
      \item the last step generates the noose $O_{a_P}$,
      \item $Q$ contains $O_{a_L}$ and $O_{a_R}$ and at most two new
        weak nooses, each of them bounds the edge-less graph of size $3$. 
  \end{itemize}
      \end{lemma}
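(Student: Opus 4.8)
The plan is to set up a small "local" picture around the inner node of the sphere-cut decomposition whose parent arc is $a_P$ and child arcs are $a_L,a_R$, and then realise the transition from the two child nooses to the parent noose as a short sequence of symmetric differences, inserting intermediate "triangle" weak nooses where needed. Concretely, by \Cref{cor:add-nooses-to-drawing} I may assume a fixed drawing $D'$ in which all three nooses $O_{a_P},O_{a_L},O_{a_R}$ are drawn so that any two subcurves sharing endpoints in a common face coincide. The key structural fact I would establish first is the relationship between the three middle sets: the noose $O_{a_P}$ encloses exactly $\pe(b,a_L)\cup\pe(b,a_R)$, so $\m{O_{a_P}}\subseteq \m{O_{a_L}}\cup\m{O_{a_R}}$, and moreover in a sphere-cut decomposition a vertex of $\sk(b)$ can lie on at most the nooses of arcs on a single root-to-leaf-ish "band"; in particular the three middle sets pairwise intersect in the "shared boundary" vertices, and $\md{a_P}$, $\md{a_L}$, $\md{a_R}$ together have size $O(\beta)$. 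I'd make precise that each of $\md{a_L}\cap\md{a_R}$, $\md{a_L}\setminus\md{a_P}$, $\md{a_R}\setminus\md{a_P}$ is "small" in the sense needed for the size bound $1+\max\{|\md{a_P}|,|\md{a_L}|,|\md{a_R}|\}$.

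\textbf{The core construction.} Observe first that $O_{a_L}\xor O_{a_R}$ is already a noose (the laminarity of the noose family gives that $O_{a_L}$ and $O_{a_R}$ are "parallel" and their symmetric difference is a single closed curve), and this noose is almost $O_{a_P}$ — it differs from $O_{a_P}$ only along the subcurves incident to the vertices of $\md{a_L}\cap\md{a_R}$ that have become interior (the "merge locus"). If that merge locus is empty, we are done with a single $\xor$. Otherwise, I would handle the merge locus by introducing at most two auxiliary weak nooses, each of which is a "triangle": a closed curve through exactly three vertices of $\sk(b)$ made up of three subcurves from $C(T_b)$, bounding an edge-less graph on those three vertices (this is exactly the object the lemma's last bullet refers to, and it exists because consecutive vertices along a noose are joined by genuine subcurves of $C(T_b)$ lying in a common face, so one can triangulate a small region). $\xor$-ing in such a triangle "rotates" the boundary past one vertex of the merge locus without changing which graph is enclosed. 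Since the degree of an inner node of the ternary tree is three, the number of vertices that need to be handled this way is a bounded constant, and I expect to need at most two triangles and hence at most three $\xor$-operations in total: $O_{a_L}\xor O_{a_R}$, then $\xor$ with triangle $1$, then $\xor$ with triangle $2$, landing on $O_{a_P}$.

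\textbf{Verifying the three bullet conditions.} For the second and third bullets: the last $\xor$ produces $O_{a_P}$ by construction, and $Q=(O_{a_L},O_{a_R},\text{triangle}_1,\text{triangle}_2)$ contains the two child nooses plus at most two new weak nooses, each bounding an edge-less graph of size $3$, as required. For the first bullet I need two things per step: that the result is a weak noose, and the size/enclosure condition. "Weak noose" follows because every curve in play is assembled from subcurves in $C(T_b)$ and, using the fixed drawing $D'$ together with laminarity, each $\xor$ of two nooses that overlap along a contiguous shared subcurve-set is again a simple closed curve traversing each face at most once. For the size bound, at each step the two operands $O_1,O_2$ overlap along a set of subcurves whose union lies in the inside region of the resulting noose (this is what "whose inside region contains all subcurves in $O_1\cap O_2$" means, and it holds because we always peel the child/triangle nooses off from the inside), and the vertices touched by $O_1\xor O_2$ are a subset of $\m{O_1}\cup\m{O_2}$ with the shared-boundary vertices possibly removed — a short counting argument using $|\m{\text{triangle}}|=3$ and the fact that intermediate nooses never introduce a fourth "new" vertex beyond one of the three middle sets gives $|O|\le 1+\max\{|\md{a_P}|,|\md{a_L}|,|\md{a_R}|\}$.

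\textbf{The main obstacle} I anticipate is the bookkeeping in the size bound and, relatedly, pinning down exactly which vertices lie on the merge locus and why at most two triangles suffice: one has to argue carefully that $O_{a_L}\xor O_{a_R}$ and $O_{a_P}$ disagree on a set of subcurves that is "localised" around at most a constant number of vertices, and that each triangle-$\xor$ reduces this disagreement. This in turn rests on the precise combinatorics of how the nooses $O_{a_L},O_{a_R},O_{a_P}$ sit inside the fixed drawing $D'$ (that they pairwise share contiguous runs of subcurves and are otherwise disjoint), which is exactly the content of \Cref{cor:add-nooses-to-drawing} and the laminarity of the noose family — so the proof will open by extracting those two facts in the form needed and then proceed to the case analysis on whether the merge locus is empty, a single vertex, or two vertices.
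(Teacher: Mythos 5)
Your overall decomposition of the problem is the right one: you correctly identify that the discrepancy $O' = O_{a_P}\xor O_{a_L}\xor O_{a_R}$ consists of at most two ``triangle'' weak nooses, each bounding an edge-less graph on three vertices, and that these are exactly the auxiliary objects the lemma's third bullet asks for. That part of your argument matches the paper's structure.

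However, the order in which you propose to apply the $\xor$-operations is wrong and breaks the size bound. You fix the sequence as ``$O_{a_L}\xor O_{a_R}$, then $\xor$ triangle~$1$, then $\xor$ triangle~$2$.'' The problem is that the very first intermediate result, $O_{a_L}\xor O_{a_R}$, can already be too large. Concretely, in the paper's own Figure~\ref{fig:parentchildnooses}, $O_{a_L}$ and $O_{a_R}$ share only the single subcurve $c_{1,5}$, so $O_{a_L}\xor O_{a_R}$ passes through all eight vertices $u_1,\dotsc,u_8$ and has $|O_{a_L}\xor O_{a_R}|=8$, while $1+\max\{|\md{a_P}|,|\md{a_L}|,|\md{a_R}|\}=1+\max\{6,5,5\}=7$. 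More generally, when the shared segment $O_{a_L}\cap O_{a_R}$ is short, both of its endpoints survive into $\m{O_{a_L}\xor O_{a_R}}$, so the vertex count can grow to roughly $|\md{a_L}|+|\md{a_R}|-2$, which is not bounded by $1+\max$. Your intended counting claim that ``intermediate nooses never introduce a fourth new vertex beyond one of the three middle sets'' is the precise place this goes wrong: the issue is not new vertices, but that $O_{a_L}\xor O_{a_R}$ accumulates the union of two middle sets, and that union is what exceeds the bound.

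The paper avoids this by \emph{never} forming $O_{a_L}\xor O_{a_R}$ directly when $O'\neq\emptyset$. In the two-triangle case it uses $Q=(O_{a_L}\xor O_1)\xor(O_{a_R}\xor O_2)$, so that every intermediate step is either $O_{a_L}\xor O_1$ or $O_{a_R}\xor O_2$, each of which adds exactly one vertex (the apex of the attached triangle) to a single child middle set and thus stays within $1+\max$. In the one-triangle case it uses $(O_{a_L}\xor O_1)\xor O_{a_R}$ (or the symmetric variant), and the nontrivial part is ruling out the subcase $|\m{O_{a_L}}\cap\m{O_1}|=|\m{O_{a_R}}\cap\m{O_1}|=3$, which the paper does by appealing to biconnectivity of $G$; your proposal does not address this case distinction at all. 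So to repair your argument you would need to (i) replace the fixed ordering ``children first, triangles last'' with an ordering that always attaches a triangle to a child noose before combining the two sides, and (ii) supply the biconnectivity argument that guarantees such an attachment is available in the single-triangle case.
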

\iflong \begin{proof}
  Using Condition~(ii) in the definition of sphere-cut
  decompositions together with certain planar properties, it is
  straightforward to show that, the intersection of any
  two of the nooses $O_{a_P}$, $O_{a_L}$, and $O_{a_R}$, interpreted
  as curves, is a single segment and that the intersection of all
  three contains at most two points. Therefore, we obtain the
  following fact:

  \smallskip
  \textbf{(*)} The difference between the curve $O_{a_P}$ and the
  curve $O_{a_L} \cup O_{a_R}$ corresponds to at most two segments
  $s_1$ and $s_2$.

  \smallskip
  Let $O'$ be the set of subcurves $O_{a_P} \xor O_{a_L} \xor
  O_{a_R}$. Note that $O'$ is a union of simple closed
  curves (weak nooses) that do not share any subcurve, since it is
  obtained from $\xor$-operations of simple closed curves and
  because $O_{a_L}$ and $O_{a_R}$ are inside $O_{a_P}$ and the insides
  of $O_{a_L}$ and $O_{a_R}$ are disjoint; please also refer to
  Figure~\ref{fig:parentchildnooses} for an illustration of the
  current setting. Since $\sk(b,a_P) =
  \sk(b,a_L) \cup \sk(b,a_R)$ and $O'$ is outside of $O_{a_L}$ and
  $O_{a_R}$ but inside $O_{a_P}$, all weak nooses
  $O$ from $O'$ bound an edge-less graph. Therefore, all of the subcurves
  from $O$ are inside the same face of $D$. This together with
  Condition~(ii), implies that every weak noose $O$ in $O'$
  contains exactly one subcurve from each $O_{a_P}$, $O_{a_L}$ and
  $O_{a_R}$. Therefore, using the fact that the weak nooses in
  $O'$ do not share any subcurve together with \textbf{(*)}, we obtain
  that $O'$ contains
  at most two weak nooses (one for each of the at most two segments
  $s_1$ and $s_2$).
  In summary, $O'$ is a union of at most two weak nooses and each of
  them bounds an edge-less graph of size $3$.
  
  Now we are going to define a sequence of operations $Q$ that
  satisfies the conditions set out in the Lemma. In order to achieve
  this goal, we will make use of the following observation:

  \smallskip
  \textbf{(**)} If
  $|\m{O_a} \cap \m{O}| = 2$, then $O_a \xor O$ is a weak noose, where
  $O$ is a weak noose from $O'$, and $a\in \{a_L,a_R\}$.

  \smallskip
  Below we distinguish between three cases, depending on the  number
  of weak nooses $O'$ is made of.
  \begin{itemize}
  \item If $O' = \emptyset$ then $Q=O_{a_L} \xor O_{a_R}$.
    
  \item If $O'$ consists of only one weak noose $O_1$, then we do the following. 
                   If $|\m{O_{a_L}} \cap \m{O_1}| = 2$, then we can use the sequence
    $Q=(O_{a_L} \xor O_1) \xor O_{a_R}$. Similarly, if $|\m{O_{a_R}}
    \cap \m{O_1}| = 2$, then we can use the sequence
    $Q=(O_{a_R} \xor O_1) \xor O_{a_L}$.
    
    Otherwise, $|\m{O_{a_L}} \cap \m{O_1}| = |\m{O_{a_R}} \cap \m{O_1}|
    = 3$. Let $c=(\{u,v\}, f)$ be the common subcurve of $O_{a_P} \cap
    O_1$. Then, $c \notin O_{a_L} \cup O_{a_R}$ and $u,v
    \in \m{O_{a_L}} \cap \m{O_{a_R}}$.
    
    If $|O_{a_P}| = 1$, then $G$ is not biconnected which is against
    our assumption, otherwise $|O_{a_P}| > 1$, but then $O_{a_P}\cap
    O_{a_L} \cap O_{a_R}$ is not empty, which contradicts our previous
    observation that the intersection of all three nooses $O_{a_P}$,
    $O_{a_L}$, and $O_{a_R}$ (seen as curves) is at most two points.

  \item If $O'$ consists of two weak nooses $O_1$ and $O_2$, then for every 
    $a \in \{a_L,a_R\}$ and $i \in \{1,2\}$, it holds that $O_a \xor
    O_i$ is a weak noose, which can be seen as follows. Because of
    \textbf{(*)}, we obtain that $|\m{O_a} \cap \m{O_i}| = 2$ which
    together with \textbf{(**)} implies that $O_a \xor
    O_i$ is a weak noose. Therefore, we can use the sequence $Q = (O_{a_L}\xor O_1)
    \xor (O_{a_R} \xor O_2)$.
  \end{itemize}
  
  Note that, for each $\xor$-operation between $O_1$ and $O_2$ in the solution $Q$, 
  the region $O_1\xor O_2$ contains subcurves $O_1\cap O_2$, because
  intersection between any two regions made from nooses from $Q$ is
  only part of their boundaries. Moreover, to show that
  $|O|\leq 1+\max \{|\md{a_P}|,|\md{a_L}|,|\md{a_R}|\}$ for every weak
  noose $O$ obtained as an intermediate step, it suffices to consider
  case that $O$ is obtained from one of the at most parts $O''$ of $O'$ and
  $O_{a}$ for $a \in \{a_L,a_R\}$. But this follows because $\m{O''}$
  contains only one vertex that is not in $\m{O}$.
\end{proof}\fi

\begin{figure}
  \centering
\resizebox{!}{5cm}{\begin{tikzpicture}[scale=0.92]

\coordinate (L1) at ($(0.1,0.05)$);
\coordinate (L2) at ($(0.1,2)$);
\coordinate (L3) at ($(2.6,2.9)$);
\coordinate (L4) at ($(5.1, 2)$);
\coordinate (L5) at ($(5.1,0.05)$);

\coordinate (R1) at ($(0.1, -0.05)$);
\coordinate (R2) at ($(0.1,-2)$);
\coordinate (R3) at ($(2.6,-2.9)$);
\coordinate (R4) at ($(5.1, -2)$);
\coordinate (R5) at ($(5.1, -0.05)$);

\coordinate (P1) at ($(0.05,2.08)$);
\coordinate (P2) at ($(2.6,3)$);
\coordinate (P3) at ($(5.15, 2.08)$);
\coordinate (P35) at ($(6.4, 0)$);
\coordinate (P4) at ($(5.15, -2.08)$);
\coordinate (P5) at ($(2.6, -3)$);
\coordinate (P6) at ($(0.05,-2.08)$);
\coordinate (P65) at ($(-1.2, 0)$);

\coordinate (T1) at ($(0,-2.01)$);
\coordinate (T2) at ($(0,0)$);
\coordinate (T3) at ($(0,2.01)$);
\coordinate (T35) at ($(-1.1, 0)$);
\coordinate (T4) at ($(5.2, 2.01)$);
\coordinate (T5) at ($(5.2, 0)$);
\coordinate (T6) at ($(5.2,-2.01)$);

\coordinate (T65) at ($(6.3, 0)$);

\coordinate (U1) at ($(0.05, 0)$);
\coordinate (U2) at ($(0.05, 2.035)$);
\coordinate (U3) at ($(2.6, 2.95)$);
\coordinate (U4) at ($(5.15, 2.035)$);
\coordinate (U5) at ($(5.15, 0)$);
\coordinate (U6) at ($(5.15,-2.035)$);
\coordinate (U7) at ($(2.6, -2.95)$);
\coordinate (U8) at ($(0.05,-2.035)$);

\draw[very thick, green!50!black]
 (L1)to[out=135,in=225](L2)
 (L2)to[out=45,in=180](L3)
 (L3)to[out=0,in=135](L4)
 (L4)to[out=315,in=45](L5)
(L5)--(L1)
;

\draw[very thick, red]
 (R1)to[out=225,in=135](R2)
 (R2)to[out=315,in=180](R3)
 (R3)to[out=0,in=225](R4)
 (R4)to[out=45,in=315](R5)
 (R5)--(R1)
;

\draw[very thick, black]
 (P1)to[out=45,in=180](P2)
 (P2)to[out=0,in=135](P3)
 (P3)to[out=0,in=90](P35)
 (P35)to[out=270,in=0](P4)
 (P5)to[out=0,in=225](P4)
 (P6)to[out=315,in=180](P5)
 (P1)to[out=180,in=90](P65)
 (P65)to[out=270,in=180](P6)
;
\draw[very thick, blue]
 (T35)to[out=90,in=190](T3)
 (T35)to[out=270,in=170](T1) 
 (T1)to[out=135,in=225](T2)
 (T2)to[out=135,in=225](T3)
 (T65)to[out=90,in=355](T4)
 (T65)to[out=270,in=5](T6)
 (T5)to[out=45,in=315](T4) 
 (T6)to[out=45,in=315](T5)
;

    \draw[thick, fill=white]
(U1) circle (3pt)
(U2) circle (3pt)
(U3) circle (3pt)
(U4) circle (3pt)
(U5) circle (3pt)
(U6) circle (3pt)
(U7) circle (3pt)
(U8) circle (3pt)
;

\node[above right] at  (U1) {$u_1$};
\node[below right] at (U2) {$u_2$};
\node[below] at (U3) {$u_3$};
\node[below left] at (U4) {$u_4$};
\node[above left] at (U5) {$u_5$};
\node[above left] at (U6) {$u_6$};
\node[above] at (U7) {$u_7$};
\node[above right] at (U8) {$u_8$};

\node at (2.6,1.5) {\textcolor{green!50!black}{$O_{a_{L}}$}};
\node at (2.6,-1.5) {\textcolor{red}{$O_{a_{R}}$}};
\node at (-0.7,0) {\textcolor{blue}{$O_1$}};
\node at (5.9,0.0) {\textcolor{blue}{$O_2$}};
\node at (2.6,3.5) {$O_{a_{P}}$};
\end{tikzpicture}}
                              \iflong
\caption{
An illustration of the relationship of the parent noose
  $O_{a_{P}}$ and the child nooses $O_{a_{L}}$ and $O_{a_{R}}$. The
  illustration represents the case of Lemma~\ref{lem:triangles} where 
  $O'=O_{a_P} \xor O_{a_L} \xor O_{a_R}$ consists of two disjoint weak
  nooses (triangles) $O_1$ and $O_2$. 
  Let $c_{i,j}$ be a curve between $u_i$, $u_j$ then all nooses are defined as follows: 
  $O_{a_P} = \{c_{2,3}, c_{3,4}, c_{4,6}, c_{6,7}, c_{7,8}, c_{2,8}\}$,
  $O_{a_L} = \{c_{1,2},c_{2,3}, c_{3,4},c_{4,5}, c_{1,5}\}$,
  $O_{a_R} = \{c_{1,5},c_{5,6}, c_{6,7}, c_{7,8}, c_{1,8}\}$,
  $O_{1} = \{c_{1,2},c_{2,8}, c_{1,8}\}$,
  $O_{2} = \{c_{4,5},c_{5,6}, c_{4,6}\}$,
  $O'= O_1 \cup O_2$.
  }
\fi
\ifshort
\caption{
An illustration of the relationship of the parent noose
  $O_{a_{P}}$ and the child nooses $O_{a_{L}}$ and $O_{a_{R}}$. The
  illustration represents the case of Lemma~\ref{lem:triangles} where 
  $O'=O_{a_P} \xor O_{a_L} \xor O_{a_R}$ consists of two disjoint weak
  nooses (triangles) $O_1$ and $O_2$.}
\fi
\label{fig:parentchildnooses}
\end{figure}

We are now ready to define the types of weak nooses, which informally
can be seen as a generalization of the types of nodes in an
\SPQR{}-tree introduced in Subsection~\ref{ssec:typesSPQR}. An illustration of the types is also provided in Figure~\ref{fig:typesSPCD}. In the following we fix an arbitrary order $\pi_G$ of the vertices in $G$.
A type of a weak noose $O$ is a triple $(\psi, M, S)$ such that:
\iflong
\begin{itemize}
\item $\psi$ is a function that for each subcurve $c=(\{u,v\}, f)$ in $O$
  returns a sequence of at most two new nodes. Informally,
  these two nodes are on the subcurve $c$ (in the order given by
  $\psi(c)$) and if $\pi_G(u) < \pi_G(v)$ we assume that
  $[u,\psi(c),v]$ is the sequence of nodes on the subcurve $c$.
\item $S$ is a subset of $\m{O}$.
\item $M \subseteq \SB \{u,v\} \SM u,v \in V(\psi) \cup
  (\m{O}\setminus S) \land u\neq v\SE$, $V(\psi)\subseteq
  V(M)$, and $M$ is a non-crossing matching w.r.t. the circular order $\pi^\circ(\psi)$
  defined as follows. $\pi^\circ(\psi)$ is the circular order obtained from
  the circular order $\pi^\circ(O)$ of $\m{O}$ after adding $\psi(c)$
  between $u$ and $v$, for every $c=(\{u,v\},f) \in O$ assuming that
  $\pi_G(u) < \pi_G(v)$.
\end{itemize}
\fi
\ifshort
(1) $\psi$ is a function that for each subcurve $c=(\{u,v\}, f)$ in
$O$\ifshort, i.e., the subcurve of $O$ between $u$ and $v$ in face
  $f$, \fi
  returns a sequence of at most two new nodes,
(2) $S$ is a subset of $\m{O}$, and
(3) $M \subseteq \SB \{u,v\} \SM u,v \in V(\psi) \cup
  (\m{O}\setminus S) \land u\neq v\SE$, $V(\psi)\subseteq
  V(M)$, and $M$ is a non-crossing matching w.r.t. the circular order $\pi^\circ(\psi)$
  defined as follows. $\pi^\circ(\psi)$ is the circular order obtained from
  the circular order $\pi^\circ(O)$ of $\m{O}$ after adding $\psi(c)$
  between $u$ and $v$, for every $c=(\{u,v\},f) \in O$ assuming that
  $\pi_G(u) < \pi_G(v)$.

The semantics for the types as well as the definition of a type given
a witness are now defined in a similar way as in the case of types for
\SPQR{}-tree nodes.
\fi

\iflong
Let $O$ be a weak noose and $X=(\psi,M,S)$ be a type. We define the
graph $\pe_X(b,O)$ as the graph obtained from $\pe(b,O)$ after adding
the vertices in $V(\psi)$ and all edges on the cycle
$\pi^\circ(\psi)$. We say that $O$ has type $X$ if there is a set 
$\PPP$ of vertex-disjoint paths or a cycle in the
complete graph with vertex set $V(\pe_X(b,O))$ such that:
\begin{itemize}
\item $\PPP$ consists of exactly one path $P_e$ between $u$ and $v$
  for every $e=\{u,v\} \in M$ or $\PPP$ is a cycle and $M=\emptyset$.
\item $\SB \INVP(P)\SM P\in \PPP\SE$ is a partition of
  $\bigl(V(\pe(b,O)) \setminus \m{O}\bigr) \cup S$, where
  $\INVP(P)$ denotes the set of inner vertices of $P$.
\item there is a planar drawing $D_X$ of $\pe_X(b,O)\cup
  \bigcup \PPP$ with outer-face $f$ such that
  $\seqb(f)=\pi^\circ(\psi)$.
\end{itemize}
We also say that the fact that $O$ has type $X$ is \emph{witnessed} by the pair $(\PPP,D_X)$.
We say that type $X=(\psi,M,S)$ of $O$ is the
  \textit{\fulltype}, if $M=\emptyset$ and $S=\m{O}$, which informally
  means that $\PPP$ is a Hamiltonian cycle. Moreover we say
  that type $X$ is the
  \textit{\emptytype}, if $M=S=\emptyset$, which may only occur when
  $\pe(b,O)$ is merely an edge.
  \fi

\iflong
\begin{lemma}
\label{lem:enum_arc_types}
  Let $O$ be weak noose. Then, the number of types defined on $O$ is
  at most $28^{|O|}$ and all possible types for $O$ can be
  enumerated in time $\bigoh(28^{|O|}|O|)$.
\end{lemma}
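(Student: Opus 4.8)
The plan is to show that a type $(\psi,M,S)$ of $O$ is completely determined by a constant amount of ``local'' data attached to each subcurve of $O$, and to bound the number of choices for that data by $28$ per subcurve. Write $\ell=|O|$. Since $O$ is a noose of length $\ell$, its $\ell$ vertices (the set $\m O$) and its $\ell$ subcurves alternate along the closed curve, so fix a cyclic enumeration $w_1,c_1,w_2,c_2,\dots,w_\ell,c_\ell$ (indices modulo $\ell$) where $c_i$ is the subcurve between $w_i$ and $w_{i+1}$, and associate to each $i$ the \emph{unit} $U_i$ consisting of $w_i$ together with the at most two fresh nodes placed on $c_i$ by $\psi$. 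This gives a bijection between units and subcurves; moreover, because the fresh nodes are disjoint from $\m O$ and from each other, the units $U_1,\dots,U_\ell$ partition the ground set $\m O\cup V(\psi)$ of the circular order $\pi^\circ(\psi)$ (recall $\pi^\circ(\psi)$ is obtained from $\pi^\circ(O)$ by inserting the sequence $\psi(c_i)$ between $w_i$ and $w_{i+1}$).

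The key observation is that $(\psi,M,S)$ can be reconstructed from the following data for each unit $U_i$: the number $|\psi(c_i)|\in\{0,1,2\}$ of fresh nodes on $c_i$; which of the cases $w_i\in S$, $\ w_i\in\m O\setminus(S\cup V(M))$ (i.e.\ unmatched), $\ w_i\in V(M)$ holds, together with, in the last case, whether $w_i$ is the ``opening'' or the ``closing'' endpoint of its $M$-edge once $\pi^\circ(\psi)$ is cut open at a fixed base point; and, for each fresh node on $c_i$, whether it is an opening or a closing endpoint of its $M$-edge (each fresh node is matched because $V(\psi)\subseteq V(M)$). Indeed the numbers $|\psi(c_i)|$ determine $\psi$ after fixing canonical names for the at most two fresh nodes of each subcurve; the membership bits determine $S$; and a non-crossing matching on a cyclically ordered ground set is uniquely recovered from the opening/closing labeling of its matched points (after cutting the cycle at the base point, match brackets exactly as in a Dyck word), so the bracket data over all units determines $M$. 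It remains to count the local data of one unit: the vertex $w_i$ admits $4$ possibilities ($w_i\in S$; $w_i$ unmatched; $w_i$ opening; $w_i$ closing --- the first being consistent with the constraint $S\cap V(M)=\emptyset$), and summing over how many fresh nodes lie on $c_i$ contributes a factor $\sum_{f=0}^{2}2^{f}=7$; hence each unit has at most $4\cdot 7=28$ local configurations. Multiplying over the $\ell=|O|$ units yields at most $28^{|O|}$ types. Many of these tuples do not give a well-nested bracket word or otherwise violate the matching conditions, so this is an over-count --- harmless for an upper bound; the point that makes the multiplication valid is that the units partition the ground set of $\pi^\circ(\psi)$, so the per-unit labelings are chosen independently and the non-crossing requirement only deletes tuples.

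For the enumeration, iterate over all $28^{|O|}$ tuples of per-unit data; from such a tuple one builds a candidate triple $(\psi,M,S)$ in time $\bigoh(|O|)$ (take $\psi$ from the counts using canonical fresh-node names, $S$ from the vertex bits, and $M$ by a single stack pass matching opening/closing labels along $\pi^\circ(\psi)$), and then checks in a further $\bigoh(|O|)$ time that it is a well-formed type --- namely $S\subseteq\m O$, $\ V(\psi)\subseteq V(M)$, and $M$ a non-crossing matching with respect to $\pi^\circ(\psi)$ --- discarding it otherwise. Over all tuples this runs in time $\bigoh(28^{|O|}\cdot|O|)$.

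The one point that needs care (the rest being routine bookkeeping) is verifying that the $28=4\cdot(1+2+4)$ options per unit capture $\psi$, $S$ and $M$ with neither omissions nor double counting; concretely, this rests on the standard fact that a non-crossing matching on a circular order is determined by its bracket labeling once a base point is fixed, and on the observation that the units partition the ground set of $\pi^\circ(\psi)$ so that the bracket labelings of distinct units may be chosen independently.
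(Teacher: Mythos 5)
Your proposal is correct and follows essentially the same route as the paper's proof: both count $4$ possible roles per boundary vertex (in $S$, unmatched, opening bracket, closing bracket), $1+2+4=7$ per subcurve (depending on whether $\psi(c)$ has $0$, $1$, or $2$ fresh nodes, each with an open/close label), multiply to get $28^{|O|}$, and enumerate by generating all labelings and filtering those that form a valid Dyck word in $\bigoh(|O|)$ time each. The grouping into ``units'' is just a presentational repackaging of the same count.
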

\begin{proof}
    Let $X=(\psi,M,S)$ be a type that can be defined on a weak noose
    $O$. Let $DW$ be the Dyck word corresponding to the matching $M$
    from Observation~\ref{obs:dyck}.
    For each $v \in \m{O}$ there are $4$ possibilities of the role of
    $v$ in type $X$, i.e., $v \in S$, $v\notin S \cup V(M)$ or $v \in
    V(M)$ and $v$ corresponds to either $"["$ or $"]"$ in $DW$.
    Note that due to the type definition  we get that $V(\psi) \subseteq V(M)$. Therefore for each $v \in V(\psi)$ there are $2$ possibilities of the role of $v$ in $X$, i.e., $v$ corresponds to $"["$ or $"]"$ in $DW$.
    For each subcurve $c \in O$, there are $3$ possible values
    $\{\emptyset, [x], [x, x']\}$ for $\psi(c)$, and therefore there
    are $1+2+4=7$ possibilities, i.e., $1$, $2$, and $4$ possibilities
    in case that $\psi(c)=\emptyset$, $\psi(c)=[x]$, and
    $\psi(c)=[x,x']$, respectively, of the role of $c$ in type $X$.
    Furthermore, since $|O| = |\m{O}|$, there are at most
    $4^{|O|}7^{|O|} = 28^{|O|}$ types that can be defined
    on $O$.

    We can generate all types, by choosing a starting vertex on $\m{O}$
    together with a direction. We can then assign a role to each
    vertex in $\m{O}$ and every subcurve of $O$ and verify that the
    corresponding word is a Dyck word in time $\bigoh(|O|)$ and if
    so translate it into a type description using
    Observation~\ref{obs:dyck}. Since there are at most
    $28^{|O|}$ possibilities to check and each can be checked in
    time $\bigoh(|O|)$, we obtain $\bigoh(28^{|O|}|O|)$ as
    the total run-time to enumerate all possible types for $O$.
\end{proof}

Finally, we now defined the type of weak nooses for a given witness.
Let $W=(D,D_H,G_H,H)$ be a witness for $G$ that
\emph{respects} $\mathcal{T}$ and let $b$ be an R-node or an S-node with sphere-cut decomposition
$\langle T_b,\lambda_b,\Pi_b \rangle \in \mathcal{T}$. Then, the \emph{type} of a weak noose $O\subseteq C(T_b)$ w.r.t. $W$, denoted by $\type_W(b,O)$, is obtained as follows. Let $D_H'$ be the drawing obtained from $D_H$
after adding the noose $O$. Then, because $W$ respects $\mathcal{T}$, it holds that every subcurve $c \in O$ is crossed at most twice in $D_H'$. In the following we will assume that we replaced every such crossing with a new vertex in $D_H'$ and that these vertices are also introduced into $G_H$ and $H$. Moreover, we let $\psi(c)$ be the sequence of (the at most two) new vertices introduced in this way for the subcurve $c=(\{u,v\},f) \in O$ such that 
$[u,\psi(c),v]$ is the ordering of the vertices on $c$ assuming that $\pi_G(u)<\pi_G(v)$. Let $D_H^O$ be the drawing $D_H'$ restricted to $\pe(b,O)\cup V(\psi)$ and let $G_H^O$ and $H^O$ be obtained in the same way from $G_H$ and $H$, respectively. Let $f_O$ be the face in $D_H^O$ such that $V(f_O)=\m{O} \cup V(\psi)$.
Let $\PPP$ be a set of all maximal paths in $H^O$ each of size at least $2$. Then, $S$ is the set of all vertices in $\m{O}$ that have degree two in $\PPP$ and 
the matching $M$ contains edge between the endpoints of every path in $\PPP$.
Then, the type $\type_W(b,O)$ is equal to the triple $X=(\psi,M,S)$. Note that $X$
satisfies all properties of a type because of the following.
First  every node $v$ in $V(\psi)$ has degree $1$ in $\PPP$ and therefore $V(\psi) \subseteq V(M)$. Moreover, $M$ is a non-crossing matching w.r.t. 
$\pi^\circ(\psi)=\seqb(f_O)$ because $D_H^O$ is a planar drawing of $\PPP$
with face $f_O$. Therefore, the weak noose $O$ has type $X$ and this is witnessed
by the pair $(\PPP,D_H^O)$. 
\fi

\section{An FPT-algorithm for \SH{} using Treewidth}
\label{sec:subhamtw}

In this section we show that \SH{} admits a constructive single-exponential fixed-parameter algorithm
parameterized by treewidth.
\begin{theorem}\label{the:sh-fpt-tw}
       \SH{} can be solved in time $2^{\bigoh(\tw)}\cdot n^{\bigoh(1)}$, where $\tw$ is the treewidth of the input graph.
\end{theorem}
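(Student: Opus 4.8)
\textbf{Proof plan for \Cref{the:sh-fpt-tw}.}
The plan is to assemble the machinery built up in the preceding sections into a dynamic programming algorithm. First I would use \Cref{the:biconnected-comp} (together with the standard fact that \SH{} decomposes over connected components) to reduce to the case where the input graph $G$ is biconnected; since the feedback-vertex / block-cut-tree decomposition and the case $|V(G)| \le 2$ are trivial, this reduction costs only linear time per block and does not increase the treewidth. From now on $G$ is biconnected, so by \Cref{lem:computeSPQR} it has an \SPQR-tree $\mathcal{B}$ with $\bigoh(m)$ nodes, computable in linear time. For every R-node and every S-node $b$ of $\mathcal{B}$ we additionally compute a sphere-cut decomposition of $\sk(b)$: by \Cref{lem:bw-tw} the branchwidth of $\sk(b)$ is $\bigoh(\tw(G))$ (skeletons are minors of $G$, so their treewidth is at most that of $G$), hence by \Cref{lem:comp-spcut} each such decomposition has width $\bigoh(\tw)$ and can be computed in time $\bigoh(|\sk(b)|^3)$, which sums to polynomial over all skeletons. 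This fixes the associated set $\mathcal{T}$ of sphere-cut decompositions, and by \Cref{lem:nicedrawing} we may assume that if $G$ is subhamiltonian then it has a witness $W$ that respects $\mathcal{B}$, i.e. every subcurve of every noose $O_a$ and every half $L_b,R_b$ of every $N_b$ is crossed by the Hamiltonian cycle at most twice. This bounded-crossing property is exactly what makes the type sets small.

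Next I would run two nested layers of dynamic programming. The inner layer processes a single R-node or S-node $b$: along its sphere-cut decomposition $\langle T_b,\lambda_b,\Pi_b\rangle$ we compute, bottom-up, for every arc $a$ the set of realizable types of the noose $O_a$ (as defined in \Cref{ssec:SPCD-framework}). By \Cref{lem:enum_arc_types} there are at most $28^{|O_a|} = 2^{\bigoh(\tw)}$ types per noose, so each type table has size $2^{\bigoh(\tw)}$. At a leaf arc the noose bounds a single skeleton edge whose expansion graph is the pertinent graph of a child node $b'$ of $b$ in $\mathcal{B}$; here the set of realizable types of $O_a$ is obtained directly from the type set of $b'$ (already computed, because we process $\mathcal{B}$ bottom-up), essentially by reading the at-most-two path-segment endpoints on each of the at most two subcurves of $O_a$ and translating them into a $(\psi,M,S)$ triple. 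At an internal arc $a_P$ with children $a_L,a_R$, instead of a direct (and notoriously messy) merge we invoke \Cref{lem:triangles}: there is a sequence of at most three $\xor$-operations on weak nooses, passing through at most two auxiliary triangular weak nooses each bounding an edge-less three-vertex graph, whose final result is $O_{a_P}$. So it suffices to implement one primitive operation --- given the type sets of two weak nooses $O_1,O_2$ with $O = O_1\xor O_2$ (and $O_1\cap O_2$ inside $O$), compute the type set of $O$ --- and compose it a constant number of times. This primitive is a ``glue two collections of path-segments along a shared set of subcurves, track which vertices become interior, recheck planarity of the combined outer face and non-crossing of $M$'' computation; since every noose involved has size $\bigoh(\tw)$, each such merge runs over $2^{\bigoh(\tw)}\cdot 2^{\bigoh(\tw)}$ pairs of types in time $2^{\bigoh(\tw)}$. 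The root arc of $T_b$ (associated with the reference edge of $b$) yields the set of types of $\pe(b)$ in the sense of \Cref{ssec:typesSPQR}, after contracting the sphere-cut types to the four dummy ports $l,l',r,r'$; this feeds the outer layer.

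The outer layer is the dynamic program on $\mathcal{B}$ itself: process nodes bottom-up, maintaining for each node $b$ the set of realizable types $(\psi,M,S)$ of $\pe(b)$ with respect to its reference edge. For a Q-node the pertinent graph is a single edge and the (constantly many) types are immediate. For an S-node or R-node, $\pe(b)$ is obtained from $\sk(b)$ by substituting into each skeleton virtual edge the pertinent graph of the corresponding child --- whose type set we already have --- and this substitution is precisely what the inner sphere-cut DP computes, using the child type sets as the leaf data. For a P-node, $\sk(b)$ is a bundle of parallel virtual edges between the two poles $s,t$; here I would merge the children's type sets directly by a small combinatorial argument: a drawing of $\pe(b)$ corresponds to choosing a left-to-right order of the parallel components and a compatible way of routing the at most two left / at most two right Hamiltonian-path endpoints and the $s$-$t$ segments through them, which is again a bounded-size bookkeeping task giving a $2^{\bigoh(\tw)}$ table (in fact a P-node has at most $\tw+O(1)$ children contributing nontrivially, or one can preprocess to bound the multiplicity). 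Finally, $G$ is subhamiltonian if and only if the root node (the Q-node of the reference edge) admits the ``\fulltype''-style type in which the path segments close up into a single Hamiltonian cycle; by \Cref{lem:nicedrawing} this test is complete. Correctness follows because (i) \Cref{lem:nicedrawing} guarantees a respecting witness when one exists, (ii) the semantics of types were defined exactly so that a witness induces a type at every noose and \SPQR-node and conversely a coherent family of types can be stitched into a witness, and (iii) \Cref{lem:triangles} guarantees the $\xor$-composition faithfully reproduces the parent noose. The running time is $2^{\bigoh(\tw)}$ per arc of every $T_b$ and per node of $\mathcal{B}$, of which there are $\bigoh(n)$ in total (the sphere-cut decompositions have $\bigoh(|\sk(b)|)$ arcs and $\sum_b |\sk(b)| = \bigoh(m) = \bigoh(n)$ for planar $G$), giving $2^{\bigoh(\tw)}\cdot n^{\bigoh(1)}$ overall, and the algorithm is constructive: tracking witnesses alongside types lets us output a Hamiltonian supergraph and hence, via the equivalence recalled in \Cref{sec:pre}, an explicit two-page book embedding.

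\textbf{Main obstacle.} The technical heart --- and the step I expect to be hardest --- is the single $\xor$-merge primitive for weak-noose type sets: one must show that when $O = O_1 \xor O_2$, the realizable types of $O$ are \emph{exactly} those obtained by pairing a type of $O_1$ with a type of $O_2$ that agree on the shared subcurves $O_1\cap O_2$ (matching the $\psi$-labels node-for-node), forming the union of the two path-segment systems, promoting the now-interior vertices of $\m{O_1}\cap\m{O_2}$ into $S$, and checking that the resulting matching is still non-crossing on the new outer face $\pi^\circ(\psi)$ and that no forbidden cycle is created prematurely. The soundness direction (a realizable type of $O$ restricts to compatible types of $O_1,O_2$) needs the planarity and laminarity built into sphere-cut decompositions and the fixed-drawing assumption from \Cref{lem:nicedrawing}; the completeness direction (compatible pieces glue to a genuine planar drawing) needs \Cref{obs:face} to route the combined segments inside the relevant faces without new crossings. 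Getting the interface between the ``dummy ports'' $l,l',r,r'$ of \SPQR-node types and the per-subcurve two-node labels $\psi(c)$ of sphere-cut types to line up cleanly --- so that the leaf arcs of $T_b$ correctly import child-node type sets and the root arc correctly exports a node type --- is the other place where care is required, though it is bookkeeping rather than a conceptual difficulty.
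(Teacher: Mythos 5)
Your overall architecture matches the paper's proof exactly: reduce to the biconnected case via \Cref{the:biconnected-comp}, build the \SPQR-tree and a sphere-cut decomposition of $\sk(b)$ for every R- and S-node, use \Cref{lem:nicedrawing} to bound crossings of the Hamiltonian cycle with nooses, run a bottom-up DP on $\BBB$ where the R/S-node step is itself a DP on the sphere-cut decomposition driven by the $\xor$-framework of \Cref{lem:triangles}, and finish by reading off a ``Hamiltonian'' type at the root. The branchwidth/treewidth conversion via \Cref{lem:bw-tw}, the $2^{\bigoh(\tw)}$-bound on the number of types per noose, and the linear bound $\sum_b |\sk(b)| = \bigoh(n)$ are all used the same way.

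The genuine gap is the P-node step. You write that you would ``merge the children's type sets directly by a small combinatorial argument,'' asserting that ``a P-node has at most $\tw+O(1)$ children contributing nontrivially, or one can preprocess to bound the multiplicity.'' Neither claim is justified and the first is simply false: a P-node of a series-parallel graph (treewidth $2$) can have $\Theta(n)$ children, all of them contributing, and there is no obvious ``multiplicity'' preprocessing because different children can carry different type sets $\RRR(c)$. The space of left-to-right orderings of $\ell$ children is $\ell!$, so one cannot afford to enumerate it. The paper's resolution is a structural one: it classifies the types of children into \emph{good} (those that pass the at most two crossing curves straight through from the left subcurve to the right subcurve and do not touch $s$ or $t$) and \emph{bad}, then proves (Lemma~\ref{cor:atmosteight}) that in any compatible ordering at most $8$ children can carry a bad type, and that dropping good types from a compatible sequence does not change the resulting type (Lemma~\ref{lem:badsubseq}). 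This reduces the search to enumerating constantly many bad-type sequences $\rho=(Y_1,\dotsc,Y_r)$ ($r\le 8$), for each of which realizability becomes a bipartite matching problem on the children against $\{Y_1,\dotsc,Y_r\}$ plus a ``must still absorb some good type'' constraint (\Cref{lem:pn-correctness}). Without this ``boundedly many bad children'' argument the P-node step does not obviously run in polynomial time, let alone $\bigoh(\ell)$, and your proof plan does not supply a substitute.

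A second, smaller point worth flagging: you describe the sphere-cut DP as computing ``the set of realizable types of the noose,'' and in the main obstacle paragraph you write that the types of $O$ are \emph{exactly} those obtained by $\xor$-pairing. The paper explicitly disclaims exactness: the table $\RRR(b)$ is only required to satisfy the one-sided conditions \enref{R1} and \enref{R2} --- every stored type is realizable, and every type that arises from a witness \emph{in normal form} is stored --- and the authors remark that they cannot prove the converse of \enref{R1}. This matters because \Cref{lem:crossing} (the $\le 2$-crossings normalisation) only applies to a hypothetical \emph{global} Hamiltonian cycle, so a locally realizable type with no global extension might require more than two crossings and be missed. The algorithm remains correct under the (R1)/(R2) invariant, but a proof that claims exact type computation would have to confront this asymmetry.

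Minor bookkeeping differences (the root test looking at the unique child of the root Q-node rather than the root itself, the handling of the reference edge when it happens to lie on $H$) are handled in the paper's full proof but are routine and your sketch is close enough there.
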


Since the treewidth of an $n$-vertex planar graph is upper-bounded by
$\bigoh(\sqrt{n})$~\cite{GuT12,Marx20,RobertsonST94} and there are
single-exponential constant-factor approximation algorithms for
treewidth~\cite{Korhonen21}, \Cref{the:sh-fpt-tw} immediately implies
the following corollary.  \begin{corollary}
\label{cor:main}
  \SH{} can be solved in time $2^{\bigoh(\sqrt{n})}$.
\end{corollary}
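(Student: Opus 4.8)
The plan is to split into the planar and the non-planar case, invoking \Cref{the:sh-fpt-tw} only on the former. First I would test whether the input multi-graph $G$ is planar, which can be done in linear time. If $G$ is not planar, then it cannot be subhamiltonian: by definition a subhamiltonian graph is a subgraph of a planar (Hamiltonian) graph, hence planar itself, so in this case we may immediately reject. This step is needed precisely because \Cref{the:sh-fpt-tw} is only useful when the treewidth is small: for a non-planar instance the parameter $\tw$ could be as large as $\Theta(n)$, so the $2^{\bigoh(\tw)}$ running-time guarantee would not on its own yield a subexponential bound.

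In the remaining case $G$ is planar, and here I would use the classical fact that every $n$-vertex planar graph has treewidth $\bigoh(\sqrt{n})$~\cite{RobertsonST94,GuT12,Marx20}. Using a single-exponential constant-factor approximation algorithm for treewidth~\cite{Korhonen21}, we can compute a tree decomposition of $G$ of width $\bigoh(\sqrt{n})$ in time $2^{\bigoh(\sqrt{n})}\cdot n$. Running the algorithm of \Cref{the:sh-fpt-tw} on $G$, whose running time is $2^{\bigoh(\tw)}\cdot n^{\bigoh(1)}$, then decides subhamiltonicity in time $2^{\bigoh(\sqrt{n})}\cdot n^{\bigoh(1)}$, which equals $2^{\bigoh(\sqrt{n})}$ since the polynomial factor is absorbed into the exponential term. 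Since that algorithm is constructive, it moreover outputs a witness and hence (by the equivalence discussed in \Cref{sec:pre}) a two-page book embedding whenever one exists.

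I do not expect any genuine obstacle here: all of the work is already contained in \Cref{the:sh-fpt-tw}, and the only subtlety is the bookkeeping between the two regimes, namely that the planarity test must precede the call to \Cref{the:sh-fpt-tw}.
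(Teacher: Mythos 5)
Your proposal is correct and matches the paper's intended argument: bound the treewidth of a planar graph by $\bigoh(\sqrt{n})$ and invoke \Cref{the:sh-fpt-tw}. The preliminary planarity test you add is already performed inside the paper's proof of \Cref{the:sh-fpt-tw} (non-planar inputs are rejected in linear time before any decomposition is computed), so your front-end check is a harmless duplication rather than a missing ingredient in the paper; otherwise the two arguments coincide.
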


The main component used towards proving~\Cref{the:sh-fpt-tw} is the
following lemma\ifshort, from which \Cref{the:sh-fpt-tw} follows as an
  easy consequence \fi.

\begin{restatable}{lemma}{lemshfpt}\label{lem:sh-fpt-bw-given}
  Let $G$ be a biconnected multi-graph with $n$ vertices and $m$ edges
  and \SPQR{}-tree $\mathcal{B}$. Then, we can decide in time 
  $\bigoh(315^{\omega}n+n^3)$ whether $G$ is subhamiltonian, where
  $\omega$ is the maximum branchwidth of $\sk(b)$ over all R-nodes and
  S-nodes $b$ of $\BBB$.
\end{restatable}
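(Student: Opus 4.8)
The plan is to design a bottom-up dynamic programming algorithm that computes, for each node of the \SPQR-tree $\mathcal{B}$, the set of all realizable types $(\psi,M,S)$ as defined in Subsection~\ref{ssec:typesSPQR}, and for each R-node or S-node $b$ uses the sphere-cut decomposition of $\sk(b)$ (together with the \xor-framework of Lemma~\ref{lem:triangles}) as an inner dynamic program to compute the types of $b$ from the types of its children in $\mathcal{B}$. By Lemma~\ref{lem:nicedrawing} it suffices to look for a witness that respects $\mathcal{B}$, so every relevant interaction of the Hamiltonian cycle with a noose or with the boundary of a pertinent graph crosses at most twice; this is exactly what bounds the number of types to $2^{\bigoh(\omega)}$ at each noose and to a constant at each \SPQR-tree node, and it is what makes the whole record polynomial in size on each skeleton of branchwidth $\omega$.

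\textbf{Outer dynamic program on $\mathcal{B}$.} First I would fix the reference-edge rooting of $\mathcal{B}$ and process nodes bottom-up. For a $Q$-node (a single edge) the set of valid types is a constant-size list, read off directly from the semantics. For a $P$-node with children $b_1,\dots,b_k$, the pertinent graph is the parallel composition of the $\pe(b_i)$ between the common poles $s,t$; a type of the $P$-node is realizable iff one can pick a type for each child and ``stack'' the corresponding path systems in some left-to-right order of the parallel edges so that the union is planar with the prescribed outer face and the inner vertices partition correctly — this is a bounded-width interval/sequence feasibility check over the constant-size child type sets, so it costs $\bigoh(1)$ per combination (the number of children only enters linearly in $n$). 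For an S-node or R-node, the pertinent graph is obtained from $\sk(b)$ by substituting each skeleton edge $e$ with $\expa(e)$, i.e. with the pertinent graph of the child $\mathcal{B}$-node attached at $e$; here I would run the inner sphere-cut dynamic program described next, where the ``leaf data'' for skeleton edge $e$ is precisely the already-computed type set of that child node (for a Q-node child it is the constant list; otherwise the recursively computed set).

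\textbf{Inner dynamic program on the sphere-cut decomposition.} For an R- or S-node $b$ with sphere-cut decomposition $\langle T_b,\lambda_b,\Pi_b\rangle$, I process the arcs of $T_b$ bottom-up toward the root arc (the reference edge), maintaining at each arc $a$ the set of realizable types of the noose $O_a$ in the sense defined for weak nooses in Subsection~\ref{ssec:SPCD-framework}. At a leaf arc $\lambda_b^{(-1)}(e)$ the noose encloses just $\expa(e)$, and its realizable types are obtained from the stored type set of the child node by the obvious relabelling of poles to the two middle-set vertices and by inserting the at-most-two crossing points of $H$ with the single subcurve (this is where $\psi$ enters). At an inner node with parent arc $a_P$ and child arcs $a_L,a_R$, I use Lemma~\ref{lem:triangles}: the noose $O_{a_P}$ is obtained from $O_{a_L}$ and $O_{a_R}$ by a sequence of at most three \xor-operations passing only through weak nooses of size at most $1+\max\{|\md{a_P}|,|\md{a_L}|,|\md{a_R}|\}\le \omega+1$, where the one or two intermediate weak nooses bound an edgeless triangle. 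Hence it is enough to implement a single ``combine'' operation: given the type sets of two weak nooses $O_1,O_2$ whose \xor is a weak noose $O$ (with all shared subcurves interior), compute the type set of $O$. A type of $O$ is realizable iff there exist a type of $O_1$ and a type of $O_2$ that agree on the shared subcurves (same inserted crossing-point vertices in the same cyclic position) and such that gluing the two path systems along those shared vertices yields, after forgetting interior detail, a planar path system whose matching/saturation data matches $(M,S)$ — concretely, one checks that the glued ``skeleton'' matching on the boundary of $O$ is again non-crossing, that no forbidden cycle is created (unless it is the full Hamiltonian cycle and $O=O_{\text{root}}$), and that degrees are consistent; for the triangle weak noose the stored type set is the trivial constant list describing the three edges of an empty graph. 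Each such combine is a join over two sets of size $2^{\bigoh(\omega)}$ filtered by an agreement predicate computable in time $\bigoh(\omega)$, so it costs $2^{\bigoh(\omega)}$; since $T_b$ has $\bigoh(|\sk(b)|)$ arcs and the skeletons over all R/S-nodes have total size $\bigoh(m)$, and since a sphere-cut decomposition of each skeleton of width $\omega$ can be computed in $\bigoh(n^3)$ time, the whole inner computation summed over $b$ runs in $2^{\bigoh(\omega)}\cdot n + \bigoh(n^3)$. The constant in $315^{\omega}$ is just the bookkeeping: roughly the $28^{|O|}$ types per weak noose from Lemma~\ref{lem:enum_arc_types} raised to the bounded power coming from the at-most-three \xor-steps and the constant-many child-type choices. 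Finally, $G$ is subhamiltonian iff the root $Q$-node of $\mathcal{B}$ admits the full type (a Hamiltonian cycle), which is read off at the end; constructivity follows by storing, with each realized type, a pointer to the witnessing choice at the children, exactly as the path systems $\PPP$ and drawings $D_X$ are composed in the semantics.

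\textbf{The main obstacle} I expect is verifying planarity and the partition-of-inner-vertices condition under gluing: when I \xor two weak nooses I must be sure that the outer-face condition $\seqb(f)=\pi^\circ(\psi)$ is preserved and that no ``spurious'' short cycle is formed by the recombined path segments, and that the at-most-twice-crossing normal form from Lemma~\ref{lem:crossing} is genuinely enough to make all these gluings expressible with a bounded amount of boundary information. Handling S-nodes (where $\sk(b)$ is just a cycle, so the sphere-cut machinery degenerates but the substituted expansion graphs still need care at the two poles) and the interface between the outer $P$-node composition and the inner sphere-cut composition — in particular routing the Hamiltonian cycle correctly ``through'' the poles $s_b,t_b$ while respecting the fixed nooses $N_b$ guaranteed by Lemma~\ref{cor:add-nooses-to-drawing} — is the part most likely to require a careful (but, thanks to the \xor-framework, hopefully not explosive) case analysis.
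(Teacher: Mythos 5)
Your overall plan is essentially the paper's: root $\mathcal{B}$ at the reference-edge $Q$-node, compute $\RRR(b)$ bottom-up, use the normal form of Lemma~\ref{lem:nicedrawing} to bound every type set, handle $R$- and $S$-nodes by an inner dynamic program on a sphere-cut decomposition of $\sk(b)$ using Lemma~\ref{lem:triangles} to reduce the combine step to at most three $\xor$-operations between weak nooses, and finally test for the full type at the root. The inner DP and its cost analysis are correctly identified.

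However, there is a genuine gap in your treatment of $P$-nodes. You assert that choosing a type for each child and a left-to-right order and checking feasibility ``costs $\bigoh(1)$ per combination (the number of children only enters linearly in $n$)'', but this does not explain why the number of combinations that must be examined is small: a $P$-node may have $\ell$ children, and there are $\ell!$ orderings and $|\XXX|^\ell$ type assignments, so a naive ``interval/sequence feasibility check'' does not give a running time linear (or even polynomial) in $\ell$. The paper closes exactly this gap in Subsection~\ref{ssec:pnodes}: after reducing to compatible sequences of child types, it proves that any compatible sequence contains at most four dirty and at most four clean-bad types (hence at most eight bad types total, Corollary~\ref{cor:atmosteight}), that good types can be freely inserted or removed without changing the resulting type (Lemmas~\ref{lem:addgood} and~\ref{lem:badsubseq}), and that realizability of a fixed short bad subsequence then reduces to a saturating bipartite matching instance solvable by a max-flow computation (Lemmas~\ref{lem:max-match} and~\ref{lem:pn-correctness}). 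Enumerating the constantly many compatible sequences of at most eight bad types and solving the matching for each gives the $\bigoh(\ell)$ bound of Lemma~\ref{lem:Pnode}. This is a non-obvious structural argument, not a routine feasibility check, and without it your outer DP does not achieve the claimed running time.

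A minor second point: you do not say how $\RRR(b)$ is extracted at the root arc of the inner DP and converted back into a type of the \SPQR-tree node (the role of Lemma~\ref{lem:arc_root} and the symmetric relabelling in Lemma~\ref{lem:spcut-leaf}), but given the rest of your sketch this is plausibly a routine bijection and not a gap in the same sense.
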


\iflong
With the help of~\Cref{lem:sh-fpt-bw-given}, \Cref{the:sh-fpt-tw} can
now be easily shown as follows.
 \begin{proof}[Proof of~\Cref{the:sh-fpt-tw}]
  Let $G$ be the graph given as input to \SH{} having $n$
  vertices and $m$ edges. Because of
  \Cref{the:biconnected-comp}, we can assume that $G$ is biconnected
  since we can otherwise solve every biconnected component of $G$
  independently. We first test
  whether $G$ is planar, which is well-known do be achievable in
  linear-time~\cite{DBLP:journals/jacm/HopcroftT74}. If this is not
  the case, the algorithm correctly outputs no. Otherwise, the
  algorithm uses~\Cref{lem:computeSPQR} to compute an \SPQR{}-tree
  $\mathcal{B}$ of
  $G$ with at most $\bigoh(m)$ nodes and edges inside skeletons in
  time at most $\bigoh(n+m)$; note that $\bigoh(m)=\bigoh(n)$ because $G$ is planar. We then employ
  Lemma~\ref{lem:sh-fpt-bw-given} to solve \SH{} in time
  $\bigoh(315^{\omega}n+n^3)$, where $\omega$ is the maximum
  branchwidth of $\sk(b)$ over all R-nodes and S-nodes $b$ of $\BBB$. Since
  $\omega$ is an upper bound on the branchwidth of $G$, we obtain
  from~\Cref{lem:bw-tw} that the branchwidth of $G$ is at most the
  treewidth $\tw(G)$ of $G$ plus $1$, which
  implies that \SH{} can be solved in time $\bigoh(315^{\tw(G)}n+n^3)$,
  as required.
\end{proof}\fi

The remainder of this section is therefore devoted to a proof
of~\Cref{lem:sh-fpt-bw-given}, which we show by
providing a bottom-up dynamic programming algorithm along the
\SPQR{}-tree of the graph. That is, let $G$ be a biconnected multi-graph,
$\mathcal{B}$ be an \SPQR-tree of $G$ with associated set
$\mathcal{T}$ of sphere-cut decompositions for every R-node and S-node
of $\BBB$. Using a dynamic programming algorithm starting at
the leaves of $\mathcal{B}$, we will
compute a set $\RRR(b)$ of all types $X$ satisfying the following two conditions:
\begin{enumerate}[(R1)]
\item If $X \in \RRR(b)$, then $b$ has type $X$.
\item If there is a witness $W=(D,D_H,G_H,H)$ for $G$ that
  \emph{respects} $\mathcal{B}$ such that $b$ has type 
  $X=\type_W(b)$,
     then $X \in \RRR(b)$.
\end{enumerate}
Interestingly, we do not know whether it is possible to compute the
set of all types $X$ such that $b$ has type $X$ as one would
usually expect to be able to do when looking at similar algorithms
based on dynamic programming. 
That is, we do not know whether one can
compute the set of types that also satisfies the reverse
direction of \enref{R1}. While we do not know, we suspect that this is not the
case because $b$ might have a type that can only be
achieved by crossing some sub-curves of nooses inside of $\pe(b)$ more than
twice. Indeed~\Cref{lem:crossing}, which allows us to avoid more than
two crossings per sub-curve, requires the property that the type of $b$ can be
extended to a Hamiltonian cycle of the whole graph, which is clearly
not necessarily the case for every possible type of $b$. 

\iflong
This section is organized as follows. First in
Subsection~\ref{ssec:pnodes}, we show how to compute $\RRR(b)$ for every P-node $b$
of $\BBB$. This is probably the most challenging part of the
algorithm and we show that instead of having to enumerate all possible
orderings among the children of $b$ in $\BBB$, we merely have to
consider a constant number children and their orderings. This allows
us to compute  $\RRR(b)$ very efficiently in time
$\bigoh(\ell)$, where $\ell$ is the number of
children of $b$ in $\BBB$. Then, in Subsection~\ref{ssec:rsnodes}, we
show how to compute $\RRR(b)$ for any R-node and S-node $b$ of $\BBB$ using a
dynamic programming algorithm on a sphere-cut decomposition of
$\sk(b)$.  We then put everything together and
show~\Cref{lem:sh-fpt-bw-given} in Subsection~\ref{ssec:puttogether}.
\fi

 \subsection{Handling P-nodes}\label{ssec:pnodes}
    In this part, we show how to compute the set of types for any
$P$-node in the given \SPQR{}-tree by establishing the following lemma. 
\begin{lemma}
\label{lem:Pnode}
  Let $b$ be a P-node of $\mathcal{B}$ such that $\RRR(c)$
  has already been computed for every child
  $c$ of $b$ in $\mathcal{B}$. Then, we can compute
  $\RRR(b)$ in time $\bigoh(\ell)$,
  where $\ell$ is the number of children of $b$ in $\mathcal{B}$.
\end{lemma}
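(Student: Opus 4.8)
The plan is to exploit the structural simplicity of a $P$-node: its skeleton consists of $\ell$ parallel virtual edges between the two poles $s$ and $t$, and in any planar drawing these edges appear in some cyclic order, which---since $s$ and $t$ sit on a common face in the relevant sense---is really a \emph{linear} order of the children $c_1,\dots,c_\ell$ of $b$ from ``left'' ($L$) to ``right'' ($R$). A type $X=(\psi,M,S)$ of $b$ records only how the Hamiltonian cycle crosses the left boundary $L_b$ and the right boundary $R_b$ of the noose $N_b$ (at most twice each), plus a non-crossing matching on $\{s,t\}\cup V(\psi)$ and the information $S\subseteq\{s,t\}$. So the first step is to observe that, once we fix a left-to-right order of the children, the behaviour of the solution inside $\pe(b)$ is obtained by ``stacking'' the drawings witnessing the children's types side by side inside the strip between $s$ and $t$: only the \emph{leftmost} child can contribute to $\psi(L)$ and only the \emph{rightmost} child can contribute to $\psi(R)$, while every child in between must have a type whose $\psi$ is empty on both sides, i.e.\ a type that interacts with the rest of the graph only through $s$ and $t$. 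I will call such a type \emph{$st$-confined}.

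The key combinatorial reduction---and the reason the running time is only $\bigoh(\ell)$ rather than something exponential in $\ell$---is the following: among the children whose chosen type is $st$-confined, what matters for assembling a type of $b$ is only the \emph{multiset of $st$-confined types realized}, and in fact only whether certain bounded-multiplicity patterns occur, because each $st$-confined type contributes to the final matching $M$ and to $S$ only through a bounded ``interface'' at $s$ and at $t$ (a path of $H$ can pass through $s$ at most once and through $t$ at most once). Concretely, I would show that a type $X$ of $b$ is attainable iff there is a choice of one ``left child type'' from some $\RRR(c_i)$, one ``right child type'' from some $\RRR(c_j)$ with $j\neq i$ (or $i=j$ in the degenerate $\ell=1$ case, handled separately), and a selection of $st$-confined types for all remaining children, such that gluing them at $s$ and $t$ yields a set of vertex-disjoint paths / a single cycle with the prescribed endpoints $M$, inner-vertex cover of $(V(\pe(b))\setminus\{s,t\})\cup S$, and a consistent planar drawing on the outer face $\seqb(f)=(s,r,r',t,l',l)$. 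Since the poles admit only a constant number of ``roles'' (is $s$ an endpoint of a path, an internal vertex, or isolated; likewise $t$; and how the at most two path-segments through $s$ link to the at most two through $t$), there are only $\bigoh(1)$ essentially different ways the $st$-confined children can combine. Therefore it suffices, for each child $c$, to record which elements of a fixed constant-size set of ``$st$-interface patterns'' are present in $\RRR(c)$; this can be done in one pass over the (at most $\bigoh(\ell)$ total, since each $\RRR(c)$ has constant size) precomputed records, after which $\RRR(b)$ is assembled by iterating over the $\bigoh(1)$ global combinations and, for the left/right slots, over the $\bigoh(\ell)$ choices of which child plays that role.

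Having set this up, the proof proceeds in three steps. First I would formalize the ``stacking'' lemma: given a left-to-right order of the children and a choice $X_i\in\RRR(c_i)$ for each $i$ with $\psi$ empty on the appropriate side for the interior children, the side-by-side union of the witnessing drawings $D(c_i,X_i)$ (re-embedded in consecutive strips of the disk bounded by $\seqb(f)$, identifying all copies of $s$ and all copies of $t$) is a valid witness for a well-defined type of $b$, and conversely every witness for a type of $b$ arises this way after reading off the cyclic order of the virtual edges around $s$. This uses Observation~\ref{obs:face} to route path-segments through the shared poles and to certify planarity of the assembled drawing, and it uses the normal-form machinery (Lemma~\ref{lem:nicedrawing}) to guarantee that $N_b$, $L_b$, $R_b$ are crossed at most twice so the types are well-defined. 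Second, I would establish the bounded-interface claim: the contribution of the interior children to $M$, $S$, and to the planar routing is determined by the multiset of their $st$-interface patterns, of which there are $\bigoh(1)$ many, and moreover this multiset only matters up to truncating multiplicities at a constant (once you have, say, two children each supplying a disjoint $s$--$t$ path you cannot use a third such path without violating that $\deg_H(s)\le 2$). Third, I would implement the computation: scan the children in $\bigoh(\ell)$ time to build the presence-vector of interface patterns and to have the list of candidates for the left and right slots, then enumerate the $\bigoh(1)$ combinations and $\bigoh(\ell)$ slot choices, checking feasibility and planarity of each resulting $(\psi,M,S)$ in constant time, and add the feasible ones to $\RRR(b)$. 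Correctness with respect to \enref{R1} is immediate because each added type comes with an explicitly assembled witness drawing; correctness with respect to \enref{R2} follows from the converse direction of the stacking lemma together with the fact that the types of the children appearing in any global witness satisfy \enref{R2} for those children and hence lie in the already-computed $\RRR(c_i)$.

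The main obstacle I anticipate is the second step---pinning down exactly which ``$st$-interface patterns'' of the interior children are relevant and proving that only boundedly many of each are usable. One has to be careful that an $st$-confined child can still do several different things at the poles (pass a path straight through $s$, have $s$ as a path endpoint, have $s$ isolated, and the path through $s$ may or may not be the same path that reaches $t$), and that the \emph{cyclic} structure of $M$ together with the planarity constraint on the outer face $\seqb(f)=(s,r,r',t,l',l)$ restricts how these can be concatenated in a strip. Getting the case analysis of pole-roles complete but still $\bigoh(1)$, and arguing rigorously that multiplicities beyond a small constant are never needed (so that the ``presence vector'' suffices and we never need to know exact counts), is where the real work lies; everything else is bookkeeping enabled by the constant bound $|\RRR(c)|\le 315$ (or whatever the type bound is) per child and the $\bigoh(\ell)$ total size of the children's records.
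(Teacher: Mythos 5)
There is a genuine gap, and it sits at the heart of your plan: you claim that in the left-to-right order of the children of the $P$-node, ``only the leftmost child can contribute to $\psi(L)$ and only the rightmost child can contribute to $\psi(R)$, while every child in between must have a type whose $\psi$ is empty on both sides.'' The first part is true (the parent subcurves $L_b$ and $R_b$ coincide with $L_{c_1}$ and $R_{c_\ell}$), but the conclusion about the interior children is false. Adjacent children share a boundary subcurve $R_{c_i}=L_{c_{i+1}}$ that lies strictly inside $N_b$, and in the normal form the Hamiltonian cycle is allowed to cross that shared subcurve (up to twice). When it does, both $c_i$ and $c_{i+1}$ record a nonempty $\psi$ on the corresponding side, even though neither is the leftmost or rightmost child. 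This is exactly the situation depicted in the paper's Figure~\reflip{fig:shtd}, where an interior child has a $1$-good type $(\{L\mapsto\{l\},R\mapsto\{r\}\},\{\{l,r\}\},\emptyset)$: a path of the cycle enters on the left, traverses the child's interior, and exits on the right, never touching $s$ or $t$. Your ``$st$-confined'' category excludes precisely these pass-through types, so your stacking lemma would miss solutions and \enref{R2} would fail.

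This error then propagates into the combinatorics. Because you have the passage mechanism inverted, you conclude that the interior children interact only through the two poles and hence have $\bigoh(1)$ relevant interface patterns with bounded usable multiplicities. In fact the types that can appear unboundedly many times in the middle are the ``good'' ones that pass straight through ($M\in\{\emptyset,\{\{l,r\}\},\{\{l,r\},\{l',r'\}\}\}$ with $\#_s=\#_t=0$); it is the ``bad'' types — including all the ones that actually touch $s$ or $t$ — whose count is bounded (by $8$), and this requires a separate argument using that $s$ and $t$ each admit at most two incident Hamiltonian-cycle segments together with a planarity/nesting argument for the clean-but-bad types. The paper's proof enumerates the $\bigoh(1)$ compatible sequences of at most $8$ bad types, and then must solve a genuine assignment problem: each bad type in the sequence must be supplied by a distinct child whose record contains it, \emph{and} every remaining child must be able to supply one of the good types that the sequence admits. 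This is not a simple ``$\bigoh(1)$ combinations times $\bigoh(\ell)$ slot choices'' scan; the paper reduces it to a bipartite matching (via max-flow) between the $\le 8$ bad-type slots and the children, with the side constraint saturating the children that cannot play a good role. Your sketch does not contain a mechanism for this joint feasibility check, and once the pass-through types are modeled correctly it is needed.
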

In the following, let $b$ be a P-node of $\mathcal{B}$ with reference
edge $(s,t)$ and let $C$ with $|C|=\ell$ be the set of all children of $b$ in
$\mathcal{B}$. Informally, $\RRR(b)$ is the set of types $X$ such that there
is an ordering $\rho=(c_1,\dotsc,c_\ell)$ of the children in $C$ and
an assignment $\tau : C \rightarrow \XXX$ of children to types with
$\tau(c) \in \RRR(c)$ for every child $c \in C$ that ``realizes'' the
type $X$ for $b$. The main challenge is to compute $\RRR(b)$ efficiently, i.e.,
without having to enumerate all possible orderings $\rho$ and assignments $\tau$.
Below, we make this intuition more precise before proceeding.

For a type $X=(\psi,M,S)$ of $b$ and $A \in \{L,R\}$, we let
$\#_A(X)=|\psi(A)|$. Moreover, for every $A \in \{s,t\}$,
we set $\#_A(X)$ to be equal to $2$ if
$A \in S$, equal to $1$ if $A \in V(M)$ and equal to $0$ otherwise.
Next, let $\rho=(X_1,\dotsc,X_\ell)$ be a sequence of types, where
$X_i=(\psi_i,M_i,S_i)$ for every $i$ with $1 \leq i \leq \ell$.
We say that $\rho$ is \emph{weakly compatible} if the following holds:
\begin{enumerate}[(C1)]
\item for every $i$ with $1 \leq i < \ell$,
  $\#_R(X_i)=\#_L(X_{i+1})$, and
\item $\sum_{i=1}^\ell \#_s(X_i)\leq 2$ and
  $\sum_{i=1}^\ell \#_t(X_i)\leq 2$.
\end{enumerate}
Note that \enref{C1} corresponds to our assumption made 
in \iflong \Cref{cor:add-nooses-to-drawing} \fi\ifshort
  \Cref{lem:nicedrawing} \fi that we can add the nooses $N_b$
to any planar drawing $D$ of $G$ such that every face of $D$ contains
at most one subcurve of any $N_b$. This in particular means that
if $\pe(c)$ is drawn immediately to the left of $\pe(c')$ for two
children $c$ and $c'$ of $b$, then the subcurves $R_{c}$ and $L_{c'}$
are identical. Please also refer to Figure~\ref{fig:shtd} for an
illustration of these subcurves.

 \begin{figure}
\begin{center}
\centering
\begin{tikzpicture}[xscale=1.01,yscale=0.54]
\draw[thick] (0,3)--(1,0.9) (1,-0.9)--(0,-3)(0,3)--(3,0.7)
(3,-0.7)--(0,-3)(0,3)--(5,0.8) (5,-0.8)--(0,-3)(0,3)--(-1,0.9)
(-1,-0.9)--(0,-3)(0,3)--(-3,0.7) (-3,-0.7)--(0,-3)(0,3)--(-5,0.8)
(-5,-0.8)--(0,-3);\draw[fill=gray!20!white,thick] (1,0) ellipse (0.5cm
and 0.9cm)(3,0) ellipse (0.5cm and 0.9cm)(5,0) ellipse (0.5cm and
0.9cm)(-1,0) ellipse (0.5cm and 0.9cm)(-3,0) ellipse (0.5cm and
0.9cm)(-5,0) ellipse (0.5cm and 0.9cm);\draw[very
thick,blue](0,-3)--(-3,-0.3)--(-5,-0.3)--(-5,0.3)--(-3,0.3)--(4.75,-0.3)--(4.75,0.3)--(1,0.3)--(0,3)
(5.25,0.3)--(5.25,-0.3)(5.25,0.3)
to[out=22,in=210](5.45,0.39)(5.25,-0.3)
to[out=-22,in=150](5.45,-0.39);\draw[very
thick,dashed,blue](0,3)to[out=30,in=20]
(5.25,0.3)(0,-3)to[out=-30,in=-20] (5.25,-0.3);\draw[very
thick,red!30!white](0,3)--(0,-3)(0,-3) to[out=140,in=270](-4,0)(0,3)
to[out=220,in=90](-4,0)(0,-3) to[out=40,in=270](4,0)(0,3)
to[out=320,in=90](4,0)(0,-3) to[out=115,in=270](-2,0)(0,3)
to[out=245,in=90](-2,0)(0,-3) to[out=65,in=270](2,0)(0,3)
to[out=295,in=90](2,0)(0,-3) to[out=20,in=270](5.7,0)(0,3)
to[out=-20,in=90](5.7,0)(0,-3) to[out=160,in=270](-5.7,0)(0,3)
to[out=200,in=90](-5.7,0);\draw[fill=black] (0,3)circle
[radius=3pt](0,-3)circle [radius=3pt];

\draw[color=violet,fill=violet]
(-3.975,0.3)node[very thick,cross=3pt] {} [radius=2pt](-3.975,-0.3)node[very thick,cross=3pt] {}
[radius=2pt](3.975,0.3)node[very thick,cross=3pt] {} [radius=2pt](3.98,-0.25)node[very thick,cross=3pt] {}
[radius=2pt](-1.98,0.23)node[very thick,cross=3pt] {} [radius=2pt](1.99,-0.09)node[very thick,cross=3pt] {}
[radius=2pt](0,0.075)node[very thick,cross=3pt] {} [radius=2pt](5.65,0.55)node[very thick,cross=3pt] {}
[radius=2pt](5.65,-0.55)node[very thick,cross=3pt] {} [radius=2pt](1.99,0.3)node[very thick,cross=3pt] {}
[radius=2pt];\draw[color=blue,fill=blue](5.25,0.3)circle
[radius=1pt](5.25,-0.3)circle [radius=1pt](-5,0.3)circle
[radius=1pt](-5,-0.3)circle [radius=1pt](-3,-0.3)circle
[radius=1pt](4.75,0.3)circle [radius=1pt](4.75,-0.3)circle
[radius=1pt](1,0.3)circle [radius=1pt];\node[below left] at
(-0.2,-3.2) {$t$};\node[above left] at (-0.2,3.2) {$s$};
\end{tikzpicture}
\end{center}
\vspace{-0.3cm}
\iflong
\caption{An illustration of how a Hamiltonian Cycle in normal form can
  interact with a drawing of $\pe(b)$ for a P-node $b$ of $\BBB$. Here, the pertinent graphs
  $\pe(c)$ for all children $c$ of $b$ (without the nodes $s$ and $t$ of
  the common reference edge $(s,t)$) are represented by gray ellipses.
  The Hamiltonian cycle is given in blue with dashed segments
  representing path segments outside of $\pe(b)$. The red curves
  represent the subcurves of $N_c$ for every child $c$ of $b$. Note
  again that the subcurves $R_c$ and $L_{c'}$ are identical for every
  two children $c$ and $c'$ such that $\pe(c)$ is drawn immediately to
  the left of $\pe(c')$. The drawing is
  in normal form, \ie, is a drawing that respects the nooses $N_c$ for
  every child $c$ of $b$, because
  every red curve is intersected by $H$ at most twice. In this figure
  all but the types of the second and fourth pertinent graph are
  clean. Moreover, the type of the third
  and fifth pertinent graphs are $1$-good and $2$-good, respectively,
  and the types of all other pertinent graphs are bad.}
\fi
\ifshort
\caption{An illustration of how a Hamiltonian Cycle in normal form can
  interact with a drawing of $\pe(b)$ for a P-node $b$. Here, the pertinent graphs
  $\pe(c)$ for all children $c$ of $b$ (without the nodes $s$ and $t$ of
  the common reference edge $(s,t)$) are represented by gray ellipses.
  The Hamiltonian cycle is given in blue with dashed segments
  representing path segments outside of $\pe(b)$. The red curves
  represent the subcurves of $N_c$ for every child $c$ of $b$. 
  In this figure
  all but the types of the second and fourth pertinent graph are
  clean. Moreover, the type of the third
  and fifth pertinent graphs are $1$-good and $2$-good, respectively,
  and the types of all other pertinent graphs are bad. }
\fi
\label{fig:shtd}
\end{figure}

Let $\rho$ be weakly compatible. We define the following auxiliary graph
$H(\rho)$. $H(\rho)$ has two vertices $s$ and $t$ and additionally
for every $i$ with $1 \leq i \leq \ell$ and every vertex
$v \in V(\psi)$, $H(\rho)$ has a vertex $v_i$. For convenience,
we also use $s_i$ and $t_i$ to refer to $s$ and $t$, respectively.
Moreover, $H(\rho)$ has the following edges:

\begin{itemize}
\item for every $1 \leq i \leq \ell$ if $M_i=\emptyset$ and
  $S_i=\{s_i,t_i\}$, $H(\rho)$ has a cycle on $s_i$ and $t_i$,
\item for every $1 \leq i \leq \ell$ if $M_i \neq \emptyset$ then for every $e=\{u,v\}\in M_i$, $H(\rho)$ has the edge $\{u_iv_i\}$,
\item for every $1 \leq i < \ell$, $H(\rho)$ contains the edge
  $\{r_i,l_{i+1}\}$ if $r \in \psi_i(R)$ and $l \in \psi_{i+1}(L)$,
\item for every $1 \leq i < \ell$, $H(\rho)$ contains the edge $\{r_i',l_{i+1}'\}$ if $r'
  \in \psi_i(R)$ and $l' \in \psi_{i+1}(L)$.
\end{itemize}

\iflong
\begin{lemma} 
\label{lem:planar}
  Let $\rho$ be weakly compatible. Then, $H(\rho)$ is planar.
\end{lemma}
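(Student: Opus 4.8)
The plan is to avoid constructing a drawing explicitly and instead observe that $H(\rho)$ has maximum degree two; since a multigraph of maximum degree two (and without loops) is a disjoint union of simple paths and cycles, it is planar.

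First I would bound the degree of a ``middle'' vertex, i.e., a vertex of the form $v_i$ with $v\in\{l,l',r,r'\}$. By the definition of a type we have $V(\psi_i)\subseteq V(M_i)$ and $M_i$ is a matching, so $v_i$ is an endpoint of exactly one edge of $M_i$; since $M_i\neq\emptyset$, the second bullet in the definition of $H(\rho)$ contributes precisely this one edge at $v_i$. The only other edges of $H(\rho)$ are the ``connecting'' edges of the last two bullets, and each of $r_i$, $l_{i+1}$, $r_i'$, $l_{i+1}'$ is incident to at most one of them (the vertices $l_1$ and $r_\ell$ to none). Here condition~\enref{C1}, together with $\psi_j(L)\in\{\emptyset,\{l\},\{l,l'\}\}$ and $\psi_j(R)\in\{\emptyset,\{r\},\{r,r'\}\}$, guarantees that $r\in\psi_i(R)\iff l\in\psi_{i+1}(L)$ and $r'\in\psi_i(R)\iff l'\in\psi_{i+1}(L)$, so these connecting edges appear exactly at the endpoints one expects and never create a second incidence beyond the one already counted. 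Hence every middle vertex has degree at most two.

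Next I would bound the degree of the poles $s$ and $t$. The only edges of $H(\rho)$ incident to $s$ come from the first two bullets: a column $i$ with $M_i=\emptyset$ and $S_i=\{s_i,t_i\}$ contributes a pair of parallel edges between $s$ and $t$ (contributing $2$ to $\deg_{H(\rho)}(s)$), a column $i$ with $s\in V(M_i)$ contributes a single edge at $s$ (contributing $1$; note such a column automatically has $M_i\neq\emptyset$), and every other column contributes nothing at $s$. In all three cases the contribution of column $i$ to $\deg_{H(\rho)}(s)$ is at most $\#_s(X_i)$, since $\#_s(X_i)=2$ when $s\in S_i$ and $\#_s(X_i)=1$ when $s\in V(M_i)$. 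Summing over $i$ and applying condition~\enref{C2} gives $\deg_{H(\rho)}(s)\le\sum_{i=1}^{\ell}\#_s(X_i)\le 2$, and the same argument applies to $t$.

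Finally, since every edge of $H(\rho)$ joins two distinct vertices, $H(\rho)$ has no loops; combining the two degree bounds, it is a multigraph of maximum degree two, hence a disjoint union of simple paths and cycles (a pair of parallel $s$--$t$ edges forming a cycle of length two), and therefore planar. I expect the only real work to be the bookkeeping in the two degree counts---in particular, reading off from the definition of $H(\rho)$ that $M_i$ contributes exactly one edge per middle vertex, and checking that conditions~\enref{C1} and~\enref{C2} are precisely what keeps the contributions from distinct columns (respectively, the two connecting edges at a middle vertex, and the edges at a pole) from piling up beyond two.
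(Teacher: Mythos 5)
Your proof is correct, and it takes a genuinely different route from the paper's. The paper constructs an explicit planar drawing: each ``column'' $\{s_i,t_i\}\cup V(\psi_i)$ is drawn in a box with $s$ at the north, $t$ at the south, $\psi_i(L)$ and $\psi_i(R)$ on opposite sides, which is possible precisely because $M_i$ is a non-crossing matching w.r.t.\ the cyclic order $(s,r,r',t,l',l)$; the boxes are then placed side by side, the $s_i$'s and $t_i$'s are identified, and the connecting edges are drawn between adjacent box boundaries. You instead avoid any drawing and show $H(\rho)$ has maximum degree two and no loops, hence is a disjoint union of paths and (possibly length-two) cycles, hence planar. The two proofs lean on different parts of the hypotheses: the paper's construction uses the non-crossing property of each $M_i$ but does not really need condition~\enref{C2}, while your degree count never touches non-crossingness but does need~\enref{C2} for the pole bound (and condition~\enref{C1} for the middle vertices, though as you note, the one-connecting-edge bound actually holds there even without it). Your proof is shorter and more elementary; the paper's gives a concrete drawing with $s$ and $t$ on the outer boundary, which is the kind of information that is often useful downstream, though the later proof of Lemma~\ref{lem:seq-to-type} in fact builds its own drawing from scratch rather than reusing this one.

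One small remark on bookkeeping in your pole count: you correctly observe that a column with $s\in S_i$ but $M_i\neq\emptyset$ (or $S_i\neq\{s,t\}$) contributes nothing to $\deg(s)$ while still contributing $2$ to $\sum_i\#_s(X_i)$, so the inequality $\deg_{H(\rho)}(s)\le\sum_i\#_s(X_i)$ can be strict; that is fine since only an upper bound is needed. It would be worth stating explicitly that $S_i$ and $V(M_i)$ are disjoint by the definition of a type, which is what rules out a single column contributing to $\deg(s)$ through both the ``cycle'' bullet and the ``matching'' bullet simultaneously.
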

\begin{proof}
  Because $M_i$ is a non-crossing matching w.r.t.\ the cyclic
  ordering $(s, r, r', t, l', l)$ for every $i \in [1,\ell]$, it holds
  that the graph $H(\rho)$ induced by the vertices in $\{s_i,t_i\} \cup
  V(\psi_i)$ has a planar drawing $D_i$, where $\psi_i(L)$ are placed to the
  east, $s$ is placed in the north, $\psi_i(R)$ is placed in the west,
  and $t$ is placed on the south. Taking the disjoint union of the
  drawings $D_i$ ordered $D_1,\dotsc,D_\ell$ from east to west and
  identifying all $s_i's$ with $s$ and all $t_i$'s with $t$, then
  gives a planar drawing of $H(\rho)$.
\end{proof}
\fi
We say that $\rho$ is \emph{compatible} if it is weakly compatible and
furthermore either $H(\rho)$ is acyclic, or $H(\rho)-(\bigcup_{i=1}^\ell S_i)$ is a single
(Hamiltonian) cycle.

\iflong
\begin{lemma} 
\label{lem:DUpaths}
  Let $\rho$ be compatible such that $H(\rho)$ is acyclic. Then, $H(\rho)$ is the
  disjoint union of paths whose endpoints are in
  $\{s,t,l_1,l_1',r_\ell,r_\ell'\}$. Moreover, no vertex in
  $\{l_1,l_1',r_\ell,r_\ell'\}$ can be an inner vertex of those paths.
\end{lemma}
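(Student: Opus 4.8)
The plan is to prove the statement by a degree argument on $H(\rho)$. I will show that every vertex of $H(\rho)$ has degree at most $2$; combined with the hypothesis that $H(\rho)$ is acyclic, this already forces $H(\rho)$ to be a disjoint union of paths (counting isolated vertices and single edges as degenerate paths). The endpoints of these paths are then exactly the vertices of degree at most $1$, so it suffices to compute the degrees precisely enough to see that the only vertices of degree at most $1$ lie in $\{s,t,l_1,l'_1,r_\ell,r'_\ell\}$, and that among these the four vertices $l_1,l'_1,r_\ell,r'_\ell$ have degree exactly $1$ (hence can never be internal on a path).

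First I would handle the ``interface'' vertices $v_i$ with $v\in V(\psi_i)$. By the type definition $V(\psi_i)\subseteq V(M_i)$, so such a $v_i$ is incident to exactly one edge coming from the matching $M_i$; the only further edges that can touch it are the gluing edges $\{r_{i-1},l_i\}$, $\{r_i,l_{i+1}\}$ and their primed analogues. Here weak compatibility \enref{C1} is the key: since $\#_R(X_{i-1})=\#_L(X_i)$, if $l\in\psi_i(L)$ then $|\psi_{i-1}(R)|=|\psi_i(L)|\ge 1$, so $r\in\psi_{i-1}(R)$ and the edge $\{r_{i-1},l_i\}$ is present; likewise $l'\in\psi_i(L)$ forces $|\psi_{i-1}(R)|=2$, hence $r'\in\psi_{i-1}(R)$ and $\{r'_{i-1},l'_i\}$ is present; and symmetrically for the edges $\{r_i,l_{i+1}\}$, $\{r'_i,l'_{i+1}\}$. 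Consequently $l_i,l'_i$ (when present) have degree $2$ for $i\ge 2$ and degree $1$ for $i=1$, because there is no child ``$0$'' to glue to; symmetrically $r_i,r'_i$ have degree $2$ for $i\le\ell-1$ and degree $1$ for $i=\ell$. For $s$ and $t$ I would instead use \enref{C2}: since $H(\rho)$ is acyclic the ``cycle on $s_i,t_i$'' case never occurs, so $s$ receives an edge from child $c_i$ precisely when $s\in V(M_i)$ (and at most one, as $M_i$ is a matching); using that $S_i$ is disjoint from $V(M_i)$ on $\{s,t\}$, one has $[s\in V(M_i)]\le\#_s(X_i)$ for every $i$, whence $\deg_{H(\rho)}(s)=|\{i : s\in V(M_i)\}|\le\sum_{i}\#_s(X_i)\le 2$, and likewise $\deg_{H(\rho)}(t)\le 2$.

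Putting this together, $H(\rho)$ has maximum degree $2$ and is acyclic, so each of its connected components is a path; the endpoints of these paths are exactly the degree-$\le 1$ vertices, which the computation above confines to $\{s,t,l_1,l'_1,r_\ell,r'_\ell\}$, and since $l_1,l'_1,r_\ell,r'_\ell$ each have degree exactly $1$ whenever they appear in $H(\rho)$, none of them can be an internal vertex of a path. I expect the only real difficulty to be bookkeeping: one must use that \enref{C1} is an equality of the counters $\#_L,\#_R$ (so it matches $\{l,l'\}$ against $\{r,r'\}$ across consecutive children, not merely a single interface vertex), and one must remember that $s,t$ are genuinely shared across all children, so their degree is a sum over $i$ for which \enref{C2} supplies the bound --- with the minor wrinkle that $s\in S_i$ inflates the bound in \enref{C2} without contributing to $\deg(s)$, which only helps.
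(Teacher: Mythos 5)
Your proposal is correct and follows essentially the same route as the paper's proof: a degree argument showing maximum degree two (via \enref{C2} for $s,t$ and via the matching-plus-gluing-edge structure together with \enref{C1} for the interface vertices), after which acyclicity yields a disjoint union of paths, and the boundary vertices $l_1,l_1',r_\ell,r_\ell'$ are seen to have degree exactly one while all other interface vertices have degree exactly two. Your write-up is merely somewhat more explicit than the paper's in spelling out how \enref{C1} guarantees the presence of the gluing edges and how \enref{C2} bounds $\deg(s)$ and $\deg(t)$.
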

\begin{proof}
  We first show that the degree of every vertex in $H(\rho)$ is at most
  two. Because of \enref{C2} this clearly holds for the vertices $s$ and
  $t$. Moreover, every vertex in $v \in \{l_i,l_i',r_i,r_i'\}$ for any
  $i$ with $1 \leq i \leq \ell$ has exactly one neighbor
  among $\{l_i,l_i',r_i,r_i',s,t\}$ and at most one neighbor in
  $V(H(\rho))\setminus \{l_i,l_i',r_i,r_i',s,t\}$. Therefore, $H(\rho)$ has
  maximum degree at most two and since $H(\rho)$ is acyclic, $H(\rho)$ is a
  disjoint union of paths. Moreover, the vertices
  $\{l_1,l_1',r_\ell,r_\ell'\}$ have degree exactly one and hence cannot
  be inner vertices of the paths. Finally, since every vertex of $H(\rho)$ apart
  from the vertices $\{s,t,l_1,l_1',r_\ell,r_\ell'\}$ must have degree
  exactly two, only these vertices can act as endpoints of the paths.
\end{proof}
\fi

In the following let $\rho=(X_1,\dotsc,X_\ell)$ be compatible.
              We now define the type $X$ associated
with $\rho$, which we denote by $X(\rho)$, as follows.
If $H(\rho)$ is a single cycle and $\{s,t\} \subseteq
\bigcup_{i=1}^\ell S_i$, then we set
$X(\rho)=(\psi,\emptyset,\{s,t\})$, where
$\psi(L)=\psi(R)=\emptyset$. Otherwise, 
let $\PPP(\rho)$ be the set of paths in $H(\rho)$, which \iflong due
  to Lemma~\ref{lem:DUpaths} \fi \ifshort can be shown to \fi
have their endpoints in $\{s,t,l_1,l_1',r_\ell,r_\ell'\}$.  Then,
we set $X(\rho)=(\psi,M,S)$, where $\psi$, $M$, and $S$ are defined as follows.
$M$ contains the set $\{u,v\}$ for every path in $\PPP(\rho)$ with
endpoints $u$ and $v$; for brevity, we denote $l_1$, $l_1'$, $r_\ell$, $r_\ell'$ as 
 $l$, $l'$, $r$, $r'$, respectively. Moreover, $\psi(L)=V(M)\cap \{l,l'\}$,
$\psi(R)=V(M)\cap \{r,r'\}$, and $S$ contains $s$ ($t$) if
$\sum_{i=1}^\ell\#_s(X_i)=2$ ($\sum_{i=1}^\ell\#_t(X_i)=2$).
\iflong This completes the definition of $X(\rho)$, which can be easily seen
to be a type for $b$ because $G(\rho)$ is planar due to Lemma~\ref{lem:planar}.
\fi

We say that $\rho$ is \emph{realizable} if there is an ordering
$\pi=(c_1,\dotsc,c_\ell)$ of the children in $C$ and an assignment
$\tau : C \rightarrow \XXX$ from children to types with $\tau(c) \in
\RRR(c)$ for every $c \in C$ such that
$\rho=\tau(\pi)=(\tau(c_1),\dotsc,\tau(c_\ell))$.
\iflong 
Below, we prove that if $\rho$ is a compatible and
realizable, then $b$ has type $X(\rho)$.
\begin{lemma} 
\label{lem:seq-to-type}
  Let $\rho$ be compatible and realizable. Then, $b$ has type
  $X(\rho)$.
\end{lemma}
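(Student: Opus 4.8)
The plan is to build a witness that $b$ has type $X(\rho)$ by gluing together the witnesses supplied by the children of $b$. Since $\rho$ is realizable, fix an ordering $(c_1,\dots,c_\ell)$ of the children of $b$ in $\mathcal{B}$ and an assignment $\tau$ with $\tau(c_i)=X_i=(\psi_i,M_i,S_i)\in\RRR(c_i)$ for all $i$. By property \enref{R1}, each $c_i$ has type $X_i$, so there is a set $\PPP_i$ of vertex-disjoint paths (or a single cycle, in which case $M_i=\emptyset$) in the complete graph on $V(\pe^*(c_i))$, together with a planar drawing $D(c_i,X_i)$ of $\pe^*(c_i)\cup\bigcup\PPP_i$ whose outer face is bounded by the cyclic sequence $(s,r_i,r_i',t,l_i',l_i)$ on the local dummy vertices of $c_i$. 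Recall that, because $b$ is a P-node, the reference edge of every $c_i$ is $(s,t)$, the graphs $\pe(c_1),\dots,\pe(c_\ell)$ pairwise intersect exactly in $\{s,t\}$, and $\pe(b)=\bigcup_i\pe(c_i)$.

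I would then place the drawings $D(c_1,X_1),\dots,D(c_\ell,X_\ell)$ side by side from left to right in disks that all share the common north point $s$ and south point $t$, exactly as in the proof of \Cref{lem:planar} (see also \Cref{fig:shtd}), gluing the right boundary arc of disk $i$ to the left boundary arc of disk $i+1$ for $1\le i<\ell$. By \enref{C1} we have $|\psi_i(R)|=|\psi_{i+1}(L)|$, so the dummy vertices on these two arcs match up: we identify $r_i$ with $l_{i+1}$, and also $r_i'$ with $l_{i+1}'$ when this common value is $2$, while if it is $0$ the disks simply touch at $s$ and $t$. I then delete the now-internal copies of the dummy structure, keeping only the leftmost arc $(s,l_1,l_1',t)$ and the rightmost arc $(s,r_\ell,r_\ell',t)$, and suppress the internal dummies $r_i=l_{i+1}$ and $r_i'=l_{i+1}'$ (for $1\le i<\ell$), which by construction now have degree at most two. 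Renaming the four surviving dummies $l:=l_1$, $l':=l_1'$, $r:=r_\ell$, $r':=r_\ell'$, the outcome is a planar drawing $D$ of $\pe^*(b)$ together with the curves of all the $\PPP_i$, and its outer face is bounded by the cyclic sequence $(s,l,l',t,r',r)$, which is (up to orientation) exactly the outer boundary prescribed by the definition of $\pe^*(b)$.

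It remains to identify the combinatorial object traced out by the glued path segments. Each identification of $r_i$ with $l_{i+1}$ concatenates the unique path of $\PPP_i$ ending at $r_i$ with the unique path of $\PPP_{i+1}$ ending at $l_{i+1}$ through the suppressed dummy; iterating over all identifications, the segments of the $\PPP_i$ assemble into exactly the path components (and possible cycle) of the auxiliary graph $H(\rho)$, since $H(\rho)$ records precisely these incidences together with the matchings $M_i$ and the children whose record contributes a full cycle. Let $\PPP$ be the resulting collection, read in the complete graph on $V(\pe^*(b))$ after the suppressions. Two checks finish the argument. First, because the $\pe(c_i)$ pairwise meet only in $\{s,t\}$, every vertex of $V(\pe(b))\setminus\{s,t\}$ is an inner vertex of exactly one member of $\PPP$, whereas by \enref{C2} the total degree of $s$ (resp.\ $t$) in the assembled structure equals $\sum_i\#_s(X_i)\le 2$ (resp.\ $\sum_i\#_t(X_i)\le 2$), so $s$ (resp.\ $t$) is an inner vertex of $\PPP$ precisely when it lies in the set $S$ of $X(\rho)$; hence $\{\INVP(P)\mid P\in\PPP\}$ partitions $(V(\pe(b))\setminus\{s,t\})\cup S$. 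Second, compatibility fixes the shape of $\PPP$: if $H(\rho)$ is acyclic, then \Cref{lem:DUpaths} gives that it is a disjoint union of paths with endpoints in $\{s,t,l,l',r,r'\}$ whose endpoint pairs are exactly $M$, so $\PPP$ consists of the corresponding glued paths and its endpoint pattern matches $\psi(L)$, $\psi(R)$, and $M$; if instead $H(\rho)-\bigcup_iS_i$ is a single cycle, the assembled structure closes into one cycle through all of $V(\pe(b))$, which is exactly what $X(\rho)=(\psi,\emptyset,\{s,t\})$ asks for in that case. In both cases the pair $(\PPP,D)$ witnesses that $b$ has type $X(\rho)$.

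The main obstacle I expect is the bookkeeping in this last step. One must verify that concatenating the path segments never reuses a vertex, and that a cycle is created exactly when $H(\rho)$ contains one so that the single-cycle clause of compatibility applies; and one must check that the degenerate configurations are all absorbed correctly by the definition of $X(\rho)$---in particular $\ell=1$, consecutive disks meeting only at $s$ and $t$ because some $\psi_i(R)$ is empty, children contributing a two-edge cycle on $s$ and $t$, and the case where the whole structure is a single Hamiltonian cycle. Getting the interplay between the compatibility condition and the two cases of the definition of $X(\rho)$ exactly right is where the real effort lies; the rest is routine planar surgery and degree counting.
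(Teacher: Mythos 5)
Your proposal is correct and follows essentially the same approach as the paper's proof: fix the ordering and assignment from realizability, invoke \enref{R1} to obtain a witness $(\PPP_{c_i}, D(c_i,\tau(c_i)))$ for each child, lay the child drawings out side by side sharing the poles $s$ and $t$, glue consecutive boundary arcs by identifying $r_i$ with $l_{i+1}$ (and $r_i'$ with $l_{i+1}'$), suppress the internal dummies, and then observe that the resulting glued path/cycle structure is exactly $H(\rho)$, from which the claimed type $X(\rho)$ is read off. The paper's version is slightly more formal about the post-gluing surgery — it first takes the genuine disjoint union, explicitly adds the bridge edges between $s_i$/$s_{i+1}$, $t_i$/$t_{i+1}$, $r_i$/$l_{i+1}$, $r_i'$/$l_{i+1}'$, and then performs a case split on $|\psi_i(R)|\in\{0,1,2\}$ deciding which dummies to contract versus delete, finally arguing that the contracted drawing equals both $\pe^*(b)\cup\bigcup\PPP$ and a graph isomorphic to $H(\rho)$ — but this is the same bookkeeping you flagged as "the real effort" and not a different argument.
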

\begin{proof}
  Let $(\pi,\tau)$ with $\pi=(c_1,\dotsc,c_\ell)$ be the ordering and
  assignment that witnesses that $\rho$ is realizable.
  Since $\tau(c)=(\psi_c,M_c,S_c) \in \RRR(c)$ and $\RRR(c)\subseteq \XXX(c)$ (using
  \enref{R1} in the definition of $\RRR(c)$) for every child $c$ of $b$, we
  obtain that there is a set $\PPP_c$ of vertex-disjoint paths
  in the complete graph with vertex set $V(\pe^*(c))$ such that:
  \begin{itemize}
  \item $\PPP_c$ consists of exactly one path $P_e$ between $u$ and $v$
    for every $e=\{u,v\} \in M_c$.
  \item $\SB \INVP(P)\SM P \in \PPP_c\SE$ is a partition of
    $(V(\pe(b))\setminus \{s,t\})\cup S_c$, where
    $\INVP(P)$ denotes the set of inner vertices of the path $P$.
  \item there is a planar drawing $D(b,\tau(c))$ of $\pe^*(b)\cup
    \bigcup_{P \in \PPP_c}P$ with outer-face $f$ such that
    $\seqb(f)=\{s,r,r',t,l',l\}$.
  \end{itemize}
  Moreover, since $\seqb(f)=\{s,r,r',t,l',l\}$, we can (and will) in the
  following assume that the drawing $D(c,\tau(c))$ has: $s$ at its
  north, $t$ at its south, $l$
  and $l'$ at its west with $l$ being north of $l'$ and $r$ and $r'$
  at its east with $r$ to the north of $r'$.

  \sloppypar
  Let $D$ be the planar drawing obtained from the disjoint union of
  the drawings $D(c_1,\tau(c_1)),\dotsc,D(c_\ell,\tau(c_\ell)$ by
  drawing them in the order given by $\rho$ from west to east without
  overlap. To avoid name clashes between vertices, we refer to the
  vertex $v \in \{s,t,l,l',r,r'\}$ belonging to the drawing
  $D(c_i,\tau(c_i))$ inside $D$ as $v_i$.

  Because of the above
  mentioned properties of the drawings $D(c,\tau(c))$, we can now add
  (and draw) the following edges to $D$ without crossings to obtain
  the planar drawing $D'$.
  \begin{itemize}
  \item The edges of the paths $(s_1,\dotsc,s_\ell)$ and
    $(t_1,\dotsc,t_\ell)$,
  \item For every $i$ with $1 \leq i < \ell$, the edges
    $\{r_i,l_{i+1}\}$ and $\{r_i',l_{i+1}'\}$.
  \end{itemize}
  Let $D''$ be the planar drawing obtained from $D'$ after:
  \begin{itemize}
  \item contracting the path $(s_1,\dotsc,s_\ell)$ into the fresh vertex
    $s$,
  \item contracting the path $(t_1,\dotsc,t_\ell)$ into the fresh vertex
    $t$,
  \item For every $i$ with $1 \leq i \leq \ell$:
    \begin{itemize}
    \item if $i \neq \ell$ and $|\psi_i(R)|=1$ contract the edge $\{r_i,r_i'\}$
      into the vertex $r_i$,
    \item if $i \neq \ell$ and $|\psi_i(R)|=0$ remove the vertices
      $r_i$ and $r_i'$,
    \item if $i \neq 1$ and $|\psi_i(L)|=1$ contract the edge $\{l_i,l_i'\}$
      into the vertex $l_i$,
    \item if $i \neq 1$ and $|\psi_i(L)|=0$ remove the vertices
      $l_i$ and $l_i'$.
    \end{itemize}
  \item removing all edges of the form $sl_i$, $l_il_i'$, $l_i't$,
    $sr_i$, $r_ir_i'$, and $r_i't$ for every $i \notin\{1,\ell\}$.
  \end{itemize}

  Let $D_b$ be the planar drawing obtained from $D''$ after:
  \begin{itemize}
  \item contracting all edges incident to any vertex in $\{ l_i, l_i'
    \mid 1 < i \leq \ell \}\cup \{r_i, r_i' \mid 1 \leq i < \ell \}$,
  \item renaming the vertices $l_1$, $l_1'$, $r_\ell$, and $r_\ell'$
    to $l$, $l'$, $r$, and $r'$.
  \end{itemize}
  Let $H_b$ be the planar graph corresponding to $D_b$.

  Let $D_\rho$ be the planar drawing obtained from $D''$ after:
  \begin{itemize}
  \item contracting every path $P \in \PPP_c$ for every $c \in C$ into
    a single edge,
  \item if $|\psi_1(L)|=1$ contract the edge $\{l_1,l_1'\}$
    into the vertex $l_1$,
  \item if $|\psi_1(L)|=0$ remove the vertices $l_1$ and $l_1'$,
  \item if $|\psi_\ell(R)|=1$ contract the edge $\{r_\ell,r_\ell'\}$
    into the vertex $r_\ell$,
  \item if $|\psi_\ell(R)|=0$ remove the vertices $r_\ell$ and $r_\ell'$,    
  \end{itemize}
  Let $H_\rho$ be the planar graph corresponding to $D_\rho$.

  Then, $H_\rho$ is isomorphic to $G(\rho)$ and $D_b$ is a planar
  drawing of $\pe^*(b)\cup
  \bigcup_{P \in \bigcup_{c \in C}\PPP_c}$ that witnesses that $b$ has
  type $X(\rho)$.
\end{proof}

\begin{lemma} 
\label{lem:type-to-seq}
  Let $W=(D,D_H,G_H,H)$ be a witness for $G$ that \emph{respects}
  $\mathcal{B}$ and $\mathcal{T}$. Then, there is a realizable and compatible $\rho$
  such that $X(\rho)=\type_W(b)$.
\end{lemma}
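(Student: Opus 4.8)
The plan is to read the witnessing pair $(\pi,\tau)$ for realizability straight off a drawing underlying $W$, and then to check compatibility and the identity $X(\rho)=\type_W(b)$ by unfolding the definitions. Since $W$ respects $\mathcal{B}$, I would first fix one planar drawing extending $D_H$ by a noose $N_c$ for each child $c\in C$ of $b$, so that $N_c$ touches $D$ only at the poles $s,t$, separates $\pe(c)$ from the rest, each of its two subcurves $L_c,R_c$ is crossed by $H$ at most twice, and subcurves joining the same two vertices in the same face of $D$ coincide. As $b$ is a P-node, the $\pe(c)$ lie in parallel between $s$ and $t$, so the $N_c$ are disjoint apart from $s,t$; removing the ``slot'' of the reference edge from their cyclic arrangement yields a linear order $\pi=(c_1,\dots,c_\ell)$ with $R_{c_i}=L_{c_{i+1}}$ for $i<\ell$ and with $L_{c_1},R_{c_\ell}$ being the two subcurves $L_b,R_b$ of the noose $N_b$ of $b$. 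I would read this $\pi$ off the drawing, fixing consistently which crossing point on a subcurve is the ``primed'' one, and set $\tau(c):=\type_W(c)$ for $c\in C$; since the restricted drawing $D_H^{c}$ witnesses that $c$ has type $\type_W(c)$ and $W$ respects $\mathcal{B}$, property \enref{R2} of the (already computed) set $\RRR(c)$ gives $\tau(c)\in\RRR(c)$. Then $\rho:=\tau(\pi)=(X_1,\dots,X_\ell)$ with $X_i=(\psi_i,M_i,S_i):=\tau(c_i)$ is realizable, witnessed by $(\pi,\tau)$.

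Second, I would check that $\rho$ is weakly compatible. Condition \enref{C1} is immediate because $R_{c_i}$ and $L_{c_{i+1}}$ are literally the same curve in the fixed drawing, so $H$ crosses them the same number of times, giving $\#_R(X_i)=|\psi_i(R)|=|\psi_{i+1}(L)|=\#_L(X_{i+1})$. For \enref{C2} I would use that $s$ has exactly two incident edges in $H$, each lying either just inside exactly one noose $N_{c_i}$ or entirely outside $N_b$: if both lie inside the same $\pe(c_i)$ then $s$ becomes an inner vertex of a path there, so $\#_s(X_i)=2$ and $\#_s(X_j)=0$ otherwise; if they lie in two distinct children they contribute $1$ each; in the remaining cases the total is at most $1$. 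Hence $\sum_i\#_s(X_i)\le 2$, and symmetrically $\sum_i\#_t(X_i)\le 2$.

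The core of the argument is that $\rho$ is compatible and $X(\rho)=\type_W(b)$, and the key point is that $H(\rho)$ is a combinatorial copy of $H$ restricted to $N_b$. More precisely, $H(\rho)$ is obtained from the drawing of $H$ inside the closed disc $\Delta$ bounded by $N_b$ by contracting, for each $i$, every maximal subpath of $H$ lying in $\pe(c_i)$ to an edge (these subpaths are exactly the ones recorded by the matching $M_i$ of $\tau(c_i)=\type_W(c_i)$, and the degenerate case where $H\cap\pe(c_i)$ is a full cycle gives the ``cycle on $s_i,t_i$''), keeping the finitely many crossings of $H$ with some $N_{c_i}$ as vertices, and splitting each interior crossing -- one on a shared subcurve $R_{c_i}=L_{c_{i+1}}$ -- into the two vertices $r_i,l_{i+1}$ (or $r_i',l_{i+1}'$) joined by the corresponding linking edge of $H(\rho)$. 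Consequently $H(\rho)$ is isomorphic, as a graph, to $H\cap\Delta$ with the crossings of $H$ on $N_b$ itself left as degree-one vertices $l_1,l_1',r_\ell,r_\ell'$ (renamed $l,l',r,r'$). If $H$ meets $N_b$ nontrivially -- some edge of $H$ crosses $L_b$ or $R_b$, or $H$ switches from $\Delta$ to its complement at $s$ or $t$ -- then $H\cap\Delta$ is a disjoint union of paths, so $H(\rho)$ is acyclic; together with weak compatibility this makes $\rho$ compatible, and by \Cref{lem:DUpaths} the paths of $H(\rho)$ have their endpoints among $\{s,t,l,l',r,r'\}$. Comparing the definition of $X(\rho)$ with that of $\type_W(b)$ then gives: the matchings agree (endpoint-pairs of these paths), the maps $\psi$ agree (which of $l,l',r,r'$ appear as endpoints, i.e.\ the sets $P_L,P_R$), and $s$ (resp.\ $t$) lies in the $S$-component on both sides, because it does so iff it is an inner vertex of one of the paths iff $\sum_i\#_s(X_i)=2$. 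Otherwise $H$ does not cross $N_b$; since $H$ covers every vertex of $\pe(b)$ it cannot lie outside $\Delta$, so it lies entirely inside, whence $H(\rho)$ is a single cycle, $\rho$ is compatible, and $X(\rho)$ and $\type_W(b)$ both equal the full type (empty matching, and $s,t$ inner vertices of the closed curve $H$). In either case $X(\rho)=\type_W(b)$.

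I expect this last part to be the only real difficulty: making the isomorphism between $H(\rho)$ and $H\cap\Delta$ precise (in particular matching up the $l/l'$ and $r/r'$ labels according to the drawing), and checking that the degenerate situation in which $H$ stays inside $\pe(b)$ fits the exact clauses of the definitions of $X(\rho)$ and of the types of a P-node. The remaining steps are routine definition chasing.
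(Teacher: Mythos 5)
Your proof is correct and takes essentially the same approach as the paper: take $\pi$ to be the left-to-right order of the children in the drawing underlying $W$, set $\tau(c)=\type_W(c)$, invoke (R2) for realizability, and derive compatibility and $X(\rho)=\type_W(b)$ from the normal form. The paper's own proof is far terser---it simply asserts that $\rho$ is compatible ``since $W$ respects $\mathcal{B}$ and since $H$ is a Hamiltonian cycle'' and leaves the equality $X(\rho)=\type_W(b)$ implicit---so your explicit verification of \enref{C1} via identified shared subcurves $R_{c_i}=L_{c_{i+1}}$, of \enref{C2} by counting the two $H$-edges at each of $s,t$, and of the type equality via the combinatorial isomorphism between $H(\rho)$ and $H$ restricted to the disc bounded by $N_b$, supplies detail the paper omits but certainly needs.
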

\begin{proof}
  Let $\pi=(c_1,\dotsc,c_\ell)$ be the ordering of the children in $C$
  according to the drawing $D_H$. Moreover, let $\tau : C \rightarrow
  \XXX$ be the assignment defined by setting
  $\tau(c)=\type_W(c)$ for every $c \in C$. Note that because of \enref{R2} in the definition of
  $\RRR(c)$ it holds that $\tau(c) \in \RRR(c)$ for every $c \in
  C$. Finally, let $\rho$ be the sequence
  $(\tau(c_1),\dotsc,\tau(c_\ell))$. Then, $\rho$ is clearly
  realizable. Moreover, $\rho$ is compatible since $W$ respects
  $\mathcal{B}$ and since $H$ is a Hamiltonian cycle.
\end{proof}

From~\Cref{lem:seq-to-type,lem:type-to-seq}, we now obtain the
following corollary.
\fi
\ifshort The following lemma now allows us to focus on finding the set of all types $X$
for which there is a compatible and realizable $\rho$ such that $X=X(\rho)$.
\begin{lemma} \fi
\iflong\begin{corollary}\fi\label{cor:pn-record-char}
  The set $R$ containing every type $X \in \XXX$ such that there
  is a compatible and realizable $\rho$ with $X=X(\rho)$ satisfies the properties
  \enref{R1} and \enref{R2}.
\ifshort\end{lemma}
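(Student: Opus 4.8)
The plan is to read off both defining properties of $\RRR(b)$ for the candidate set $R$ directly from the two structural lemmas that translate between compatible realizable sequences of child types and types of $b$, namely \Cref{lem:seq-to-type} and \Cref{lem:type-to-seq}. Once those are in hand the corollary is essentially a two-line assembly: I would check \enref{R1} and \enref{R2} in turn, each by instantiating one of the two lemmas. So the genuine work lies not here but in those lemmas, and the present statement is merely the bookkeeping step that packages them into the invariant the dynamic programming maintains.

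For \enref{R1}, let $X\in R$; by definition of $R$ there is a compatible and realizable $\rho$ with $X=X(\rho)$, and \Cref{lem:seq-to-type} says exactly that such a $\rho$ forces $b$ to have type $X(\rho)=X$. The substance behind this is the ``side-by-side then contract'' construction in the proof of \Cref{lem:seq-to-type}: one takes the drawings $D(c_i,\tau(c_i))$ witnessing the children's types, lays them out left-to-right in the order prescribed by $\rho$, stitches consecutive blocks together with the edges $\{r_i,l_{i+1}\}$ and $\{r_i',l_{i+1}'\}$ and the two spine paths through the copies of $s$ and of $t$, and then contracts those paths and deletes the unused dummy vertices. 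The load-bearing claims there are that the result stays planar with outer boundary $\seqb(f)=\{s,r,r',t,l',l\}$ (this is where \Cref{lem:planar} is used) and that the resulting family of vertex-disjoint paths, or the single cycle, is exactly the one recorded by $X(\rho)$ (this uses \Cref{lem:DUpaths} to locate the path endpoints among $\{s,t,l_1,l_1',r_\ell,r_\ell'\}$). I expect this planarity-plus-bookkeeping verification to be the main obstacle of the whole P-node argument; the corollary itself contributes nothing hard.

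For \enref{R2}, let $W=(D,D_H,G_H,H)$ be a witness for $G$ that respects $\mathcal{B}$ and suppose $b$ has type $X=\type_W(b)$. Here I invoke \Cref{lem:type-to-seq}: reading off the left-to-right order $\pi=(c_1,\dots,c_\ell)$ of the children's pertinent graphs inside $D_H$ and setting $\tau(c)=\type_W(c)$ for each child $c$, one gets a sequence $\rho=(\tau(c_1),\dots,\tau(c_\ell))$ that is realizable — this step uses \enref{R2} applied to the \emph{children}, i.e.\ the already-computed records $\RRR(c)$, to guarantee $\tau(c)\in\RRR(c)$ — and compatible, because $W$ respecting $\mathcal{B}$ makes the endpoint counts on consecutive child nooses line up (\enref{C1}--\enref{C2}) and $H$ being a single Hamiltonian cycle forces $H(\rho)$ to be acyclic or a single cycle outside $\bigcup_i S_i$. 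Since moreover $X(\rho)=\type_W(b)=X$, we conclude $X\in R$.

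Putting the two directions together gives the corollary. To summarize the dependency structure: \enref{R1} is ``$X\in R$'' $\Rightarrow$ \Cref{lem:seq-to-type}, and \enref{R2} is ``global witness $W$'' $\Rightarrow$ \Cref{lem:type-to-seq}; the only nontrivial ingredient anywhere in this chain is the gluing/contraction construction underlying \Cref{lem:seq-to-type}.
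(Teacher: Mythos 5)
Your proposal is correct and matches the paper's approach exactly: the paper states this corollary as following immediately from Lemma~\ref{lem:seq-to-type} (giving \enref{R1}) and Lemma~\ref{lem:type-to-seq} (giving \enref{R2}), which is precisely the two-step assembly you carry out. The extra exposition you give on what happens inside those two lemmas is accurate but not part of the corollary's own proof obligation.
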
\fi\iflong\end{corollary}\fi
\iflong Therefore, from now onward we can focus on finding the set of all types $X$
for which there is a compatible and realizable $\rho$ such that $X=X(\rho)$.\fi
We will now show that this can be achieved very efficiently because only a
constant number, i.e., at most $8$ types (and their ordering) need to be specified
in order to infer the type of a sequence $\rho$. 
Let $X=(\psi,M,S) \in \XXX$ be a type. We say that $X$ is \emph{dirty}
if $\#_s(X)+\#_t(X)>0$ and otherwise we say that $X$ is \emph{clean}.
We say that $X$ is \emph{$0$-good}, \emph{$1$-good},
and \emph{$2$-good}, if $X$ is clean and additionally
$M=\emptyset$, $M=\{\{l,r\}\}$, and
$M=\{\{l,r\},\{l',r'\}\}$, respectively. We say that $X$ is \emph{good}
if it is $x$-good for some $x \in \{0,1,2\}$ and otherwise we say that $X$ is
\emph{bad}. We denote by $\XXX_G$ and $\XXX_B$ the subset of $\XXX$ consisting
only of the good respectively bad types.
An illustration of these notions is provided in Figure~\ref{fig:shtd}.
              \iflong
\begin{lemma} 
\label{lem:atmostfour}
  Let $\rho=(X_1,\dotsc,X_\ell)$ be compatible. Then, $\rho$ contains
  at most $4$ dirty types and at most $4$ types that are clean and
  bad.
\end{lemma}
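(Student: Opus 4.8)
The plan is to bound the two quantities separately using the two compatibility conditions \enref{C1} and \enref{C2}, together with the structure of $H(\rho)$. First I would handle the dirty types: by definition $X_i$ is dirty iff $\#_s(X_i)+\#_t(X_i)>0$. Condition \enref{C2} says $\sum_{i=1}^\ell \#_s(X_i)\le 2$ and $\sum_{i=1}^\ell \#_t(X_i)\le 2$, so the total ``budget'' contributed by all types at $s$ and at $t$ together is at most $4$. Since each dirty type spends at least $1$ unit of this budget, there can be at most $4$ dirty types. This direction is essentially immediate from \enref{C2}.

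The bad-and-clean count is the substantive part. Let $X_i$ be clean; then $\#_s(X_i)=\#_t(X_i)=0$, so $s_i$ and $t_i$ are isolated in the block of $H(\rho)$ coming from $X_i$, and $X_i$ is good iff $M_i\in\{\emptyset,\{\{l,r\}\}, \{\{l,r\},\{l',r'\}\}\}$. The key observation is that for a clean type the matching $M_i$ lives on $\psi_i(L)\cup\psi_i(R)\subseteq\{l,l',r,r'\}$ (no $s,t$ endpoints, since those require $s$ or $t$ in $V(M)$, contradicting cleanliness), and $V(\psi_i)\subseteq V(M_i)$; by \enref{C1}, $|\psi_i(R)|=|\psi_{i+1}(L)|$, so the ``flow'' $k_i:=\#_R(X_i)=\#_L(X_{i+1})\in\{0,1,2\}$ is a well-defined quantity along the sequence, with $\#_L(X_i)=k_{i-1}$ and $\#_R(X_i)=k_i$. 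A clean type with incoming flow $a$ and outgoing flow $b$ that is \emph{bad} must either (i) match some of its left/right terminals to each other in a ``turn-back'' fashion rather than passing them straight through, or (ii) have $a\ne b$ (since a clean good type has $M\subseteq\{\{l,r\},\{l',r'\}\}$, forcing $\#_L=\#_R$). In case (ii) the flow changes; in case (i) the type creates a matched pair that, combined with the rest of $H(\rho)$, must close up consistently.

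I would then argue as follows. Consider the sequence of flow values $k_0,k_1,\dots,k_\ell\in\{0,1,2\}$, where $k_0=\#_L(X_1)$ and $k_\ell=\#_R(X_\ell)$; note $k_0,k_\ell$ are exactly the entries corresponding to $l,l',r,r'$ in the final type $X(\rho)$, each at most $2$. Every clean good type $X_i$ has $k_{i-1}=k_i$, i.e.\ it does not change the flow, whereas (I claim) every clean bad type either changes the flow ($k_{i-1}\ne k_i$) or, if $k_{i-1}=k_i$, uses a ``nested'' matching such as $\{\{l,l'\},\dots\}$ or $\{\{r,r'\},\dots\}$ or $\{\{l,r'\},\{l',r\}\}$ — but each of these, given \enref{C1} forces the adjacent types to also be ``bent''. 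The cleanest way to finish is a potential/charging argument: each clean bad type either (a) strictly decreases the flow by at least $1$ somewhere, or (b) corresponds to a fixed ``turn-back'' pattern that, since $H(\rho)$ restricted away from $\bigcup S_i$ is either acyclic or a single cycle (compatibility), can occur only $O(1)$ times. The main obstacle — and the step I expect to require the most care — is ruling out a long run of clean bad types of type (b) that each ``bend'' the left and right terminals internally: here I would use that such a bend in $X_i$ forces, by \enref{C1}, either a matching endpoint of the global paths/cycle $\PPP(\rho)$ to appear among $\{l_1,l_1',r_\ell,r_\ell'\}$, of which there are only $4$, or it forces a cycle to close off prematurely inside $H(\rho)$, contradicting that $H(\rho)-\bigcup S_i$ is acyclic or a single cycle. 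Tracking these bends carefully and showing they each consume one of the at most $4$ available ``global endpoint slots'' (namely $l_1,l_1',r_\ell,r_\ell'$) gives the bound of $4$ clean bad types, completing the lemma.
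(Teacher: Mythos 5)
The first half of your plan (dirty types) is exactly the paper's argument: each dirty type uses at least one unit of the budget from condition \enref{C2}, and the total budget is $4$, so at most $4$ dirty types. That part is fine.

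For the clean-bad part you propose a flow/charging argument, whereas the paper uses a pigeonhole-then-contradiction argument, so the routes genuinely differ. However, your plan has a gap that I do not think closes easily. First, the stated dichotomy is not accurate: the clean bad types are exactly those with $M\in\{\{\{l,l'\}\},\{\{r,r'\}\},\{\{l,l'\},\{r,r'\}\}\}$, so a clean bad type need not decrease the flow --- $(\psi_{0,2},\{\{r,r'\}\},\emptyset)$ \emph{increases} it from $0$ to $2$, and $(\psi_{2,2},\{\{l,l'\},\{r,r'\}\},\emptyset)$ leaves it at $2$. Second, the charging claim ``each clean bad type consumes one of the $4$ global endpoint slots $l_1,l_1',r_\ell,r_\ell'$'' is not literally true: a hairpin edge $\{l_j,l_j'\}$ (resp.\ $\{r_j,r_j'\}$) can have both of its outgoing strands absorbed by $s$ and $t$ via a dirty type, or by a matching hairpin of the opposite kind, and in neither case does it touch $\{l_1,l_1',r_\ell,r_\ell'\}$. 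So the charge is not well-defined as stated, and one would need to carefully account for $s,t$ (each usable twice by \enref{C2}) and for cancellations between opposite hairpins --- precisely the step you flag as needing the most care, but do not carry out.

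The paper instead argues as follows: every clean bad type contains $\{l,l'\}$ or $\{r,r'\}$ in its matching, so if there were $5$ clean bad types, by pigeonhole three of them, say at indices $i<j<k$, would all contain the same pair, w.l.o.g.\ $\{l,l'\}$. The strand from $\{l_j,l_j'\}$ going left is blocked by $\{l_i,l_i'\}$, so the path/cycle through $\{l_j,l_j'\}$ must go through $s$ and $t$; the same holds for $\{l_k,l_k'\}$. But \enref{C2} prevents $\{l_i,l_i'\}$ from also lying on that $s$--$t$ cycle, contradicting that $H(\rho)$ is acyclic or a single cycle. This is a cleaner route that avoids having to precisely enumerate how hairpin strands are anchored.
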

\begin{proof}
  Let $\rho=(X_1,\dotsc,X_\ell)$ with $X_i=(\psi_i,M_i,S_i)$
  be compatible.
  The statement that $\rho$ contains at most $4$ dirty types follows
  directly from \enref{C2} in the definition of weak compatibility. It remains
  to show that $\rho$ contains at most $4$ types that are clean and bad.
  First note that if type $X=(\psi,M,S)$ is clean and bad, then
  either $M=\{\{L,L\}\}$, $M=\{\{R,R\}\}$, or $M=\{\{L,L\},\{R,R\}\}$.
  Now suppose for a contradiction that $\rho$
  contains at least $5$ types that are clean and bad. Then, there are
  indices $1 \leq i<j<k \leq \ell$ such that $M_i$, $M_j$,
  and $M_k$ either all contain the pair $\{L,L\}$ or all of them
  contain the pair $\{R,R\}$. Let us assume the former case since the
  argument for the latter case is analogous. Because $\rho$ is
  compatible the path/cycle $P$ in $H(\rho)$ that contains the edge
  $\{l_j,l_j'\}$ must also contain $s$ and $t$. This is because if $P$
  does not contain $s$ and $t$ it can only go to the left until it is
  blocked by the path containing the edge $\{l_i,l_i'\}$.
  The same holds for the path/cycle in $H(\rho)$ that contains the edge
  $\{l_k,l_k'\}$. Therefore, $\{l_j,l_j'\}$ and $\{l_k,l_k'\}$ are
  contained together with $s$ and $t$ on a cycle $C$ in
  $H(\rho)$. Finally, because of \enref{C2} in the definition of weakly
  compatible, we obtain that $C$ does not contain the edge
  $\{l_i,l_i'\}$, which contradicts our assumption that $H(\rho)$ is
  either a single cycle or acyclic.
\end{proof}
The following corollary follows immediately from Lemma~\ref{lem:atmostfour} since
every bad type is either clean or dirty.
\fi
\ifshort\begin{lemma}\fi
\iflong\begin{corollary}\fi\label{cor:atmosteight}
  Let $\rho=(X_1,\dotsc,X_\ell)$ be compatible, then $\rho$
  contains at most $8$ bad types.
\ifshort\end{lemma}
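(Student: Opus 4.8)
The plan is to derive this directly from \Cref{lem:atmostfour} by a trivial case split on the value of $\#_s(X)+\#_t(X)$. The key observation I would make first is that the notions \emph{clean} and \emph{dirty} partition the set $\XXX$ of all types: by definition $X$ is dirty exactly when $\#_s(X)+\#_t(X)>0$ and clean otherwise. Moreover, every good type is clean by definition (being $x$-good for some $x\in\{0,1,2\}$ requires being clean), so no dirty type can be good; equivalently, every dirty type is bad. Hence, for a compatible sequence $\rho=(X_1,\dotsc,X_\ell)$, the set of indices $i$ with $X_i\in\XXX_B$ is the disjoint union of the indices $i$ for which $X_i$ is dirty and the indices $i$ for which $X_i$ is clean and bad.

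Now I would simply invoke \Cref{lem:atmostfour}, which bounds the first part of this disjoint union by $4$ (at most $4$ dirty types occur in $\rho$) and the second part by $4$ (at most $4$ types in $\rho$ are clean and bad). Adding the two bounds yields at most $8$ indices $i$ with $X_i$ bad, which is exactly the claim. I do not expect any real obstacle here: the only content is the remark that the two cases are exhaustive and disjoint, and this is immediate from the definitions of clean, dirty, and good; all the genuine combinatorial work — in particular the planarity argument and the use of condition~\enref{C2} that rules out a third occurrence of a given bad clean pattern — has already been carried out in the proof of \Cref{lem:atmostfour}.
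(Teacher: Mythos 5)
Your proposal is correct and matches the paper's own argument exactly: the paper derives \Cref{cor:atmosteight} from \Cref{lem:atmostfour} with the one-line observation that every bad type is either clean or dirty, and you spell out the same partition (dirty types, clean-and-bad types) and sum the two bounds of $4$ each. No gap; the only cosmetic difference is that you additionally justify why dirty implies bad (good requires clean), which the paper leaves implicit.
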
\fi\iflong\end{corollary}\fi

\iflong
Moreover, since deciding whether $\rho$ is compatible merely requires us to check that $\rho$ satisfies \enref{C1} and
\enref{C2} and that either $H(\rho)$ is acyclic or
  $H(\rho)-(\bigcup_{i=1}^\ell S_i)$ is a single
  (Hamiltonian) cycle, we observe:

\begin{observation}\label{obs:checkcomp}
  It is possible to decide whether a given $\rho=(X_1,\dotsc,X_\ell)$ is compatible in time $\bigoh(\ell)$.
\end{observation}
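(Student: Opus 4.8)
The plan is to verify the defining conditions of compatibility one at a time, arguing that each verification costs only $\bigoh(\ell)$ time because every individual type in $\rho$ has constant size and the auxiliary graph $H(\rho)$ has only $\bigoh(\ell)$ vertices and edges. Concretely, each type $X_i=(\psi_i,M_i,S_i)$ lives over a universe of constant size: $\psi_i$ is a function from $\{L,R\}$ to subsets of $\{l,l',r,r'\}$, $M_i$ is a non-crossing matching on the six-element set $\{s,r,r',t,l',l\}$ (so $|M_i|\le 3$), and $S_i\subseteq\{s,t\}$. Hence the quantities $\#_L(X_i)=|\psi_i(L)|$, $\#_R(X_i)=|\psi_i(R)|$, $\#_s(X_i)$, and $\#_t(X_i)$ are all readable in constant time, and scanning $\rho$ to compute them takes $\bigoh(\ell)$ time. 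Conditions \enref{C1} and \enref{C2} then amount to checking $\#_R(X_i)=\#_L(X_{i+1})$ for the $\ell-1$ consecutive pairs and checking $\sum_{i}\#_s(X_i)\le 2$ and $\sum_{i}\#_t(X_i)\le 2$, all within $\bigoh(\ell)$ time; if $\rho$ fails weak compatibility we stop and report that $\rho$ is not compatible.

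Next I would construct $H(\rho)$ explicitly (as a multigraph, to accommodate the ``cycle on $s_i$ and $t_i$'' that may arise when $M_i=\emptyset$ and $S_i=\{s_i,t_i\}$). By definition $H(\rho)$ has the two vertices $s,t$ together with at most $4\ell$ further vertices $v_i$, one for each $i$ and each $v\in V(\psi_i)\subseteq\{l,l',r,r'\}$, so $|V(H(\rho))|=\bigoh(\ell)$; its edges are, for each $i$, at most two edges coming from $M_i$ (or the double edge on $s,t$ in the degenerate case), and for each $1\le i<\ell$ at most the two edges $\{r_i,l_{i+1}\}$ and $\{r_i',l_{i+1}'\}$, so $|E(H(\rho))|=\bigoh(\ell)$ as well. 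Building $H(\rho)$ from $\rho$ in adjacency-list form thus costs $\bigoh(\ell)$ time.

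It remains to carry out the two tests in the definition of ``compatible''. Deciding whether $H(\rho)$ is acyclic is a single depth-first traversal, costing $\bigoh(|V(H(\rho))|+|E(H(\rho))|)=\bigoh(\ell)$. For the alternative, note that $\bigcup_{i=1}^\ell S_i$ is a subset of $\{s,t\}$ and hence contains at most two vertices, so the graph $H(\rho)-(\bigcup_{i} S_i)$ can be formed in $\bigoh(\ell)$ time; checking that it is a single cycle reduces to verifying that every remaining vertex has degree exactly $2$ and that the graph is connected, each of which is again a linear-time traversal. If weak compatibility held and either of these two tests succeeds, we report that $\rho$ is compatible, and otherwise that it is not. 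Summing the costs over all stages yields the claimed $\bigoh(\ell)$ bound.

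There is no genuine obstacle here; the only point needing a line of care is to confirm that each type $X_i$---and therefore $H(\rho)$---contributes only a constant amount of data per index, so that the scans, the graph construction, and the graph traversals are all linear in $\ell$ rather than merely polynomial.
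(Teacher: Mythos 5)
Your proposal is correct and follows the same route as the paper, which presents this as an immediate observation: the paper's preceding remark simply says that compatibility amounts to checking \enref{C1}, \enref{C2}, and the acyclic-or-single-cycle condition on $H(\rho)$, exactly the checklist you carry out. You fill in the details correctly (constant data per index, $\bigoh(\ell)$ vertices and edges in $H(\rho)$, linear-time scans and traversals). One small slip that doesn't affect the bound: $M_i$ can contain up to three edges, not two, since it is a non-crossing matching on the six-element set $\{s,t\}\cup V(\psi_i)$; the per-index contribution is still $\bigoh(1)$, so the conclusion stands.
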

\fi

Next, we will show that any compatible sequence contains at
most $8$ bad types and that the type $X(\rho)$ is already determined
by looking only at the sequence of bad types that occur in
$\rho$. This will then allow us to simulate the enumeration of all possible sequences,
by enumerating merely all sequences of at most $8$ bad types.

        We say that a sequence $\rho'$ is an extension of $\rho$ if $\rho$ is
a (not necessarily consecutive) sub-sequence of $\rho'$.
We call a compatible sequence $\rho$ \emph{$(X,i)$-extendable} for
some $X \in \XXX$ and integer $i$, if there is a compatible extension
$\rho'$ of $\rho$
such that $\rho'$ is obtained by adding $i$ elements of type $X$ to
$\rho$ and $X(\rho)=X(\rho')$. We call $\rho$
\emph{$X$-extendable} if $\rho$ is $(X,i)$-extendable for any integer
$i$. We say that $\rho'$ is an \emph{$(X,i)$-extension} of $\rho$ if
$\rho'$ is a compatible sequence obtained after adding $i$ elements of
type $X$ to $\rho$ and $X(\rho)=X(\rho')$.

\begin{lemma}
\label{lem:addgood}
  Let $\rho=(X_1,\dotsc,X_\ell)$ with $X_i=(\psi_i,M_i,S_i)$ and $X
  \in \XXX_G$. Then, $\rho$
  is $(X,1)$-extendable if and only if $\rho$ is $X$-extendable.
  Moreover, deciding whether $\rho$ is $(X,1)$-extendable
  and if so computing an $(X,i)$-extension $\rho'$ of $\rho$
  can be achieved in time $\bigoh(\ell+i)$ for every integer $i$.
\end{lemma}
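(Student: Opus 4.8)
The plan is to show that inserting a good type into a compatible sequence amounts to \emph{subdividing} the auxiliary graph $H(\rho)$ along the relevant ``tracks'', which is a local and repeatable operation; hence a single valid insertion can be iterated arbitrarily often without changing the associated type.

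First I would record the shape of a good type $X=(\psi,M,S)\in\XXX_G$. Since $X$ is clean, $\#_s(X)=\#_t(X)=0$, so $S=\emptyset$ and $s,t\notin V(M)$; and since $X$ is $k$-good for some $k\in\{0,1,2\}$ we have $\#_L(X)=\#_R(X)=k$, with $M$ matching $l$ to $r$ (and, when $k=2$, also $l'$ to $r'$). Call a \emph{gap} of $\rho=(X_1,\dots,X_\ell)$ a slot before $X_1$, after $X_\ell$, or between consecutive $X_j,X_{j+1}$; by \enref{C1} its \emph{count} --- namely $\#_L(X_1)$, $\#_R(X_\ell)$, or the common value $\#_R(X_j)=\#_L(X_{j+1})$ --- is well defined.

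The heart of the argument is the following claim: $\rho$ admits a one-element $X$-insertion \emph{iff} it has a gap of count $k$, and in that case splicing any number $i\ge 0$ of copies of $X$ into such a gap yields an $(X,i)$-extension of $\rho$. For the ``only if'' direction, \enref{C1} applied to the two new gaps created by one inserted copy forces the original gap to have count $k$. For the ``if'' direction I would check: \enref{C1} at the two new gaps holds because $\#_L(X)=\#_R(X)=k$ equals the gap's count, while \enref{C2} is untouched since $X$ is clean; in $H(\rho)$, inserting a copy $Y$ of $X$ into a count-$k$ gap deletes the (at most $k$) edges of $H(\rho)$ that join a right-connector of the left neighbour to the matching left-connector of the right neighbour and replaces each by a length-three path through the two fresh vertices of $Y$ on that track (and changes nothing when $k=0$), i.e.\ it is a subdivision of $H(\rho)$; a subdivision preserves both acyclicity and the property of being a single cycle, and --- because the subdivided edges and the new vertices lie among connector vertices, none of which is $s$ or $t$, and $S_Y=\emptyset$ --- deleting $\bigcup_i S_i\subseteq\{s,t\}$ commutes with it, so the cycle/acyclic alternative in the definition of compatibility is preserved; finally $X(\rho)$ is unchanged, since $\sum_i\#_s(X_i)$ and $\sum_i\#_t(X_i)$ do not change (hence neither does $S$) and the path components of $H(\rho)$ together with their endpoints --- after the renaming $l_1,l_1',r_\ell,r_\ell'\mapsto l,l',r,r'$ used in the definition of $X(\rho)$ --- are carried over by the subdivision. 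A small point needs care here: at the front and back gaps the freshly inserted connectors take over the role of $l_1,l_1'$ resp.\ $r_\ell,r_\ell'$, and one has to observe that these connectors exist in the new $H(\rho')$ exactly when the original extremal connectors existed in $H(\rho)$ (a connector $v_i$ appears only when $v\in V(\psi_i)$), so the renaming matches up.

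Granting the claim, the lemma follows. If $\rho$ is $(X,1)$-extendable then some gap of $\rho$ has count $k$; after splicing in one copy of $X$ there, the gap adjacent to that copy still has count $k$, so by induction $\rho$ is $(X,i)$-extendable for every $i$, i.e.\ $X$-extendable, and the converse is immediate by taking $i=1$. For the running time, compute $k=\#_L(X)$ in constant time and scan the $\ell+1$ gaps of $\rho$, reading each gap's count off an adjacent type, to find one of count $k$ (or report that none exists); this is $\bigoh(\ell)$, and if one is found, outputting the sequence obtained by splicing $i$ copies of $X$ into it --- of length $\ell+i$ and an $(X,i)$-extension by the claim --- takes $\bigoh(\ell+i)$ time. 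The main obstacle is the invariance of $X(\rho)$ under the subdivision, and in particular tracking the renaming of the extremal connectors at the boundary gaps.
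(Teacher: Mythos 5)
Your proof is correct and takes essentially the same route as the paper's: characterize one-element insertability by the ``gap counts'' $\#_L(X_1)$, $\#_R(X_\ell)$, and the common values $\#_R(X_j)=\#_L(X_{j+1})$, and observe that once a slot accepts one clean $k$-good type it accepts any number, since the insertion is a local modification of $H(\rho)$ that preserves the acyclic/single-cycle alternative and leaves $X(\rho)$ unchanged. The paper states this more tersely (add or delete an $x$-good type adjacent to an existing one, then test the three positional conditions in $\bigoh(\ell)$ time); you are more careful about the point the paper leaves implicit, namely that $X(\rho)=X(\rho')$ is preserved, via the subdivision/path-extension argument and the renaming of extremal connectors.
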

\iflong \begin{proof}
  The first statement of the lemma follows because if $\rho'$ is a
  compatible extension of $\rho$ containing at least
  one $x$-good type $X_i$, then we can add another $x$-good type
  immediately after or before $X_i$ without violating the
  compatibility. Moreover, we can also delete $X_i$ from $\rho'$
  without violating its compatibility.

  Moreover, it is straightforward to verify that $\rho$
  can be extended by $1$ $x$-good type if and only if either: (1)
  $|\psi_1(L)|=x$, (2) $|\psi_\ell(R)|=x$, or there is an index $i$ with $1 \leq i
  < \ell$ such that $|\psi_i(L)|=|\psi_{i+1}(R)|=x$. This can clearly be tested
  in time $\bigoh(\ell)$ and if the test succeeds, it is also easy to compute an $(X,i)$-extension by
  adding all $i$ elements of type $X$ in one of the possible positions.
\end{proof}\fi
 \begin{lemma}
\label{lem:badsubseq}
  Let $\rho$ be a compatible sequence and let $\rho'$ be the
  sub-sequence of $\rho$ consisting only of the bad types in $\rho$.
  Then, $\rho'$ is compatible and $X(\rho)=X(\rho')$.
\end{lemma}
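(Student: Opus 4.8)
The plan is to induct on the number of good types occurring in $\rho$. If $\rho$ contains no good type then $\rho=\rho'$ and there is nothing to show, so assume $X_j$ is a good type at some position $j$ and let $\rho_1$ be the sequence obtained from $\rho$ by deleting $X_j$. It suffices to prove that $\rho_1$ is compatible and $X(\rho_1)=X(\rho)$: the bad subsequence of $\rho_1$ is still $\rho'$, so the induction hypothesis applied to $\rho_1$ (which has one fewer good type) finishes the argument. Weak compatibility of $\rho_1$ is easy: \enref{C2} is preserved because a good type is clean, so $\#_s(X_j)=\#_t(X_j)=0$ and the sums $\sum_i\#_s(X_i),\sum_i\#_t(X_i)$ are unchanged; and for \enref{C1} the only junction that changes is the one between $X_{j-1}$ and $X_{j+1}$ (when $1<j<\ell$), where $\#_R(X_{j-1})=\#_L(X_j)=\#_R(X_j)=\#_L(X_{j+1})$, using that a good type satisfies $|\psi(L)|=|\psi(R)|$.

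The heart of the proof is to relate $H(\rho_1)$ to $H(\rho)$. A good type $X_j$ contributes to $H(\rho)$ nothing at all when it is $0$-good, and otherwise one or two vertex-disjoint ``pass-through strands'': for a $1$-good $X_j$ this strand consists of the vertices $l_j,r_j$, the edge $\{l_j,r_j\}$ from $M_j$, and the incident inter-segment edges $\{r_{j-1},l_j\}$ and $\{r_j,l_{j+1}\}$, the latter two present exactly when $j-1$, resp.\ $j+1$, is a valid index (which follows from \enref{C1}, since then $\psi_{j-1}(R)$ and $\psi_{j+1}(L)$ are forced to contain $r$, resp.\ $l$); a $2$-good $X_j$ has an additional identical strand on $l_j',r_j'$. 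Consequently each of $l_j,r_j$ (and $l_j',r_j'$) has degree $2$ in $H(\rho)$, unless it sits at an end of the sequence ($j=1$ for an $l$-vertex, $j=\ell$ for an $r$-vertex), where it has degree $1$ and forms a pendant tail. Deleting $X_j$ produces precisely the graph obtained from $H(\rho)$ by suppressing each such degree-$2$ pass-through vertex (merging its two incident edges) and deleting each pendant tail, and this graph is exactly $H(\rho_1)$ — here one again uses \enref{C1} together with $|\psi_j(L)|=|\psi_j(R)|$ to see that the re-glued inter-segment edges of $\rho_1$ coincide with those created by the suppression. Hence $H(\rho_1)$ and $H(\rho)$ are homeomorphic, and the induced correspondence carries the designated boundary vertices $\{s,t,l_1,l_1',r_\ell,r_\ell'\}$ of $\rho$ onto those of $\rho_1$; when $j\in\{1,\ell\}$ this amounts to observing that the generic names $l,l',r,r'$ are simply reassigned to the corresponding pass-through vertices of the new first, resp.\ last, segment, which are the other endpoints of the deleted pendant tails. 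From this everything follows: $H(\rho_1)$ is acyclic iff $H(\rho)$ is, and since $S_j=\emptyset$ the set $\bigcup_i S_i$ is unchanged, so $H(\rho_1)-\bigcup_i S_i$ is a single Hamiltonian cycle iff $H(\rho)-\bigcup_i S_i$ is; thus $\rho_1$ is compatible. The path components of $H(\rho_1)$ have the same unordered pairs of (renamed) endpoints as those of $H(\rho)$, so the matchings $M$ and the functions $\psi$ defining $X(\rho_1)$ and $X(\rho)$ agree, and the unchanged sums $\sum_i\#_s(X_i),\sum_i\#_t(X_i)$ make the sets $S$ agree. Therefore $X(\rho_1)=X(\rho)$.

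The step I expect to be delicate is the ``homeomorphism respects the boundary'' bookkeeping when the deleted good type is the first or last element — and, closely related, the genuinely degenerate point that suppressing a pass-through strand can erase an entire path component. The latter is exactly what happens if one keeps deleting good types from an \emph{all-good} $\rho$ down to the empty sequence, where the invariant $X(\rho_i)=X(\rho_{i+1})$ would break (e.g.\ $X$ of a single $1$-good type is $1$-good, not the empty type). In the writeup one should therefore either note that the lemma is only invoked for sequences still containing a bad type — for a realizable $\rho$ of a P-node this is automatic, since the Hamiltonian cycle must put $s$ or $t$ onto a path segment of some child, forcing that child's type to be dirty and hence bad — or phrase the induction so that one representative good type is kept in that corner case. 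The remaining work is a routine check across the three flavours ($0$-good, $1$-good, $2$-good) of the deleted type.
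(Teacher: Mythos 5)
Your argument follows the same route as the paper's: the paper's own proof is essentially the single sentence ``removing any good type $X$ preserves compatibility and does not change the type of the sequence, because neither $s$ nor $t$ are used by $X$ and $\#_L(X)=\#_R(X)$,'' and you have fleshed out what deleting a good type does to $H(\rho)$ (homeomorphism by suppressing the degree-$2$ pass-through vertices and trimming pendant tails at the ends), which is sound. In particular, your observation about the degenerate case is a genuine one that the paper's terse proof does not acknowledge: if $\rho$ is nonempty and consists entirely of good types, then $X(\rho)$ can be $1$-good or $2$-good, while the bad subsequence $\rho'$ is the empty sequence and $X(\rho')$ is $0$-good, so the lemma's conclusion $X(\rho)=X(\rho')$ fails as literally stated (e.g.\ $\rho=(X_1,X_2)$ with both $X_i$ the type $(\psi_{1,1},\{\{l,r\}\},\emptyset)$ is compatible, $X(\rho)$ is $1$-good, but $X(())$ is $0$-good).

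Where I would push back is on the remedy you offer in the final paragraph. The assertion that, for a realizable $\rho$ of a P-node, ``the Hamiltonian cycle must put $s$ or $t$ onto a path segment of some child, forcing that child's type to be dirty'' does not obviously hold: ``realizable'' only requires $\tau(c)\in\RRR(c)$, not that the child types originate from a single common witness, and even for a witness-induced $\rho$ the two $H$-edges at $s$ (resp.\ $t$) may both lie outside $N_b$ --- indeed, the all-$1$-good configuration is exactly one in which every child sees $s,t$ as isolated. So that clause does not by itself close the gap. Your alternative suggestion is the cleaner fix: restate (or reprove) the lemma so that at most one good type is retained --- i.e.\ prove that for any compatible $\rho$ there is a compatible $\rho'$ containing exactly the bad types of $\rho$ plus at most one good type, with $X(\rho')=X(\rho)$ --- and then verify that the P-node algorithm (which anyway probes for addable good types via \Cref{lem:addgood}) accounts for that retained good type. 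That adjustment also dovetails with \Cref{cor:atmosteight}, since the bound on bad types is unaffected.
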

\iflong \begin{proof}
  The lemma holds because removing any good type $X$ preserves
  compatibility and does not change the type of the sequence; this
  is because neither $s$ nor $t$ are used by $X$ and moreover $\#_L(X)=\#_R(X)$.
\end{proof}\fi

At this point, we are ready to describe the algorithm we will use to compute $\RRR(b)$ (and argue its correctness).
The algorithm first enumerates all
possible compatible sequences $\rho$ of at most $8$ bad types, i.e.,
$\rho=(Y_1,\dotsc,Y_r)$ with $r\leq 8$ and $Y_i \in \XXX_B$ for every
$i$. Note that there are at most $(|\XXX_B|+1)^{8}$ (and therefore
constantly many) such sequences and those can be enumerated in
constant time. Given one such sequence $\rho=(Y_1,\dotsc,Y_r)$, the
algorithm then tests whether the sequence can be realized given the
types available for the children in $C$ as follows. It first uses Lemma~\ref{lem:addgood} to test whether 
$\rho$ allows for adding a $0$-good, $1$-good or $2$-good type in constant time.
Let $A_\rho \subseteq \XXX_G$ be the set of all good types that
can be added to $\rho$ and let $C_\rho$
be the subset of $C$ containing all children $c$ such
that $A_\rho\cap\RRR(c)\neq \emptyset$.

Consider the
following bipartite graph $Q_\rho$ having one vertex $y_i$ for every
$i$ with $1 \leq i \leq r$ representing the type $Y_i$ 
on one side and one vertex $v_c$ for every $c \in C$
representing the child $c$ on the other side of the
bipartition. Moreover, $Q_\rho$ has an edge between $y_i$ and $v_c$ if
$Y_i \in \RRR(c)$. We claim that $\rho$ can be extended to a
compatible and realizable sequence if and only if $Q_\rho$ has a
matching that saturates $\{y_1,\dotsc,y_r\}\cup
\SB v_c \SM c \in C\setminus C_\rho\SE$. This problem can be 
solved using a simple reduction to the well-known maximum flow problem\iflong as shown by the following lemma\fi.
\iflong
\begin{lemma} 
    \label{lem:max-match}
    Let $Q$ be a bipartite graph with partition $\{A,B\}$ and let $V
    \subseteq B$. There is an algorithm that in time
    $\bigoh(|E(Q)||A|)$ decides whether $Q$ has a matching that
    saturates $A\cup V$.
\end{lemma}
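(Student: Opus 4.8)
The plan is to reduce to at most $|A|+1$ augmenting-path searches in $Q$, each running in time $\bigoh(|E(Q)|)$. First, we may assume $|V|\le |A|$: any matching saturating $A$ has exactly $|A|$ edges and hence covers only $|A|$ vertices of $B$, so if $|V|>|A|$ the algorithm immediately reports ``no''.

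\emph{Phase~1 (saturate $V$).} Process the vertices $v_1,\dots,v_{|V|}$ of $V$ one at a time, maintaining a matching $M_i$ saturating $\{v_1,\dots,v_i\}$. Starting from $M_0=\emptyset$, to obtain $M_{i+1}$ from $M_i$ we search (by BFS/DFS over $M_i$-alternating paths) for an $M_i$-augmenting path starting at the unsaturated vertex $v_{i+1}$; if one is found we flip it, and otherwise we report ``no''. Since every augmentation increases the matching size by exactly one, $|M_i|=i$, so the set of $B$-endpoints of $M_i$ is \emph{exactly} $\{v_1,\dots,v_i\}$ --- this bookkeeping fact is what makes a single search per $v_{i+1}$ sufficient. \emph{Phase~2 (saturate $A$).} Starting from $M_{|V|}$, repeatedly pick an $A$-vertex $a$ that is currently unsaturated and search for an augmenting path starting at $a$; flip it if found, otherwise report ``no''. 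An augmenting path from $a\in A$ has odd length and ends in $B$, so flipping it keeps every already-saturated vertex saturated and leaves no $A$-vertex unsaturated; in particular all of $V$ remains saturated throughout Phase~2. When every $A$-vertex is saturated, the current matching saturates $A\cup V$ and we report ``yes''.

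Correctness of the ``no'' answers follows from the usual Hungarian-tree obstruction. If the Phase~1 search for $v_{i+1}$ gets stuck, let $R=A_R\cup B_R$ be the vertices reachable by $M_i$-alternating paths from $v_{i+1}$; then every vertex of $B_R\setminus\{v_{i+1}\}$ is $M_i$-saturated and hence, by the bookkeeping fact, lies in $\{v_1,\dots,v_i\}$, the matching $M_i$ maps $A_R$ bijectively onto $B_R\setminus\{v_{i+1}\}$ (so $|A_R|=|B_R|-1$), and $N_Q(B_R)\subseteq A_R$; Hall's theorem then shows that no matching saturates $B_R\subseteq\{v_1,\dots,v_{i+1}\}$, hence none saturates $A\cup V$. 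Symmetrically, a stuck Phase~2 search from $a\in A$ yields $R=A_R\cup B_R$ with $a\in A_R$ unsaturated, $M$ mapping $B_R$ bijectively onto $A_R\setminus\{a\}$, and $N_Q(A_R)\subseteq B_R$, so $|B_R|=|A_R|-1$ and no matching saturates $A_R\subseteq A$.

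For the running time, Phase~1 performs $|V|\le|A|$ searches and Phase~2 performs at most $|A|+1$ searches in total (each successful search increases the matching size, which never exceeds $|A|$, and at most one search fails), each in time $\bigoh(|E(Q)|)$, giving $\bigoh(|E(Q)|\cdot|A|)$ overall. The step I expect to require the most care is the bookkeeping fact and its use to confine the Phase~1 obstruction set $B_R$ to $\{v_1,\dots,v_{i+1}\}$ --- equivalently, the transversal-matroid exchange property guaranteeing that one augmenting-path search per vertex of $V$ suffices; Phase~2 is then just the standard analysis of the greedy augmenting-path procedure, restricted to start from the $A$-side. (Alternatively, one could model the whole question as feasibility of a unit-capacity $s$-$t$ flow with lower bounds $1$ on the edges entering $V$ and leaving $A$, which reduces to a single max-flow computation solved by $\bigoh(|A|)$ augmentations.)
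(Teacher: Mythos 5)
Your proof is correct and takes a genuinely different route from the paper's. The paper reduces the question to a single max-flow feasibility check on an auxiliary network: source $s$, sink $t$, arcs $s\to a$ for $a\in A$, arcs $a\to b$ for each edge of $Q$, arcs $v\to t$ for $v\in V$, and the key trick is a bottleneck vertex $t'$ with arcs $b\to t'$ for $b\in B\setminus V$ and a single arc $t'\to t$ of capacity $|A|-|V|$, so that a flow of value $|A|$ exists iff the desired matching exists; the running time then follows from Ford--Fulkerson with $O(|A|)$ unit augmentations. You instead give a self-contained two-phase Hungarian-style algorithm: first greedily saturate $V$ by augmenting from each $v_i$ in turn, then saturate $A$ by the usual procedure, with correctness of the ``no'' answers established via the Hall obstruction in the alternating-reachability tree. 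The key observation making Phase~1 sound --- that after $i$ augmentations the $B$-side saturated set is exactly $\{v_1,\dots,v_i\}$, which confines the Phase~1 deficient set $B_R$ inside $V$ --- is exactly the transversal-matroid exchange argument you name, and it is carried out correctly. The paper's reduction is shorter to state and delegates all correctness to the max-flow/min-cut machinery; your version is more elementary and makes the obstruction explicit, at the cost of a longer argument. One small wording slip: the clause ``and leaves no $A$-vertex unsaturated'' reads as if a single Phase~2 augmentation finishes the job; you mean that each augmentation reduces the number of unsaturated $A$-vertices by one while preserving all existing saturation (in particular of $V$), so that after the loop terminates successfully all of $A\cup V$ is saturated. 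Your closing remark about a lower-bounded-flow formulation is essentially the same idea as the paper's $t'$-gadget.
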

\begin{proof}
    We solve the problem using a reduction to the maximum flow
    problem, which can be solved in time $\bigoh(mU)$ for a flow
    network with $m$ edges with every edge having integer capacity at
    most $U$~\cite{ford_fulkerson_1956}.
    Let $N$ be the
    network obtained as follows. The vertices of $N$ are new vertices
    $s$, $t$, $t'$ plus the vertices of $Q$. Moreover, $N$ contains
    the following arcs:
    \begin{itemize}
    \item an arc from $s$ to $a$ with capacity $1$ for every $a \in
      A$,
    \item an arc from $a$ to $b$ with capacity $1$ for every edge
      $\{a,b\} \in E(Q)$,
    \item an arc from $v$ to $t$ with capacity $1$ for every $v \in
      V$,
    \item an arc from $t'$ to $t$ with capacity $|A|-|V|$,
    \item an arc from $b$ to $t'$ with capacity $1$ for every $b \in
      B\setminus V$.
    \end{itemize}
    It is now straightforward to show that $Q$ has a matching that
    saturates $A\cup V$ if and only if $N$ has an integer flow from
    $s$ to $t$ with value $|A|$. Since $U\leq |A|$ and
    $m=\bigoh(|V(Q)|+|E(Q)|)$, we obtain that our problem can be
    decided in time $\bigoh(|E(Q)||A|)$, which shows the
    stated run-time.
                                       \end{proof}
\fi
The following lemma now establishes the correctness (i.e., the soundness and
completeness) of the algorithm.
\begin{lemma}
\label{lem:pn-correctness}
  Let $X \in \XXX$. Then, there is a compatible and realizable
  sequence $\rho$ with $X=X(\rho)$ if and only if there is a compatible
  sequence $\rho=(Y_1,\dotsc,Y_r)$ of bad types with $r\leq 8$ with $X=X(\rho)$
  such that the bipartite graph $H_\rho$ has a matching that saturates
  $\{y_1,\dotsc,y_r\}\cup \SB v_c \SM c \in C\setminus C_\rho\SE$.
\end{lemma}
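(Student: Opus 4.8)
The plan is to prove both directions of the equivalence by relating an arbitrary compatible, realizable sequence $\rho$ to the subsequence of its bad types, and then establishing that the matching condition on $H_\rho$ is exactly the right bookkeeping for "which children are still available to supply the good types that were deleted."

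For the forward direction, suppose $\rho$ is compatible and realizable with $X = X(\rho)$. By \Cref{lem:badsubseq}, the subsequence $\rho' = (Y_1,\dotsc,Y_r)$ consisting of only the bad types of $\rho$ is itself compatible and satisfies $X(\rho') = X(\rho) = X$; moreover by \Cref{cor:atmosteight} we have $r \le 8$. It remains to exhibit the required saturating matching in $H_\rho'$. Let $(\pi,\tau)$ with $\pi = (c_1,\dotsc,c_\ell)$ be the ordering and assignment witnessing realizability of $\rho$, so $\tau(c_i) = X_i$ and $\tau(c) \in \RRR(c)$ for every $c$. Each bad type $Y_j$ of $\rho$ equals $\tau(c_{i_j})$ for some distinct child $c_{i_j}$, which gives an edge from $y_j$ to $v_{c_{i_j}}$ in $H_\rho'$; the remaining children $c$ all carry good types $\tau(c) \in \XXX_G \cap \RRR(c)$. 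I need these leftover good types to all be addable to $\rho'$: since $\rho$ is compatible and contains $\tau(c)$ as one of its entries, by \Cref{lem:addgood} (the part stating $(X,1)$-extendability is equivalent to $X$-extendability, and hence to being an entry of a compatible extension) $\rho'$ is $\tau(c)$-extendable, so $\tau(c) \in A_{\rho'}$ and therefore every leftover child $c$ lies in $C_{\rho'}$. Hence the set $C \setminus C_{\rho'}$ is contained in $\{c_{i_1},\dotsc,c_{i_r}\}$, and the partial matching $\{\,y_j v_{c_{i_j}}\,\}$ saturates both $\{y_1,\dotsc,y_r\}$ and $\{v_c \mid c \in C\setminus C_{\rho'}\}$, as required.

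For the reverse direction, suppose we are given a compatible sequence $\rho = (Y_1,\dotsc,Y_r)$ of bad types with $r \le 8$, $X(\rho) = X$, and a matching $\mu$ in $H_\rho$ saturating $\{y_1,\dotsc,y_r\} \cup \{v_c \mid c \in C\setminus C_\rho\}$. I will build a compatible, realizable sequence $\sigma$ with $X(\sigma) = X$. First define the assignment $\tau$ on the matched children: for each $j$, if $\mu$ matches $y_j$ to $v_{c}$ set $\tau(c) = Y_j$ (well-defined since $\mu$ is a matching, and valid since the edge $y_j v_c$ exists iff $Y_j \in \RRR(c)$). For every unmatched child $c$: if $c \notin C_\rho$ then $c$ is saturated by $\mu$, a contradiction, so $c \in C_\rho$, meaning $A_\rho \cap \RRR(c) \ne \emptyset$; pick some good type $g_c \in A_\rho \cap \RRR(c)$ and set $\tau(c) = g_c$. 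Now $\tau(c) \in \RRR(c)$ for all $c$. It remains to find an ordering $\pi$ of $C$ so that $\sigma := \tau(\pi)$ is compatible with $X(\sigma) = X$. Start from the compatible sequence $\rho$ of bad types; each leftover child's type $g_c$ is a good type lying in $A_\rho$, i.e. addable to $\rho$. By iterating the "addition of one good type" part of \Cref{lem:addgood}, inserting the $g_c$'s one at a time, we obtain a compatible extension of $\rho$ whose underlying multiset of types is exactly $\{\tau(c) \mid c \in C\}$; the insertion positions furnish the ordering $\pi$ (placing each matched child at the position of its bad type, each leftover child at its insertion point). By \Cref{lem:addgood} each insertion of a good type preserves both compatibility and the sequence type, so the final $\sigma$ is compatible with $X(\sigma) = X(\rho) = X$, and it is realizable by construction via $(\pi,\tau)$.

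The main obstacle I anticipate is the bookkeeping in the reverse direction: one must be careful that inserting good types one at a time (rather than all copies of a single good type at once) is justified, and that no child is used twice. The first point is handled by the clause in \Cref{lem:addgood} that $(X,1)$-extendability is equivalent to $X$-extendability together with the observation, already used in its proof, that after inserting one $x$-good type near a suitable interface one can insert further good types adjacent to it; alternatively one can first group the leftover children by their chosen good type value and insert each group as a single $(g,i_g)$-extension. The second point — that distinct leftover children get distinct slots — is automatic since we simply append them. The only genuinely delicate step is verifying that a leftover child $c$ really lies in $C_\rho$: this is exactly the saturation hypothesis on $\{v_c \mid c \in C\setminus C_\rho\}$, which forces every child outside $C_\rho$ to be matched to some $y_j$, so the unmatched children all lie in $C_\rho$ as needed. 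I would also remark, if not already clear from context, that \Cref{cor:pn-record-char} (or its short-form lemma) together with this lemma immediately yields \Cref{lem:Pnode}, since $|\XXX_B|$ is a constant and the matching test runs in time $\bigoh(\ell)$ via \Cref{lem:max-match}.
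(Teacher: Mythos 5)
Your proof is correct and follows essentially the same route as the paper's: forward direction via \Cref{lem:badsubseq} and \Cref{cor:atmosteight} and the induced matching from the bad-typed children, reverse direction via \Cref{lem:addgood} to re-insert good types for the unmatched children. You spell out in slightly more detail than the paper does why every child carrying a good type lies in $C_{\rho'}$ (needed for the saturation claim in the forward direction) and why insertions of distinct good types can be iterated; both of these points are real but routine, and your handling of them is fine.
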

\iflong \begin{proof}
  Towards showing the forward direction, let $\rho$ be a compatible
  and realizable sequence and let $\rho'=(Y_1,\dotsc,Y_r)$ be the sub-sequence of
  $\rho$ containing only the bad types in $\rho$.
  Because of~\Cref{cor:atmosteight}, it holds that $r \leq 8$ and
  because of~\Cref{lem:badsubseq}, we have that $\rho'$ is compatible.
  It remains to show that $H_\rho$ has a matching that saturates
  $\{y_1,\dotsc,y_r\}\cup (C\setminus C_\rho)$. Let $(\pi,\tau)$ be
  the ordering and assignment that witnesses that $\rho$ is
  realizable. Let $\pi'=(c_1,\dotsc,c_r)$ be the subsequence of $\pi$
  containing only the children $c$ such that $\tau(c)$ is bad; note that
  $\rho'=\tau(\pi)=(\tau(c_1),\dotsc,\tau(c_r))$.
  Then, $M=\{ \{y_i,v_{c_i}\} \mid 1 \leq i \leq r \}$ is a
  matching in $H_\rho$ that saturates $\{y_1,\dotsc,y_r\}\cup (C\setminus
  C_\rho)$.

  Towards showing the reverse direction, let $\rho'=(Y_1,\dotsc,Y_r)$
  be a compatible sequence of bad types with $r \leq 8$ and let 
  $M$ be the matching in $H_\rho$ that saturates $\{y_1,\dotsc,y_r\}\cup (C\setminus
  C_\rho)$. For convenience, we represent $M$ as the bijective
  function $\tau' : C' \rightarrow \{Y_1,\dotsc,Y_r\}$, where
  $C'=V(M)\cap C$ and such that $M=\{\{c,\tau'(c)\} \mid c \in C'\}$.
  Let $\tau : C \rightarrow \XXX$ be the assignment of children to types given by
  $\tau(c)=\tau'(c)$ for every $c \in C'$ and $\tau(c) \in
  (A_\rho\cap \RRR(c))$ for every $c \in C\setminus C'$. Because
  $C\setminus C' \subseteq C_\rho$, it holds that $A_\rho\cap
  \RRR(c)\neq \emptyset$ for every $c  \in C\setminus C'$ and
  therefore it is possible to assign $\tau(c)$. We can now
  use~\Cref{lem:addgood} to obtain a compatible extension $\rho$ of
  $\rho'$ with $X(\rho')=X(\rho)$ such that $\rho$ is obtained from
  $\rho'$ after adding $|\tau^{-1}(X)|$ elements of type $X$ for every
  $X \in A_{\rho'}$. Due to the choice of $\rho$ and $\tau$, there
  now exists an ordering $\pi$ of $C$ such that $\rho=\tau(\pi)$,
  which together with $\tau$ shows that $\rho$ is also realizable.
\end{proof}\fi

\iflong
We are now ready to prove the central lemma of this section.
\begin{proof}[Proof of \Cref{lem:Pnode}]
  We need to show that given $\RRR(c)$ for
  every child $c \in C$, we can compute
  $\RRR(b)$ in time $\bigoh(\ell)$.

  Because of~\Cref{cor:pn-record-char} we can compute $\RRR(b)$ by
  computing all types $X \in \XXX$ such that there is a compatible and realizable
  sequence $\rho$ of types with $X=X(\rho)$. Moreover, because
  of~\Cref{lem:pn-correctness}, a type $X \in \XXX$ has such a
  compatible and realizable sequence $\rho$ if and only if there is a
  compatible sequence $\rho=(Y_1,\dotsc,Y_r)$ of bad types with $r
  \leq 8$ and $X=X(\rho)$ such that the bipartite graph $H_\rho$ has a
  matching that saturates $\{y_1,\dotsc,y_r\} \cup \SB v_c \SM c \in
  C\setminus C_\rho\SE$ and this can be achieved by the following algorithm. 

  The algorithm first enumerates all
  possible compatible sequences $\rho$ of at most $8$ bad types, i.e.,
  $\rho=(Y_1,\dotsc,Y_r)$ with $r\leq 8$ and $Y_i \in \XXX_B$ for every
  $i$. Note that there are at most $(|\XXX_B|+1)^{8}$ (and therefore
  constantly many) sequences of at most $8$ bad types and because
  of~\Cref{obs:checkcomp} checking
  whether such a sequence is compatible can be achieved in constant
  time. Therefore, all such compatible sequences can be enumerated in
  constant time. Given one such sequence $\rho=(Y_1,\dotsc,Y_r)$,
  the algorithm first uses~\Cref{lem:addgood} to compute the set 
  $A_\rho \subseteq \XXX_G$ (i.e., the set of all good types that
  can be added to $\rho$) in constant time. It then computes
  $C_\rho$
  (i.e., the subset of $C$ containing all children $c$ such
  that $A_\rho\cap\RRR(c)\neq \emptyset$) and constructs the bipartite
  graph $H_\ell$ in time $\bigoh(\ell)$. Finally, it uses~\Cref{lem:max-match} to decide whether $H_\rho$ has a matching
  that saturates $\{y_1,\dotsc,y_r\}\cup \SB v_c \SM c \in C \setminus
  C_\rho \SE$. If so, the algorithm correctly adds the type $X(\rho)$ to  $\RRR(b)$ 
  and otherwise the algorithm continues with the next
  sequence $\rho=(Y_1,\dotsc,Y_r)$ with $r\leq 8$ and $Y_i \in \XXX_B$.
  
  As pointed out above, the correctness of the algorithm follows
  from~\Cref{cor:pn-record-char,lem:pn-correctness,lem:addgood}.
  The total run-time of the algorithm is dominated by the time
  required to decide whether $H_\rho$ has a matching saturating 
  $\{y_1,\dotsc,y_r\}\cup \SB v_c \SM c \in C \setminus
  C_\rho \SE$, which because of~\Cref{lem:max-match} can be achieved
  in time $\bigoh(|E(H_\rho)|r)$. Since  $|E(H_\rho)|\leq 8\ell$ and
  $r\leq 8$ this term is equal to $\bigoh(\ell)$, as claimed. 
\end{proof}
\fi

 \subsection{Handling R-nodes and S-nodes}\label{ssec:rsnodes}
    \ifshort
Here, we will show how to compute a set of types satisfying \enref{R1} and
\enref{R2} for every R-node and S-node of $\mathcal{B}$. To achieve this we
will again use a dynamic programming algorithm albeit on a sphere-cut
decomposition of $\sk(b)$ instead of on the \SPQR{}-tree. The aim
of this subsection is therefore to show the following lemma.
\begin{lemma}
\label{lem:Rnode}
  Let $b$ be an R-node or S-node of $\mathcal{B}$ such that $\RRR(c)$
  has already been computed  for every child
  $c$ of $b$ in $\mathcal{B}$. Then, we can compute $\RRR(b)$
  in time $\bigoh((84\sqrt{14})^{\omega}\omega\ell+\ell^3)$,
  where $\omega$ is the branchwidth of the graph $\sk(b)$ and $\ell$ is the number
  of children of $b$ in $\mathcal{B}$.
\end{lemma}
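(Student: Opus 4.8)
The plan is to prove \Cref{lem:Rnode} by a bottom-up dynamic programming algorithm along a sphere-cut decomposition $\langle T_b,\lambda_b,\Pi_b\rangle\in\mathcal{T}$ of $\sk(b)$. Since $\sk(b)$ is biconnected (a cycle if $b$ is an S-node, triconnected if $b$ is an R-node) and has $\bigoh(\ell)$ vertices and edges, \Cref{lem:comp-spcut} produces such a decomposition of width $\omega$ in time $\bigoh(\ell^3)$; we root $T_b$ at the leaf $\lambda_b^{-1}((s_b,t_b))$. For every arc $a$ of $T_b$ the algorithm computes a set $\RRR(b,a)$ of types of the weak noose $O_a$ satisfying the obvious analogues of \enref{R1} and \enref{R2} for $\pe(b,a)$. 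For the base case, each non-root leaf corresponds to a virtual edge of $\sk(b)$ whose pertinent node is a child $c$ of $b$ in $\BBB$ and whose middle set is $\{s_c,t_c\}$, so $O_a$ has exactly two subcurves; if $c$ is a $Q$-node then $\pe(c)$ is a single edge and the constantly many types of $O_a$ are immediate, and otherwise $\RRR(c)$ is available by assumption and is translated into $\RRR(b,a)$ by letting the two subcurves play the roles of $L$ and $R$ while $\psi$ carries the (at most two) dummy vertices of $\{l,l'\}$, resp.\ $\{r,r'\}$, on the respective subcurve; this costs $\bigoh(1)$ per leaf. At the end, $\RRR(b)$ is read off from $\RRR(b,a^\star)$, where $a^\star$ is the unique arc incident to the root leaf: there $O_{a^\star}$ meets $G$ exactly in $\{s_b,t_b\}$, its two subcurves are $L_b$ and $R_b$, and relabelling the at most four dummy vertices into $\{l,l',r,r'\}$ yields the \SPQR{}-node types of \Cref{ssec:typesSPQR}.

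The core is the inductive step at an inner node with parent arc $a_P$ and child arcs $a_L,a_R$: compute $\RRR(b,a_P)$ from $\RRR(b,a_L)$ and $\RRR(b,a_R)$. By \Cref{lem:triangles} this reduces to at most three applications of a single \emph{merge routine}: given weak nooses $O_1,O_2$ with $O:=O_1\xor O_2$, $|O|\le\omega+1$, and with the shared subcurves $O_1\cap O_2$ lying inside $O$, and given $\RRR(b,O_1)$ and $\RRR(b,O_2)$ (for a triangle noose bounding an edge-less graph of size three the latter is just the constant-size set of all its types), produce $\RRR(b,O)$. The routine glues a path system witnessing a type $X_1=(\psi_1,M_1,S_1)$ of $O_1$ to one witnessing a type $X_2$ of $O_2$ by concatenating the path segments across every shared subcurve and across every dummy vertex carried on it; the two drawings occupy disjoint regions sharing only the boundary $O_1\cap O_2$, so the glued drawing is planar with the prescribed outer face, and one then checks that no vertex acquires degree three, that no cycle shorter than a Hamiltonian one is created (hence the union is again a set of vertex-disjoint paths, or a single cycle), and that the inner-vertex sets add up to the required partition of $\bigl(V(\pe(b,O))\setminus\m{O}\bigr)\cup S$. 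The surviving triples $(\psi,M,S)$ form $\RRR(b,O)$. Property \enref{R1} is then immediate from the construction; for \enref{R2}, given a global witness $W$ that respects $\BBB$ (which exists by \Cref{lem:nicedrawing}), the types $\type_W(b,O_1)$ and $\type_W(b,O_2)$ are mutually consistent and glue precisely to $\type_W(b,O)$, so every type of $O$ coming from a global solution is retained; as in the $P$-node case we do not assert that we obtain every type that $O$ could have in isolation.

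For the running time, \Cref{lem:enum_arc_types} gives at most $28^{|O|}$ types on a weak noose $O$, so a naive iteration over $\RRR(b,O_1)\times\RRR(b,O_2)$ would cost $28^{2\omega}$; instead, in a merge step the records of $O_1$ and of $O_2$ are bucketed by their restriction to the shared subcurves $O_1\cap O_2$, since only records inducing complementary behaviour there can be glued. A careful count of the mutually consistent configurations at such a step --- each subcurve of $O_{a_P}$ contributes a bounded number of role choices, while the subcurves shared between the child nooses, being charged simultaneously to both children and to the parent, contribute only a $\sqrt{14}$ factor per unit of noose length --- bounds the work at one inner node by $\bigoh\bigl((84\sqrt{14})^{\omega}\,\omega\bigr)$, with the $\omega$ factor paying for verifying one configuration and the at most two triangle nooses contributing only a constant. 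Since $T_b$ has $\bigoh(\ell)$ arcs, the base cases cost $\bigoh(\ell)$ in total, and the decomposition is computed once in $\bigoh(\ell^3)$ time, summing over all arcs gives the claimed bound $\bigoh\bigl((84\sqrt{14})^{\omega}\,\omega\,\ell+\ell^3\bigr)$. The main obstacle will be the merge routine: proving that gluing the two witnessing path systems along the shared subcurves never produces a degree-three vertex or a premature cycle --- which relies on the noose property that each face is traversed at most once, forcing all shared subcurves onto the boundary of a single face --- and performing the bookkeeping that sharpens the naive $28^{2\omega}$ bound to the tight base $84\sqrt{14}$.
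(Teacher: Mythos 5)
Your proposal follows essentially the same route as the paper's proof: compute a sphere-cut decomposition of $\sk(b)$ via \Cref{lem:comp-spcut}, root it at the leaf of the reference edge, handle leaf arcs by direct translation from $\RRR(c)$, reduce the inner-node step to at most three XOR merges via \Cref{lem:triangles}, define and glue types on weak nooses, and read off $\RRR(b)$ at the root arc. Your observation that only the weaker one-directional property is guaranteed (every type coming from a global witness is retained, but not every abstract type of $O$) also matches the paper's treatment of (RO1)/(RO2). The bookkeeping you sketch for sharpening the naive $28^{2\omega}$ bound to base $84\sqrt{14}$ by constraining both child records on the shared subcurves $O_1\cap O_2$ is exactly what the paper does in its enumeration of compatible triples $(X,X_1,X_2)$, just phrased as iterating over the parent type $X$ first rather than bucketing by shared-boundary behaviour. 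In short, this is the same proof.

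One small matter of rigor worth flagging: you say the glued drawing "is planar with the prescribed outer face" and then that one merely "checks" that no degree-three vertex or premature cycle is created. In the paper those checks are not afterthoughts but are baked into the compatibility conditions (the six numbered conditions on pairs of types), and both directions are proved separately --- \Cref{lem:concat-for} that a witness restricted to $O_1$ and $O_2$ yields compatible types combining to the restriction to $O$, and \Cref{lem:concat-rev} that compatible types on $O_1,O_2$ do combine to a valid type of $O$. Your sketch gestures at both directions, so this is a difference of explicitness rather than substance.
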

In the following, let $b$ be an R-node or S-node of $\mathcal{B}$ with reference
edge $(s_b,t_b)$ and let $\langle T_b, \lambda_b, \Pi_b \rangle$ be a
sphere-cut decomposition of $\sk(b)$ that is rooted in
$r=\lambda^{-1}_b((s_{b},t_{b}))$. For a weak noose $O \subseteq C(T_b)$, let $\ARC(O)$
be the set of all types of $O$
satisfying the following two natural analogs of \enref{R1} and \enref{R2}, i.e.:
\ifshort
\textbf{(RO1)} if $X \in \ARC(O)$, then $O$ has type $X$, and \textbf{(RO2)} if there is a witness $(D,D_H,G_H,H)$ for $G$ that   \emph{respects} $\mathcal{B}$ such that $\type_W(b,O)=X$\ifshort, where $\type_W(b,O)$ is defined analogously to $\type_W(b)$ for the graph $\pe(b,O)$\fi, then $X \in \ARC(O)$. 
\fi

\iflong
\begin{enumerate}[(RO1)]
\item If $X \in \ARC(O)$, then $O$ has type $X$.
\item If there is a witness $(D,D_H,G_H,H)$ for $G$ that
  \emph{respects} $\mathcal{B}$ such that $\type_W(b,O)=X$\ifshort, where $\type_W(b,O)$ is defined analogously to $\type_W(b)$ for the graph $\pe(b,O)$\fi, then $X \in \ARC(O)$. 
\end{enumerate}
\fi

Our aim is to compute $\ARC(O_{a^r})$ for the arc $a^r$ incident to
the root $r$ of $T_b$. This is achieved by
computing $\ARC(O_a)$ for every inner arc $a$ of $T_b$ via a bottom-up dynamic
programming algorithm along $T_b$; after initially calculating
$\ARC(O_a)$ from $\RRR(c)$ for every leaf-arc $a$ corresponding to the
child $c$ of $b$. 
Employing our framework introduced in
Subsection~\ref{ssec:SPCD-framework}, we only have to show how to
compute $\ARC(O_1\xor{} O_2)$ from $\ARC(O_1)$ and $\ARC(O_2)$ for
any weak nooses $O_1$ and $O_2$.

Let $O_1$ and $O_2$ be two weak nooses having type $X_1=(\psi_1, M_1, S_1)$ 
and type $X_2=(\psi_2, M_2, S_2)$, respectively. We say that $X_1$ and
$X_2$ are \emph{compatible} if
\begin{enumerate}[(1)]
    \item\label{comp1} $O=O_1 \xor O_2$ is a weak noose,
    \item\label{comp2} the inside region of the noose $O$ contains all subcurves in $(O_1\cap O_2)$,
    \item\label{comp3} $\forall {c \in O_1 \cap O_2}$, it holds $\psi_1(c) = \psi_2(c)$,
    \item\label{comp4} for every $u \in \m{O_1 \cap O_2} \setminus  \m{O_1 \xor O_2}$, it holds that $u$ is only in one of following sets: $S_1$, $S_2$ or $V(M_1)\cap V(M_2)$, and
    \item\label{comp5} the multi-graph obtained from the union of $M_1$ and $M_2$
      is acyclic, or is one cycle and $\m{O}
        \subseteq S_1\cup S_2 \cup (V(M_1) \cap V(M_2))$,
    \item\label{comp6} if $X_1$ is the \fulltype{}, then $X_2$ is the \emptytype{}
      and $\m{O_2} \subseteq \m{O_1}$, and vice versa.
    \end{enumerate}

We denote by $X_1 \concat X_2$
the \emph{combined type} $X=(\psi,M, S)$ of $X_1=(\psi_1,M_1, S_1)$ and $X_2=(\psi_2,M_2, S_2)$ for the
weak noose $O=O_1 \xor O_2$ that is defined as follows and also illustrated in
Figure~\ref{fig:typesSPCD}. 
For each $c \in O$, if $ c \in O_1$ then $\psi(c)$ is equal to
$\psi_1(c)$, otherwise $\psi(c)$ is equal to $\psi_2(c)$ 
and the set $S$ is equal to $(S_1 \cup S_2 \cup (V(M_1)\cap V(M_2)))\cap
\m{O}$, i.e., any vertex with degree two w.r.t. $X$ must be in $\m{O}$
and have degree two already w.r.t. $X_1$ or $X_2$,
or it must be in both matchings $M_1$ and $M_2$.
If either $X_1$ or $X_2$ is a \fulltype, then by \enref{6} we get that 
$M_1 = M_2 = M = \emptyset$ and $X_1\concat X_2$ is the \fulltype{}.
If the multi-graph $M_1 \cup M_2$ is one cycle, then by \enref{5} we get that 
$M = \emptyset$ and $X_1\concat X_2$ is the \fulltype{}.
Otherwise, due to \enref{5}, the multi-graph $M_1 \cup
M_2$ is acyclic and corresponds to a set of paths.
Therefore, the matching $M$ is the set
containing the two endpoints for every path in $M_1 \cup M_2$.

\begin{observation}\label{obs:combined_type_time}
  Let $X_1$ and $X_2$ be two types defined on the weak nooses $O_1$
  and $O_2$, respectively. Then, we can check whether $X_1$ and $X_2$
  are compatible and if so compute the type $X_1 \concat X_2$ in time
  $\bigoh(|O_1| + |O_2|)$.
\end{observation}

To show the correctness of our approach it now remains to show that:
(1) if there is a witness $W$ for $G$ that
respects $\BBB$, then for every two weak nooses $O_1$ and $O_2$ it
holds that $\type_W(b,O_1)$ and $\type_W(b,O_2)$ are compatible types
and $\type_W(b,O)=\type_W(b,O_1) \concat \type_W(b,O_2)$ and (2) if $O_1$ and $O_2$
have compatible types $X_1$ and $X_2$, then $O=O_1\xor O_1$ has type
$X_1 \concat X_2$. 
\fi

\begin{figure}

\begin{minipage}[c]{0.5\textwidth}

\begin{tikzpicture}[scale=0.92]
\coordinate (M1) at ($(0,0)$);
\coordinate (M2) at ($(0.65,0)$);
\coordinate (M3) at ($(1.3,0)$);
\coordinate (M4) at ($(1.73,0)$);
\coordinate (M5) at ($(2.17,0)$);
\coordinate (M6) at ($(2.6,0)$);
\coordinate (M7) at ($(3.9,0)$);
\coordinate (M8) at ($(4.55,0)$);
\coordinate (M9) at ($(5.2,0)$);
\coordinate (U1) at ($(0, 2)$);
\coordinate (U2) at ($(1.3, 2.8)$);
\coordinate (U3) at ($(2.6, 3)$);
\coordinate (U4) at ($(3.9, 2.8)$);
\coordinate (U5) at ($(5.2, 2)$);
\coordinate (U6) at ($(5.6,1)$);
\coordinate (D1) at ($(0, -2)$);
\coordinate (D2) at ($(2.6, -3)$);
\coordinate (D3) at ($(3.46, -2.9)$);
\coordinate (D4) at ($(4.32, -2.65)$);
\coordinate (D5) at ($(5.2, -2)$);
\coordinate (D6) at ($(5.6, -1)$);

\draw[red!30!white, very thick] (M1) arc (210:-30:3cm and 2cm) 
(M1) arc (150:390:3cm and 2cm)
(M1)--(M9);
\draw[very thick,green!50!black]
(M2) to[out=45,in=135] (M4)(M7) to[out=45,in=135] (M8) (M5)to[out=135,in=270](U2)  (U4)to[out=270,in=135](M9) (U5)to[out=270,in=180](U6)
;
\draw[very thick,red]
(D1)to[out=0,in=270](M2)(M4) to[out=270,in=270] (M8)(M7) to[out=270,in=270] (M5)(M9) to[out=225,in=180] (D6) (D3) to[out=90,in=90](D4)
;

\draw
(U1) circle (3pt)
(U3) circle (3pt)
 ;

\draw[fill=white]
(M1) circle (3pt)
(U3) circle (3pt)
(D2) circle (3pt)
;

\draw[fill=green!50!black]
(U1) circle (3pt)
(M6) circle (3pt)
;

\draw[fill=red]
(M3) circle (3pt)
(D5) circle (3pt)
;

\draw[fill=black]
(U5) circle (3pt)
(M9) circle (3pt)
(M7) circle (3pt)
(D1) circle (3pt)
;

\draw (M2) node[very thick,cross=4pt] {};
\draw (M4) node[very thick,cross=4pt] {};
\draw (M5) node[very thick,cross=4pt] {};
\draw (M8) node[very thick,cross=4pt] {};
\draw (U2) node[very thick,cross=4pt] {};
\draw (U4) node[very thick,cross=4pt] {};
\draw (U6) node[very thick,cross=4pt] {};
\draw (D3) node[very thick,cross=4pt] {};
\draw (D4) node[very thick,cross=4pt] {};
\draw (D6) node[very thick,cross=4pt] {};

\node at (2.6,1.5) {$O_1$};
\node at (2.6,-1.5) {$O_2$};
\node[left] at (M1) {$u_1$};
\node[above] at (M2) {$x_7$};
\node[below] at (M3) {$u_{9}$};
\node[above] at (M4) {$x_8$};
\node[above] at (M5) {$x_9$};
\node[below] at (M6) {$u_{10}$};
\node[above] at (M7) {$u_{11}$};
\node[above] at (M8) {$x_{10}$};
\node[right] at (M9) {$u_5$};
\node[left] at (U1) {$u_2$};
\node[above] at (U2) {$x_1$};
\node[above ] at (U3) {$u_3$};
\node[above ] at (U4) {$x_2$};
\node[above ] at (U5) {$u_4$};
\node[right] at (U6) {$x_3$};
\node[left] at (D1) {$u_8$};
\node[below] at (D2) {$u_7$};
\node[below] at (D3) {$x_6$};
\node[below] at (D4) {$x_5$};
\node[below] at (D5) {$u_6$};
\node[right] at (D6) {$x_4$};
\end{tikzpicture}
\end{minipage} \begin{minipage}[c]{0.5\textwidth}
\begin{tikzpicture}[scale=0.92]
\coordinate (M1) at ($(0,0)$);
\coordinate (M2) at ($(0.65,0)$);
\coordinate (M3) at ($(1.3,0)$);
\coordinate (M4) at ($(1.73,0)$);
\coordinate (M5) at ($(2.17,0)$);
\coordinate (M6) at ($(2.6,0)$);
\coordinate (M7) at ($(3.9,0)$);
\coordinate (M8) at ($(4.55,0)$);
\coordinate (M9) at ($(5.2,0)$);
\coordinate (U1) at ($(0, 2)$);
\coordinate (U2) at ($(1.3, 2.8)$);
\coordinate (U3) at ($(2.6, 3)$);
\coordinate (U4) at ($(3.9, 2.8)$);
\coordinate (U5) at ($(5.2, 2)$);
\coordinate (U6) at ($(5.6,1)$);
\coordinate (D1) at ($(0, -2)$);
\coordinate (D2) at ($(2.6, -3)$);
\coordinate (D3) at ($(3.46, -2.9)$);
\coordinate (D4) at ($(4.32, -2.65)$);
\coordinate (D5) at ($(5.2, -2)$);
\coordinate (D6) at ($(5.6, -1)$);

\draw[red!30!white, very thick] (M1) arc (210:-30:3cm and 2cm) 
(M1) arc (150:390:3cm and 2cm)
;
\draw[very thick,blue]
 (D1)to[out=0,in=270](U2)  (U4)to[out=270,in=180](D6) (U5)to[out=270,in=180](U6) (D3) to[out=90,in=90] (D4)
;

\draw[fill=white]
(M1) circle (3pt)
(U3) circle (3pt)
(D2) circle (3pt)
;

\draw[fill=blue]
(U1) circle (3pt)
(D5) circle (3pt)
(M9) circle (3pt)
 ;

\draw[fill=black]
(U5) circle (3pt)
(D1) circle (3pt)
;

    \draw (U2) node[very thick,cross=4pt] {};
\draw (U4) node[very thick,cross=4pt] {};
\draw (U6) node[very thick,cross=4pt] {};
\draw (D3) node[very thick,cross=4pt] {};
\draw (D4) node[very thick,cross=4pt] {};
\draw (D6) node[very thick,cross=4pt] {};

\node at (2.6,0) {$O$};
\node[left] at (M1) {$u_1$};
       \node[right] at (M9) {$u_5$};
\node[left] at (U1) {$u_2$};
\node[above] at (U2) {$x_1$};
\node[above ] at (U3) {$u_3$};
\node[above ] at (U4) {$x_2$};
\node[above ] at (U5) {$u_4$};
\node[right] at (U6) {$x_3$};
\node[left] at (D1) {$u_8$};
\node[below] at (D2) {$u_7$};
\node[below] at (D3) {$x_6$};
\node[below] at (D4) {$x_5$};
\node[below] at (D5) {$u_6$};
\node[right] at (D6) {$x_4$};
\end{tikzpicture}

\end{minipage}

\caption{\iflong\textbf{(Left)}\fi An illustration of combining two compatible
  types $X_1=(\psi_1,M_1,S_1)$ and $X_2=(\psi_2,M_2,S_2)$ for two
  weak nooses $O_1$ and $O_2$ into the combined type
  $X=(\psi,M,S)=X_1\concat X_2$ for
  $O=O_1\xor{}O_2$. Vertices of the graph are represented as circles
  and vertices subdividing the nooses, i.e., vertices in
  $V(\psi_1)\cup V(\psi_2)$, are represented as
  crosses. Black vertices are the vertices that are within a matching,
  i.e., the vertices in $V(M_1)\cup V(M_2)$, green (red) vertices are the
  vertices in $S_1$ ($S_2$) and all other vertices of the graph are
  white. \iflong The following holds for the type $X_1$ and $X_2$:
  $V(\psi_1)=\{x_1,x_2,x_3,x_7,x_8,x_9,x_{10}\}$, $V(\psi_2)=\{x_4,x_5,x_6,x_7,x_8,x_9,x_{10}\}$, $M_1=\{\{x_1,x_9\},\{x_2,u_5\},\{u_4,x_3\},\{x_{10},u_{11}\},\{x_8,x_7\}\}$,
    $M_2=\{\{x_7,u_8\},\{x_8,x_{10}\},\{x_9,u_{11}\},\{u_{5},x_{4}\},\{x_5,x_6\}\}$, $S_1=\{u_2,u_{10}\}$, and $S_2=\{u_9,u_6\}$.
  \textbf{(Right)}
  The resulting type $X$ of $O$ for which the following holds: $V(\psi)=\{x_1,\dotsc,x_6\}$,
  $M=\{\{x_1,u_8\},\{x_2,x_4\},\{u_4,x_3\},\{x_5,x_6\}\}$, and
  $S=\{u_2,u_5,u_6\}$.\fi}
\label{fig:typesSPCD}
\end{figure}
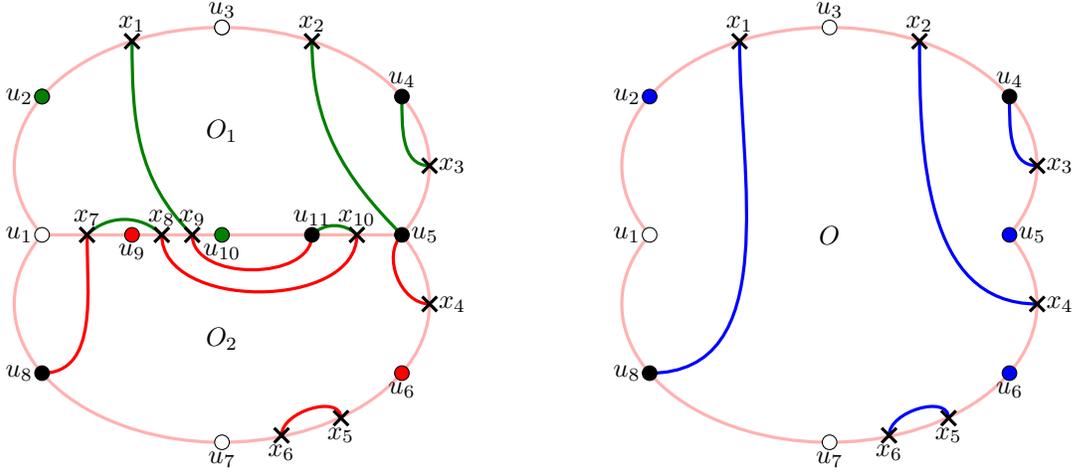

\iflong
Here, we will show how to compute the set of types satisfying \enref{R1} and
\enref{R2} for every R-node and S-node of $\mathcal{B}$. To achieve this we
will again use a dynamic programming algorithm albeit on a sphere-cut
decomposition of $\sk(b)$ instead of on the \SPQR{}-tree. The aim
of this subsection is therefore to show the following lemma.
\begin{lemma}
\label{lem:Rnode}
  Let $b$ be an R-node or S-node of $\mathcal{B}$ such that $\RRR(c)$
  have already been computed  for every child
  $c$ of $b$ in $\mathcal{B}$. Then, we can compute $\RRR(b)$
  in time $\bigoh(315^{\omega}\ell+\ell^3)$,
  where $\omega$ is the branchwidth of the graph $\sk(b)$ and $\ell$ is the number
  of children of $b$ in $\mathcal{B}$.
\end{lemma}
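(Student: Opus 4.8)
The plan is to establish \Cref{lem:Rnode} by a bottom-up dynamic program along the sphere-cut decomposition $\langle T_b,\lambda_b,\Pi_b\rangle$ of $\sk(b)$, using the $\xor$-framework of \Cref{ssec:SPCD-framework}. First I would build a sphere-cut decomposition of $\sk(b)$ of width $\omega$ and root it at $a^r=\lambda_b^{-1}((s_b,t_b))$; since $\sk(b)$ is biconnected planar with $\bigoh(\ell)$ vertices this costs $\bigoh(\ell^3)$ by \Cref{lem:comp-spcut}. The program computes, for every arc $a$ of $T_b$, the set $\ARC(O_a)$ of types of the weak noose $O_a$ satisfying \enref{RO1} and \enref{RO2}, and the output $\RRR(b)$ is read off from $\ARC(O_{a^r})$: the weak noose $O_{a^r}$ has middle set $\{s_b,t_b\}$ and exactly two subcurves, so identifying the at most two $\psi$-vertices on one subcurve with $l,l'$ and those on the other with $r,r'$ gives a bijection between types of $O_{a^r}$ and types of the \SPQR-node $b$; a routine check from the definitions shows this bijection maps ``$O_{a^r}$ has type $X$'' to ``$b$ has type $\mathrm{translate}(X)$'' and $\type_W(b,O_{a^r})$ to $\type_W(b)$, so \enref{RO1} and \enref{RO2} for $\ARC(O_{a^r})$ become exactly \enref{R1} and \enref{R2} for $\RRR(b)$.

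For the base cases: each leaf arc $a\ne a^r$ corresponds to a child $c$ of $b$ and its pertinent graph is $\pe(c)$, so I would set $\ARC(O_a)=\{\mathrm{translate}(X):X\in\RRR(c)\}$ under the analogous bijection between types of $c$ and types of $O_a$; this inherits \enref{RO1} and \enref{RO2} from \enref{R1} and \enref{R2} of $\RRR(c)$. For the auxiliary size-$3$ weak nooses bounding edge-less graphs produced by \Cref{lem:triangles}, I would compute $\ARC$ as the set of \emph{all} types the noose admits; by \Cref{lem:enum_arc_types} there are at most $28^3$ candidates, each checkable in $\bigoh(1)$, and this set trivially satisfies \enref{RO1} and \enref{RO2}.

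The inductive step rests on a \textbf{composition lemma}: if $O_1,O_2$ are weak nooses with respective types $X_1,X_2$ that are compatible in the sense of \Cref{ssec:rsnodes}, then $O:=O_1\xor O_2$ has type $X_1\concat X_2$; and conversely, for every witness $W$ for $G$ respecting $\mathcal{B}$ the types $\type_W(b,O_1),\type_W(b,O_2)$ are compatible and $\type_W(b,O)=\type_W(b,O_1)\concat\type_W(b,O_2)$. The forward direction is the crux: given drawings $(\PPP_1,D_1),(\PPP_2,D_2)$ witnessing that $O_1,O_2$ have types $X_1,X_2$, I would glue $D_1$ and $D_2$ along the common subcurves $O_1\cap O_2$ --- this is well-defined because condition \enref{3} forces $\psi_1=\psi_2$ there while \enref{1} and \enref{2} make $O$ a weak noose whose inside region contains $O_1\cap O_2$, so the two drawings lie on opposite sides of a common simple closed curve --- then splice the path fragments meeting at each vertex of $\m{O_1\cap O_2}\setminus\m{O}$, which by \enref{4} has compatible degrees on the two sides, and use \enref{5} and \enref{6} to forbid creating a spurious short cycle and to deal with the \fulltype/\emptytype corner cases; this yields a pair $(\PPP,D_X)$ witnessing that $O$ has type $X_1\concat X_2$, with planarity preserved since all surgery happens across a simple closed curve. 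The backward direction is routine: take $D_H$, add the nooses and replace the at most two crossings per subcurve by the $\psi$-vertices as in the definition of $\type_W(b,\cdot)$, and restrict to $\pe(b,O_1)$ and $\pe(b,O_2)$; the two restrictions fit together, giving compatibility and the $\concat$-identity.

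Given the composition lemma, I would process the inner arcs of $T_b$ bottom-up: for an inner arc $a_P$ with children $a_L,a_R$, \Cref{lem:triangles} gives a length-$\le3$ sequence of $\xor$-operations from $O_{a_L},O_{a_R}$ (and at most two size-$3$ auxiliary nooses) to $O_{a_P}$, and I would compute $\ARC(O'\xor O'')=\{X'\concat X'':X'\in\ARC(O'),\,X''\in\ARC(O''),\,X'\text{ compatible with }X''\}$ at each step. \enref{RO1} for $\ARC(O_{a_P})$ follows by induction along this sequence from the forward direction of the composition lemma, and \enref{RO2} follows from the backward direction applied step by step, so that the restriction of every fixed witness to every intermediate noose is captured. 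For the running time, the weak nooses touched satisfy $|O|\le\omega+1$ by \Cref{lem:triangles}, hence $|\ARC(O)|\le28^{\omega+1}$ by \Cref{lem:enum_arc_types}; crucially, compatible types must agree on the shared subcurves $O'\cap O''$, so rather than iterating over all pairs I would enumerate the common restriction to $O'\cap O''$ (a bounded number of $\psi$-plus-Dyck configurations per shared subcurve, and a constant number of roles per absorbed shared vertex) and, for each, the $\le28^{|O'|-|O'\cap O''|}$ and $\le28^{|O''|-|O'\cap O''|}$ extensions on the free parts, testing compatibility and computing $\concat$ in $\bigoh(\omega)$ via \Cref{obs:combined_type_time}. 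Balancing the sizes of the shared and free parts subject to $|O_{a_P}|\le\omega$ bounds the work per $\xor$-operation by $\bigoh(315^{\omega}\cdot\omega)$ (the worst case being two size-$(\omega+1)$ nooses producing a size-$\omega$ one, where the optimum of the count equals $(84\sqrt{14})^{\omega}$ up to polynomial factors), hence $\bigoh(315^{\omega}\ell)$ over all $\bigoh(\ell)$ arcs after absorbing the $\omega$-factor, and adding $\bigoh(\ell^3)$ for building the decomposition gives the claimed bound. The main obstacle throughout is the forward direction of the composition lemma: one must arrange the gluing of the two witness drawings so that the result is planar and its path system has precisely the prescribed endpoints and inner-vertex partition, which needs careful use of the technical conditions \enref{1}--\enref{6} together with the fact from \Cref{lem:triangles} that the shared subcurves end up strictly inside the new noose.
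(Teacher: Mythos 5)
Your proposal mirrors the paper's proof almost exactly: you build the sphere-cut decomposition via Lemma~\ref{lem:comp-spcut}, translate between $\RRR(c)$ and leaf-arc types and between root-arc types and $\RRR(b)$ exactly as in the paper's Lemmas~\ref{lem:spcut-leaf} and~\ref{lem:arc_root}, run the $\xor$-based DP via Lemma~\ref{lem:triangles} with the two-direction composition lemma matching the paper's Lemmas~\ref{lem:concat-rev} and~\ref{lem:concat-for} (your ``forward''/``backward'' labels are swapped relative to the paper's, but the content is identical), and bound the per-arc work by exploiting agreement on shared subcurves as in Lemma~\ref{lem:compatible_triples}. The only cosmetic difference is that your enumeration iterates over shared-segment configurations and then extends each side, whereas the paper fixes the output type $X$ first and then enumerates roles on $O_1\cap O_2$; both yield the same $\bigoh(315^\omega)$ per step.
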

In the following, let $b$ be an R-node or S-node of $\mathcal{B}$ with reference
edge $(s_b,t_b)$ and let $\langle T_b, \lambda_b, \Pi_b \rangle$ be a
sphere-cut decomposition of $\sk(b)$ that is rooted in
$r=\lambda^{-1}((s_{b},t_{b}))$. For a weak noose $O \subseteq C(T_b)$, let $\ARC(O)$
be the set of all types of $O$
satisfying the following two natural analogs of \enref{R1} and (R2), i.e.:
\begin{enumerate}[(RO1)]
\item If $X \in \ARC(O)$, then $O$ has type $X$.
\item If there is a witness $(D,D_H,G_H,H)$ for $G$ that
  \emph{respects} $\mathcal{B}$ such that $\type_W(b,O)=X$, then $X \in \ARC(O)$.
\end{enumerate}
Our aim is to compute $\ARC(O_{a^r})$ for the arc $a^r$ incident to
the root $r$ of $T_b$. We will achieve this by
computing $\ARC(O_a)$ for every arc $a$ of $T_b$ via a bottom-up dynamic
programming algorithm along $T_b$. Note that there is a one-to-one
correspondence between the arcs of $T_b$ that are connected to a leaf
and the children of $b$ in $\mathcal{B}$, i.e., the arc of $T_b$ incident to
leaf $l$ corresponds to the child $c$ of $b$ representing the edge
$\lambda(l)$. We start with two simple
lemmas showing that: (1) We can compute $\ARC(O_a)$ for every
arc of $T_b$ incident to a leaf $l$ of $T_b$ in linear-time from
$\RRR(c)$, where $c$ is the child of $b$ in $\BBB$ corresponding the
edge $\lambda(l)$ and (2). We can compute $\RRR(b)$ from $\ARC(O_{a^r})$ in
linear-time.

\begin{lemma} 
\label{lem:spcut-leaf}
  Let $a$ be an arc of $T_b$ connected to a leaf and let $c$ be the
  corresponding child of $b$ in $\mathcal{B}$.
  Then, $\ARC(O_a)$ can be computed in linear-time from $\RRR(c)$. 
\end{lemma}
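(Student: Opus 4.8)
The plan is to set up an explicit bijection $\iota$ between the (constantly many) types of the weak noose $O_a$ and the types that can occur for the child $c$, and then to show that $\ARC(O_a):=\iota(\RRR(c))$ satisfies \enref{RO1} and \enref{RO2}.

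First I would unwind what a leaf arc looks like. If $a$ is the arc incident to the leaf $l$ of $T_b$ with $\lambda_b(l)=e$ and $c$ is the pertinent node of $e$ (the child of $b$ corresponding to $a$), then the inside region of $O_a$ consists of the single skeleton edge $e$, so $\pe(b,a)=\expa(e)=\pe(c)$, the endpoints of $e$ are exactly the poles $s_c,t_c$ of $c$, and $\m{O_a}=\md{a}=\{s_c,t_c\}$. Being a simple closed curve through two vertices that traverses each face at most once, $O_a$ splits into exactly two subcurves $\alpha,\beta$, one lying in each of the two faces of $\sk(b)\cup\{(s_b,t_b)\}$ incident with $e$. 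Since $\alpha,\beta$ and the two subcurves $L_c,R_c$ of the node-noose $N_c$ both join $s_c$ and $t_c$ through the two faces of $D$ bordering $\pe(c)$ from the outside, the consistency guaranteed by \Cref{cor:add-nooses-to-drawing} (equivalently, built into the notion of a witness respecting $\mathcal{B}$) lets us assume that $O_a$ and $N_c$ are drawn as the same curve, with $\alpha=L_c$ and $\beta=R_c$.

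Next I would define $\iota$ from the types of $O_a$ to the types of $c$: identify $\m{O_a}=\{s_c,t_c\}$ with $\{s,t\}$ and $\alpha,\beta$ with the left and right side of $\pe(c)$, and send $\psi(\alpha)\in\{\emptyset,[x],[x,x']\}$ to $\psi(L)\in\{\emptyset,\{l\},\{l,l'\}\}$ and $\psi(\beta)$ to $\psi(R)$ accordingly, choosing the internal ordering of these short sequences---using $\pi_a$ and whether $\pi_G(s_c)<\pi_G(t_c)$---so that $\pi^\circ(\psi)$ is sent to the cyclic order $(s,r,r',t,l',l)$; the matching $M$ and the set $S$ are carried over verbatim under these identifications. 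The key point is that this makes the two ``has type'' relations the same: up to these identifications, $\pe_X(b,O_a)$ is just $\pe^*(c)$ with the dummy vertices unused by $\iota(X)$ suppressed (these have degree $2$ and lie on no path), so a family $\PPP$ of vertex-disjoint paths or a cycle, together with the required planar drawing, witnessing that $O_a$ has type $X$ is, after this relabelling, exactly such a witness for $c$ having type $\iota(X)$, and conversely. Likewise, for any witness $W$ for $G$ that respects $\mathcal{B}$, the restriction of $W$ to the region bounded by $O_a$ equals its restriction to the region bounded by $N_c$, and comparing the definitions of $\type_W(b,O_a)$ and $\type_W(c)$ yields $\type_W(b,O_a)=\iota(\type_W(c))$.

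Given these two facts the conclusion is routine. With $\ARC(O_a):=\iota(\RRR(c))$, property \enref{RO1} holds because $\iota(\RRR(c))\subseteq\iota(\{X':c\text{ has type }X'\})=\{X:O_a\text{ has type }X\}$ by \enref{R1} for $\RRR(c)$ together with the ``has type'' equivalence; property \enref{RO2} holds because every type of the form $\type_W(b,O_a)$ equals $\iota(\type_W(c))$ with $\type_W(c)\in\RRR(c)$ by \enref{R2} for $\RRR(c)$, hence lies in $\iota(\RRR(c))$. Since $|\XXX|$ is an absolute constant (and by \Cref{lem:enum_arc_types} so is the number of types on $O_a$, because $|O_a|=2$), computing $\iota(\RRR(c))$ just amounts to applying the fixed relabelling $\iota$ to each of the constantly many members of $\RRR(c)$, which takes $\bigoh(1)$ and in particular linear time. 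The only genuinely delicate point in the argument is the bookkeeping of the orientation conventions---$\pi_a$, the drawing of $N_c$, and $\pi_G$---needed to ensure that $\iota$ really maps $\pi^\circ(\psi)$ to $(s,r,r',t,l',l)$ and hence non-crossing matchings to non-crossing matchings; everything else is unwinding definitions.
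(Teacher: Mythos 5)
Your proposal matches the paper's proof in all essentials: both identify $O_a$ with $N_c$ via \Cref{cor:add-nooses-to-drawing}, set up a bijective relabelling between the types of $O_a$ and the types of $c$ (swapping $\{l,l',r,r'\}$ for the $\psi$-vertices on the two subcurves of $O_a$ and carrying $M$ and $S$ over), and take $\ARC(O_a)$ to be the image of $\RRR(c)$ under this relabelling. You verify \enref{RO1}/\enref{RO2} and the orientation bookkeeping more explicitly than the paper does (and the direction of $\iota$ flips between your second and third paragraphs, a notational slip only), but the argument is the same.
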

\begin{proof}
  First note that $\md{a}=\{s_c,t_c\}$, where $(s_c,t_c)$ is the
  reference edge of $c$ in $\mathcal{B}$. Moreover, because
  of~\Cref{cor:add-nooses-to-drawing}, we can assume that $O_a=N_c$
  since their subcurves connect the same two vertices in the same
  face. Therefore, there is a one-to-one correspondence between the
  types in $\RRR(c)$ and the types in $\ARC(O_a)$. Moreover,
  given a type $X=(\psi, M, S) \in \RRR(c)$, then the corresponding
  type $X'=(\psi',M',S')$ in $\ARC(O_a)$ can be obtained as follows.
  Let $\alpha : \{L,R\} \rightarrow O_a\}$ the bijection such that
  $\alpha(L)=c$ if $c$ is equal to $L_c$ and $\alpha(c)=R$
  otherwise.
  \begin{itemize}
  \item We define $\psi'$ by setting
    $\psi'(\alpha(A))=\emptyset$ if $\psi(A)=\emptyset$,
    $\psi'(\alpha(A))=[x]$ if $|\psi(A)|=1$, and
    $\psi'(\alpha(A))=[x,x']$ if $|\psi(A)|=2$ for every $A \in
    \{L,R\}$,
  \item $M'$ is obtained from $M$ after replacing the vertices in
    $\psi(A)$ with their counterparts in $\psi'(\alpha(A))$ for every
    $A \in \{L,R\}$,
  \item $S'=S$.
  \end{itemize}
  Therefore, we obtain $\ARC(O_a)$ as the set $\SB X' \SM X \in
  \RRR(c)\SE$, which also shows that it can be computed in linear-time
  from $\RRR(c)$.
\end{proof}

\begin{lemma} 
\label{lem:arc_root}
  $\RRR(b)$ can be computed in linear-time from $\ARC(O_{a^r})$. 
\end{lemma}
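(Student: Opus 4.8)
The plan is to reduce the claim to a label-preserving translation between the types of the weak noose $O_{a^r}$ and the types of the node $b$, after which $\RRR(b)$ is obtained from $\ARC(O_{a^r})$ simply by applying this translation to every element. First I would observe that the root $r=\lambda_b^{-1}((s_b,t_b))$ is the leaf of $T_b$ carrying the reference edge, so the arc $a^r$ separates that single edge from the rest of $\sk(b)$; hence $\md{a^r}=\{s_b,t_b\}$, $\sk(b,a^r)=\sk(b)-(s_b,t_b)$, and replacing each skeleton edge by its expansion graph yields $\pe(b,a^r)=\pe(b)$. Since $|O_{a^r}|=|\md{a^r}|=2$, the noose $O_{a^r}$ consists of exactly two subcurves joining $s_b$ and $t_b$, one in each of the two faces of $\sk(b)$ incident to the reference edge; call them $c_L$ and $c_R$. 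As both $c_L$, $c_R$ and the subcurves $L_b$, $R_b$ of $N_b$ connect $s_b$ and $t_b$ and lie in the same respective face of $D$, the ``moreover'' clause of \Cref{cor:add-nooses-to-drawing} lets us assume $O_{a^r}=N_b$, i.e.\ $c_L=L_b$ and $c_R=R_b$.

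Next I would define the bijection $\phi$ from the types of $O_{a^r}$ to the types of the node $b$: given a weak-noose type $(\psi,M,S)$, rename the at most two nodes of the sequence $\psi(c_L)$ to $l$ (and $l'$), those of $\psi(c_R)$ to $r$ (and $r'$), keep $s_b\mapsto s$ and $t_b\mapsto t$, and transport $M$ and $S$ along this renaming (fixing once and for all the orientation/side conventions so that they match the labels $\{l,l',r,r'\}$ used for node types). Using $\m{O_{a^r}}=\{s_b,t_b\}$ one checks that $\pi^{\circ}(\psi)$ is, up to reversal, the cyclic order $(s,r,r',t,l',l)$ restricted to the vertices that actually occur; since non-crossingness of a matching is invariant under reversing the cyclic order, the non-crossing condition on $M$ for $O_{a^r}$ becomes the non-crossing condition required of a node type, and the remaining constraints ($V(\psi)\subseteq V(M)$, $S$ disjoint from $V(M)$, $S\subseteq\{s,t\}$) translate verbatim. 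Hence $\phi$ is a bijection between these two (constant-size) sets of types.

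Then I would verify the two semantic compatibilities that make $\phi$ interact correctly with the relations in \enref{R1} and \enref{R2}. First, $O_{a^r}$ has type $X'$ if and only if $b$ has type $\phi(X')$: since $\pe(b,a^r)=\pe(b)$, the graph $\pe_{X'}(b,O_{a^r})$ equals $\pe^*(b)$ after relabeling, except that $\pe^*(b)$ always carries all four dummies $l,l',r,r'$ together with the fixed outer cycle $(s,r,r',t,l',l)$, whereas $\pe_{X'}(b,O_{a^r})$ carries only the at most four vertices of $V(\psi)$; any missing dummy belongs to neither $V(M)$ nor $S$, hence is unused by any witnessing family of paths, so it is introduced or deleted by subdividing or contracting an edge on the outer face, which changes neither planarity nor the set of vertex-disjoint paths. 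Consequently a witnessing pair $(\PPP,D_{X'})$ for ``$O_{a^r}$ has type $X'$'' translates into one for ``$b$ has type $\phi(X')$'' and back. Second, for every witness $W=(D,D_H,G_H,H)$ for $G$ that respects $\mathcal{B}$ (and hence respects $\mathcal{T}$), $\phi(\type_W(b,O_{a^r}))=\type_W(b)$, because both types are read off from exactly the same sub-drawing, namely the restriction of $D_H$ to the region bounded by $N_b=O_{a^r}$ that contains $\pe(b)$, and the crossing points that $\type_W(b,O_{a^r})$ records along $c_L=L_b$ and $c_R=R_b$ are precisely what $\type_W(b)$ records as $l,l'$ and $r,r'$.

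Finally I would set $\RRR(b):=\{\phi(X')\mid X'\in\ARC(O_{a^r})\}$ and conclude: \enref{R1} follows from \enref{RO1} together with the forward direction of the first compatibility, and \enref{R2} follows from \enref{RO2} together with the backward direction of the first compatibility and the second compatibility. Since $|O_{a^r}|=2$, \Cref{lem:enum_arc_types} bounds $|\ARC(O_{a^r})|$ by $28^2$, and each type has constant description size, so writing down $\RRR(b)$ takes constant and in particular linear time. I do not expect a real obstacle here; the lemma is essentially a definitional translation, and the only points that need a little care are the identification $O_{a^r}=N_b$ (which is exactly what the last clause of \Cref{cor:add-nooses-to-drawing} provides, since both nooses enclose precisely $\pe(b)$ and touch $D$ only at $s_b,t_b$) and the reconciliation of the two ``type'' formalisms --- the always-present four dummies on the \SPQR{}-node side versus only the vertices of $V(\psi)$ on the weak-noose side --- which the harmless outer-face subdivisions above take care of.
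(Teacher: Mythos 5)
Your proposal follows the same structural idea as the paper's proof — observe that $\md{a^r}=\{s_b,t_b\}$, hence $O_{a^r}$ consists of two subcurves joining $s_b$ and $t_b$, and translate weak-noose types into node types by relabeling the vertices introduced on the two subcurves to $\{l,l'\}$ and $\{r,r'\}$. However, there is a gap: you fix a \emph{single} bijection $\phi$ between the two subcurves and $\{L,R\}$, whereas the paper's proof applies \emph{both} bijections $\alpha_1,\alpha_2:\{L,R\}\to\{c,c'\}$ and sets $\RRR(b)=\SB X_{\alpha_1},X_{\alpha_2}\SM X\in\ARC(O_{a^r})\SE$. This is not a cosmetic difference. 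The sphere-cut decomposition $\langle T_b,\lambda_b,\Pi_b\rangle$, and hence the cyclic orders $\pi^\circ(\psi)$ that underlie the weak-noose types in $\ARC(O_{a^r})$, are computed with respect to \emph{one} of the two planar drawings of $\sk(b)$ having the reference edge on the outer face; a witness $W$ may induce the other (mirrored) drawing, and moreover the definition of $\type_W(b)$ itself assigns the labels $L_b,R_b$ to the two subcurves of $N_b$ arbitrarily. Consequently $\type_W(b)$ may equal the $L\leftrightarrow R$ mirror of $\phi(\type_W(b,O_{a^r}))$ rather than $\phi(\type_W(b,O_{a^r}))$ itself, and your ``second compatibility'' need not hold. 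The paper explicitly flags this: the proof of \Cref{lem:Rnode} states that the freedom in choosing the drawing of $\sk(b)$ is ``taken into account'' precisely in \Cref{lem:arc_root} (and \Cref{lem:spcut-leaf}). Your set $\{\phi(X')\mid X'\in\ARC(O_{a^r})\}$ would satisfy \enref{R2} only if one additionally proved that $\ARC(O_{a^r})$ is closed under swapping the roles of the two subcurves, which you do not argue and which does not follow from \enref{RO1}--\enref{RO2} alone. The fix is simply to add both translated types, as the paper does; everything else in your argument (the identification $O_{a^r}=N_b$, the comparison of $\pe^*(b)$ with $\pe_{X'}(b,O_{a^r})$, the constant-time bound) is sound.
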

\begin{proof}
  First note that $\md{a^r}=\{s_b,t_b\}$ and therefore that $O_{a^r}$ consists of two
  subcurves $c=(\{s,t\},f)$ and $c'=(\{s,t\},f')$, where both $f$ and $f'$
  have the reference edge $(s_b,t_b)$ on their border. Therefore,
  every type $X=(\psi,M,S) \in \ARC(O_{a^r})$ can be easily translated
  into two types of $b$ after
  specifying a bijection $\alpha$ between $\{L,R\}$ and
  $\{c,c'\}$. That is given such a bijection $\alpha : \{L,R\}
  \rightarrow \{c,c'\}$, we obtain the type
  $X_\alpha=(\psi_\alpha,M_\alpha,S_\alpha)$ of $b$ corresponding
  to $X$ by setting:
  \begin{itemize}
  \item $\psi_\alpha(L)=\emptyset$ if $\psi(\alpha(L))=\emptyset$,
    $\psi_\alpha(L)=\{l\}$ if $|\psi(\alpha(L))|=1$, and
    $\psi_\alpha(L)=\{l,l'\}$ if $|\psi(\alpha(c))|=2$,
  \item $M_\alpha$ is obtained from $M$ by replacing the first vertex
    in $\psi(\alpha(L))$ ($\psi(\alpha(R))$) with $l$ ($r$) and the second vertex in
    $\psi(\alpha(L))$ ($\psi(\alpha(R))$) with $l'$ ($r'$),
  \item $S_\alpha=S_\alpha$.
  \end{itemize}
  It is now straightforward to verify that $O_{a^r}$ has type $X$ if and
  only if $b$ has type $X_{\alpha_1}$ and $X_{\alpha_2}$ for the two
  possible bijections $\alpha_1$ and $\alpha_2$ between $\{L,R\}$ and
  $\{c,c'\}$. Therefore, it
  holds that $\RRR(b)=\SB X_{\alpha_1},X_{\alpha_2} \SM X \in
  \ARC(O_{a^r}) \SE$, which shows that $\RRR(b)$ can be computed in
  linear-time from $\ARC(O_{a^r})$.
\end{proof}

Given the above Lemmas, it now merely remains to show how to
compute $\ARC(O_{a_P})$ from $\ARC(O_{a_L})$ and $\ARC(O_{a_R})$ for
any inner node of $T_b$ with parent arc $a_P$ and child arcs $a_L$ and
$a_R$. Employing our framework introduced in Subsection~\ref{ssec:SPCD-framework} allows us
to solve a simpler problem instead, i.e., we only have to show how to
compute $\ARC(O_1\xor{} O_2)$ from $\ARC(O_1)$ and $\ARC(O_2)$ for
any weak nooses $O_1$ and $O_2$.

Let $O_1$ and $O_2$ be two weak nooses having type $X_1=(\psi_1, M_1, S_1)$ 
and type $X_2=(\psi_2, M_2, S_2)$, respectively. We say that $X_1$ and $X_2$ are \emph{compatible} if
\begin{enumerate}[(1)]
    \item $O=O_1 \xor O_2$ is a weak noose,
    \item the inside region of the noose $O$ contains all subcurves in $(O_1\cap O_2)$,
    \item $\forall {c \in O_1 \cap O_2}$, it holds $\psi_1(c) = \psi_2(c)$,
    \item for every $u \in \m{O_1 \cap O_2} \setminus  \m{O_1 \xor O_2}$, it holds that $u$ is only in one of following sets: $S_1$, $S_2$ or $V(M_1)\cap V(M_2)$, and
    \item the multi-graph obtained from the union of $M_1$ and $M_2$
      is acyclic, or is one cycle and $\m{O}
        \subseteq S_1\cup S_2 \cup (V(M_1) \cap V(M_2))$,
    \item if $X_1$ is the \fulltype{}, then $X_2$ is the \emptytype{}
      and $\m{O_2} \subseteq \m{O_1}$, and vice versa.
\end{enumerate}
Please also refer to Figure~\ref{fig:typesSPCD} for an illustration of
two compatible types.
Let $X_1=(\psi_1,M_1, S_1)$ and $X_2=(\psi_2,M_2, S_2)$ be two compatible types defined on weak nooses $O_1$ and $O_2$, respectively.

We denote by $X_1 \concat X_2$
the \emph{combined type} $X=(\psi,M, S)$ of $X_1=(\psi_1,M_1, S_1)$ and $X_2=(\psi_2,M_2, S_2)$ for the weak noose $O=O_1 \xor O_2$ that is defined as follows. 
For each $c \in O$, if $ c \in O_1$ then $\psi(c)$ is equal to
$\psi_1(c)$, otherwise $\psi(c)$ is equal to $\psi_2(c)$ 
and the set $S$ is equal to $(S_1 \cup S_2 \cup (V(M_1)\cap V(M_2)))\cap
\m{O}$, i.e., any vertex with degree two w.r.t. $X$ must be in $\m{O}$
and have degree two already w.r.t. $X_1$ or $X_2$,
or it must be in both matchings $M_1$ and $M_2$.
If either $X_1$ or $X_2$ is a \fulltype, then by \enref{6} we get that 
$M_1 = M_2 = M = \emptyset$ and $X_1\concat X_2$ is the \fulltype{}.
If the multi-graph $M_1 \cup M_2$ is one cycle, then by \enref{5} we get that 
$M = \emptyset$ and $X_1\concat X_2$ is the \fulltype{}.
Otherwise, due to \enref{5}, the multi-graph $M_1 \cup
M_2$ is acyclic and corresponds to a set of paths.
Therefore, the matching $M$ is the set
containing the two endpoints for every path in $M_1 \cup M_2$.

\iflong
\begin{observation}\label{obs:combined_type_time}
  Let $X_1$ and $X_2$ be two types defined on the weak nooses $O_1$
  and $O_2$, respectively. Then, we can check whether $X_1$ and $X_2$
  are compatible and if so compute the type $X_1 \concat X_2$ in time
  $\bigoh(|O_1| + |O_2|)$.
\end{observation}

The following two lemmas are crucial for showing the correctness of
our approach. The former shows that if there is a witness $W$ for $G$ that
respects $\BBB$, then for every two weak nooses $O_1$ and $O_2$ it
holds that $\type_W(b,O_1)$ and $\type_W(b,O_2)$ are compatible types
and $\type_W(b,O)=\type_W(b,O_1) \concat \type_W(b,O_2)$. The latter
shows in some sense the reverse direction, i.e., if $O_1$ and $O_2$
have compatible types $X_1$ and $X_2$, then $O=O_1\xor O_1$ has type
$X_1 \concat X_2$.
\begin{lemma} 
\label{lem:concat-for}
  Let $W=(D,D_H,G_H,H)$ be a witness for $G$ that \emph{respects}
  $\BBB$. Let
  $b$ be an R-node or an S-node with sphere-cut decomposition
    $\langle T_b,\lambda_b,\Pi_b \rangle \in \mathcal{T}$. 
  Let $O_1$ and $O_2$ be two weak nooses that are subsets of
  $C(T_b)$ and satisfy properties \enref{1} and \enref{2}, i.e.,
  $O=O_1\xor O_2$ is also a weak noose and the inside region of $O$ contains all
  subcurves in $(O_1\cap O_2)$.
     Then, $X_1=\type_W(b,O_1)$ and $X_2=\type_W(b,O_2)$ are compatible
  types and $\type_W(b,O)=X_1\concat X_2$.
  \end{lemma}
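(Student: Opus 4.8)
The plan is to restrict the witness $W$ to the three nooses $O_1$, $O_2$, and $O=O_1\xor O_2$ simultaneously and then read off both claims from the resulting combinatorial picture. Since $W$ respects $\mathcal{B}$ and hence $\mathcal{T}$, we may draw $O_1$, $O_2$, $O$ into $D_H$ so that each subcurve (all of which lie in $C(T_b)$) is crossed at most twice by $H$, and, by the last bullet in the definition of ``respects $\mathcal{B}$'', the subcurves shared by $O_1$ and $O_2$ are literally the same curves in $D_H$ and carry the same crossing vertices; this is compatibility condition \enref{3}, while \enref{1} and \enref{2} are exactly the hypotheses. The geometric observation I would establish first, using \enref{1}, \enref{2} and the fact that the three nooses are simple closed curves meeting only along common subcurves, is that if $\gamma\subseteq O_1\cap O_2$ denotes the shared subcurves, then the closed disc $\Delta$ bounded by $O$ is the union of the closed discs $\Delta_1$ and $\Delta_2$ bounded by $O_1$ and $O_2$, that $\Delta_1\cap\Delta_2=\gamma$, and consequently that $\pe(b,O)=\pe(b,O_1)\cup\pe(b,O_2)$ with $\m{\pe(b,O_1)}\cap \m{\pe(b,O_2)}=\m{O_1\cap O_2}$ and no shared edge (so a vertex of $\m{O_1\cap O_2}\setminus\m{O}$ lies strictly inside $\Delta$, on $\gamma$).

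With this picture, conditions \enref{4} and \enref{5} follow from a short case analysis on the two $H$-edges incident to each relevant vertex. For $u\in\m{O_1\cap O_2}\setminus\m{O}$, each of the two $H$-edges at $u$ emanates into the interior of exactly one of $\Delta_1,\Delta_2$; accordingly $u$ lands in $S_1$ (both into $\Delta_1$), in $S_2$ (both into $\Delta_2$), or in $V(M_1)\cap V(M_2)$ (one each), and in none of the other two sets, which is \enref{4}. For \enref{5}, every vertex of the multigraph $M_1\cup M_2$ has degree at most two — its incident $M_i$-edges correspond to $H$-edge segments entering $\Delta_1$ or $\Delta_2$, and a vertex has only two $H$-edges — so $M_1\cup M_2$ is a disjoint union of paths and cycles; a cycle there would glue $H$-subpaths into a closed curve, forcing it to be all of $H$ (since $H$ is a single cycle), hence $H\subseteq\Delta$, in which case the same trichotomy puts $\m{O}\subseteq S_1\cup S_2\cup(V(M_1)\cap V(M_2))$. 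Condition \enref{6} is the degenerate case where $\type_W(b,O_1)$ is the \fulltype{}: then $H$ crosses $O_1$ nowhere and lies inside $\Delta_1$ while visiting all of $\m{\pe(b,O_1)}$; as $H$ is Hamiltonian, nothing of $\pe(b)$ can lie strictly inside $\Delta_2$, which forces $\m{O_2}\subseteq\m{O_1}$, $\m{\pe(b,O_2)}=\m{O_2}$ and $H^{O_2}$ edgeless, i.e.\ $\type_W(b,O_2)$ is the \emptytype{}; the vice-versa statement is identical.

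Finally I would verify $\type_W(b,O)=X_1\concat X_2$ componentwise. For $\psi$: a subcurve $c\in O$ lies in exactly one of $O_1,O_2$ and has the same crossing vertices regardless of which noose it is viewed in, so $\psi(c)$ equals $\psi_1(c)$ or $\psi_2(c)$ as in the definition of $\concat$. For the paths and matching: $H^O$ is obtained by gluing $H^{O_1}$ and $H^{O_2}$ along the vertices of $\m{O_1\cap O_2}$ and the crossing vertices on $\gamma$, each of which by \enref{4} and \enref{3} has degree exactly two in the union and thus becomes an internal vertex of a glued path; hence the maximal paths of $H^O$ are precisely the concatenations of the paths recorded by $M_1$ and $M_2$, so $M$ consists of the endpoint-pairs of the paths of $M_1\cup M_2$ (and $M=\emptyset$, i.e.\ the \fulltype{}, exactly when $M_1\cup M_2$ is a cycle or one of $X_1,X_2$ is the \fulltype{}). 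For $S$: a vertex $v\in\m{O}$ has degree two in the paths of $O$ iff both its $H$-edges enter $\Delta$, which by the same trichotomy happens iff $v\in S_1\cup S_2\cup(V(M_1)\cap V(M_2))$, so $S=\m{O}\cap\bigl(S_1\cup S_2\cup(V(M_1)\cap V(M_2))\bigr)$ as required. The main obstacle is the very first step: converting conditions \enref{1}, \enref{2} together with the planarity of the sphere-cut nooses into the clean topological statement $\Delta=\Delta_1\cup\Delta_2$, $\Delta_1\cap\Delta_2=\gamma$, $\pe(b,O)=\pe(b,O_1)\cup\pe(b,O_2)$ — once that is pinned down, the rest is the bookkeeping above.
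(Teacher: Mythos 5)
Your proposal is correct and follows essentially the same line of argument as the paper: property \enref{3} from the shared-subcurve normalization, \enref{4} by the trichotomy on the two $H$-edges at a vertex of $\m{O_1\cap O_2}\setminus\m{O}$, \enref{5} by observing that $M_1\cup M_2$ records the endpoint-pairings of the glued $H$-segments and a cycle there must be all of $H$, \enref{6} as the degenerate case, and the $\concat$-formula by gluing $H^{O_1}$ and $H^{O_2}$ along $\gamma$. The geometric fact $\Delta=\Delta_1\cup\Delta_2$ with $\Delta_1\cap\Delta_2=\gamma$ that you flag as "the main obstacle" is also taken as self-evident in the paper's proof (it states without argument that $\bigcup(\PPP_1\cup\PPP_2)$ equals the subdivided $H^O_*$), so your treatment is no less complete on this point.
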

\iflong \begin{proof}
  Let $i \in \{1,2\}$ and let $X_i=(\psi_i,M_i,S_i)$ be the type $\type_W(b,O_i)$. 
  The properties \enref{1} and \enref{2} given in the description of compatible types are a direct consequence of the assumptions of this lemma.
    Recall that $H^{O_i}$ is defined in
    Subsection~\ref{ssec:SPCD-framework} and essentially corresponds to
    the subgraph of $H$ including crossings at the subcurves of $O_i$ inside $O_i$.
  If $H^{O_i}$ contains a cycle then $X_i$ and $X$ are the \fulltype{}s.
  Also, for $j\in\{1,2\} \setminus\{i\}$, 
  $X_j$ must be the \emptytype{} and $\m{O_j}\subseteq V(H^{O_i})$, which satisfy property \enref{6}.
  In this case the properties \enref{3}, \enref{4}  and \enref{5} are satisfied,
  because $V(\psi_i)=V(\psi_j)=M_1=M_2=\emptyset$ and $S_i=\m{O_i}$.

  Otherwise,
  let $\PPP_i$ be a set of all maximal paths in $H^{O_i}$ each of size at least $2$.
  Then, $(\PPP_i,D_H^{O_i})$ witnesses that $O_i$ has type $X_i$. Note that $\bigcup(\PPP_1 \cup \PPP_2)$ is almost equal to $H^O$. In fact, $H^O$ only misses the vertices in $V(\psi_1)\cap V(\psi_2)$. We therefore define $H^O_*$
  as the graph obtained from $H^O$ after subdividing the edges that cross the subcurves in $O_1\cap O_2$. Then, we can assume that $\bigcup(\PPP_1 \cup \PPP_2)$ is equal to $H^O_*$.
  
  The property \enref{3} is simply obtained from the fact that $X_1$ and $X_2$ are obtained from the same Hamiltonian cycle and therefore agree on all subcurves shared between $O_1$ and $O_2$. 

  For each $v$ in $\m{O_1\cap O_2}\setminus \m{O}$, $v$ has degree $2$ in $H^O$, because $v$ is not in $\m{O}$. Since $\bigcup(\PPP_1 \cup \PPP_2)=H^O_*$, it follows that $v$ is in one of the following sets: $\INVP(\PPP_1) \cap \m{O_1}$, $\INVP(\PPP_2)\cap \m{O_2}$ and $\bigl(V(\PPP_1) \setminus \INVP(\PPP_1)\bigr) \cap \bigl(V(\PPP_2) \setminus \INVP(\PPP_2)\bigr)$,  which correspond to sets $S_1$, $S_2$ and $V(M_1)\cap V(M_2)$, respectively. This demonstrates property \enref{4} .  

  Property \enref{5} now follows because
  the matchings $M_1$ and $M_2$ have an edge between the endpoints of every path in $\PPP_1$ and $\PPP_2$,
  so if $H^O_*$ is acyclic then the multi-graph $M_1 \cup M_2$ is also acyclic,
  otherwise $H^O_*$ is a cycle and the multi-graph $M_1 \cup M_2$ is a cycle, and $\m{O} \subseteq V(H^O_*)$.
\end{proof}\fi

\begin{lemma}
\label{lem:concat-rev}
    If $O_1$ and $O_2$ have compatible types $X_1$ and $X_2$, respectively, then $O=O_1 \xor O_2$ has type $X=X_1\concat X_2$.
\end{lemma}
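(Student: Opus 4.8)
The plan is to build a witness for ``$O$ has type $X$'' by gluing together the witnesses for $O_1$ and $O_2$. Fix pairs $(\PPP_1,D_{X_1})$ and $(\PPP_2,D_{X_2})$ witnessing that $O_1$ has type $X_1=(\psi_1,M_1,S_1)$ and that $O_2$ has type $X_2=(\psi_2,M_2,S_2)$, and write $X=X_1\concat X_2=(\psi,M,S)$ and $O=O_1\xor O_2$. First I would glue the two drawings. By definition of a type the outer face of $D_{X_i}$ is bounded by $\pi^\circ(\psi_i)$, which contains every subcurve of $O_i$; in particular the shared subcurves in $O_1\cap O_2$ lie on the outer faces of both $D_{X_1}$ and $D_{X_2}$. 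By compatibility condition \enref{3} the functions $\psi_1$ and $\psi_2$ agree on $O_1\cap O_2$, so the two outer boundaries carry the same vertices and the same subdivision vertices along the shared subcurves; moreover, by the laminar and planar structure of sphere-cut decompositions the graphs $\pe(b,O_1)$ and $\pe(b,O_2)$ meet exactly in $\m{O_1\cap O_2}$ and share no edge. Hence the two drawings can be glued along the shared subcurves, identifying the corresponding (subdivision) vertices; condition \enref{1} together with condition \enref{2} (which places the shared subcurves inside $O$) guarantees that after this identification the outer boundary is precisely $\pi^\circ(\psi)$. The result is a planar drawing $D$ of $\pe_{X_1}(b,O_1)\cup\pe_{X_2}(b,O_2)\cup\bigcup(\PPP_1\cup\PPP_2)$ whose outer face is bounded by $\pi^\circ(\psi)$.

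Next I would form the path system $\PPP$. The subdivision vertices in $V(\psi_1)\cap V(\psi_2)$ sitting on shared subcurves, as well as the shared boundary vertices in $\m{O_1\cap O_2}\setminus\m{O}$, all become interior vertices of the glued region. Because $V(\psi_i)\subseteq V(M_i)$, every such subdivision vertex is an endpoint of exactly one path of $\PPP_1$ and of exactly one path of $\PPP_2$, so suppressing it concatenates these two paths into one. For a shared boundary vertex $u\in\m{O_1\cap O_2}\setminus\m{O}$, compatibility condition \enref{4} says that $u$ lies in exactly one of $S_1$, $S_2$, $V(M_1)\cap V(M_2)$: in the first two cases $u$ has degree two in one of the $\PPP_i$ and degree zero in the other, and in the last case $u$ is a path endpoint in each $\PPP_i$ and these get concatenated at $u$; in all cases $u$ attains degree exactly two. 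Let $\PPP$ be the family obtained from $\PPP_1\cup\PPP_2$ after carrying out all these concatenations and suppressing the involved interior vertices, and let $D_X$ be $D$ after the same suppressions.

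It then remains to verify the three defining conditions of a type. For the first, $\PPP$ must be a set of vertex-disjoint paths, or a single cycle with $M=\emptyset$: following the concatenations along the multi-graph $M_1\cup M_2$, condition \enref{5} ensures this multi-graph is either acyclic---so each of its components is a path that lifts to a path of $\PPP$ whose two endpoints coincide with the two endpoints of that component, which is exactly how $M$ is defined---or a single cycle, in which case $\PPP$ is a single cycle; here one additionally uses $\m{O}\subseteq S_1\cup S_2\cup(V(M_1)\cap V(M_2))$ to see that every vertex of $\m{O}$ is covered, so the cycle witnesses the \fulltype. For the second, $\SB\INVP(P)\SM P\in\PPP\SE$ must partition $(V(\pe(b,O))\setminus\m{O})\cup S$: this follows by taking the union of the two given partitions, reassigning the absorbed vertices by the case analysis above, and using $S=(S_1\cup S_2\cup(V(M_1)\cap V(M_2)))\cap\m{O}$. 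For the third, $D_X$ is planar with outer face bounded by $\pi^\circ(\psi)$, which is immediate from the construction of $D$ since suppressing degree-two vertices preserves planarity and the outer face. Finally, if $X_1$ (symmetrically $X_2$) is the \fulltype, condition \enref{6} forces $X_2$ to be the \emptytype{} with $\m{O_2}\subseteq\m{O_1}$ and $\pe(b,O_2)$ a single edge, so $\PPP:=\PPP_1$---already a Hamiltonian cycle---together with $D_{X_1}$ (extended by that edge) witnesses the \fulltype{} for $O$; this is a degenerate base case handled directly.

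The step I expect to be the main obstacle is the bookkeeping in the first two verifications above: showing that the concatenation of the two path systems never produces a vertex of degree three and never closes up into a spurious short cycle. This is exactly what conditions \enref{4} and \enref{5} are tailored to prevent, but making the argument airtight requires simultaneously tracking, for every vertex on the shared boundary, its role in $S_1$, $S_2$, $M_1$ and $M_2$, and using that $M_1$ and $M_2$ are non-crossing matchings so that the geometric gluing of the drawings is faithfully mirrored by the combinatorial concatenation of paths.
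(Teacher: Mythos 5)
Your proposal is correct and takes essentially the same approach as the paper's proof: fix witnesses $(\PPP_1,D_1)$ and $(\PPP_2,D_2)$, glue the drawings along the shared subcurves (using compatibility conditions \enref{2} and \enref{3} to see the union is planar with the right outer face), form the union $H=\bigcup(\PPP_1\cup\PPP_2)$ of the path systems, and invoke conditions \enref{4}, \enref{5}, \enref{6} to conclude that $H$ is either a disjoint union of paths with endpoints on the boundary $\m{O}\cup V(\psi)$ (yielding $M$) or a single cycle (yielding the \fulltype{}). Your write-up is somewhat more careful than the paper's on one point: the paper simply calls the components of $H$ the paths of $\PPP$, but those paths pass through the subdivision vertices in $V(\psi_1)\cap V(\psi_2)$ which do not lie in $V(\pe_X(b,O))$; you explicitly suppress these (and the absorbed vertices of $\m{O_1\cap O_2}\setminus\m{O}$) so that the resulting paths and drawing satisfy the type definition on the nose. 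This added care matches what is really needed and is not a departure from the argument.
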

\iflong \begin{proof}
    
    Let $X_1=(\psi_1, M_1, S_1)$, $X_2=(\psi_2, M_2, S_2)$, and $X=(\psi, M, S)$.
    For each $i\in \{1,2\}$, let $(\PPP_i,D_i)$ be the witness that $O_i$ has type $X_i$. Since $X_1$ and $X_2$ are compatible and in particular because of properties \enref{2} and \enref{3}
    from the definition of compatible types,
    the drawing $D=D_1\cup D_2$ is planar. 
    Note that $O$ is a weak noose because of property \enref{1}.
    Consider the graph $H=\bigcup(\PPP_1\cup \PPP_2)$.
    Because of property \enref{4}  all endpoints of the paths in $H$ are in $\m{O}\cup V(\psi)$.
    
    If $H$ is a single cycle then from the property \enref{5} and \enref{6},
    we get that $X_1\concat X_2$ is a \fulltype{} and the witness is $(H,D)$. 
    
    Otherwise, the multi-graph $M_1\cup M_2$ is a disjoint union of paths, due to the property \enref{5}.
    Note that each path in $\PPP_i$ corresponds to an edge in $M_i$, for $i \in \{1,2\}$.
    So $H$ is also a disjoint union of paths and let $\PPP$ be the set of paths of $H$. 
    Then $(\PPP, D)$ is the witness of that $O$ has the type $X$.
\end{proof}\fi
\fi

The following lemma is required to compute the types for a weak noose
$O=O_1\xor{}O_2$ and provides a detailed analysis of the run-time required.
\begin{lemma} 
\label{lem:compatible_triples}
    Let $O$, $O_1$ and $O_2$ be weak nooses such that $O=O_1\xor
    O_2$. There are at most $6(84\sqrt{14})^k$ triples $(X, X_1, X_2)$
    such that $X$, $X_1$ and $X_2$ are types defined on $O$, $O_1$ and $O_2$,
    respectively, $X_1$ is compatible with $X_2$, and $X=X_1\concat
    X_2$. Moreover, all such triples can be
    enumerated in $\bigoh((84\sqrt{14})^kk)$, where $k = max\{|O|,|O_1|,
    |O_2|\}$.
\end{lemma}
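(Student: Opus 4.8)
The plan is to describe an explicit procedure that enumerates the desired triples and to bound its output. Since the combined type $X = X_1 \concat X_2$ is uniquely determined by the pair $(X_1,X_2)$, it is equivalent to count and enumerate the \emph{compatible} pairs $(X_1,X_2)$. Throughout I would write $A := O_1 \cap O_2$, $B_1 := O_1 \setminus O_2$, $B_2 := O_2 \setminus O_1$, so that $O_1 = A \cup B_1$, $O_2 = A \cup B_2$, $O = B_1 \cup B_2$, and hence $|O_1| + |O_2| = |O| + 2|A|$; I also let $p,q$ be the two poles at which the boundary pieces meet.

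First I would enumerate, via Lemma~\ref{lem:enum_arc_types}, all types $X_1$ on $O_1$ (at most $28^{|O_1|}$ of them, produced in $\bigoh(28^{|O_1|}|O_1|)$ time), and show that once $X_1$ is fixed, a compatible $X_2$ is forced on the shared boundary $A$. Concretely: compatibility condition~\enref{3} forces $\psi_2$ to agree with $\psi_1$ on every subcurve of $A$; and for every vertex $u$ of $\m{O_1\cap O_2}\setminus\m{O}$ --- which lies in the interior of $O$ and hence must have combined degree exactly two --- condition~\enref{4} forces $u$ to be unused in $X_2$, to lie in $S_2$, or to be matched by $M_2$, precisely according to whether $X_1$ has $u$ in $S_1$, unused, or in $V(M_1)$; the nodes that $\psi_1$ places on the subcurves of $A$ are handled analogously (they lie in both $V(M_1)$ and $V(M_2)$). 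Thus the only genuinely free data left in $X_2$ is its restriction to $B_2$: the node counts of $\psi_2$ on the $|B_2|$ subcurves of $B_2$, the $S_2$-status of the vertices of $\m{O_2}$ lying on $O$, and how $M_2$ routes through these --- together with $\bigoh(1)$ choices localized at $p$ and $q$ and the \fulltype/\emptytype\ corner case governed by condition~\enref{6}.

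Next I would bound, for a fixed $X_1$, the number of compatible $X_2$. Running over the at most two node-count choices per subcurve of $B_2$, the $S_2$-status of the $B_2$-side vertices, and --- the delicate point --- the non-crossing matching that extends the already-determined fragment of $M_2$ to a full matching while keeping $M_1 \cup M_2$ acyclic or a single cycle (condition~\enref{5}), and invoking Observation~\ref{obs:dyck} together with the standard $2^{\bigoh(n)}$ bound on the number of non-crossing matchings, one gets a bound of the shape $\gamma^{|B_2|}$ for an absolute constant $\gamma$, times an $\bigoh(1)$ factor for the poles. Hence the number of compatible pairs is at most $\bigoh(1)\cdot 28^{|A|+|B_1|}\cdot\gamma^{|O_2|-|A|}$; performing the same argument with $O_1$ and $O_2$ swapped, taking the smaller of the two bounds, and then maximizing $\min\bigl(28^{|A|+|B_1|}\gamma^{|B_2|},\,28^{|A|+|B_2|}\gamma^{|B_1|}\bigr)$ subject to $|O|,|O_1|,|O_2|\le k$ (the extremal configuration being $|A|=|B_1|=|B_2|=k/2$) yields the stated $6(84\sqrt{14})^k$ after tracking the constants. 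The enumeration algorithm is exactly this: enumerate $X_1$ (or $X_2$, whichever side gives the cheaper bound); for each, enumerate the free part of $X_2$ (resp. $X_1$); reconstruct the full $X_2$ and $X$; and check compatibility and the $\concat$-relation in $\bigoh(k)$ time via Observation~\ref{obs:combined_type_time} --- so the total running time is $\bigoh((84\sqrt{14})^k k)$.

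The main obstacle will be the matching analysis in the previous step: one must prove that fixing $X_1$ genuinely determines the part of $M_2$ incident to the shared boundary $A$, so that the factor $\gamma^{|B_2|}$ depends only on $|B_2|$ and not on $|A|$ --- this is where non-crossingness has to be combined with condition~\enref{5} --- and one must keep the Catalan/$2^{n}$ factors sharp enough, threading through conditions~\enref{3}--\enref{6} and the corner cases at $p$, $q$ and for the \fulltype/\emptytype, to land on the precise base $84\sqrt{14}$ rather than a cruder exponential.
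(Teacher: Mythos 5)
Your decomposition is genuinely different from the paper's, and it has a real gap at exactly the point you flag as ``the main obstacle.'' The paper fixes the \emph{combined} type $X$ first and then counts pairs $(X_1,X_2)$; the key observation there is that once $X$ is fixed, the roles of every vertex in $\m{O}\setminus\m{O_1\cap O_2}$ and of every subcurve of $O$ are forced in \emph{both} $X_1$ and $X_2$, so all remaining freedom lives on the shared segment $O_1\cap O_2$ and can be bounded vertex-by-vertex ($6$ combinations) and subcurve-by-subcurve ($21$ combinations), giving $6\cdot 126^{|O_1\cap O_2|}$ pairs per $X$ and hence $28^{|O|}\cdot 6\cdot 126^{|O_1\cap O_2|}$ triples in total, which the identity $|O_1|+|O_2|-2|O_1\cap O_2|=|O|$ turns into $6(84\sqrt{14})^k$.

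Your plan instead fixes $X_1$ and tries to confine the remaining freedom in $X_2$ to $B_2=O_2\setminus O_1$. This doesn't work as stated: conditions~\enref{3} and~\enref{4} (together with the degree-two requirement for interior vertices) do pin down the \emph{vertex set} $V(M_2)\cap\m{O_1\cap O_2}$ and the $S_2$-status of each vertex there, and $\psi_2$ on the shared subcurves, but they do \emph{not} determine how $M_2$ pairs those vertices up. Concretely, for each $v\in V(M_1)\cap V(M_2)$ on the shared boundary there is a binary choice of which direction $v$'s match in $M_2$ goes --- the ``$[$'' vs.\ ``$]$'' label in the Dyck word for $M_2$ --- and neither non-crossingness nor condition~\enref{5} (acyclicity of $M_1\cup M_2$) eliminates this. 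So the object you want to prove (``fixing $X_1$ genuinely determines the part of $M_2$ incident to $A$'') is not a missing lemma but a false claim, and your factor for $X_2$ given $X_1$ must scale as $\delta^{|O_1\cap O_2|}\cdot\gamma^{|B_2|}$ rather than $\gamma^{|B_2|}$. The arithmetic might still come out to $(84\sqrt{14})^k$ after you redo the extremal analysis with the extra $\delta^{|A|}$ factor, but the argument as written is not a proof, and the ``min over swapping $O_1,O_2$'' step and the ``extremal configuration is $|A|=|B_1|=|B_2|=k/2$'' claim would both need to be re-derived. The paper's fix-$X$-first decomposition is cleaner precisely because the outside part is fully determined and all the $[$/$]$-style ambiguity is packaged into the per-vertex factor of $6$ on the shared segment, where the Dyck-word freedoms for $DW_1$ and $DW_2$ are counted symmetrically ($2\times 2$ per shared matched vertex, $2\times 2$ per shared dummy vertex, folded into the $21$ per subcurve).
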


\iflong \begin{proof}
    We define the role of a vertex or subcurve in a type as the information stored in the type about that vertex or subcurve.
    First, we will show that for fixed $X$ the role of each vertex $\m{O}\setminus \m{O_1\cap O_2}$ and each subcurve from $O$ in types $X_1$ and $X_2$ remains the same.
    Let $X=(\psi,M,S)$ be a type that can be defined on a weak noose $O$ and $u$ be an arbitrary vertex from $\m{O} \cap \m{O_1}\cap \m{O_2}$.
    For each $i \in \{1,2\}$, 
    let $v_i$ be first vertex after $u$ in clockwise orientation such that
    $v_i \in V(M)\setminus \m{O_1\cap O_2}$ and
    $v_i$ is a vertex from the subcurves from the segment $O \cap O_i$.
    Let $DW^i$ be the Dyck word corresponding
    to the matching $M$ from Observation~\ref{obs:dyck} with starting
    vertex $v_i$ and clockwise orientation and 
    let $DW^{i}_{\star}$ be a prefix of $DW^i$ corresponding to vertices on the
    subcurves from the segment $O \cap O_i$.
    Let $X_1=(\psi_1, M_1, S_1)$ and $X_2=(\psi_2, M_2, S_2)$ be the
    types that can be defined on $O_1$ and $O_2$ respectively, such
    that $X=X_1\concat X_2$.
    Let $DW_i$ be the Dyck word corresponding
    to the matching $M_i$ from Observation~\ref{obs:dyck} with starting
    vertex $v_i$ and clockwise orientation.
   Note that for different pairs $X_1$ and
    $X_2$, the type $X$ remains the same and therefore also $DW^{1}_{\star}$, $DW^{2}_{\star}$ and $S$.
    Moreover, $DW^{i}_{\star}$ is a prefix of $DW_i$ and 
    $S_i\cap( \m{O\cap O_i} \setminus  \m{O_1\cap O_2}) = 
     S \cap (\m{O\cap O_i} \setminus  \m{O_1\cap O_2})$.
    This means that, the role of each vertex $\m{O \cap O_i} \setminus  \m{O_1\cap O_2}$
    and each subcurve $O \cap O_i$ in type $X_i$ remains the same, 
    so the only places where the different pairs of compatible types 
    may differ is in the segment $O_1 \cap O_2$.

    Secondly, we will bound number of different pairs $X_1$ and $X_2$ for fixed $X$.
    By condition \enref{1} from the definition of compatible types, for each $v \in \m{O_1 \cap O_2} \setminus  \m{O}$, there are $2+2\cdot 2 = 6$ different combinations of roles of $v$ in types $X_1, X_2$, i.e.,
    $v \in S_1 \land v \notin S_2 \cup V(M_2)$,
    $v \in S_2 \land v \notin S_1 \cup V(M_1)$ 
    or $v \in V(M_1) \cap V(M_2)$ and $v$ corresponds to either $"["$ or $"]"$ in $DW_1$ and either $"["$ or $"]"$ in $DW_2$.
    Moreover, for each $v \in \m{O_1 \cap O_2}  \cap \m{O}$ there are
    at most $6$ different combination of roles of $v$ in types $X_1$
    and $X_2$, because the role of $v$ in type $X$ is known and this
    can only decrease number of combinations.
    Due to the type definition and condition \enref{3} from the definition
    of compatible types, we obtain that $\bigcup_{c\in O_1\cap
      O_2}\psi_1(c)=\bigcup_{c\in O_1\cap
      O_2}\psi_2(c)=V(\psi_1) \cap V(\psi_2) \subseteq V(M_1) \cap
    V(M_2)$. Therefore, for each $v \in V(\psi_1) \cap V(\psi_2)$,
    there are $2\cdot 2=4$ different combinations of roles of $v$ in
    types $X_1$ and $X_2$, i.e., $v$ corresponds to either $"["$ or
    $"]"$ in $DW_1$ and either $"["$ or $"]"$ in $DW_2$.
    For each subcurve $c \in  O_1\cap O_2$, there are $3$ possible values
    $\{\emptyset, [x], [x, x']\}$ for $\psi_1(c)$, and therefore there
    are $1+4+16=21$ possibilities, i.e., $1$, $4$, and $16$ possibilities
    in case that $\psi_1(c)=\emptyset$, $\psi_1(c)=[x]$, and
    $\psi_1(c)=[x,x']$, respectively, of the role of $c$ in types $X_1$ and $X_2$.
    Furthermore, since $ |\m{O_1\cap O_2}| = |O_1\cap O_2| +1$, there
    are at most $6^{|O_1\cap O_2| +1 }21^{|O_1\cap O_2|} = 6\cdot
    126^{|O_1\cap O_2|}$ different pairs of types $X_1$ and $X_2$ for
    fixed $X$.
    
    There are at most $28^{|O|}$ different types $X$ that can be
    defined on $O$, due to the Lemma~\ref{lem:enum_arc_types}, and
    there are $6\cdot 126^{|O_1\cap O_2|}$ different pairs of types
    $X_1$ and $X_2$ that can be defined on $O_1$ and $O_2$
    respectively, such that $X=X_1\concat X_2$, so there are at most
    $28^{|O|}\cdot 6\cdot 126^{|O_1\cap O_2|}$ different triples
    $(X,X_1, X_2)$.
    Note that $|O_1|+|O_2| - 2|O_1 \cap O_2| = |O|$ therefore there
    are $\bigoh(28^k 126^{\frac{k}{2}}) = \bigoh(84\sqrt{14}^k)$
    different triples $(X,X_1, X_2)$.
    
    In order to generate all valid triples, first we enumerate all
    possible types $X$, of which there are $\bigoh(28^{|O|})$, in
    time $\bigoh(28^{|O|}|O|)$ using Lemma~\ref{lem:enum_arc_types}.
    Then based on type $X$ we fix the role of all vertices in
    $\m{O}\setminus \m{O_1 \cap O_2}$ and all subcurves in $O$ in types
    $X_1$ and $X_2$.
    We can then assign a role to each
    vertex in $\m{O_1 \cap O_2}$ and every subcurve of $O_1 \cap O_2$
    for types $X_1$ and $X_2$ and verify that the
    corresponding words are Dyck words in time $\bigoh(|O_1| + |O_2|)$
    and if so translate it into a type
    description using Observation~\ref{obs:dyck}.
    Lastly we check if $X_1$ and $X_2$ are compatible and if so check
    if $X=X_1 \concat X_2$, in $\bigoh(|O_1| + |O_2|)$ time using
    Observation~\ref{obs:combined_type_time}.
    Therefore the time complexity of this operation is
    $\bigoh(28^{|O|}(|O| + 126^{|O_1\cap O_2|}\cdot(|O_1|+|O_2|))) =
    \bigoh(84\sqrt{14}^kk)$, due to equation $|O_1|+|O_2| - 2|O_1 \cap
    O_2| = |O|$.
\end{proof}\fi

\begin{lemma} 
\label{lem:spcut-inner}
  Let $b$ be an R-node or S-node and
  let $a_P$ be a parent arc with two child arcs $a_L$ and $a_R$ in the
  sphere-cut decomposition $\langle T_b,\lambda_b,\Pi_b\rangle$ of $\sk(b)$.
  We can compute $\ARC(O_{a_P})$ from $\ARC(O_{a_L})$ and $\ARC(O_{a_R})$ in
  $\bigoh(315^{k})$ time, where $k = max(|\md{a_P}|, |\md{a_L}|,$ $
  |\md{a_R}|)$.
\end{lemma}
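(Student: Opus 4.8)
The plan is to realise the computation of $\ARC(O_{a_P})$ as a short sequence of $\xor$-operations on weak nooses, following the framework of Subsection~\ref{ssec:SPCD-framework}, and to bound each such operation separately; throughout I assume that the given sets $\ARC(O_{a_L})$ and $\ARC(O_{a_R})$ satisfy \enref{RO1} and \enref{RO2}. First I would invoke Lemma~\ref{lem:triangles} to obtain a sequence $Q$ of at most three $\xor$-operations that produces $O_{a_P}$, whose inputs are $O_{a_L}$, $O_{a_R}$ and at most two additional ``triangle'' weak nooses $O_1',O_2'$, each bounding an edge-less graph on three vertices, such that every weak noose occurring in $Q$ has size at most $1+k$ (using that $k\geq 2$ since $G$ is biconnected) and every step generates a weak noose $O$ as the $\xor$ of two weak nooses $O_A,O_B$ meeting conditions~\enref{1} and~\enref{2} of the definition of compatible types, i.e.\ $O$ is a weak noose whose inside region contains $O_A\cap O_B$. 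The triangles are read off directly, since $O_1'\cup O_2'=O_{a_P}\xor O_{a_L}\xor O_{a_R}$ and each is a connected component of that set of subcurves; in particular $O_1',O_2'\subseteq C(T_b)$, and since each bounds an edge-less graph on a constant number of vertices, $\ARC(O_i'):=\SB X\SM O_i'\text{ has type }X\SE$ can be computed by brute force over the constantly many candidate types (Lemma~\ref{lem:enum_arc_types}), checking the defining semantic condition in $\bigoh(1)$ time for each. This set trivially satisfies \enref{RO1}, and it satisfies \enref{RO2} because $\type_W(b,O_i')$ is always a type that $O_i'$ has.

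The core step is the combination claim: whenever $O_A,O_B\subseteq C(T_b)$ are weak nooses with $O=O_A\xor O_B$ satisfying~\enref{1} and~\enref{2}, and $\ARC(O_A),\ARC(O_B)$ satisfy \enref{RO1} and \enref{RO2}, the set
$$\ARC(O):=\SB X_A\concat X_B \SM X_A\in\ARC(O_A),\ X_B\in\ARC(O_B),\ X_A\text{ compatible with }X_B\SE$$
satisfies \enref{RO1} and \enref{RO2}. For \enref{RO1}: a compatible pair $X_A\in\ARC(O_A)$, $X_B\in\ARC(O_B)$ witnesses, via \enref{RO1} for the operands, that $O_A$ has type $X_A$ and $O_B$ has type $X_B$, so by Lemma~\ref{lem:concat-rev} the noose $O$ has type $X_A\concat X_B$. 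For \enref{RO2}: given a witness $W$ respecting $\mathcal{B}$ with $\type_W(b,O)=X$, we have $X_A:=\type_W(b,O_A)\in\ARC(O_A)$ and $X_B:=\type_W(b,O_B)\in\ARC(O_B)$ by \enref{RO2} for the operands, and Lemma~\ref{lem:concat-for}---whose hypotheses are precisely that $O_A,O_B\subseteq C(T_b)$ satisfy~\enref{1} and~\enref{2}, which Lemma~\ref{lem:triangles} guarantees at every step---gives that $X_A,X_B$ are compatible and $X=X_A\concat X_B$, so $X$ lies in the right-hand set. Applying the claim along the at most three steps of $Q$, with the $\ARC$ of each operand being either given, precomputed (triangle), or produced by an earlier step, yields $\ARC(O_{a_P})$ satisfying \enref{RO1} and \enref{RO2}.

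For the running time, enumerating all pairs in $\ARC(O_A)\times\ARC(O_B)$ is far too expensive (of order $28^{2k}$), so I would compute the above set instead via Lemma~\ref{lem:compatible_triples}, which enumerates all triples $(X,X_A,X_B)$ with $X_A$ compatible with $X_B$ and $X=X_A\concat X_B$ in time $\bigoh((84\sqrt{14})^{k'}k')$, where $k'=\max\{|O|,|O_A|,|O_B|\}$; for each such triple one tests, using a dictionary keyed by the size-$\bigoh(k')$ canonical encodings of types, whether $X_A\in\ARC(O_A)$ and $X_B\in\ARC(O_B)$, inserting $X$ into $\ARC(O)$ if so. Since every weak noose in $Q$ has size at most $1+k$, each of the at most three steps costs $\bigoh((84\sqrt{14})^{1+k}(1+k))$, and the two triangle sets cost $\bigoh(1)$ to build; as $84\sqrt{14}<315$, the constant blow-up from the exponent $1+k$ and the remaining polynomial factor are absorbed, for a total of $\bigoh(315^{k})$.

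The genuinely non-trivial content is already isolated into Lemmas~\ref{lem:triangles}, \ref{lem:concat-for}, \ref{lem:concat-rev} and~\ref{lem:compatible_triples}, so the work here is mostly assembly; the main obstacle I anticipate is the bookkeeping needed to make the framework mesh---verifying that conditions~\enref{1} and~\enref{2} are available at every step (exactly the guarantee of Lemma~\ref{lem:triangles}), that every weak noose in $Q$ stays a subset of $C(T_b)$ so that Lemma~\ref{lem:concat-for} applies, and that the constant-size triangle nooses are correctly treated as base cases. The one point where an actual idea is required---bounding the number of compatible \emph{triples} rather than compatible \emph{pairs}, which keeps the base of the exponential down to $84\sqrt{14}<315$---has already been dealt with in Lemma~\ref{lem:compatible_triples}.
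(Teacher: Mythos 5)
Your proposal is correct and follows essentially the same route as the paper's proof: invoke Lemma~\ref{lem:triangles} to reduce to at most three $\xor$-steps, initialize the (at most two) edge-less triangle nooses via Lemma~\ref{lem:enum_arc_types}, establish correctness of each step through Lemmas~\ref{lem:concat-for} and~\ref{lem:concat-rev}, and bound the work per step with Lemma~\ref{lem:compatible_triples}. The only (inessential) difference is that the paper observes an edge-less noose on three vertices admits \emph{every} candidate type, so no semantic check is needed there, whereas you perform the constant-time check; both give the same $\bigoh(1)$ cost.
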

\begin{proof}
  Note first that $\sk(b)$ is biconnected because $b$ is either an R-node
  or an S-node. Therefore, we can apply~\Cref{lem:triangles}, to obtain a sequence $Q$ of
  at most $3$ $\xor{}$-operations such that:
  \begin{itemize}
  \item $Q$ contains only the weak nooses $O_{a_L}$, $O_{a_R}$ and at
    most two weak nooses $O^1$ and $O^2$ each bounding an edge-less
    graph with three vertices.
  \item Every step of $Q$ produces a weak noose $O$ such that $|O|\leq 1+k$ and $O_{a_P}$ is the weak
    noose produced by $Q$ after the final step.
  \end{itemize}
  Before we can employ $Q$ to compute $\ARC(O_{a_P})$, we first need
  to compute $\ARC(O^i)$ for the at most two weak nooses $O^1$ and
  $O^2$. To do so we employ~\Cref{lem:enum_arc_types} to
  enumerate all possible types $X$ of $O^i$, which because $|\m{O^i}|\leq
  3$ can be achieved in constant time. We then add each of those types
  to $\ARC(O^i)$; this is correct because the noose does not contain
  any edges and therefore allows for every possible type.
  We then compute $\ARC(O_{a_P})$ using $Q$ as follows. For every step
  of $Q$, which given two weak nooses $O_1$ and $O_2$ for which the
  set of types $\ARC(O_1)$ and $\ARC(O_2)$ have already been computed,
  computes the weak
  noose $O=O_1\xor{} O_2$, we do the following to compute $\ARC(O)$.
  Let $k'=\max\{|O|,|O_1|,|O_2|\}$. Using~\Cref{lem:compatible_triples} we
  enumerate all of the at most $6(84\sqrt{14})^{k'}$ triples $(X,X_1,X_2)$ of types defined on $O$, $O_1$,
  $O_2$, respectively, in time $\bigoh((84\sqrt{14})^{k'}k')$. Then, for each
  such triple $(X,X_1,X_2)$, we check (in constant time) whether $X_1
  \in \ARC(O_1)$ and $X_2 \in \ARC(O_2)$ and if so we add $X$ to
  $\ARC(O)$. Because $k'\leq k+1$ and since $Q$ consists of at most
  $3$ steps, we obtain that computing all steps
  of $Q$ and therefore computing the set $\ARC(O_{a_P})$ takes time at
  most $\bigoh((84\sqrt{14})^{k+1}(k+1))=\bigoh(315^{k})$. Finally, the correctness of the
  procedure follows immediately from~\Cref{lem:concat-for,lem:concat-rev}.
\end{proof}

\begin{proof}[Proof of Lemma~\ref{lem:Rnode}]
  We first use~\Cref{lem:comp-spcut} to compute a sphere-cut
  decomposition $\langle T_b, \lambda_b,\Pi_b \rangle$ of $\sk(b)$, whose
  width $\omega$ is equal to the branchwidth of $G$, having 
  at most $\bigoh(|V(\sk(b)|)=\bigoh(\ell)$ nodes in time
  $\bigoh(\ell^3)$. Note that to compute $\langle T_b, \lambda_b,\Pi_b
  \rangle$ we can use any of the (at most) two planar drawings of
  $\sk(b)$ that contain the reference edge $(s_b,t_b)$ in the
  outer-face, since we will take the resulting symmetries into account
  when we compute the set of types; more specifically
  in~\Cref{lem:arc_root} and~\Cref{lem:spcut-leaf}.

  We then compute $\ARC(O_{a^r})$ using a
  bottom-up dynamic programming algorithm on $T_b$. In particular, we
  use~\Cref{lem:spcut-leaf} to compute $\ARC(O_a)$ for all
  arcs in $T_b$ incident to a leaf node of $T_b$ and then we use
  Lemma~\ref{lem:spcut-inner} to compute $\ARC(O_a)$ for any other arc
  $a$ of $T_b$ in a bottom-up manner. Having computed $\ARC(O_{a^r})$, we
  then use Lemma~\ref{lem:arc_root} to obtain $\RRR(b)$ from
  $\ARC(O_{a^r})$. The correctness of the algorithm follows from the
  employed lemmas. To analyze the run-time of the algorithm, we first
  note that we require time at most $\bigoh(\ell^3)$ to compute the
  sphere-cut decomposition $\langle T_b, \lambda_b,\Pi_b
  \rangle$. Moreover, the run-time of the dynamic
  programming algorithm on $\langle T_b, \lambda_b,\Pi_b
  \rangle$ is at most equal to the number of inner nodes of
  $T_b$, i.e., at most $|E(\sk(b))|=\ell+1$, times the time required
  for one application of Lemma~\ref{lem:spcut-inner}, i.e., at most
  $\bigoh(315^{\omega})$, where $\omega$ is the width of
  $T_b$; note that here we use that $k = max\{|\md{a_P}|,
  |\md{a_L}|,|\md{a_R}|\}\leq \omega$. Therefore, we obtain
  $\bigoh(315^{\omega}\ell+\ell^3)$ as the
  total run-time required to compute $\RRR(b)$.
\end{proof}
\fi

 \subsection{Putting Everything Together}\label{ssec:puttogether}
 \ifshort
     Finally, we show how to compute the set of types for every leaf
  (Q-node) $l$ of $\BBB$ in time $\bigoh(1)$; informally, since
  $\pe(b)$ is just an edge $(s,t)$, $\RRR(l)$ contains all types that do not
  allow the Hamiltonian cycle to cross from left to right without
  using either $s$ or $t$. Together
  with Lemma~\ref{lem:Pnode} and~\ref{lem:Rnode}, this then concludes
  the proof of Lemma~\ref{lem:sh-fpt-bw-given}.
  \fi

\iflong
Here, we put everything together and
prove~\Cref{lem:sh-fpt-bw-given}. Before doing so, we first need the
following simple lemma
that allows us to compute the set of types for every leaf node of
$\BBB$ in
constant time.
\begin{lemma} 
\label{lem:Qnode}
  Let $l$ be a leaf-node (and Q-node) of $\BBB$.
  We can compute $\RRR(l)$ in time $\bigoh(1)$.
\end{lemma}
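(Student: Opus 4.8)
The plan is to observe that a $Q$-node $l$ of $\mathcal{B}$ has a pertinent graph $\pe(l)$ consisting of a single edge $(s,t)$, so its skeleton $\pe^*(l)$ (after adding the dummy vertices $l,l',r,r'$ and the six dummy edges $sl,ll',l't,sr,rr',r't$) is a fixed graph on exactly six vertices. Since a type $X=(\psi,M,S)$ is specified entirely in terms of these six vertices and the circular order $(s,r,r',t,l',l)$, there are only constantly many candidate types overall (a crude bound suffices: $\psi$ has at most $3\times 3$ choices, $M$ is a non-crossing matching on a $6$-element cyclic order, and $S\subseteq\{s,t\}$). Hence $\RRR(l)$ can be computed by simply iterating over all $\bigoh(1)$ candidate types and, for each, checking whether $l$ has that type according to the semantics in Subsection~\ref{ssec:typesSPQR}; each such check only requires deciding the existence of a suitable set $\PPP$ of vertex-disjoint paths (or a cycle) together with a planar drawing of $\pe^*(l)\cup\bigcup\PPP$ with the prescribed outer face, which is a constant-size combinatorial question.

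First I would spell out concretely which types actually belong to $\RRR(l)$. By properties \enref{R1} and \enref{R2}, $\RRR(l)$ should consist of precisely those types $X$ that $l$ has; since $\pe(l)$ is the single edge $\{s,t\}$ and $V(\pe(l))\setminus\{s,t\}=\emptyset$, the constraint $\{\INVP(P):P\in\PPP\}$ partitions $(V(\pe(l))\setminus\{s,t\})\cup S = S$ forces every path in $\PPP$ to be either a single edge or a path of length two through exactly one of $s$ or $t$ (and the cycle option is impossible since $\pe^*(l)$ minus the reference edge cannot be Hamiltonian on six vertices with only the dummy edges plus the edge $st$). The key structural point — and the only mild subtlety — is that any $P_e$ with $e=\{u,v\}\in M$ must be drawable inside the hexagonal face; concretely, the edge $st$ of $\pe(l)$ may be used by at most one path of $\PPP$, and this is exactly the condition excluding types in which the Hamiltonian cycle would have to "cross'' $\pe(l)$ from the $L$-side to the $R$-side without touching $s$ or $t$. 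I would therefore characterise $\RRR(l)$ as the set of types $(\psi,M,S)$ for which $M$ (viewed together with $\psi$) can be realised by non-crossing chords of the hexagon $(s,r,r',t,l',l)$ where at most one chord is allowed to coincide with the diagonal $\{s,t\}$-slot, with inner-vertex usage recorded by $S$.

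The remaining work is purely verificatory: for each of the constantly many candidate types, build the constant-size instance $\pe^*(l)\cup\bigcup\PPP$, test its planarity and the outer-face condition $\seqb(f)=\{s,r,r',t,l',l\}$ directly (all in $\bigoh(1)$ time since the graph has $\bigoh(1)$ vertices), and include $X$ in $\RRR(l)$ iff the test succeeds. This clearly runs in $\bigoh(1)$ time, and correctness — i.e. that the computed set satisfies \enref{R1} and \enref{R2} — is immediate from the semantics of types, since for a $Q$-node $\RRR(l)$ is literally \emph{the} set of all types that $l$ has (so both directions of \enref{R1} hold and \enref{R2} is subsumed). I do not anticipate a real obstacle here; the only thing requiring care is making the enumeration of candidate non-crossing matchings on the fixed six-element cyclic order explicit and double-checking the single-diagonal-use condition, but this is a finite case check rather than an argument.
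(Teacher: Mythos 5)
Your core approach is the same as the paper's: since $\pe^*(l)$ is a fixed six-vertex graph, there are only $\bigoh(1)$ candidate types, each verifiable against the semantics in $\bigoh(1)$ time, and for a Q-node one may take $\RRR(l)$ to be the full set of types that $l$ has, which makes \enref{R1} trivial and gives \enref{R2} because $\type_W(l)$ is always a type that $l$ has. The paper simply carries out this constant-size case analysis and writes down the resulting table of types.

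One step in your ``spell out concretely'' paragraph is wrong, however: the cycle option is \emph{not} impossible. You argue as if the cycle in $\PPP$ would have to be Hamiltonian on all six vertices of $\pe^*(l)$, but the second bullet of the type semantics only requires $\SB \INVP(P)\SM P\in\PPP\SE$ to partition $(V(\pe(l))\setminus\{s,t\})\cup S = S\subseteq\{s,t\}$, so the cycle need only consist of the vertices in $S$. The paper's table in fact opens with the type $(\psi,\emptyset,\{s,t\})$ with $\psi(L)=\psi(R)=\emptyset$, which is precisely this cycle case (a Hamiltonian cycle living entirely inside $\pe(l)$), and your characterization would omit it. Similarly, when $S=\{s,t\}$ a single path may have \emph{both} $s$ and $t$ as inner vertices, so paths of length three also arise and ``single edge or length two through exactly one of $s$ or $t$'' is too restrictive. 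Neither slip breaks the $\bigoh(1)$-time argument you actually rely on (brute-force checking each candidate type is indifferent to them), but an explicit enumeration of $\RRR(l)$ carried out along the lines you sketch would be incomplete.
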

\begin{proof}
  Let $l$ be a leaf-node with reference edge $(s,t)$ of $\BBB$. Then,
  $l$ is also a Q-node with edge $\{s,t\}$ due to the properties of
  \SPQR{}-trees. Let $\psi_{x,y}$ for $x,y \in [0,2]$ be defined by
  setting $\psi_{0,y}(L)=\emptyset$, $\psi_{1,y}(L)=\{l\}$,
  $\psi_{2,y}(L)=\{l,l'\}$,
  $\psi_{x,0}(R)=\emptyset$, $\psi_{x,1}(R)=\{r\}$, and $\psi_{x,2}(R)=\{r,r'\}$.

  $\RRR(l)$ contains the following types:
  \begin{itemize}
  \item Types for $\psi_{0,0}$:
    \begin{itemize}
    \item the type $(\psi_{0,0},\emptyset, \{s,t\})$
      indicating a Hamiltonian cycle on $\pe(l)$,
    \item the type $(\psi_{0,0},\emptyset, \emptyset)$,
    \item the type $(\psi_{0,0},\{\{s,t\}\}, \emptyset)$;
    \end{itemize}
  \item Types for $\psi_{1,0}$ (symmetrically for $\psi_{0,1}$):
    \begin{itemize}
    \item the types $(\psi_{1,0},\{\{l,s\}\}, \emptyset)$ and
      $(\psi_{1,0},\{\{l,s\}\}, \{t\})$,
    \item the types $(\psi_{1,0},\{\{l,t\}\}, \emptyset)$ and $(\psi_{1,0},\{\{l,t\}\}, \{s\})$;
    \end{itemize}
  \item Types for $\psi_{1,1}$:
    \begin{itemize}
    \item for every $S \subseteq \{s,t\}$ the type
      $(\psi_{1,1},\{\{l,r\}\}, S)$,
    \item the types $(\phi_{1,1},\{\{l,s\},\{t,r\}\},\emptyset)$ and $(\phi_{1,1},\{\{l,t\},\{s,r\}\},\emptyset)$;
    \end{itemize}
  \item Types for $\psi_{2,0}$ (symmetrically for $\psi_{0,2}$):
    \begin{itemize}
    \item for every $S \subseteq \{s,t\}$, the type $(\psi_{2,0},\{\{l,l'\}\}, S)$,
    \item the type $(\psi_{2,0},\{\{s,t\},\{l,l'\}\}, \emptyset)$,
    \item the type $(\psi_{2,0},\{\{l,s\},\{l',t\}\}, \emptyset)$,
    \end{itemize}
  \item Types for $\psi_{2,1}$ (symmetrically for $\psi_{1,2}$):
    \begin{itemize}
    \item the type $(\psi_{2,1},\{\{l,r\},\{l',t\}\}, \{s\})$,
    \item for every $S \in \{\emptyset, \{t\}\}$, the type
      $(\psi_{2,1},\{\{l,l'\},\{s,r\}\}, S)$,
    \item for every $S \in \{\emptyset, \{s\}\}$, the type
      $(\psi_{2,1},\{\{l,l'\},\{t,r\}\}, S)$,
    \end{itemize}
  \item Types for $\psi_{2,2}$:
    \begin{itemize}
    \item the type
      $(\psi_{2,2},\{\{l,l'\},\{s,r\},\{t,r'\}\},\emptyset)$,
    \item the type $(\psi_{2,2},\{\{r,r'\},\{l,s\},\{l',t\}\},\emptyset)$,            
    \item the type
      $(\psi_{2,2},\{\{l,l'\},\{r,r'\},\{s,t\}\},\emptyset)$,
    \item for every $S \subseteq \{s,t\}$, the type
      $(\psi_{2,2},\{\{l,l'\},\{r,r'\}\}, S)$,
    \item the type $(\psi_{2,2},\{\{l,r\},\{l',r'\}\}, \{s,t\})$,
    \end{itemize}
  \end{itemize}
  Note that $\RRR(l)$ can be computed in constant time and actually contains all types of $\pe(l)$ and
  therefore also satisfies \enref{R1} and \enref{R2}.
\end{proof}

We are now ready to show~\Cref{lem:sh-fpt-bw-given}.
\lemshfpt*

\begin{proof}   We start by showing how to compute the set of types $\RRR(b)$
  for every node $b$ of the \SPQR{}-tree
  $\BBB$, which we will achieve using a bottom-up dynamic programming algorithm
  along $\BBB$. As stated in Section~\ref{sec:pre}, we assume that
  $\BBB$ is rooted at some Q-node with edge $e$, whose child $b_r$ has
  $e$ as its reference edge. Starting at the leaves of $\BBB$, we
  use~\Cref{lem:Qnode} to compute $\RRR(l)$ for every leaf node $l$ of
  $\BBB$ in constant time. We then iteratively consider the inner
  nodes $b$ for which $\RRR(c)$ for all children $c$ of $b$ in $\BBB$ have
  already been computed. Let $b$ be a node of $\mathcal{B}$ with
  $\ell$ children. If $b$ is an R-node or an S-node, we
  use~\Cref{lem:Rnode} to compute $\RRR(b)$ in time
  $\bigoh(315^{\omega}\ell+\ell^3)$, where $\omega$ is the branchwidth
  of $\sk(b)$. Otherwise $b$ is a P-node and we use~\Cref{lem:Pnode}
  to compute $\RRR(b)$ in time $\bigoh(\ell))$.
  By applying the above procedure exhaustively, we obtain the set
  $\RRR(b)$ of types for all nodes apart
  from the root node $r$ of $\BBB$; this is because $r$ is a Q-node
  which is not a leaf of $\BBB$. Let $b_r$ be the unique child of $r$
  in $\BBB$ and let $e=(s,t)$ be the reference edge of $b_r$ (which is
  also the reference edge of $r$, because $r$ is a Q-node). Since
  $b_r$ is not the root of $\BBB$, we have computed the set
  $\RRR(b_r)$ of types for $b_r$. We now claim that $G$ is
  subhamiltonian if and only if $(\psi_\emptyset,\emptyset, \{s,t\})
  \in \RRR(b_r)$, where
  $\psi_\emptyset(L)=\psi_\emptyset(R)=\emptyset$. Towards showing the
  forward direction of the claim suppose that $G$ is
  subhamiltonian. It then follows from~\Cref{lem:nicedrawing} that $G$
  has a witness $(D,D_H,G_H,H)$ that respects $\BBB$.
  Consequently, we
  obtain from \enref{R2} that
  $\type_W(b_r) \in \RRR(b_r)$. Therefore, if
  $\type_W(b_r)=(\psi_\emptyset,\emptyset,\{s,t\})$, then we are done.
  Otherwise, consider first the case that $H$ contains
  the edge $\{s,t\}$ of $G$. In this case we can replace the edge in $H$ by
  adding a new edge between $s$ and $t$, which we can draw arbitrary
  close to the original edge in $G$ between $s$ and $t$. Therefore, we can
  assume that $H$ does not contain the edge of $G$ between $s$ and
  $t$. But then, we can obtain a new witness $W'=(D,D_H',G_H,H)$ that
  respects $\BBB$ such that
  $\type_{W'}(b_r)=(\psi_\emptyset,\emptyset,\{s,t\})$ by changing the
  drawing $D_H$ of $H$ into the new drawing $D_H'$
  such that $H$ touches the noose $N_{b_r}$ only at $s$ and
  $t$; this can be achieved by replacing every subcurve in $D_H$ of
  $H$ outside of $N_{b_r}$ with a curve inside $N_{b_r}$ drawn
  arbitrarily close to $N_{b_r}$. 
  Towards showing the reverse direction, suppose that
  $(\psi,\emptyset,\{s,t\}) \in \RRR(b_r)$. By the definition of a
  type, it follows that there is a Hamiltonian cycle $H$ for
  $G$ that can be drawn together with $G\setminus \{e\}$ entirely
  within the noose $N_b$. But then, $H$ can also be drawn together
  with $G$ and therefore shows that $G$ is subhamiltonian, as
  required.

  The run-time of the algorithm is at most the number of nodes of
  $\BBB$, which because of~\Cref{lem:computeSPQR} is at most
  $\bigoh(|E(G)|)=\bigoh(|V(G)|)$ (because $G$ is planar), times the maximum
  time required by the dynamic programming procedure at every node of
  $\BBB$. Since the latter is dominated by the time required for
  R-nodes, i.e., $\bigoh(315^{\omega}+\ell^3)$ due
  to~\Cref{lem:Rnode}, where $\omega$ is the branchwidth of $\sk(b)$
  and $\ell$ is the number of children of the node in $\BBB$, we
  obtain $\bigoh((315^{\omega}|V(G)|+|V(G)|^3))$ as the total
  run-time of the algorithm.
\end{proof}
\fi

\section{An Algorithm Using the Feedback Edge Number}\label{sec:btfen}

In this section, we establish the following theorem:
 
\begin{theorem}
  \label{thm:fen}
  \bt{} is fixed-parameter tractable when parameterized by the feedback edge number of the input graph.
\end{theorem}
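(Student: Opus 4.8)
The plan is to prove Theorem~\ref{thm:fen} by first constructing a linear kernel for \bt{} parameterized by the feedback edge number $k=\fen(G)$, and then solving the kernel by the brute-force $2^{\bigoh(n\log n)}$ algorithm (or, for $\ell=2$, by Corollary~\ref{cor:main}). Let $F\subseteq E(G)$ be a minimum feedback edge set, computed in linear time via Fact~\ref{fact:comp-fes}, so that $G-F$ is a forest. The key structural observation is that in $G-F$, the only ``interesting'' vertices are those incident to an edge of $F$ or those of degree at least $3$ in $G-F$; call the set of these vertices $\mathrm{Br}$. Since $|F|=k$, there are at most $2k$ vertices incident to $F$, and since a forest has at most as many degree-$\ge 3$ vertices as leaves, and the leaves of $G-F$ must themselves be incident to an edge of $F$ (otherwise they would be degree-$1$ vertices in $G$, which can be handled separately or are harmless), we get $|\mathrm{Br}|=\bigoh(k)$. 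Consequently $G-F$, after suppressing degree-$2$ vertices, decomposes into $\bigoh(k)$ paths between vertices of $\mathrm{Br}$ (plus possibly some pendant trees, which I will argue can be pruned or bounded), i.e.\ the graph is $\bigoh(k)$ such paths plus $k$ extra edges.

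The first main step is the kernelization: I would show that each long induced path $P=v_0v_1\cdots v_t$ in $G-F$ between two branch vertices (with all internal $v_i$ of degree $2$ in $G$) can be shortened to one of bounded length without changing whether $G$ admits an $\ell$-page book embedding. The intuition is that a long path of degree-$2$ vertices behaves ``trivially'' in any book embedding: once the two endpoints $v_0,v_t$ are placed in the spine order and the page assignments of the two boundary edges of $P$ are fixed, the entire path can be routed greedily --- each internal vertex and edge has essentially local constraints. More precisely, I would argue that replacing $P$ by a path of some constant length $c$ (say $c=3$ or $c=4$, enough to realize all ``modes'' of how the path can wind) yields an equivalent instance; this is analogous to standard ``path-shortening'' reductions for feedback-edge-number kernels (cf.\ \cite{UhlmannW13}). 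Doing this for every path simultaneously reduces the instance to one with $\bigoh(k)$ vertices and $\bigoh(k)$ edges, i.e.\ a kernel of linear size. For the $\ell=2$ case one must be slightly more careful to ensure the kernel is linear (not just polynomial) in $k$, so that feeding it to Corollary~\ref{cor:main} gives the claimed $2^{\bigoh(\sqrt k)}\cdot n^{\bigoh(1)}$ running time promised in the introduction.

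The second step is routine: on a kernel with $\bigoh(k)$ vertices, run the trivial brute-force algorithm that enumerates all spine orders and all page assignments in time $2^{\bigoh(k\log k)}$, and check each in linear time via the known linear-time verification for a fixed order~\cite{Haslinger1999}; this establishes fixed-parameter tractability. (For $\ell=2$ one instead invokes Corollary~\ref{cor:main} / the treewidth algorithm of Theorem~\ref{the:sh-fpt-tw} on the kernel.) Combining: kernelization in $\bigoh(n+m)$ time followed by a kernel-size-only brute force yields total time $f(k)\cdot n^{\bigoh(1)}$, proving the theorem.

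\textbf{Main obstacle.} The crux is the path-shortening reduction, i.e.\ proving that replacing a long degree-$2$ path by a constant-length one preserves the answer. The subtlety is that a path of degree-$2$ vertices is \emph{not} entirely passive: its vertices occupy positions in the spine order and can ``separate'' endpoints of other edges, so shortening it could in principle destroy a needed nesting/crossing pattern, and lengthening it could create room for a routing that was previously impossible. The right way to handle this is to define, for a path $P$ with fixed endpoint positions and fixed page-colors of its two boundary edges, a bounded set of ``interface types'' describing (i) how $P$ is threaded relative to the endpoints' neighborhood in the spine and (ii) which page each edge of $P$ uses, and then prove a gluing lemma: any book embedding of $G$ restricts to one of these finitely many types on $P$, and conversely any type can be realized by a path of length $\le c$. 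This is conceptually similar in spirit to the ``types'' machinery developed earlier in the paper for \SPQR{}-trees and sphere-cut decompositions, though technically much lighter since the underlying object is just a path; nonetheless, making the case analysis airtight --- especially ensuring that a shortened path cannot be ``forced'' to cross itself or an $F$-edge in a way the long path avoided --- will be where the real work lies.
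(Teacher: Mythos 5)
Your high-level plan---prune degree-one vertices, shorten degree-two paths, then run a brute force on the bounded kernel (for $\ell=2$, invoke Corollary~\ref{cor:main} instead)---is exactly the paper's strategy, and your sketch of the $\ell=2$ case matches the paper's linear kernel (Theorem~\ref{thm:fenhamkernel}, proved via Lemma~\ref{lem:sh-fes-degtwo}) essentially verbatim.

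There is, however, a genuine gap for $\ell\geq 3$. You assert that each long degree-two path can be shortened to constant length $c\in\{3,4\}$ and hence that the kernel is linear for every $\ell$. The paper does not claim this and its proof strongly suggests it is not obtainable by the means you describe: the kernel it produces for $\ell\geq 3$ has size $2^{\bigoh(k^2)}$ (Theorem~\ref{thm:fenbig}), and the shortening threshold in Lemma~\ref{lem:fix} is $(|V(G)|+1)\,2^{|\mathcal P|}|\mathcal P|$, not a constant. The obstacle is exactly the one you flag and then dismiss: a degree-two path is not local. In an $\ell$-page embedding it can snake back and forth across a fixed region of the spine, and the number of times it must re-enter that region depends on which other branch paths and $F$-edges already occupy it. The ``interface types'' you propose---describing only how $P$ threads ``relative to the endpoints' neighborhood''---cannot capture this, since the winding can occur in regions far from both endpoints. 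In the paper's Lemma~\ref{lem:fix}, the bound on the number of internal ``jump'' vertices needed when rerouting the $i$-th path doubles relative to the $(i-1)$-th, because earlier rerouted paths add their own vertices into the contested region; unrolling this recursion and iterating over paths produces the $2^{\bigoh(k^2)}$ size, and the argument makes essential use of $\ell\geq 3$ (each degree-two vertex blocks at most two pages, so a spare page always exists to jump over it). Your sketch offers no mechanism for capping the winding at a constant, so as written it establishes only the $\ell=2$ half of the theorem; for $\ell\geq 3$ one needs the full recursive shortening argument with a super-polynomial (though still $f(k)$-bounded) length threshold.
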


\ifshort
The result is achieved by separately handling two cases: one where the targeted number of pages is greater than $2$, or where it is precisely $2$. Both cases are handled by a kernelization procedure, and in both cases it is easy to show that pendant vertices can be safely removed. At this point, the target graph consists of a tree plus $k$ edges, whereas the only part that may remain large in this tree are paths of degree-$2$ vertices. In the former case, we obtain a non-trivial proof that allows us to reduce the maximum length of such a path to length that is bounded by an exponential function of the feedback edge number. In the latter case (which is equivalent to solving \SH), the reduction step is easier and we in fact obtain a linear kernel for the problem:

  \begin{theorem}
  \label{thm:fenhamkernel}
  \SH{} parameterized by the feedback edge number $k$ admits a kernel 
  with at most $12k-8$ vertices and at most $14k-9$ edges.
\end{theorem}

Moreover, by combining Theorem~\ref{thm:fenhamkernel} with the subexponential algorithm of Corollary~\ref{cor:main}, we can slightly strengthen our main result as follows.
\fi

 \iflong
   To obtain the result, we distinguish whether the bound on the number of pages is $2$, or more. We begin with the latter case.

\subsection{The Case with More than Two Pages}
The remainder of this section is devoted to a proof of
Theorem~\ref{thm:fenbig} (stated below), which is based on providing an (exponentially sized) kernel for the problem. 
We begin by introducing a few section-specific definitions.

\begin{theorem}\label{thm:fenbig}
  When restricted to inputs $(G,k)$ such that $k\geq 3$, \bt~is fixed-parameter tractable parameterized by the feedback edge number.
\end{theorem}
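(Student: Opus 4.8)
The plan is to prove \Cref{thm:fenbig} by \emph{kernelization}: given an instance $(G,k)$ with $\fen(G)\le k$ and $k\ge 3$, I would compute in polynomial time an equivalent instance whose graph has $2^{\bigoh(k)}$ vertices, after which fixed-parameter tractability is immediate since \bt{} is decidable (e.g.\ by brute force over all vertex orderings and page assignments of the kernel). Two cheap preprocessing steps come first. First, iteratively delete vertices of degree at most $1$: this is safe, because such a vertex $v$ with (at most one) neighbour $u$ can always be reinserted into any $\ell$-page embedding of $G-v$ immediately next to $u$, with the edge $uv$ drawn as a tiny arc on page $1$ in the freshly created gap, introducing no crossing; the converse direction is trivial since $G-v$ is a subgraph of $G$. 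Hence we may assume $\delta(G)\ge 2$. Second, observe that $\ell$ is essentially bounded by $k$: if $F$ is a minimum feedback edge set (computable in linear time by \Cref{fact:comp-fes}) then $G-F$ is a forest, hence outerplanar, hence has a $1$-page book embedding; putting each of the $k$ edges of $F$ on its own fresh page turns this into a $(k+1)$-page embedding of $G$, so if $\ell\ge k+1$ we output \textsc{yes}, and otherwise $\ell\le k$.

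With $\delta(G)\ge 2$, the graph $G$ is obtained from a \emph{core multigraph} $H$ --- the graph on the vertices of $G$ of degree at least $3$, with every maximal path of degree-$2$ vertices suppressed to a single edge --- by replacing each edge of $H$ by a path; the connected components of $G$ containing no vertex of degree $\ge 3$ are plain cycles and are handled trivially. Since $\delta(H)\ge 3$ and suppressing degree-$2$ vertices preserves the feedback edge number, a standard counting argument gives $|V(H)|,|E(H)|=\bigoh(k)$. Consequently $|V(G)| = \bigoh(k) + \sum_{e\in E(H)} (\text{length of the path of }G\text{ realising }e)$, so it suffices to exhibit a reduction rule that shortens every such degree-$2$ path to length at most $2^{\bigoh(k)}$, and then iterate it.

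The heart of the argument is this shortening rule, which I would establish by a typing/pumping argument. Fix a maximal degree-$2$ path $P=(v_0,v_1,\dots,v_{m+1})$ and, given an $\ell$-page embedding, associate to each internal vertex $v_i$ a \emph{type}: the nesting structure, on each of the at most $k+1$ pages, of the ``backbone'' arcs --- the $\bigoh(k)$ arcs coming from the feedback edges and from the $\bigoh(k)$ edges incident to branch vertices that start the core paths --- that are pending across the spine cut just after $v_i$, together with the pages of the two path edges at $v_i$. There are only $2^{\bigoh(k)}$ possible types, so if $m>2^{\bigoh(k)}$ then two internal vertices $v_i,v_j$ with $i<j$ receive the same type; one then deletes $v_{i+1},\dots,v_j$ and reconnects $v_i$ to $v_{j+1}$, and argues that equality of the types lets this operation be realised inside the embedding without creating crossings, and, conversely, that every $\ell$-page embedding of the shortened graph can be ``pumped'' back to one of $G$. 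Applying this to all core paths yields a kernel with $2^{\bigoh(k)}$ vertices, and hence \Cref{thm:fenbig}.

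The step I expect to be the main obstacle is exactly the correctness of this reduction rule in both directions. The delicate point is that the vertices lying between $v_i$ and $v_j$ in the linear order need not all belong to $P$ --- branch vertices and pieces of other core paths may be interleaved --- so the excised part is not a self-contained sub-embedding that one may simply cut out; the notion of type therefore has to be engineered precisely enough that ``same type'' genuinely certifies that the two spine cuts can be identified, that the single new arc $v_iv_{j+1}$ can be rerouted on some appropriate page, and that no page acquires a crossing, all while leaving the positions and arcs of every non-path vertex untouched. Making this bookkeeping rigorous, verifying the symmetric pumping direction, and disposing of the degenerate cases (a component being a single cycle, a core path forming a loop at one branch vertex, and so on) is where the real work lies; everything else is routine. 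Finally, the same scheme with a considerably simpler path-shortening analysis in the special case $\ell=2$ yields the linear kernel stated separately, which also feeds back into the subexponential algorithm.
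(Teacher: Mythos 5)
Your overall strategy---delete pendant vertices, bound the ``core'' (branch vertices, feedback-edge endpoints, and the maximal degree-$2$ paths between them) by $\bigoh(\fen)$, and then shorten the long degree-$2$ paths to obtain a kernel---is exactly the paper's strategy, and the easy direction of the shortening rule (re-lengthening a path by subdividing an edge, possible whenever the path has length $\ge 2$) matches the paper's Lemma~\ref{lem:longer}. The observation that $\ell > \fen$ can be answered \textsc{yes} immediately is a pleasant extra simplification but is not actually needed. The genuine gap is in the forward direction of your shortening rule, and it sits precisely where you suspected it would. ``Same type at $v_i$ and $v_j$'' in your sense controls the set of \emph{backbone} arcs (feedback edges and the $\bigoh(\fen)$ path-starting edges) pending across the two spine cuts and the pages of the two local path edges, but it says nothing about what happens strictly between the cuts: between $v_i$ and $v_{j+1}$ on the spine there can be branch vertices, endpoints of feedback edges, and---crucially---arbitrarily many vertices and edges of \emph{other} core paths, with arbitrary page assignments and nesting. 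After deleting $v_{i+1},\dots,v_j$ you still have to route the single new arc $v_iv_{j+1}$ across that whole interval on some page without crossings, and ``same type at the two endpoints'' does not certify a crossing-free corridor through the middle: a page may be blocked by other-path arcs that both start and end strictly inside $(v_i,v_{j+1})$, which the type is blind to. Since you also cannot collapse the spine interval (the interleaved non-$P$ vertices must stay put), the excision is not a local sub-embedding surgery, and no amount of refining the type on the two cuts alone will repair this without blowing up the number of types.

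The paper's Lemma~\ref{lem:fix} circumvents this with a different shortening mechanism. It fixes the linear order $L_G$ restricted to the $\bigoh(\fen)$ ``fixed'' vertices, looks at one \emph{region} (gap between consecutive fixed vertices) at a time, and for each long path rebuilds its excursion into that region vertex-by-vertex: the replacement path hops over every vertex of the other paths that currently sits in that region, assigning each hop edge to a page that is free at that vertex. The key point is that every vertex being hopped over has degree $2$ in $G$, so at most two pages are forbidden at it, and \emph{this} is where $k\ge 3$ is used---there is always a third page available for the hop. Because the rerouting is done hop-by-hop against the actual local obstructions rather than by matching global cut types, it never has to argue about what the embedding looks like across a long interval, which is exactly the argument your pumping step lacks. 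The price is a worse bound: the region-by-region recursion in the paper produces a $2^{\bigoh(\fen^2)}$-size kernel rather than the $2^{\bigoh(\fen)}$ you were aiming for, and closing that quantitative gap with a pumping argument of your flavor would indeed be interesting, but as written your reduction rule is not established.
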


\subparagraph{Notation and Definitions.}
 Let $G$ be an $n$-vertex graph and $L=(\prec,\sigma)$ be a $k$-page embedding of $G$.
For the purposes of this section, it will be useful to think of the linear order $\prec$ on $V(G)=\{v_1,\ldots,v_n\}$ as a set of $n$ points on the real line such that $v_1< v_2<\ldots< v_n$. With this interpretation in mind, we can define a \emph{region} as the open interval between two consecutive elements of $V(G)$. 
 
Let $U\subseteq V(G)$ be a subset of vertices of $G$. We say that a path $P$ of $G$ is {\it maximal proper for $U$} if $P$ (1) is not only a single edge, (2) every internal vertex $v$ of $P$ satisfies $\textup{deg}_G(v)=2$ and $v\not\in U$, and (3) $P$ is maximal (with respect to containment) among all paths satisfying properties (1) and (2).

Let $L=(\prec,\sigma)$ be a book embedding, we say that edge $e'=u'v'$ is {\it nested} under the edge $e=uv$ if $\sigma(e)=\sigma(e')$ and $(u',v')\subset (u,v)$, that is either
$u\prec u'\prec v'\prec v$ or $u=u'\prec v'\prec v$ or $u\prec u'\prec v'=v$.
   Moreover, we say that an edge {\it touches} a region if one of its endpoints belongs to the interior of that region, and that a path {\it touches} a region $t$ times if $t$ of its edges touches that region. 
We use $\sigma(v)=\{\sigma(e)~|~e\in E(G)\wedge v\in e\}$ to denote the set of all the page numbers of the edges incident to $v$.

We say that a graph $G$ and a set of paths $\mathcal{P}$ are {\it near-disjoint} if the vertices that are in common among $G$ and the paths of $\mathcal{P}$ are exactly their endpoints.
Let $G$ be a graph and $\mathcal{P}$ be a set of paths such that $G$ and $\mathcal{P}$ are near-disjoint: we say that a set of paths $\mathcal{P}'$ {\it serves as} $\mathcal{P}$ if $G$ and $\mathcal{P}'$ are near-disjoint, $|\mathcal{P}'|=|\mathcal{P}|$, and there is a path in $\mathcal{P}$ with $u$ and $v$ as endpoints if and only if there is a path in $\mathcal{P}'$ with $u$ and $v$ as endpoints.
If a graph $G$ and a set of paths $\mathcal{P}$ are near-disjoint, we denote the graph obtained by inserting $P$ into $G$ as $G\curlyvee \mathcal{P}$.

 \newcommand{\Gr}{G_{\geq 2}}
\newcommand{\Ll}{T_1}
\newcommand{\Ii}{T_{\geq 3}}
\subparagraph{The Branching Step.}
 In the first part of the algorithm, we apply brute-force branching and a simple preprocessing rule. We begin by exhaustively removing all pendant vertices in the graph.
 
  \begin{lemma}
\label{lem:isol}
Let $G$ be a graph and $v$ be a vertex of degree at most one in
$G$. Then $(G,k)$ and $(G-v,k)$ are equivalent instances of \bt.
\end{lemma}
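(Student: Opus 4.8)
The plan is to prove the two directions of the equivalence separately, splitting the backward direction according to whether $v$ is isolated or has exactly one neighbour. In both cases the argument is elementary: deleting a vertex can never create crossings, and a vertex of degree at most one can always be reinserted ``locally'' without disturbing the rest of the embedding.

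\textbf{Forward direction.} First I would observe that if $\langle \prec,\sigma\rangle$ is a $k$-page book embedding of $G$, then the restriction of $\prec$ to $V(G)\setminus\{v\}$ together with the restriction of $\sigma$ to $E(G-v)$ is a $k$-page book embedding of $G-v$: the restriction of a linear order is still a linear order, and since the set of edges only shrinks, the crossing condition---that no page contains edges $uv,wx$ with $u\prec w\prec v\prec x$---cannot become violated when passing to a subset of the edges. Hence $G-v$ admits a $k$-page book embedding whenever $G$ does.

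\textbf{Backward direction.} Suppose $\langle \prec',\sigma'\rangle$ is a $k$-page book embedding of $G-v$. If $\deg_G(v)=0$, I would simply append $v$ as the maximum element of the spine order and keep $\sigma'$ unchanged; no edge is added, so the embedding remains crossing-free, and it is a valid $k$-page book embedding of $G$. If $\deg_G(v)=1$ with (unique) neighbour $u$, I would insert $v$ into the spine order immediately after $u$, so that no vertex of $G-v$ lies strictly between $u$ and $v$, and extend $\sigma'$ by assigning the new edge $uv$ to page $1$ (which exists since $k\geq 1$). Two checks remain. First, the relative order of the vertices of $G-v$ is unchanged, so no two edges of $E(G-v)$ cross in $\langle\prec,\sigma\rangle$ that did not already cross in $\langle\prec',\sigma'\rangle$. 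Second, the new edge $uv$ cannot cross any edge $wx$ with $\sigma(wx)=1$: such a crossing would require $w$ or $x$ to lie strictly between $u$ and $v$ in $\prec$, which is impossible by the choice of the insertion point. Thus $\langle \prec,\sigma\rangle$ is a valid $k$-page book embedding of $G$.

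Since the construction in each direction is explicit and the verifications are immediate, I do not expect any real obstacle here; the only points worth stating carefully are that deleting edges preserves the non-crossing property, and that a pendant (resp.\ isolated) vertex can be placed next to its neighbour (resp.\ at the end of the spine) without introducing crossings, using that at least one page is available. The argument applies verbatim to multi-graphs, since a vertex of degree at most one is incident to at most one edge, so no parallel edges at $v$ arise.
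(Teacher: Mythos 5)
Your proof is correct and follows essentially the same approach as the paper: the forward direction observes that restricting an embedding to a subgraph cannot create crossings, and the backward direction inserts a pendant vertex immediately next to its unique neighbour (or an isolated vertex anywhere) and assigns the new edge to page $1$, noting that no crossing can arise because no vertex lies between $u$ and $v$ on the spine. The only cosmetic difference is that you place an isolated vertex at the end of the spine while the paper allows any region; both work.
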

\iflong
\begin{proof}
We show that $(G,k)$ is a yes-instance of \bt~if and only if $(G-v,k)$ is a yes-instance of the same problem.
The forward direction follows directly from the fact YES instances are preserved when considering subgraphs: if $H$ is a subgraph of $G$, a $k$-page book embedding of $H$ can be obtained from a $k$-page book embedding of $G$ by removing $G-H$. 

Consider now a $k$-page book embedding $L$ for $G-v$. If $v$ is an isolated vertex of $G$, a $k$-page book embedding for $G$ can be obtained from $L$ by inserting $v$ in any region of $L$. Suppose $v$ is a leaf of $G$ and let $u_v$ be the unique neighbor of $v$ in $G$. A $k$-page book embedding $L'$ for $G$ can be obtained from $L$ by inserting $v$ in any of the two regions of $L$ adjacent to $u_v$ and by setting $\sigma(vu_v)=1$. Note that $L'$ is indeed a $k$-page book embedding because the edge $vu_v$ neither nests nor intersects any other edge (on any page).
\end{proof}
\fi

Let $G$ be a graph and let $\Gr$ be the graph obtained from $G$ by removing, exhaustively, every vertex of degree at most one. Note that every vertex of $\Gr$ has at least two neighbors.

\begin{corollary}[of Lemma~\ref{lem:isol}]\label{cor:redu}
Then instances $(G,k)$ and $(\Gr,k)$ are equivalent instances of \bt.
\end{corollary}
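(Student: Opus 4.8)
The plan is to derive this immediately from Lemma~\ref{lem:isol} by a straightforward induction. First I would fix a finite sequence $G = G_0, G_1, \dots, G_t = \Gr$ witnessing the exhaustive removal: for each $i < t$ the graph $G_{i+1}$ is obtained from $G_i$ by deleting a single vertex $v_i$ with $\deg_{G_i}(v_i) \le 1$, and $G_t = \Gr$ has minimum degree at least two. Such a sequence exists and is finite since each deletion strictly decreases the number of vertices. Applying Lemma~\ref{lem:isol} to the pair $(G_i, v_i)$ shows that $(G_i, k)$ and $(G_{i+1}, k)$ are equivalent instances of \bt{} for every $i < t$, and chaining these equivalences (a trivial induction on $t$) yields that $(G, k)$ and $(\Gr, k)$ are equivalent.

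The only point worth spelling out is that $\Gr$ is well-defined, i.e., it does not depend on the order in which vertices of degree at most one are removed: deleting a vertex can never raise the degree of any remaining vertex, so the graph resulting from an exhaustive removal is exactly the $2$-core of $G$ (the unique inclusion-maximal subgraph of minimum degree at least two). Hence the claimed equivalence holds for this uniquely determined $\Gr$. There is no real obstacle here; the corollary is an immediate consequence of Lemma~\ref{lem:isol} together with the finiteness of the removal process.
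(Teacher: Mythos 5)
Your proof is correct and matches the paper's intent; the paper states this as an immediate corollary of Lemma~\ref{lem:isol} without an explicit argument, and your induction over the removal sequence (together with the observation that $\Gr$, i.e.\ the $2$-core, is well defined regardless of removal order) is exactly the routine chaining the paper takes for granted.
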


Let $F$ be a minimum feedback edge set of $\Gr$; note that $|F|=\fen(G)=\fen$, since none of the vertices or edges that are present in $G$ but not in $\Gr$, that is no element of $(V(G)\setminus V(\Gr))\cup (E(G)\setminus E(\Gr))$, is part of a cycle of $G$.
We define $G_F$ to be the tree $\Gr-F$. We denote with $\Ll$ and $\Ii$ the set of leaves and of vertices of degree at least 3 of $G_F$, respectively. The elements of $\Ii$ are also usually called {\it branching vertices}. 
 
Let $V_F$ be the set of all the vertices in $G_F$ that are adjacent to an edge of $F$ in $\Gr$, and note that $\Ll\subseteq V_F$.
 Let $B_F=\Ii\cup V_F$ and let $\mathcal{P}_F$ be the set of all maximal proper paths of $G_F$ for $B_F$.

\newcommand{\fun}{f(\fen)}
\newcommand{\gl}{2^{|\mathcal{P}|}|\mathcal{P}|}
\newcommand{\sh}{(|B|+1)\gl}
\newcommand{\lo}{\mathcal{P}_{\leq}}

  \begin{lemma}
\label{lem:count}
{\bf (a)} $|\Ii|\leq |\Ll|\leq |V_F|\leq 2\fen$ and {\bf (b)} $|\mathcal{P}_F|\leq |B_F|\leq 4\fen$.
\end{lemma}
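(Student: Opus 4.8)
\textbf{Proof plan for Lemma~\ref{lem:count}.}

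The plan is to prove both bounds by standard counting arguments on the forest $G_F = \Gr - F$, exploiting that $\Gr$ has minimum degree at least $2$ and that $|F| = \fen$. For part~\textbf{(a)}, I would first bound $|V_F|$. Every vertex in $V_F$ is, by definition, incident in $\Gr$ to at least one edge of $F$; since each edge of $F$ has exactly two endpoints, $|V_F| \le 2|F| = 2\fen$. Next I would establish $|\Ll| \le |V_F|$: since $\Gr$ has minimum degree at least $2$, a leaf $v$ of $G_F$ has $\deg_{G_F}(v) = 1$, hence must be incident to an edge of $F$ in $\Gr$; thus $\Ll \subseteq V_F$ and the inequality follows. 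Finally, for $|\Ii| \le |\Ll|$ I would use the classical fact that in any forest, the number of vertices of degree at least $3$ is at most the number of leaves. This can be seen from the handshake identity on each tree component $C$ of $G_F$: writing $n_i$ for the number of degree-$i$ vertices, $\sum_i i\cdot n_i = 2|E(C)| = 2(|V(C)|-1) = 2\sum_i n_i - 2$, so $\sum_i (i-2) n_i = -2 < 0$, i.e. $\sum_{i\ge 3}(i-2)n_i < n_1$ (counting the isolated-vertex case and summing over components is routine and only strengthens the inequality), which gives $n_{\ge 3} \le \sum_{i \ge 3}(i-2)n_i < n_1$. Summing over all components yields $|\Ii| \le |\Ll|$.

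For part~\textbf{(b)}, the bound $|B_F| \le 4\fen$ is immediate from part~\textbf{(a)}: $|B_F| = |\Ii \cup V_F| \le |\Ii| + |V_F| \le 2\fen + 2\fen = 4\fen$. The remaining inequality $|\mathcal{P}_F| \le |B_F|$ requires bounding the number of maximal proper paths of $G_F$ for $B_F$. The key observation is that each such path $P$ has both of its endpoints in $B_F$ (its internal vertices all have degree $2$ and lie outside $B_F$, and maximality forces the endpoints to have degree $\ne 2$ in $G_F$ or to lie in $B_F$—in either case, by the definition of $B_F$ together with $\Ll \subseteq B_F$, the endpoints belong to $B_F$), and the paths in $\mathcal{P}_F$ are internally vertex-disjoint and edge-disjoint. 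I would then think of contracting each path in $\mathcal{P}_F$ to a single edge; this produces a forest $G_F'$ on vertex set $B_F$ (plus possibly some leftover degree-$2$ vertices not on any such path, but those do not affect the count of paths). Since $G_F'$ is a forest, it has at most $|B_F| - 1 < |B_F|$ edges, and each edge corresponds to at most one path of $\mathcal{P}_F$; hence $|\mathcal{P}_F| \le |B_F|$.

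The main obstacle I anticipate is not in the arithmetic but in being careful with the precise definition of ``maximal proper path for $B_F$'' and the structure of $G_F$: one must correctly argue that the endpoints of every path in $\mathcal{P}_F$ lie in $B_F$, handle degenerate cases (components that are single paths, or paths of length exactly $2$ with the middle vertex of degree $2$), and make sure the contraction argument does not overcount—specifically, that two distinct maximal proper paths cannot share an edge and that after contraction distinct paths map to distinct edges of the resulting forest. Another subtlety is whether $G_F$ can have isolated vertices or tree components disjoint from all of $B_F$; such components contain no vertex of $B_F$, but since $\Gr$ has minimum degree $\ge 2$, every leaf of such a component would have to be incident to an $F$-edge, contradicting the component being disjoint from $B_F$ unless the component is a single cycle's worth of vertices—which cannot happen in a forest—so in fact no problematic component arises, and this should be spelled out. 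Once these structural points are nailed down, both parts follow by the routine counting sketched above.
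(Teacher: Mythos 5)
Your proposal is correct and follows essentially the same route as the paper: bound $|V_F|$ by counting endpoints of $F$, observe $\Ll\subseteq V_F$ via the degree condition, invoke the standard fact that branching vertices are outnumbered by leaves in a tree, take a union bound for $|B_F|$, and contract the maximal proper paths to obtain an auxiliary tree whose edge count bounds $|\mathcal{P}_F|$. The only differences are cosmetic: you spell out the handshake-lemma derivation of $|\Ii|\le|\Ll|$ and flag the degenerate-case bookkeeping around the contraction, both of which the paper simply asserts as routine.
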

\iflong
\begin{proof}
Let us first prove that $|\Ii|\leq |\Ll|$. It is immediate to note that in any tree the number of branching vertices is at most the number of leaves. 
Recall that $\Ll\subseteq V_F$. Finally recall is $V_F$ is defined as the set of all the vertices in $\Gr$ that are adjacent to an edge of $F$ and so the number of vertices in $V_F$ is at most twice the size of $F$. This completes the proof of {\bf (a)}.

By recalling that $B_F=\Ii\cup V_F$, we obtain that $|B_F|\leq 4\fen$.
Now let us consider the auxiliary graph $G'_F$ that is obtained from $G_F$ by 
contracting to one edge every path of $\mathcal{P}_F$.

Note that $G'_F$ is a tree and there are two 1-to-1 correspondences:  one between the vertices of $G'_F$ and $B_F$, and one between the edges of $G'_F$ and $\mathcal{P}_F$.
It follows that 
 $|\mathcal{P}_F|=|E(G'_F)|\leq |V(G'_F)| -1\leq |B_F|$. This proves {\bf (b)}.
\end{proof}
\fi

\iflong
We remark that, in view of Fact~\ref{fact:comp-fes}, all of these objects are efficiently computable.

\begin{observation}
\label{obs:gf}
It is possible to compute $\Gr$, a minimum feedback edges set $F$ of $\Gr$, $G_F$, $\Ll$, $\Ii$, $V_F$, $B_F$ and $\mathcal{P}_F$ in polynomial time.
\end{observation}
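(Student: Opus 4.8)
The plan is to observe that every object in the list can be produced by a standard linear-time (hence certainly polynomial-time) graph computation, carried out in the order in which the objects were defined, each built from the previously computed ones. First I would compute $\Gr$ by exhaustively deleting vertices of degree at most one from $G$; this is precisely the $2$-core of $G$, and it is obtained in time $\bigoh(|V(G)|+|E(G)|)$ by maintaining a work-list of currently low-degree vertices and decrementing the stored degrees of their neighbours as vertices are removed. Next, by \Cref{fact:comp-fes}, a minimum feedback edge set $F$ of $\Gr$ is exactly $E(\Gr)\setminus E(T)$ for any spanning forest $T$ of $\Gr$, and such a spanning forest is computable in linear time; thus $F$, and with it the tree $G_F=\Gr-F$, are produced in linear time as well.

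Then I would compute the degree of every vertex of $G_F$ in a single pass, which immediately gives the leaf set $\Ll$ (the vertices of degree $1$) and the branching-vertex set $\Ii$ (the vertices of degree at least $3$). The set $V_F$ is obtained by scanning the endpoints of the edges of $F$ and marking those vertices, and $B_F=\Ii\cup V_F$ is then a single set union; all of this is linear time. So far there is nothing beyond elementary bookkeeping.

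The only step that requires a little more care is the computation of $\mathcal{P}_F$, the set of maximal proper paths of $G_F$ for $B_F$. Here I would first mark as \emph{terminal} every vertex of $G_F$ that lies in $B_F$ or whose degree in $G_F$ is different from $2$; a maximal proper path is then precisely a walk that starts and ends at a terminal, has length at least two, and passes only through non-terminal (degree-$2$, not in $B_F$) internal vertices. Starting from each terminal and following each incident edge, I would walk forward along non-terminal vertices until the next terminal is reached, recording the resulting path, and using a ``visited edge'' marker (or orienting each output path from its lower-indexed endpoint) to make sure each path of $\mathcal{P}_F$ is emitted exactly once. Since every edge of $G_F$ lies on exactly one such walk and is traversed a constant number of times, the whole enumeration runs in $\bigoh(|V(G_F)|+|E(G_F)|)$. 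I do not expect a genuine obstacle here; the only minor subtlety is correct de-duplication of the walks and handling the degenerate corner case in which a connected component of $G_F$ is a single path whose two endpoints are leaves (hence terminals) with no internal terminal, which the procedure above already accommodates.
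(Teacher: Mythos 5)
Your proof is correct, and since the paper states this Observation without providing any proof (treating it as routine bookkeeping once Fact~\ref{fact:comp-fes} is available), your argument simply makes explicit the straightforward linear-time pipeline the authors had in mind: compute the $2$-core, peel off a spanning forest to get $F$ and $G_F$, classify vertices by degree to get $\Ll$ and $\Ii$, collect feedback-edge endpoints to get $V_F$ and $B_F$, and walk degree-two chains between terminals to enumerate $\mathcal{P}_F$. Your treatment of the path-enumeration step is the only place requiring any care, and you handle the two relevant subtleties (discarding length-one walks, which by definition cannot be maximal proper paths, and breaking at vertices of $B_F$ that may happen to have degree two in $G_F$) correctly.
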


Thanks to Observation~\ref{obs:gf}, we are able to compute $B_F$, $\mathcal{P}$ and, by Lemma~\ref{lem:count}, all these sets have bounded size. 
\fi

At this point, the only issue to obtain a kernel of the desired size is that $\mathcal{P}_F$ might contain paths of unbounded length. 
 
\subparagraph{Long Path Insertion.}
 As our first step towards dealing with the long paths that remain in our instance, we show that yes-instances are preserved if we extend a path that is near-disjoint with the rest of the graph\ifshort\ (see Figure~\ref{fig:longer})\fi.

  \begin{lemma}
\label{lem:longer}
Let $G$ be a graph and $P$ be a path of length at least 2 such that $G$ and $\{P\}$ are near-disjoint. If $G\curlyvee P$ admits a $k$-page book embedding, then $G\curlyvee P'$ admits a $k$-page book embedding, where $P'$ is obtained from $P$ by subdividing an arbitrary edge of $P$ once.
\end{lemma}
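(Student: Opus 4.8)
The plan is to take a $k$-page book embedding $L=(\prec,\sigma)$ of $G\curlyvee P$ and modify it locally to accommodate one extra subdivision vertex on $P$. Write $P=(w_0,w_1,\dots,w_t)$ with $t\ge 2$, where $w_0,w_t$ are the endpoints shared with $G$ and $w_1,\dots,w_{t-1}$ are the internal vertices (which have degree exactly $2$ and do not occur in $G$). Suppose we subdivide the edge $w_{i}w_{i+1}$ by a new vertex $x$, so that $P'$ replaces $w_iw_{i+1}$ by the two edges $w_ix$ and $xw_{i+1}$. The key observation is that $x$ has degree $2$ in $G\curlyvee P'$, with both incident edges lying on the same path, so it is very flexible: I would place $x$ in $L$ immediately next to $w_i$ (in one of the two regions of $L$ adjacent to $w_i$), and set $\sigma(w_ix)=\sigma(xw_{i+1})=\sigma(w_iw_{i+1})$, i.e.\ reuse the page that the old edge $w_iw_{i+1}$ used.

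First I would argue that this placement and page assignment produces a valid book embedding. The linear order $\prec'$ of $G\curlyvee P'$ agrees with $\prec$ on all old vertices and inserts $x$ adjacent to $w_i$; the page function $\sigma'$ agrees with $\sigma$ on all old edges and assigns the page of $w_iw_{i+1}$ to both new edges. The only thing to check is that the two new edges do not cross any edge on their common page. Since $x$ is placed in the region directly adjacent to $w_i$ and no other vertex lies between $w_i$ and $x$, the edge $w_ix$ spans an ``empty'' region and hence can cross nothing. For $xw_{i+1}$: as a point set on the spine, the interval spanned by $xw_{i+1}$ is contained in the interval spanned by the old edge $w_iw_{i+1}$ together with the tiny sliver between $w_i$ and $x$, which contains no vertices; therefore any edge that would cross $xw_{i+1}$ on page $\sigma(w_iw_{i+1})$ would already have crossed $w_iw_{i+1}$ in $L$, contradicting that $L$ is a book embedding. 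Symmetrically one could instead place $x$ next to $w_{i+1}$; either works.

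I expect the only mild subtlety — and the step I would be most careful about — is the bookkeeping when $w_i$ or $w_{i+1}$ happens to be a shared endpoint $w_0$ or $w_t$ (so it also has edges into $G$), since then $w_i$ may have several incident edges and one must confirm that inserting $x$ into an adjacent region does not disturb them. This is fine: inserting a new point into a region changes no relative order among existing vertices, so no pair of old edges changes its nesting/crossing status, and the argument above about $w_ix$ spanning an empty region and $xw_{i+1}$ being ``inside'' the old edge goes through verbatim. Hence $(\prec',\sigma')$ is a $k$-page book embedding of $G\curlyvee P'$, and by repeatedly applying this lemma one obtains book embeddings of $G\curlyvee P''$ for arbitrarily long subdivisions $P''$ of $P$, which is exactly what the subsequent kernelization needs in order to recover long paths from their bounded-length representatives.
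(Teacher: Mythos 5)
Your overall plan matches the paper's — insert the new subdivision vertex $x$ in a region adjacent to the internal endpoint $w_i$, keep the page of the old edge, and argue no new crossings arise — but the key correctness claim is false and the missing case analysis is precisely where the paper's proof does its real work.

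The problematic step is the sentence asserting that any edge which would cross $xw_{i+1}$ on page $\sigma(w_iw_{i+1})$ ``would already have crossed $w_iw_{i+1}$ in $L$.'' This is wrong for edges \emph{incident to $w_i$}: such an edge shares an endpoint with $e=w_iw_{i+1}$ and hence cannot cross $e$, but $xw_{i+1}$ no longer has $w_i$ as an endpoint, so the protection disappears. Since $w_i$ is the internal vertex of $P$, its other incident edge $e'=w_{i-1}w_i$ is exactly such an edge whenever $\sigma(e')=\sigma(e)$. Concretely, take $P=(w_0,w_1,w_2)$, subdivide $e=w_1w_2$, and suppose both path edges lie on the same page. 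If the spine order is $w_1\prec w_0\prec w_2$ (so $e'$ is nested under $e$), then putting $x$ in the region just to the \emph{right} of $w_1$ gives $w_1\prec x\prec w_0\prec w_2$, and $w_0w_1$ crosses $xw_2$. If instead the spine order is $w_0\prec w_1\prec w_2$ (so $e'$ lies outside $(w_1,w_2)$), then putting $x$ just to the \emph{left} of $w_1$ gives $w_0\prec x\prec w_1\prec w_2$, and again $w_0w_1$ crosses $xw_2$. So the closing claim that ``either works'' is also false: for each configuration exactly one side of $w_i$ is safe, and which one depends on whether $w_i$'s other same-page edge is nested under $e$ or not.

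This is exactly the case distinction the paper makes. Its proof checks whether any edge is nested under $e$, and if so it takes the nested edge $e'=u'v'$ with minimum $u'$: when $u'=u$ (your $w_i$), it places the new vertex \emph{outside}, in the region immediately left of $u$; when $u'\neq u$, it places it in the region $(u,u')$; and when nothing is nested, anywhere strictly between $u$ and $v$. To repair your proof you would need to add this case distinction (or an equivalent one on the location of $w_{i-1}$ relative to $(w_i,w_{i+1})$ when $\sigma(w_{i-1}w_i)=\sigma(e)$) and verify each case separately rather than appeal to the blanket monotonicity claim.
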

\iflong
\begin{proof}
Consider a $k$-page book embedding $L$ for $G\curlyvee P$ and let $e=uv$ of be an edge of $P$. Now we want to show that we can always subdivide $e$ once. 
First, suppose there is no edge that is nested under $e$. In this case, we can delete $e$, add a vertex $w$ such that $u\prec w\prec v$, and insert the edges $uw$ and $wv$.

Since $P$ is of length at least two, one of the two endpoints of $e$ is an internal vertex of $P$, say it is $u$ (the case for $v$ can be proven symmetrically).
Consider any edge $e'=u'v'$ that is nested under $e$ and has minimum $u'$. If $u'=u$, that is $e$ and $e'$ are two consecutive edges of $P$, then let $R$ be the region of $L$ that has $u$ as the right endpoint. 
We delete $e$, add a vertex $w$ in the region $R$ (and so we have $w\prec u\prec v$). See Figure~\ref{fig:longer} (left) for this replacement.
If $u'\neq u$, then let $R$ be the region $(u,u')$ of $L$. We delete $e$, add a vertex $w$ in the region $R$ (and so we have $u\prec w\prec u'\prec v$). See Figure~\ref{fig:longer} (right) for this replacement.
For every of these cases, we set $\sigma(uw)=\sigma(wv)=\sigma(e)$.
\end{proof} 
\fi

\begin{figure}[h]
\small
\begin{minipage}{0.5\linewidth}
\centering
\begin{tikzpicture}[scale=0.7] 
\draw (-3,0) to[out=40,in=140] (3,0) (-3,0) to[out=30,in=150] (1,0); \draw[very thick,dotted] (-4,0) to[out=50,in=130] (-3,0) (-4,0) to[out=50,in=130] (3,0);\draw[fill=black] (-3,0) circle (2pt)(1,0) circle (2pt)(3,0) circle (2pt)(-4,0) circle (2pt);\node[below] at (-3,-0.2) {$u=u'$}; \node[below] at (1,-0.2) {$v'$}; \node[below] at (3,-0.2) {$v$};\node[below] at (0,1) {$e$};\node[below] at (-1,0.5) {$e'$};\node[below] at (-4,-0.2) {$w$};
\end{tikzpicture}
\end{minipage} \begin{minipage}{0.5\linewidth}
\centering
\begin{tikzpicture}[scale=0.7] 
\draw 
(-4,0.5) to[out=0, in=150] (-3,0)
(-3,0) to[out=40,in=140] (3,0) 
(-1,0) to[out=30,in=150] (2,0); 
\draw[very thick,dotted] (-3,0) to[out=30,in=150] (-2,0)(-2,0) to[out=30,in=150] (3,0);\draw[fill=black] (-3,0) circle (2pt)(-1,0) circle (2pt)(3,0) circle (2pt)(2,0) circle (2pt)(-2,0) circle (2pt);\node[below] at (-3,-0.2) {$u$}; \node[below] at (-1,-0.2) {$u'$}; \node[below] at (3,-0.2) {$v$};\node[above] at (0,1.2) {$e$};\node[below] at (0.5,0.5) {$e'$};\node[below] at (-2,-0.2) {$w$};\node[below] at (2,-0.2) {$v'$};
\end{tikzpicture}
\end{minipage}
\caption{The subdivision of the edge $e=uv$ when $u$ is a vertex of degree 2 while containing the edge $e'=u'v'$ by Lemma~\ref{lem:longer}:
when $e'$ is incident to $u$ (left) and when $e'$ is not incident to $u$ (right).}\label{fig:longer}
\end{figure}
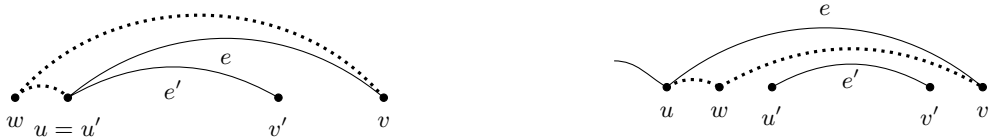

By exhaustively applying Lemma~\ref{lem:longer}, we can 
transform paths of length at least two into arbitrarily long proper paths while preserving yes-instances. To obtain our kernel, we will however need to shorten sufficiently long paths while preserving yes-instances. The following lemma allows us to handle this in case where all the considered paths are sufficiently long.
 
  \begin{lemma}
\label{lem:fix}
 Let $k\geq 3$, $G$ be a graph and $\mathcal{P}$ be a set of paths each of length more than $(|V(G)|+1)\gl$ such that $G$ and $\mathcal{P}$ are near-disjoint.
Then, $G\curlyvee \mathcal{P}$ admits a $k$-page book embedding if and only if 
there exists a set of paths $\mathcal{P}'$ each of length at most $(|V(G)|+1)\gl$ that serves as $\mathcal{P}$ such that $G\curlyvee \mathcal{P}'$ also admits a $k$-page book embedding.
\end{lemma}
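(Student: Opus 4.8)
The plan is to prove the two implications separately; the implication from the ``short-path'' side to $G\curlyvee\mathcal P$ is a direct consequence of Lemma~\ref{lem:longer}, while the other direction is the substantial part. For the easy direction, suppose $\mathcal P'$ serves as $\mathcal P$, every path of $\mathcal P'$ has length at most $(|V(G)|+1)\gl$, and $G\curlyvee\mathcal P'$ admits a $k$-page book embedding (assume $\mathcal P\neq\emptyset$, as otherwise the statement is trivial; then every path of $\mathcal P$ has length more than $(|V(G)|+1)\gl\ge 2$, and one may take the paths of $\mathcal P'$ to have length at least two as well). Using the endpoint-preserving bijection between $\mathcal P'$ and $\mathcal P$ furnished by the definition of ``serves as'', I would repeatedly apply Lemma~\ref{lem:longer} to subdivide the edges of each path of $\mathcal P'$ until it has the same length as its partner in $\mathcal P$. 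Every subdivision keeps the instance positive, and the resulting set of paths coincides with $\mathcal P$ up to renaming internal vertices, so $G\curlyvee\mathcal P$ admits a $k$-page book embedding.

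For the forward direction I would reduce everything to an iterated ``shorten one path'' step, the engine being the following claim: if $L$ is a $k$-page book embedding of $G\curlyvee\mathcal R$ for a set $\mathcal R$ of paths near-disjoint from $G$ and some $P\in\mathcal R$ has length greater than $(|V(G)|+1)\gl$, then $G\curlyvee\mathcal R'$ admits a $k$-page book embedding, where $\mathcal R'$ is obtained from $\mathcal R$ by replacing $P$ with a strictly shorter path having the same endpoints, still near-disjoint from $G$ and from $\mathcal R\setminus\{P\}$. Granting this, starting from a $k$-page book embedding of $G\curlyvee\mathcal P$ I would apply the claim to any path whose length still exceeds $(|V(G)|+1)\gl$; since each application strictly decreases the total number of edges over all paths and alters neither their endpoints nor their number, after finitely many steps one reaches a set $\mathcal P'$ that serves as $\mathcal P$, has all paths of length at most $(|V(G)|+1)\gl$, and comes with a $k$-page book embedding of $G\curlyvee\mathcal P'$.

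To prove the claim I would write $P=x_0x_1\cdots x_\ell$ with $x_0,x_\ell\in V(G)$, all other $x_i$ new, and $\ell>(|V(G)|+1)\gl$. The at most $|V(G)|$ vertices of $G$ split the spine of $L$ into at most $|V(G)|+1$ regions, so by the pigeonhole principle some region $\mathcal R^{*}$ contains at least $\gl=2^{|\mathcal P|}|\mathcal P|$ internal vertices of $P$; list these in spine order as $w_1<\cdots<w_q$. To each $w_a$ I would attach a bounded \emph{signature} recording only what is needed to re-wire $P$ locally, essentially the set of paths of $\mathcal R$ having an edge crossing the gap immediately to the left of $w_a$ (a subset of $\mathcal R$, hence at most $2^{|\mathcal P|}$ possibilities) together with the normalized page and nesting position of $P$'s edge there (a further factor at most $|\mathcal P|$). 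A slightly refined pigeonhole count then yields indices $a<b$ with equal signature; writing $w_a=x_i$ and $w_b=x_j$ with $i<j$, I would excise the stretch $x_{i+1},\dots,x_j$ of $P$ from the graph and from $L$ and reconnect it: since $w_a$ and $w_b$ lie in the same region with identical signatures, the two local pictures of $L$ are interchangeable, so the portion of $P$ leaving $w_b$ can be reattached next to $w_a$ by an arc drawn on a page that is free along the relevant strip --- and it is precisely here that the hypothesis $k\ge 3$ is used, to guarantee such a spare page. The outcome is a $k$-page book embedding of $G\curlyvee\mathcal R'$ in which $P$ is replaced by the strictly shorter path obtained by contracting the excised stretch, with unchanged endpoints and still near-disjoint from the rest.

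The main obstacle is exactly this signature-and-rewiring step. The signature must be small enough --- at most $2^{|\mathcal P|}|\mathcal P|$ distinct values, which is what forces the collision --- yet faithful enough to certify that the exchange creates no crossing on any page, even though the other paths of $\mathcal R$ may weave in and out of $\mathcal R^{*}$ many times and may be nested around $P$ in intricate ways. Reconciling this tension between the counting bound and the planarity-preservation requirement, and isolating the exact role of the third page in the re-routing, is where the real work lies; everything else --- the reverse implication, the reduction to shortening a single path, and the pigeonhole bookkeeping --- is routine.
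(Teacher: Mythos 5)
Your backward direction matches the paper's: apply \Cref{lem:longer} repeatedly to lengthen each path of $\mathcal P'$ back to the corresponding length in $\mathcal P$. That part is fine.

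Your forward direction, however, takes a genuinely different route from the paper and, as you yourself flag, contains the whole difficulty in an unproved step. The paper does not argue by a pigeonhole/collision/signature argument at all. Instead it fixes a region $R$ of the restricted embedding $L_G$, collects the paths $\mathcal R$ that touch $R$, and processes them one at a time in a carefully chosen order: always pick a path $P$ whose interval $(u_P,v_P)$ of extremal $R$-vertices does not contain the corresponding interval of any other still-unprocessed path. The excursion of $P$ into $R$ (between $u_P$ and $v_P$) is then \emph{replaced constructively} by a short path $S_P$ that ``jumps'' over each vertex currently lying between $u_P$ and $v_P$, interleaving a new vertex between every two consecutive obstacles; each jump edge is given a page not used by the vertex it hops over, which exists precisely because such a vertex has degree at most two and $k\ge 3$. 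The length of $S_P$ is then bounded recursively by $T(i+1)\le 2^i|\mathcal P|$, which gives the $\gl$ factor, and summing over the $\le |V(G)|+1$ regions gives $(|V(G)|+1)\gl$. No collision argument is needed; the third page is used to route explicit hop-edges, not to ``swap local pictures''.

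The gap in your approach is real, not cosmetic. Your proposed signature --- ``which paths of $\mathcal R$ have an edge crossing the gap left of $w_a$'' together with a page/nesting index for $P$'s own edge there --- has at most $2^{|\mathcal P|}|\mathcal P|$ values, which is exactly what you need to force a collision, but it records neither the order nor the nesting pattern (per page) of the crossing edges in that gap, and those are precisely what determines whether an arc from $x_i$ to $x_{j+1}$ can be drawn without crossings after excising $x_{i+1},\dots,x_j$. The vertices of \emph{other} paths that lie between $w_a$ and $w_b$ on the spine remain after the excision, so the reconnecting arc must nest around them on some page, and there is no guarantee that any page is free for that arc merely because the two endpoints shared a coarse signature. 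Enlarging the signature to capture the needed nesting information would blow the cardinality past $\gl$ and destroy the pigeonhole count --- that is the tension you identified and did not resolve. As written, the claim at the heart of your forward direction is therefore unestablished, and this is exactly the load-bearing step.
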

\iflong
\begin{proof}
We start with the forward direction. 
Consider a $k$-page book embedding $L$ for $G\curlyvee \mathcal{P}$ and let $L_G$ be the restriction of $L$ to $G$.
First we establish that, given any region $R$ of $L_G$, in $L$
we can replace $\mathcal{P}$ with a set of paths $\mathcal{P}'$ that serves as $\mathcal{P}$ such that every path of $\mathcal{P}'$ touches the region $R$ most $\gl$ times.

If every path of $\mathcal{P}$ touches $R$ at most $\gl$ times, then the statement is true for $\mathcal{P}'=\mathcal{P}$. Suppose otherwise, and let $\mathcal{R}$ be the set of all paths in $\mathcal{P}$ that touch the region $R$. 

For every $P\in\mathcal{R}$, 
say $P$ has vertex set $\{u_1,\ldots,u_t\}$ for some $t\geq 2$ with edges $u_iu_{i+1}$ for every $i\in [t-1]$,
and let $Z_P$ be the set of all vertices of $P$ in the region $R$ that have either minimum or maximum label. Clearly, since $P\in \mathcal{R}$, the set $Z_P$ must have either one or two elements.
If $Z_P=\{a\}$, we define $u_P$ and $v_P$ both to be equal to $a$.
If $Z_P=\{a,b\}$ with $a\prec b$, we define $u_P=a$ and $z_P=b$.

The idea now is to delete the subpath $P_{uv}$ of $P$ between $u_P$ and $v_P$ and create a path $S_P$ such that $S_P$ {\bf (1)} has bounded length, {\bf (2)} serves as $P_{uv}$ and {\bf (3)} is completely contained in $R$. Intuitively, the length of the path $S_P$ will be at most the number of vertices between $u_P$ and $v_P$: we will show that we will always be able to ``jump'' at least one vertex with an edge. For the argument to work, the replacement of the path segments $P_{uv}$ with $S_P$ will be carried out using a recursive strategy where the next path to be replaced is selected based on a specific condition.
  
We initiate by having $\mathcal{R}$ contain all the paths that touch region $R$, as introduced earlier. Moreover, we set $\mathcal{R}'=\emptyset$ to be the empty set, and we will use $\mathcal{R}'$ to store paths which have already been processed. 
 Inductively, until $\mathcal{R}= \emptyset$, we consider a path $P\in \mathcal{R}$ such that for every other path $P'\in \mathcal{R}$ we have $(u_{P'},v_{P'})\not\subseteq (u_P,v_P)$, \ie, the interval between $u_{P'}$ and  $v_{P'}$ does not fully contain the corresponding interval for any other path in $\mathcal{R}$. 
Let $M_P$ be the set defined as follows 
$$M_P=(\bigcup_{P''\in \mathcal{R}} \{u_{P''},v_{P''}\}\cup \bigcup_{P'\in \mathcal{R}'}V(P'))\cap (u_P,v_P)$$
The set $M_P$ represents the set of all vertices, that are present at this stage, in the interval $(u_P,v_P)$.

If $M_P=\emptyset$ then we either do not do anything if $u_P=v_P$ or add the edge $u_Pv_P$ and set $\sigma(u_Pv_P)=1$. 
Suppose $M_P\neq\emptyset$ and let $M_P=\{v_0,\ldots,v_t\}$ for some $t\geq 0$ and assume $v_0\prec \ldots \prec v_t$.
We add a path $S_P$ in the following way: if $t=0$, we add the edge $u_Pv_P$ and set $\sigma(u_Pv_P)=\nin ([h]\setminus \sigma(v_0))$.
If $t\geq 1$, $S_P$ has $t$ internal vertices $\{u_1,\ldots, u_t\}$ such that $v_0\prec u_1 \prec v_1 \prec \ldots \prec u_t \prec v_t$. 
We add the edges $u_Pu_1$, $u_tv_P$ and $u_iu_{i+1}$, for every $i\in [t-1]$. 
Finally we set $\sigma(u_Pu_1)=\nin ([h]\setminus \sigma(v_0))$, 
$\sigma(u_tv_P)=\nin ([h]\setminus \sigma(v_t))$
and $\sigma(u_iu_{i+1})=\nin ([h]\setminus \sigma(v_i))$ for every $i\in [t-1]$.

Now we show that the way we assigned pages to the edges of $S_P$ do not any create edge crossings.
Since $h\geq 3$ and every vertex in $M_P$ has at most two incident edges (thus resulting in $|\sigma(v_i)|\leq 2$), the sets $[h]\setminus \sigma(v_0)$, $[h]\setminus \sigma(v_t)$ and $[h]\setminus \sigma(v_i)$ for every $i\in [t-1]$ are not empty:
since the edge of $S_P$ used to {\it jump} $v_i$ can not be assigned to any of the pages in $\sigma(v_i)$, we can always assign a page in $[h]$ to every edge of $S_P$ in such a way no edge crossing is created.
We can finally define $P'$ as the path obtained from $P$ by replacing $P_{uv}$ with $S_P$.

Now we are left to show how many vertices are necessary to construct $S_P$ at each step: recall that this number is equal to $|M_P|-1$.
Let $i$ be an integer such that $0\leq i\leq |\mathcal{R}|\leq |\mathcal{P}|$. 
We aim to evaluate $T(i+1)$, that is the maximum number of internal vertices of the $(i+1)$-th subpath $S_P$.
Let $N_P=(\bigcup_{P''\in \mathcal{R}} \{u_{P''},v_{P''}\})\cap (u_P,v_P)$ and $O_P=(\bigcup_{P'\in \mathcal{R}'}V(P'))\cap (u_P,v_P)$ and note that $M_P=N_P\cup O_P$.

\begin{figure}
\centering
\begin{tikzpicture}[scale=0.9] 
\draw[very thick] (-6,0)--(6,0);\draw[dashed,very thick,red](-6,0) to[out=50,in=130] (0,0)(-4,0) to[out=50,in=130] (-2,0)(2,0) to[out=50,in=130] (6,0);\draw[dashed,very thick,blue](-5,0) to[out=50,in=130] (2,0);\draw[dashed,very thick,green!60!black](-2,0) to[out=50,in=130] (5,0)(0,0) to[out=50,in=130] (4,0);\draw[very thick,blue](-3,0) to[out=50,in=130] (-1,0)(-1,0) to[out=50,in=130] (1,0)(3,0) to[out=50,in=180] (6,1);\draw[very thick,green!60!black](1,0) to[out=50,in=130] (3,0)(-6,1) to[out=0,in=130] (-3,0);\draw[fill=white] (0,0)(-6,0) circle [radius=2pt](-5,0) circle [radius=2pt](-4,0) circle [radius=2pt](-2,0) circle [radius=2pt](0,0) circle [radius=2pt](2,0) circle [radius=2pt](4,0) circle [radius=2pt](5,0) circle [radius=2pt](6,0) circle [radius=2pt];\draw[fill=black](-3,0) circle [radius=3pt](-1,0) circle [radius=3pt](1,0) circle [radius=3pt](3,0) circle [radius=3pt];
\node[below] at (-3,-0.2) {$u_P$}; \node[below] at (3,-0.2) {$v_P$};
\node[below] at (-2,-0.2) {$v_0$};
\node[below] at (0,-0.2) {$v_1$};
\node[below] at (2,-0.2) {$v_2$};
\node[below] at (-1,-0.2) {$u_1$};
\node[below] at (1,-0.2) {$u_2$};
\end{tikzpicture}
\caption{An example where black vertices and full edges are part of $P'$, the path obtained from $P$ by replacing the subpath $P_{uv}$ between $u_P$ and $v_P$ with $S_P$.
Note that edges with different colors belong to different pages of the book embedding.}\label{fig:st}
\end{figure}
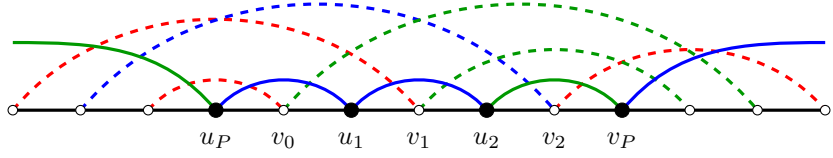

Let us first analyze the case $i=0$, that is $T(1)$. Since $\mathcal{R}'$ is empty, $S_P$ creates $|N_P|-1$ new vertices and so we have $T(1)= |N_P|-1\leq |\mathcal{R}|\leq |\mathcal{P}|$.

Let us consider the case $i\geq 1$. The set $N_P$ contains at most one vertex per path of $\mathcal{R}$: this is ensured by the choice of $P$. For this reason, we have $|N_P|\leq |\mathcal{R}|\leq |\mathcal{P}|$.
The set $O_P$ might contain every vertex in $R$, both endpoint and internal, of every path of $\mathcal{R}'$. For this reason we have $|O_P|\leq 2|\mathcal{R}'|+\sum_{j=1}^{i} T(j)$.
Now we have that
\[
\begin{array}{cclc}
T(i+1) & \leq & |\mathcal{R}|+2|\mathcal{R}'|+\sum_{j=1}^{i} T(j) \\
& \leq & 2|\mathcal{P}|+\sum_{j=1}^{i} T(j) \\
& \leq & T(i)+2|\mathcal{P}|+\sum_{j=1}^{i-1} T(j) \\
& \leq & 2T(i) \\
& \leq & 2^iT(1) \\
& \leq & 2^i|\mathcal{P}|
\end{array}
\]
To summarize we have found out that $T(i+1)\leq 2^i|\mathcal{P}|$ for every $0\leq i\leq |\mathcal{R}|\leq |\mathcal{P}|$.
In particular, we have found a set $\mathcal{P}'$ that serves as $\mathcal{P}$, where every path in $\mathcal{P}'$ touches the region $R$ at most $\gl$ times.
At this point we set $\mathcal{R}:=\mathcal{R}\setminus \{P\}$ and $\mathcal{R}':=\mathcal{R}'\cup \{P'\}$ and, if possible, select another path of $\mathcal{R}$.

Note that by the construction described in the paragraphs above, applying this procedure to a region $R$ does not increase the number of times paths of $\mathcal{P}$ touch regions that are not $R$. This means we can apply this procedure on one region at a time until the claim holds for every region and thus satisfying the statement.
This provides a $k$-page book embedding of $G\curlyvee \mathcal{P}'$ with the desired property.

Let us consider the backwards direction.
Suppose there exists a set $\mathcal{P}'$ that serve as $\mathcal{P}$, where every path in $\mathcal{P}'$ has length at most $(|V(G)|+1)\gl$ and $G\curlyvee \mathcal{P}'$ admits a $k$-page book embedding.
Since each path of $\mathcal{P}$ is of length more than $(|V(G)|+1)\gl$, we apply Lemma~\ref{lem:longer} the appropriate number of times on the paths of $\mathcal{P'}$ so that the lengths of the paths coincide with the ones of $\mathcal{P}$ and obtain that $G\curlyvee \mathcal{P}$ admits a $k$-page book embedding.
\end{proof}
\fi

The following result is a direct consequence of combining Lemma~\ref{lem:longer} and~\ref{lem:fix}.

  \begin{corollary}
\label{cor:exa}
Let $k\geq 3$, $G$ be a graph and $\mathcal{P}$ be a set of paths each of length more than $(|V(G)|+1)\gl$ such that $G$ and $\mathcal{P}$ are near-disjoint.
Then, $G\curlyvee \mathcal{P}$ admits a $k$-page book embedding if and only if 
$G\curlyvee \mathcal{P}'$ admits a $k$-page book embedding where $\mathcal{P}'$ is a set of paths each of length exactly $(|V(G)|+1)\gl$ that serves as $\mathcal{P}$.
\end{corollary}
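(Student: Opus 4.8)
The plan is to derive this purely by composing Lemma~\ref{lem:fix} and Lemma~\ref{lem:longer}: the former shortens paths that are near-disjoint from $G$, and the latter lengthens them, both while preserving the existence of a $k$-page book embedding. Throughout I would write $B := (|V(G)|+1)\gl$ for brevity, and note that $B \geq 2$ whenever $\mathcal{P}\neq\emptyset$ (the statement being trivial otherwise); this is the only side condition needed to invoke Lemma~\ref{lem:longer}.

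For the forward direction I would argue as follows. Assume $G \curlyvee \mathcal{P}$ admits a $k$-page book embedding. Since every path of $\mathcal{P}$ has length more than $B$, Lemma~\ref{lem:fix} yields a set $\mathcal{Q}$ of paths, each of length at most $B$, that serves as $\mathcal{P}$ and with $G \curlyvee \mathcal{Q}$ still $k$-page-embeddable. It then remains only to inflate the paths of $\mathcal{Q}$ up to length exactly $B$: since $G$ and $\mathcal{Q}$ are near-disjoint, I would apply Lemma~\ref{lem:longer} repeatedly, each step subdividing one edge of one path of the current near-disjoint family (which preserves $k$-page embeddability of the union with $G$), performing the appropriate number of subdivisions on each path so that all lengths become exactly $B$. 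The resulting family $\mathcal{P}'$ still serves as $\mathcal{P}$, because subdivisions change neither endpoints nor near-disjointness, and $G \curlyvee \mathcal{P}'$ admits a $k$-page book embedding, as desired.

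The backward direction is the same idea run the other way, and is even simpler: starting from a family $\mathcal{P}'$ that serves as $\mathcal{P}$ with all paths of length exactly $B$ and with $G \curlyvee \mathcal{P}'$ $k$-page-embeddable, I would observe that each path of $\mathcal{P}$ is strictly longer (length more than $B$) than the corresponding path of $\mathcal{P}'$, and then apply Lemma~\ref{lem:longer} the appropriate number of times to the paths of $\mathcal{P}'$ to make the lengths match those of $\mathcal{P}$. Since a near-disjoint path is determined, relative to $G$, by its endpoints and its length, the resulting union with $G$ is $G \curlyvee \mathcal{P}$, which therefore admits a $k$-page book embedding.

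I do not anticipate a genuine obstacle here: all the substance is contained in Lemmas~\ref{lem:longer} and~\ref{lem:fix}, and the corollary is just the ``shorten, then re-lengthen'' bookkeeping (forward) and ``re-lengthen'' (backward). The two points that deserve a sentence of care are that every intermediate path keeps length at least $2$ so that Lemma~\ref{lem:longer} remains applicable --- which follows from $B \geq 2$ together with the fact that the paths produced by Lemma~\ref{lem:fix} retain their original $G$-endpoints and at least one internal vertex on each side of any region touched during its surgery --- and that the ``serves as'' relation, the common invariant of both lemmas, is preserved under subdivision; both are immediate.
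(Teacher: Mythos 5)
Your proposal is correct and matches the paper's own proof exactly: both directions are handled by the same ``shorten with Lemma~\ref{lem:fix}, then re-lengthen with repeated applications of Lemma~\ref{lem:longer}'' argument (forward), and by Lemma~\ref{lem:longer} alone (backward). The extra side remarks you add --- that intermediate paths retain length at least~$2$ and that the ``serves as'' relation is preserved under subdivision --- are reasonable sanity checks but do not change the argument.
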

\iflong
\begin{proof}
Let us start with the forward direction. Suppose that $G\curlyvee \mathcal{P}$ admits a $k$-page book embedding. 
We apply the forward direction of Lemma~\ref{lem:fix} and that $G\curlyvee \mathcal{P}'$ admits a $k$-page book embedding where every path of $\mathcal{P}'$ has length at most $(|V(G)|+1)\gl$.
Now by Lemma~\ref{lem:longer}, we obtain obtain that $G\curlyvee \mathcal{P}''$ admits a $k$-page book embedding where every path of $\mathcal{P}''$ has length exactly $(|V(G)|+1)\gl$.

The reverse direction follows directly from Lemma~\ref{lem:longer}.
\end{proof}
\fi

\subparagraph{Putting Everything Together.}
At this point, we have all the ingredients needed to establish the fixed-parameter tractability of \bt\ with respect to the feedback edge number.
 
\ifshort
\begin{proof}[Proof Sketch for Theorem~\ref{thm:fenbig}]
Let $(G,k)$ be an input of \bt. We construct $\Gr$ by exhaustively deleting vertices of degree $1$ to obtain an equivalent instance of \bt, as per Lemma~\ref{lem:isol}. 

At this point we start a loop that repeats at most $|\mathcal{P}_F|\leq 4\fen$ times: at each iteration either we obtain a kernel of the desired size and the algorithm ends, or the size of $\mathcal{P}$ is reduced by at least one. 
If $\lo=\emptyset$ holds, that is, all the paths in $\mathcal{P}$ have length more than $\sh$, Lemma~\ref{lem:fix} can be applied to obtain a kernel of the desired size and the algorithm ends.
If $\lo\neq \emptyset$ holds; denote with $V_{\leq}$ the set of vertices in paths of $\lo$.
Note that $|V_{\leq}|\leq \sh|\lo|\leq (|B|+1)2^{|\mathcal{P}|}|\mathcal{P}|^2$.
In this case, We update the sets $B$ and $\mathcal{P}$  by setting the former to $B\cup V_{\leq}$ and the latter to $\mathcal{P}\setminus \lo$, and enter the next iteration of the loop.

To conclude the proof, it suffices to provide an upper bound to the size of a graph created in this way.
\end{proof}
\fi

\iflong \begin{proof}[Proof of Theorem~\ref{thm:fenbig}]
Let $(G,k)$ be an input of \bt. We establish fixed-parameter tractability by constructing a problem kernel in polynomial time.
By Observation~\ref{obs:gf}, we compute $\Gr$, $F$, $G_F$ and $B_F$.
We set $B:=B_F$ and let $\mathcal{P}:=\mathcal{P}_F$ be the set all maximal proper paths of $G_F$ for $B$. Note that, given $B$, the set $\mathcal{P}_F$ can be computed using Observation~\ref{obs:gf}.
Let $\lo$ be the set of all the paths of $\mathcal{P}$ of length at most $\sh$.

At this point we start a loop that repeats at most $|\mathcal{P}|\leq 4\fen$ times: at each iteration either we obtain a kernel of the desired size and the algorithm ends, or the size of $\mathcal{P}$ is reduced by at least one. 
If $\lo=\emptyset$ holds, that is, all the paths in $\mathcal{P}$ have length more than $\sh$, Lemma~\ref{lem:fix} can be applied to obtain a kernel of the desired size and the algorithm ends.
If $\lo\neq \emptyset$ holds; denote with $V_{\leq}$ the set of vertices in paths of $\lo$.
Note that $|V_{\leq}|\leq \sh|\lo|\leq (|B|+1)2^{|\mathcal{P}|}|\mathcal{P}|^2$.
In this case, We update the sets $B$ and $\mathcal{P}$  by setting the former to $B\cup V_{\leq}$ and the latter to $\mathcal{P}\setminus \lo$, and enter the next iteration of the loop.

To conclude the proof, it suffices to provide an upper bound to the size of a graph created in this way. The largest graph obtained by this construction results from there being $|\mathcal{P}|$ iterations, whereas in each iteration there is only a single path in $\lo$ and this path is of maximum length.
 Let $S(i+1)$ be an upper bound on the number of vertices that have been added to $B$ after the $i$-th step of the recursion.
For $i=0$, we have that $S(1)=|B_F|$.

Let us consider the case $i\geq 1$. Together with the vertices that were present after the $(i-1)$-th step, that is $S(i)$, we also have to consider the vertices of a unique path having maximum length allowed at this step, that is $2^{|\mathcal{P}|-i}(|\mathcal{P}|-i)(S(i)+1)$. Now we have that: 
\[
\begin{array}{cclc}
S(i+1) & \leq & S(i)+2^{|\mathcal{P}|-i}|(\mathcal{P}|-i)(S(i)+1) \\
& \leq & S(i)+2^{|\mathcal{P}|-i}(|\mathcal{P}|-i)S(i)+2^{|\mathcal{P}|-i}(|\mathcal{P}|-i) \\
& \leq & 3*2^{|\mathcal{P}|-i}(|\mathcal{P}|-i)S(i) 
  \end{array}
\]

Hence, the total size of the obtained kernel can be upper-bounded by $(3*2^{|\mathcal{P}|-i}(|\mathcal{P}|-i))^{|\mathcal{P}|}\cdot S(1)$, which is at most $2^{\bigoh(\fen(G)^2)}$.
          \end{proof}\fi

\subsection{An FPT-algorithm for \SH{} using the Feedback Edge Number}
\label{sec:subhamfen}

In this section, we provide a linear kernel for \SH{} parameterized by
the feedback edge number. The main idea is to reduce the size of the
tree $G-F$, where $F$ is minimum feedback edge set of the input
graph $G$. To do so we need the following simple corollary and lemma
that allow us to bound the number of leaves and vertices of degree at
most two in $G$. We have already seen in Lemma~\ref{lem:isol} that we can
remove vertices of degree at most one. The next lemma allows us to bound the number of vertices of degree two.
  \begin{lemma}
\label{lem:sh-fes-degtwo}
  Let $G$ be a graph and $P$ be a path of length $4$ in $G$ such that
  all inner vertices of $P$ have degree two. Let $G'$ be a graph
  obtained from $G$ by contracting any edge on $P$.
  Then, $G$ is subhamiltonian if and only if so is $G'$.
\end{lemma}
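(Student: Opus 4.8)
The plan is to prove both directions via the ``witness'' machinery already set up in the excerpt, reasoning about Hamiltonian supergraphs rather than about book embeddings directly. Write $P=(a,u,v,w,b)$ where $u,v,w$ are the inner vertices, each of degree exactly two in $G$; say $G'$ is obtained by contracting the edge $vw$ (the argument is symmetric for any edge of $P$, and contracting any edge leaves a path of length $3$ with two degree-two inner vertices, so after the contraction we are still in a ``clean'' situation). The key observation is that in \emph{any} witness $(D,D_H,G_H,H)$ for $G$, the Hamiltonian cycle $H$ is forced to route through the degree-two vertices in a very rigid way.

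First I would handle the forward direction. Suppose $G$ is subhamiltonian with witness $(D,D_H,G_H,H)$. Since $u$ has degree two in $G$ and $H$ visits $u$ exactly once entering and once leaving, either both edges of $H$ at $u$ are original edges of $G$ (forcing $H$ to contain the subpath $a,u,v$), or at least one of them is a chord added in $G_H$. The cleanest route is to first argue that we may assume the edges of $H$ incident to the inner vertices of $P$ are exactly the edges of $P$ itself: whenever $H$ uses a non-$P$ edge at an inner vertex, one can reroute it; more simply, I would invoke Observation~\ref{obs:subham}, part (3), which says that contracting an edge of $H$ preserves subhamiltonicity. So the crux is to show that some edge of $P$ --- concretely $vw$ --- can be assumed to lie in $E(H)$. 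Because $v$ has degree two, the two $H$-edges at $v$ go to $u$ and $w$ among the original edges, \emph{or} one of them is an added chord; but any chord of $G_H$ incident to the degree-two vertex $v$ could only go to $u$ or $w$ as well if we want planarity to be maintainable --- actually the clean statement is: since $\deg_G(v)=2$ with neighbours $u,w$, the graph $G_H$ contains $v$ with $H$ passing through it, and we may redraw/reselect $H$ so that $H$ contains the edge $vw$: if $H$ already contains both $uv$ and $vw$ we are done; otherwise $v$ is an endpoint of the added-chord structure and we can slide the chord incident to $v$ along the path so that it becomes incident to $u$ instead, using Observation~\ref{obs:face} to reroute inside the relevant face, after which $vw\in E(H)$. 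Once $vw\in E(H)$, Observation~\ref{obs:subham}(3) gives that contracting $vw$ yields a subhamiltonian graph, and that contracted graph is precisely $G'$.

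For the reverse direction, suppose $G'$ is subhamiltonian. In $G'$, the contracted vertex (call it $v'$, the image of $v$ and $w$) has degree two with neighbours $u$ and $b$, and $u$ still has degree two with neighbours $a$ and $v'$. Take a witness for $G'$; by the same rerouting argument as above we may assume its Hamiltonian cycle $H'$ contains the edges $a u$, $u v'$, $v' b$ (the whole path $(a,u,v',b)$), since all intermediate vertices have degree two. Now $u v'\in E(H')$, so by Observation~\ref{obs:subham}(2) we may insert a fresh vertex adjacent to both $u$ and $v'$, splitting the edge $uv'$ into $uv$ and $vw$ (relabelling $v'$ as $w$) while keeping subhamiltonicity. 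This exactly reconstructs $G$ --- the path $(a,u,v,w,b)$ with $u,v,w$ of degree two --- so $G$ is subhamiltonian. I expect the main obstacle to be the rerouting claim: justifying rigorously that one may assume the $H$-edges at the degree-two inner vertices are the edges of $P$, i.e. that no ``wasteful'' chord of $G_H$ is incident to an inner degree-two vertex. This needs a short planarity argument (a chord at $v$ other than to $u$ or $w$ can be contracted or rerouted using Observations~\ref{obs:face} and~\ref{obs:subham}), but it is the only place where care is required; everything else is a direct application of the three parts of Observation~\ref{obs:subham}.
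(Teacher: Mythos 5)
Your overall strategy—force the Hamiltonian cycle to use the edges of $P$ at the degree-two vertices, then contract (forward) or subdivide (backward) an $H$-edge via Observation~\ref{obs:subham}—is different from the paper's, and unfortunately the crucial step is not justified and I don't think it goes through as stated. In the forward direction you claim that ``we can slide the chord incident to $v$ along the path so that it becomes incident to $u$ instead'', so that afterwards $vw \in E(H)$. Moving a chord $vx$ to $ux$ is not a local redraw supported by Observation~\ref{obs:face}: $u$ already has two $H$-edges, so to keep $H$ a Hamiltonian cycle you must splice $u$ out of its current position (replacing $\dotsb y\,u\,z\dotsb$ by $\dotsb y\,z\dotsb$) and reinsert it between $x$ and $v$. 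That requires the new chord $yz$ to be drawable, i.e.\ that $y$ and $z$ share a face of the current drawing, which nothing you cite guarantees. The same unjustified ``rerouting argument'' is invoked again in the reverse direction to get $uv' \in E(H')$. A further warning sign is that your argument never uses the hypothesis that $P$ has length $4$; it would apply verbatim to a length-$3$ path, and the authors clearly chose $4$ for a reason.

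The paper's proof avoids this entirely and never needs any edge of $P$ to lie in $E(H)$. For $G \Rightarrow G'$ it places all of $V(P)=\{v_b,v_1,v_2,v_3,v_e\}$ on a single face $f$, applies the two-crossing normal form to a curve $c$ from $v_b$ to $v_e$ inside $f$ (the proof cites Lemma~\ref{lem:edge}, but it is really Lemma~\ref{lem:crossing} that yields a witness in which $c$ is crossed by at most two $H$-edges), and then pigeonholes over the \emph{three} inner vertices: at most two of them can have an $H$-edge crossing $c$, so some pair at distance two along $P$ (one of $\{v_b,v_2\}$, $\{v_1,v_3\}$, $\{v_2,v_e\}$) lies on a common face of the extended drawing. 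Joining that pair by a new edge (Observation~\ref{obs:face}) and then deleting the vertex between them (Observation~\ref{obs:subham}(1)) yields a graph isomorphic to $G'$. This is exactly where the length-$4$ hypothesis is used: three inner vertices versus two allowed crossings. The reverse direction is likewise a face argument combined with Observation~\ref{obs:subham}(2) and (1). If you want to keep your structure, the missing piece is precisely a proof of the normal-form/crossing bound—that is, you would in effect be reproving Lemma~\ref{lem:crossing}; the ``sliding'' shortcut does not substitute for it.
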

\begin{proof}
  Let's assume that $G'$ is subhamiltonian with witness $(G_H,H)$
  and $P'=[v_{b},v_1,v_2,v_{e}]$ is a path $P$ after
  contracting. Due to the fact that $deg_{G}(v_1) = deg_{G}(v_2) =
  2$ and $2 \leq deg_{G_H}(v_1) ,deg_{G_H}(v_2) \leq 4$, there exist a
  face $f_H$ in drawing $G'_H$ such that $(v_1,v_2) \in E(f_H)$ and
  $E(f_H) \cap E(H) \neq \emptyset$.  From
  Observation~\ref{obs:subham}~\enref{2} and~\enref{1} we obtain that $G$ is subhamiltonian.
  
  Let's assume that $G$ is subhamiltonian with witness $(G_H,H)$. Let $P$ be a path $[v_b, v_1, v_2, v_3, v_e]$ and $D$ be a drawing of $G$ that respects $H$. 
  All inner vertices from $V(P)$ have degree two, which implies that there exists a face $f$ of $D$ such that $V(P)\subseteq V(f)$. 
  From Lemma~\ref{lem:edge} applied to $v_bv_e$ and a face $f$, we get new witness $(G_{H'},{H}')$ and $H'$ crosses $v_bv_e$ at most in two points.
  There cannot be three edges which have one ending in $v_1$, $v_2$ and $v_3$ and crosses $v_bv_e$, so at least one pair of varieties $(v_b, v_2)$, $(v_1, v_3)$ or $(v_2, v_e)$ are in the same face together, so from Observation~\ref{obs:face} we can connect them. From Observation~\ref{obs:subham}~\enref{1} we obtain that $G'$ is subhamiltonian.
\end{proof}

We are now ready to provide our kernel for \SH{}.
  \begin{theorem}
  \label{thm:fenhamkernel}
  \SH{} parameterized by the feedback edge number $k$ admits a kernel 
  with at most $12k-8$ vertices and at most $14k-9$ edges.
\end{theorem}
\begin{proof}
  Let $G$ be a connected graph, \ie, the given instance of \SH{}.
  We first use Lemma~\ref{lem:isol} to ensure that $G$
  has no leaves. We now compute a minimum feedback edge set $F
  \subseteq E(G)$ for $G$ using Fact~\ref{fact:comp-fes}. Let $T$ be the tree
  $G-F$. Note that $T$ has at most $2|F|$ many leaves, since every
  leaf of $T$ must be adjacent to an edge in $F$. This also implies
  that $T$ has at most $|L|-2$ vertices of degree at least
  $3$, where $L$ is the set of all leaves of $T$. Therefore, it only
  remains to obtain an upper bound on the
  vertices having degree exactly two in $T$. We say that a path $P$ in
  $T$ is
  \emph{proper} if it is an inclusion-wise maximal path in $T$ having
  only inner vertices of degree two in $G$.
  Because of Lemma~\ref{lem:sh-fes-degtwo}, we can assume that any
  proper path in $T$ has length at most three. Also note that every vertex having degree
  two in $T$ is an inner vertex of such a maximal path. Moreover,
  since every proper path must have both of its endpoints in
  $V(F)\cup B$, where $B$ is the set of all vertices having degree at
  least three in $T$, the number of distinct proper paths in $T$ is equal
  to the number of edges in a tree with $|V(F)\cup B|$ many
  vertices. Therefore, the number of proper path in $T$ is at most
  $|V(F)\cup B|-1\leq 2|F|+2|F|-2-1=4|F|-3$. Since every proper path
  contains at most two vertices of degree two in $T$, we obtain that $T$ contains
  at most $2(4|F|-3)=8|F|-6$ vertices of degree two. Altogether, $T$
  contains at most $2|F|+2|F|-2+8|F|-6=12|F|-8$ vertices and at most
  $12|F|-9$ edges. Therefore, $G$ has at most $12|F|-8$ vertices and
  at most $14|F|-9$ edges. Finally, the time required to obtain the
  kernel is at most $\bigoh(|V(G)|+|E(G)|)$.
\end{proof}

Theorem~\ref{thm:fen} now follows directly from Theorems~\ref{thm:fenhamkernel} and~\ref{thm:fenbig}.
   Moreover, by combining Theorem~\ref{thm:fenhamkernel} with the subexponential algorithm of Corollary~\ref{cor:main}, we can slightly strengthen our main result as follows.
\fi

\begin{corollary}
\SH{} can be solved in time $2^{\bigoh(\sqrt{k})}\cdot n^{\bigoh(1)}$, where $k$ is the feedback edge number of the input graph.
\end{corollary}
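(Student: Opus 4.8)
The plan is to simply compose the linear kernel of Theorem~\ref{thm:fenhamkernel} with the subexponential-time algorithm of Corollary~\ref{cor:main}. Given an instance $G$ of \SH{} with feedback edge number $k$, the first step is to compute a minimum feedback edge set in linear time via Fact~\ref{fact:comp-fes} and then apply the reduction rules underlying Theorem~\ref{thm:fenhamkernel}—exhaustively deleting vertices of degree at most one (Lemma~\ref{lem:isol}) and contracting edges on long degree-two paths (Lemma~\ref{lem:sh-fes-degtwo})—to obtain in time $\bigoh(|V(G)|+|E(G)|)=n^{\bigoh(1)}$ an equivalent instance $G'$ with at most $12k-8$ vertices and at most $14k-9$ edges.

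Since subhamiltonicity is preserved by each reduction rule, $G$ is subhamiltonian if and only if $G'$ is. The second step is to run the algorithm of Corollary~\ref{cor:main} on $G'$; as $G'$ has $\bigoh(k)$ vertices this takes time $2^{\bigoh(\sqrt{|V(G')|})}=2^{\bigoh(\sqrt{k})}$. Note that this algorithm is applicable even though $G'$ need not be planar: the underlying procedure (Theorem~\ref{the:sh-fpt-tw}) first tests planarity and correctly rejects non-planar inputs. Adding the running times of the two phases yields the claimed bound of $2^{\bigoh(\sqrt{k})}\cdot n^{\bigoh(1)}$, and since all transformations are constructive, a witnessing two-page book embedding can be recovered by lifting the solution for $G'$ back through the reduction rules.

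The only point requiring (minor) care is to ensure that each reduction rule is not merely answer-preserving but also lets one transfer a solution back to the original graph; this is immediate for deletions of degree-at-most-one vertices and for edge contractions on degree-two paths, exactly as argued in the proofs of Lemma~\ref{lem:isol} and Lemma~\ref{lem:sh-fes-degtwo} (via Observation~\ref{obs:subham}). There is no substantial obstacle here—the corollary is essentially a bookkeeping consequence of the \emph{linearity} of the kernel in Theorem~\ref{thm:fenhamkernel} together with Corollary~\ref{cor:main}.
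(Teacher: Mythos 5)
Your proposal is correct and follows exactly the paper's approach: the paper obtains this corollary precisely by composing the linear kernel of Theorem~\ref{thm:fenhamkernel} with Corollary~\ref{cor:main}, which is what you spell out. Your added observations—that the kernelized instance need not be planar but is handled by the planarity check in the proof of Theorem~\ref{the:sh-fpt-tw}, and that constructiveness is preserved through the reduction rules—are accurate and fill in details the paper leaves implicit.
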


\section{Concluding Remarks}

While our main algorithmic result settles the complexity of computing 2-page book embeddings under the exponential time hypothesis, many questions remain when one aims at computing $k$-page book embeddings for a fixed $k$ greater than $2$. To the best of our knowledge, even the existence of a single-exponential algorithm for this problem is open.

In terms of the problem's parameterized complexity, it is natural to ask whether one can obtain a generalization of Theorem~\ref{the:sh-fpt-tw} for computing $k$-page book embeddings when $k>2$. In fact, it is entirely open whether computing, e.g., 4-page book embeddings is even in \XP\ when parameterized by the treewidth. In this sense, our positive result for the feedback edge number can be seen as a natural step on the way towards finally settling the structural boundaries of tractability for computing page-optimal book embeddings.

\section*{Acknowledgements}
Robert Ganian was supported by Project No. Y1329 of the Austrian
Science Fund (FWF) and Project No. ICT22-029 of the Vienna Science
Foundation (WWTF). Sebastian Ordyniak was supported by the Engineering
and Physical Sciences Research Council (EPSRC) (Project EP/V00252X/1).


\end{document}

todo: read lemma 17 and 18, 26 and 27